
\documentclass[hidelinks,11pt,draft]{article}
\pdfoutput=1

\usepackage[utf8]{inputenc}
\usepackage{fullpage,amsfonts,bm}
\usepackage{amsmath,amssymb,amsthm,mathtools}
\usepackage{mleftright}\mleftright
\usepackage[pagebackref,final]{hyperref}
\usepackage{natbib}
\usepackage{doi}
\usepackage{algpseudocode}
\usepackage{float}
\usepackage{algorithm}
\usepackage{braket}
\usepackage[normalem]{ulem}
\usepackage{cleveref}
\usepackage[shortlabels]{enumitem}
\usepackage{graphicx}
\usepackage{subfigure}
\usepackage{tcolorbox}
\usepackage{xcolor}
\usepackage{mdframed}

\usepackage{verbatim}
\usepackage{ifdraft}

\ifdraft{\newcommand{\authnote}[3]{{\color{#3} {\bf  #1:} #2}}}{\newcommand{\authnote}[3]{}}
\setlength{\parindent}{20pt}

\usepackage{tikz}
\usetikzlibrary{quantikz}
\usetikzlibrary{calc,arrows,quotes,angles}

\usepackage{subcaption}

\usepackage{caption}

\usepackage{thmtools}
\usepackage{thm-restate}
\usepackage{tcolorbox}


\floatstyle{boxed}

\newfloat{algorithm}{t}{lop}
\floatname{algorithm}{Algorithm}

\newfloat{metaalgorithm}{t}{lop}
\floatname{metaalgorithm}{Meta-Algorithm}

\newtheorem{theorem}{Theorem}
\newtheorem{proposition}{Proposition}
\newtheorem{lemma}{Lemma}

\newtheorem{corollary}{Corollary}
\newtheorem{definition}{Definition}
\newtheorem{oracle}{Oracle}


\newmdenv[
backgroundcolor=gray!10, 
linecolor=black, 
linewidth=1pt, 
roundcorner=30pt, 
innertopmargin=3pt,        
innerbottommargin=10pt,     
innerrightmargin=10pt,      
innerleftmargin=10pt        
]{mybox2}

\mathchardef\mhyphen="2D

\allowdisplaybreaks
\newcommand{\ignore}[1]{}

\newcommand{\Tr}{\mbox{\rm Tr}}
\newcommand{\tr}[1]{\Tr\left(#1\right)}

\newcommand{\polylog}{\mbox{\rm polylog}}

\newcommand{\nrm}[1]{\left\lVert#1\right\rVert}

\def\01{\{0,1\}}

\newcommand{\bigO}{{\mathcal O}}
\newcommand{\bigOt}{{\widetilde{\mathcal O}}}
\newcommand{\pp}[1]{{(#1)}}

\DeclareExpandableDocumentCommand{\multiplexer}{O{}m}{|[multiplexor,#1]| {} \qw}
\def\mltplex#1{\controlslash{}\vqw{#1}}

\numberwithin{equation}{section}
\algrenewcommand\algorithmicensure{\hspace{3.2em}\textbf{Ensure:}}
\algrenewcommand\algorithmicindent{1.5em}
\algtext*{EndIf}   
\algtext*{EndFor}  

\begin{document}

\title{Quantum and classical algorithms for SOCP based on the multiplicative weights update method}
\author{M. Isabel Franco Garrido\thanks{Institute for Quantum Information and Matter, California Institute of Technology. mfrancog@caltech.edu } , \quad
Alexander M. Dalzell\thanks{AWS Center for Quantum Computing} , \quad Sam McArdle\footnotemark[2]}

\date{}
\maketitle

\begin{abstract}
We give classical and quantum algorithms for approximately solving second-order cone programs (SOCPs) based on the multiplicative weights (MW) update method. Our approach follows the MW framework previously applied to semidefinite programs (SDPs), of which SOCP is a special case. We show that the additional structure of SOCPs can be exploited to give better runtime with SOCP-specific algorithms. For an SOCP with $m$ linear constraints over $n$ variables partitioned into $r \leq n$ second-order cones, our quantum algorithm requires $\bigOt(\sqrt{r}\gamma^5 + \sqrt{m}\gamma^4)$ (coherent) queries to the underlying data defining the instance, where $\gamma$ is a scale-invariant  parameter proportional to the inverse precision. This nearly matches the complexity of solving linear programs (LPs), which are a less expressive subset of SOCP. It also outperforms (especially if $n \gg r$) the naive approach that applies existing SDP algorithms onto SOCPs, which has complexity $\bigOt(\gamma^{4}(n + \gamma \sqrt{n} + \sqrt{m}))$. Our classical algorithm for SOCP has complexity $\bigOt(n\gamma^4 + m \gamma^6)$ in the sample-and-query model.

\end{abstract}

\newpage

\tableofcontents

\newpage

\section{Introduction}

\paragraph{Motivation} Second-order cone programming (SOCP) \cite{alizadeh2003second} is a prominent optimization framework that extends linear (LP) and quadratic (QP) programming and can be seen as a subset of semidefinite programming (SDP). LPs and QPs lack the expressiveness to model the nonlinear cone constraints inherent to SOCP, making them inadequate for solving such problems. Although SDP-based methods can be used to solve SOCPs, the resulting computational complexity would be unnecessarily high. This underscores the need for specialized algorithms tailored specifically to SOCP, balancing computational efficiency with the structural complexity of the problem.

A particularly relevant approach to solving convex optimization problems is the multiplicative weights (MW) method, which has been successfully applied to LPs and SDPs in conjunction with both classical \cite{Kale07} and quantum \cite{brandão2019quantumsdpsolverslarge, vanApeldoorn2020, zerosumLP} algorithms. However, despite its effectiveness, the MW framework remains largely unexplored for SOCP. Developing MW-based algorithms for SOCPs could offer a promising alternative to interior-point methods, particularly in scenarios where the problem size is large and high accuracy is not required.

SOCP has found applications in diverse fields, including machine learning (e.g., training support vector machines \citep{Debnath2005}), computational finance (e.g., portfolio optimization \citep{Kolbert2010RobustPO,Gambeta2020}) and a number of engineering \cite{Lobo1998} areas, such as control \citep{Akbari2024,Papusha2015}, engineering design \citep{Fakhari2021,Chen2020,Kanno2018, AiminJiang2010} and electrical power systems optimization \citep{ 10016686, Kocuk2018, Jabr2006}. In general, the enhanced expressivity of SOCPs can be used to solve ``robust'' variants of simpler convex problems like least squares regression and LPs, that is, the setting where some input data is subject to some uncertainty or stochastic variation \cite{alizadeh2003second,Wang2022}. SOCPs can also be used as a subroutine for solving SDPs \citep{9682942,Ahmadi2019}.  Furthermore, in the area of quantum optimization, recent work has shown that the Quantum Max-Cut problem can be relaxed to SOCP such that an approximate solution to the Quantum Max-Cut problem can be obtained by rounding the optimal SOCP solution \citep{Huber:2024hth}. 

\paragraph{Our results}
This work explores classical and quantum MW-based methods for solving primal SOCPs. It is inspired by the MW-based quantum algorithms for solving SDPs \citep{ Brandao2017,vanapeldoorn19, brandão2019quantumsdpsolverslarge, vanApeldoorn2020} and for solving LPs (via zero-sum games) \citep{zerosumLP}. Specifically, the high-level approach in this paper can be understood as an adaptation of the ``primal oracle'' SDP solver of \cite{Gilyn2019,brandão2019quantumsdpsolverslarge} to the SOCP setting. This metastrategy is related to but differs slightly from the original Arora--Kale framework for SDP solving \cite{arora2012multiplicative, arora2016SDPsolverPublishedVersion}, a framework that has been directly used for quantum SDP algorithms separately in \cite{Brandao2017,vanApeldoorn2020,Gilyn2019}. Exploiting the Euclidean-Jordan-algebra structure of second-order cones, our specialization narrows the complexity gap between SOCPs and LPs.

Our main result is a quantum algorithm for solving an SOCP with $m$ linear constraints over $n$ variables partitioned into $r\leq n$ second-order cones with complexity $\bigOt(\sqrt{r}\gamma^5 +\sqrt{m} \gamma^4)$ queries to the underlying data defining the instance, where $\gamma$ is a scale-invariant inverse-precision parameter, similar to that which appears in related algorithms for SDP, such as  \cite{vanapeldoorn19}. In particular, if $R$ and $\tilde{R}$ are upper bounds on the trace of the primal and dual solutions, respectively, and $\epsilon$ is the additive precision up to which the objective function should be optimized, then $\gamma = R\tilde{R}/\epsilon$. We also give a classical algorithm in the sample-and-query access model \cite{tang2019quantumInspired} (which has been used to dequantize many quantum machine learning algorithms) for which the complexity scales as $\bigOt(n\gamma^4 + m\gamma^6)$. In the regime where $r = \Theta(n)$, the quantum algorithm offers a quadratic speedup, although larger speedups remain possible in the regime where $r \ll n$.

For a program with $m$ constraints over an $n$-dimensional primal optimization variable, the algorithm iteratively updates a sparse vector $\mathbf{y} \in \mathbb{R}^m$ with non-negative entries (initially $\mathbf{y} = \mathbf{0}$). This $\mathbf{y}$ implicitly defines a candidate solution $\mathbf{x}$ to the primal formulation of the program---in fact, $\mathbf{x}$ is a Gibbs distribution depending on the weights in $\mathbf{y}$, which is the key fact that connects the MW method to possible quantum advantage. The update to $\mathbf{y}$ at each iteration is determined by querying a ``violated constraint oracle'' that returns one of the $m$ constraints that is violated by this $\mathbf{x}$, if one exists. The remarkable implication of the MW framework is that only $\bigO(\log(n))$ iterations are needed to (implicitly) obtain a point $\mathbf{x}$ representing a solution to the program up to some fixed constant precision. Our task is then to give the best possible classical or quantum implementation of the violated constraint oracle, which contributes the dominant $\mathrm{poly}(n,m)$ factor to the overall complexity. A key innovation for quantum SDP solvers in this framework was the utilization of the quantum OR lemma \cite{vanapeldoorn19, brandão2019quantumsdpsolverslarge,harrow2017sequential} to separate the $n$ and $m$ dependence additively as $\bigOt(\sqrt{n}+\sqrt{m})$ for the violated constraint oracle. 

In applying this framework specifically to SOCP, we discover that a simpler (essentially classical) version of the quantum OR lemma is required. As a result, both our quantum algorithm and our classical algorithm obtain additive complexity: $\bigOt(\sqrt{r}+\sqrt{m})$ and $\bigOt(n+m)$, respectively (here assuming $\gamma = \bigO(1)$). This additive complexity replicates the complexity for quantumly and classically solving dense linear programs from \cite{zerosumLP}  (up to a factors of $\gamma$).

In these complexity statements, the additive term depending on $r$ (or $n$) derives from the complexity of the Gibbs sampling step. Indeed, in a sense, the MW framework for SDP can be re-interpreted as a reduction from SDP to the task of Gibbs sampling: to solve an SDP, one prepares the Gibbs state of a $n \times n$ Hamiltonian, which changes from iteration to iteration, and in each iteration, one estimates the expectation values of selected observables. By analogy, our MW procedure for second-order-cone programs (SOCPs) establishes an analogous reduction: 
solving an SOCP boils down to iteratively preparing an analogy of the Gibbs state for the Euclidean Jordan algebra of second-order cones. Unlike SDP, where the Gibbs state is inherently a mixed state, the inherent low-rank nature of the second-order cone constraint means that the relevant Gibbs state in our case is pure. In any case, the $\bigO(\sqrt{r})$ quantum complexity required to prepare the state represents a worst-case analysis, and could be reduced in specific cases where fast thermalization is possible.  This viewpoint suggests a pathway along which a larger quantum advantage might emerge.

\paragraph{Significance and commentary on the similarities and differences between LP, SOCP, and SDP} 

Our work provides clarification of the complexity of the MW approach to SOCP, and this can be understood in the context of similar approaches for LP and SDP. A key takeaway is that the structure of SOCP can be exploited so that the complexity of solving an SOCP with $r$ second-order cone constraints becomes almost as good as solving an LP with $r$ positivity constraints \cite{zerosumLP}, and much better than what would be obtained by pursuing a naive embedding of the SOCP into an SDP. 

Linear programs are the simplest kind of conic program, taking the form 
\begin{align} \text{(LP) } \qquad \qquad 
\begin{split}
    \max_{\mathbf{x} \in \mathbb{R}^n} & \;\;\mathbf{c}^\top \mathbf{x} \\
    \text{subject to } A \mathbf{x} &\leq \mathbf{b} \in \mathbb{R}^m\\
    \mathbf{x} &\in \mathcal{C}_n
\end{split}
\end{align}
where $\mathbf{c} \in \mathbb{R}^n$ encodes the objective function, matrix $A$ and vector $\mathbf{b}$ together encode $m$ linear constraints, and $\mathcal{C}_n$ is the positive orthant, the conic subset of $\mathbb{R}^n$ for which all coordinates are non-negative.  SDPs generalize LPs by promoting the optimization variable $\mathbf{x}$, objective vector $\mathbf{c}$, and each row $A_{j,:}$ of $A$ to be an $n \times n$ symmetric matrix instead of a length-$n$ vector. The vector inner product $\mathbf{c}^\top \mathbf{x}$ is generalized to the  Hilbert--Schmidt inner product $\Tr(\mathbf{c} \mathbf{x})$ and the positive orthant is generalized to the cone of semidefinite matrices. Thus, SDP captures more problems than LP, but it also works with more complex objects with $\Theta(n^2)$ degrees of freedom rather than $\Theta(n)$, leading to greater complexity. 

SOCPs sit in between LPs and SDPs and potentially offer advantages of both. The form of SOCP resembles LP; the optimization variables are still vectors $\mathbf{x}$ of length $n$, and there are $m$ linear inequality constraints. The only difference is that the conic positive orthant constraint is replaced with a second-order cone constraint $\mathbf{x} \in \mathcal{L}$, where $\mathcal{L}$ is a product of second-order cones $\mathcal{L} = \mathcal{L}^{\pp{0}} \times \mathcal{L}^\pp{1} \times \cdots \times \mathcal{L}^\pp{r-1}$. The size of cone with index $k$ is $n^\pp{k}$ (with $\sum_k n^\pp{k} = n$), and formally it is given by the set $\mathcal{L}^\pp{k} = \{\mathbf{u} \colon u_0^2 \geq u_1^2+u_2^2 + \cdots + u_{n^\pp{k}-1}^2\}$.  Since second-order cones of dimension 1 (or 2) are equivalent to the positive orthant of dimension 1 (or a rotated positive orthant of dimension 2), LP is recovered when  $n^\pp{k}\leq  2$ for all $k$. SOCP becomes interesting when some of the cones have $n^\pp{k} \geq 3$, in which case they are more expressive than LP. In particular, a point in the second-order cone of size $n^\pp{k}$ can be described by 2 non-negative numbers and a single ``direction''---a unit vector on the sphere of dimension $n^\pp{k}-2$. This contrasts with a point in the semidefinite cone of $n^\pp{k} \times n^\pp{k}$ matrices, which are described by $n^\pp{k}$ non-negative numbers (the eigenvalues of the matrix) and $n^\pp{k}$ directions (the eigenvectors of the matrix). Thus, there is an intuition that second-order cone constraints gain the high-dimensional directionality of semidefinite constraints while maintaining the low-rank essence of positivity constraints. This low-rankness is exploited in our quantum algorithm---the ``direction'' within a second-order cone can be compactly represented as a pure quantum state, whereas representing high-rank SDP variables as quantum states requires using mixed states, which are more complex to prepare. 

It is worth noting that it would be possible to explicitly reformulate SOCP as a special case of SDP by rewriting the second-order cone constraints as semidefinite constraints: specifically, for vector $\mathbf{x} \in \mathbb{R}^n$, one can define an $n \times n$ symmetric ``arrowhead'' matrix $\operatorname{Arw}(\mathbf{x})$ (see \cref{def:arrowhead_def}) for which $\operatorname{Arw}(\mathbf{x})$ is semidefinite if and only if $\mathbf{x} \in \mathcal{L}$. However, enforcing the structure of $\operatorname{Arw}(\mathbf{x})$ within the larger set of all semidefinite matrices would require adding an additional up to $\Theta(n^2)$ linear constraints (e.g., to force most of the matrix elements to be equal to 0), which may lead SDP-based methods to have unnecessarily high complexity for SOCP problems. As a result, the complexity of using the quantum SDP solver of \cite{vanapeldoorn19} directly on SOCP would be $\bigOt(\gamma^{4}(n + \gamma\sqrt{n} +\sqrt{m}))$. The natural low-rank nature of the SOCP is not utilized in this approach; it is valuable to examine algorithms that directly target SOCPs.

\paragraph{Comparison to current methods for SOCP} State-of-the-art approaches to solving SOCPs predominantly rely on interior-point methods (IPMs), which provide polynomial-time complexity and strong convergence guarantees, making them the preferred choice in solvers like MOSEK, Gurobi, and CPLEX. In particular, IPMs achieve a better precision scaling, with a runtime that depends polylogarithmically on the inverse precision target $1/\epsilon$. However, for large programs, IPMs can become intractable since the scaling of their complexity is superlinear in the problem size, for example, classical IPMs for LPs with $n$ variables and $\bigO(n)$ constraints have complexity scaling as $\bigO(n^{\omega})$, where $\omega < 2.37$ is the matrix multiplication exponent \cite{cohen2021interiorPointMethodMatrixMultiplicationTime}. IPMs for SOCPs have received less intensive optimization but rigorously proved complexity scales only slightly worse, as $\bigO(\sqrt{r}n^{\omega})$ \cite{monteiro2000SOCPIPM}. More recently, quantum algorithms based on IPMs \citep{ Kerenidis2021,Dalzell2023,augustino2024quantumcentralpathalgorithm,apers2024quantumspeedupslinearprogramming} have emerged, including for SOCP \cite{Kerenidis2021,Dalzell2023}, and under favorable conditions could offer asymptotic speedups (at most subquadratic in size, see \cite{Dalzell2023}) over their classical counterparts. 

In parallel, \cite{zheng2024primal} have proposed a MW-based primal–dual meta-algorithm for symmetric-cone programs (SCPs), which includes SOCPs and mixed programs possessing both second-order cone and semidefinite constraints---this serves as a complement to our direct primal-focused algorithm. Ref.~\cite{zheng2024primal} generalizes the Arora--Kale SDP framework to all SCPs and presents only a meta-algorithm, where SCPs are solved iteratively, with each iteration requiring execution of a simpler oracle. Ref.~\cite{zheng2024primal} supplies the oracle implementation only for a couple of example applications---Support Vector Machine (SVM) and Smallest Enclosing Sphere (SES) instances---in both cases, the runtime achieved is nearly linear in the size of the input data and is amenable to parallelization. Our work follows a complementary meta-strategy and, in contrast, provides implementation of the relevant oracles in general, yielding end-to-end complexity statements for general SOCPs. Additionally, our classical algorithm operates in a different access model (sample-and-query), a choice we make to ensure fair comparisons between our classical and quantum algorithms. As we show, in this access model,  sublinear classical complexity is achievable (due to the ability to sample).

\paragraph{Structure }In this paper, we first provide in \cref{background} the background on Second-Order Cone Programming (SOCP), ensuring the necessary self-contained understanding. There, we also specify the access model, whereby the quantum and classical algorithms access the underlying data defining the SOCP instance. Then, in \cref{sec:MWSOCP}, we provide the general multiplicative weights algorithm for SOCP, and prove its convergence. This framework is common to both the classical and quantum algorithms, relying only upon a subroutine that checks for violated constraints in the feasibility problem, called the ``violated constraint oracle.'' We describe an approach for implementing the ``violated constraint oracle'' as a two-step process responsible for the additive $\bigOt(\sqrt{r} + \sqrt{m})$ complexity scaling of the quantum algorithm: the first step is a ``cone index Gibbs sampling oracle'' \cref{oracle:cone_index_Gibbs_sampling}, which does the corresponding importance sampling, and the second step is a ``sampled violated constraint search oracle'' \cref{oracle:sampled_constraint_search}, which uses the importance samples to find approximately the violated constraints. In \cref{sec:quantum_implementation}, we provide the quantum implementations of these oracles and the overall complexity of the quantum algorithm, leveraging quantum primitives of quantum singular value transformation (QSVT), amplitude amplification, and Gibbs sampling. Finally, in \cref{section:classical_implementation_violated_constraint} we present the classical implementation of the main oracles in the sample-and-query access model.  

A flow chart depicting how SOCP is reduced and broken down into various quantum subroutines is provided in \cref{fig:flow_chart}. 


\section{Background}\label{background}

\subsection{Second-order cones and their Jordan algebra}

This section provides a self-contained collection of definitions of the objects that feature in our algorithm for second-order cone programs, and their key properties~\citep{alizadeh2003second, Kerenidis2021}. We take as input a positive integer $r$ denoting the number of second-order cones, and positive integers $n^\pp{0},\ldots,n^\pp{r-1}$ denoting the size of each cone.
Define $n := \sum_{k=0}^{r-1} n^\pp{k}$.

\begin{definition}[Second-order (Lorentz) cone]\label{def:second_order_cone} For each $k = 0,\ldots, r-1$, define the second-order (``Lorentz'') cone of size $n^\pp{k}$ as the following set:
$$\mathcal{L}^\pp{k} =  
\left\{\mathbf{v}^\pp{k}=\left.\left[\begin{array}{l}
v^\pp{k}_0 \\
\vec v^\pp{k}
\end{array}\right] \right\rvert\, v^\pp{k}_{0} \in \mathbb{R},  \vec v^\pp{k} \in \mathbb{R}^{n^\pp{k}-1},\|\vec v^\pp{k}\| \leqslant v^\pp{k}_0\right\} \subseteq \mathbb{R}^{n^\pp{k}} $$
where \( \| \cdot \| \) denotes the Euclidean norm. We refer to $v^\pp{k}_0$ and $\vec{v}^\pp{k}$ as the scalar and vector parts of $\mathbf{v}^\pp{k}$, respectively.
\end{definition}

We can then consider the Cartesian product of the $r$ second-order (Lorentz) cones
\begin{equation}
    \mathcal{L} = \mathcal{L}^\pp{0} \times \mathcal{L}^\pp{1} \times \cdots \times \mathcal{L}^\pp{r-1} \subset \mathbb{R}^n
\end{equation}
Given a vector $\mathbf{v} \in \mathbb{R}^n$, we may write 
\begin{equation}
    \mathbf{v} = \begin{bmatrix}
        \mathbf{v}^\pp{0} \\
        \mathbf{v}^\pp{1} \\
        \vdots \\
        \mathbf{v}^\pp{r-1}
    \end{bmatrix}
\end{equation}
where $\mathbf{v}^\pp{k} \in \mathbb{R}^{n^\pp{k}}$, with the superscript $k$ signalling that the vector $\mathbf{v}^\pp{k}$ is associated with the $k$th-cone in the Cartesian product. In general, we will also use the  superscript index $k$ to signal the cone number for matrices.
For a vector $\mathbf{v}^\pp{k}$, we use interchangeably $\mathbf{v}^\pp{k}\succeq \mathbf{0}$ and $\mathbf{v}^\pp{k} \in \mathcal{L}^\pp{k}$, to denote that $\mathbf{v}^\pp{k}$ is a vector in the second-order cone.
We see that $\mathbf{v} \in \mathcal{L}$, or $\mathbf{v}\succeq \mathbf{0}$, if and only if  $\mathbf{v}^\pp{k} \in \mathcal{L}^\pp{k}$ for $k=0,\ldots, r-1$.  

\begin{definition}[Arrowhead matrix]\label{def:arrowhead_def}
The Arrowhead matrix $\operatorname{Arw}(\mathbf{v}^\pp{k}) \in \mathbb{R}^{n^\pp{k} \times n^\pp{k}}$ associated to a vector $\mathbf{v}^\pp{k} \in \mathbb{R}^{n^\pp{k}}$ is defined as the following square matrix: 
\begin{equation}
    \operatorname{Arw}(\mathbf{v}^\pp{k}) = \left(
    \begin{array}{cc}
        v^\pp{k}_0    & {\vec v}^{\pp{k}\top}    \\
        \vec v^\pp{k} & v^\pp{k}_0 I  
    \end{array}
    \right)
\end{equation}
     where $I$ denotes the $(n^\pp{k}-1)\times(n^\pp{k}-1)$ identity matrix and $v_0^\pp{k}$, $\vec{v}^\pp{k}$ denote the scalar and vector part of $\mathbf{v}^\pp{k}$ as in \cref{def:second_order_cone}. 
    The name ``arrowhead'' is given due to the observation that the only nonzero entries lie on the diagonal or in the first row and column, resembling an arrow pointing toward the top left. We observe that $\operatorname{Arw}(\mathbf{v}^\pp{k})$ is positive semidefinite if and only if $\mathbf{v}^\pp{k} \in \mathcal{L}^{\pp{k}}$.
\end{definition}

When working with the Cartesian product of $r$ second-order cones, we define the arrowhead matrix of an $n$-dimensional vector as the direct product of the $r$ arrowhead matrices of its constituent parts, as follows. 
\begin{definition} [Arrowhead matrix for Cartesian product of cones] \label{arrw_multi_vector}Given a vector $\mathbf{v} \in \mathbb{R}^{n^\pp{0}} \times \cdots \times \mathbb{R}^{n^\pp{r-1}}$, the arrowhead matrix associated to $\mathbf{v}$ can be written as:
$$\operatorname{Arw}(\mathbf{v})
  = \bigoplus_{k=0}^{r-1} \operatorname{Arw}(\mathbf{v}^{(k)})$$
We observe that $\operatorname{Arw}(\mathbf{v})$ is positive semidefinite if and only if $\mathbf{v} \in \mathcal{L}$.
\end{definition}

A key framework to study second-order cones is the Euclidean Jordan algebra associated with this class of cone. Before defining the central operation, we define the identity element for the algebra:

\begin{definition}[Identity element $\mathbf{e}^\pp{k}$ for cone $k$] When we consider the single-cone case, the identity element is $\mathbf{e}^\pp{k}:= (1, \vec{0})^\top \in \mathbb{R}^{n^\pp{k}}$, where $\vec{0}$ denotes the all-zeros vector of length $n^{\pp{k}}-1$. For the $r$-cone case, 
\begin{equation}
    \mathbf{e} := (\underbrace{1,\vec{0}}_{\mathbf{e}^\pp{0}},\underbrace{1,\vec{0}}_{\mathbf{e}^\pp{1}},\ldots,\underbrace{1,\vec{0}}_{\mathbf{e}^\pp{r-1}})^\top \in \mathbb{R}^n.
\end{equation}
\end{definition}

This notation is distinguished from the notation $\mathbf{e}_j$, by which we mean the standard basis vector with a 1 in position $j$ and a 0 in all other positions. 

\begin{definition}[Jordan (Circle) product $\circ$]\label{def:jordan_product}
The Jordan product is a commutative (but not associative), bilinear operation that performs the following operation on two vectors $\mathbf{v}^\pp{k} \in \mathbb{R}^{n^\pp{k}}$ and  $\mathbf{w}^\pp{k} \in \mathbb{R}^{n^\pp{k}}$:
\begin{equation}
    \mathbf{v}^\pp{k} \circ \mathbf{w}^\pp{k} := \left(\begin{array}{c}
\mathbf{v}^{\pp{k}\top} \mathbf{w}^\pp{k} \\
v^\pp{k}_0 \vec{w}^\pp{k} +w^\pp{k}_0 \vec{v}^\pp{k}
\end{array}\right)=\operatorname{Arw}(\mathbf{v}^\pp{k}) \mathbf{w}^\pp{k}=\operatorname{Arw}(\mathbf{v}^\pp{k}) \operatorname{Arw}(\mathbf{w}^\pp{k}) \mathbf{e}^\pp{k}
\end{equation}
where $\mathbf{e}^\pp{k} \in \mathbb{R}^{n^\pp{k}}$ is the identity element for the Euclidean Jordan algebra. The circle product can be extended to the multicone case: it acts cone-wise, and the relationship $\mathbf{v} \circ \mathbf{w} = \operatorname{Arw}(\mathbf{v})\mathbf{w}$ is preserved.

\end{definition}

The square matrix $\operatorname{Arw}(\mathbf{v}^\pp{k})$ has $n^\pp{k}$ eigenvalues and eigenvectors, of which two suffice to decompose $\mathbf{v}^\pp{k}$ as 
\begin{align}
\mathbf{v}^\pp{k} = \lambda_{+}(\mathbf{v}^\pp{k})\mathbf{c}_{+}(\mathbf{v}^\pp{k}) + \lambda_{-}(\mathbf{v}^\pp{k})\mathbf{c}_{-}(\mathbf{v}^\pp{k}),
\end{align}
where we define
\begin{equation}
\lambda_{\pm}(\mathbf{v}^\pp{k}) = v^\pp{k}_0 \pm \nrm{\vec v^\pp{k}} ; \quad \mathbf{c}_{\pm}(\mathbf{v}^\pp{k}) = \frac{1}{2}\begin{pmatrix}
1 \\
\pm \frac{\vec{v}^\pp{k}}{\|\vec{v}^\pp{k}\|}
\end{pmatrix}.
\end{equation}
We drop the argument and just write $\lambda_{\pm}$ and $\mathbf{c}_{\pm}$ when context is clear. 
This decomposition is usually referred to as the \textit{Jordan frame of $\mathbf{v}^\pp{k}$}.
The Jordan frame can be used to extend the definition of a real-valued continuous function to the Jordan algebra, as 
\begin{equation}\label{eq:EJA_continuous_function}
    f(\mathbf{v}^\pp{k}) := f(\lambda_+) \mathbf{c_+} + f(\lambda_-)\mathbf{c_-}
\end{equation} \label{func_jordan_algebra}
This defines exponentiation for a vector $\mathbf{v}^\pp{k} \in \mathbb{R}^{n^\pp{k}}$, in the Jordan algebra, as follows:
\begin{equation}\label{eq:Jordan_frame_exponential}
    e^{\mathbf{v}^\pp{k}}=e^{\lambda_+(\mathbf{v}^\pp{k})}\mathbf{c}_+(\mathbf{v}^\pp{k}) + e^{\lambda_{-}(\mathbf{v}^\pp{k})}\mathbf{c}_-(\mathbf{v}^\pp{k})
\end{equation}
    
\noindent We can also define the trace of a vector in the Jordan algebra: \begin{equation}
    \tr{\mathbf{v}^\pp{k}} := \lambda_+(\mathbf{v}^\pp{k}) + \lambda_-(\mathbf{v}^\pp{k}) = 2v^\pp{k}_0
\end{equation}
By the definition of the circle product, the trace of the circle product between two vectors evaluates to twice the inner product between the vectors.
\begin{equation}\label{eq:trace_circ_product}
\Tr(\mathbf{v}^\pp{k} \circ \mathbf{w}^\pp{k}) = 2\mathbf{v}^{\pp{k}\top} \mathbf{w}^\pp{k}
\end{equation}
Similarly, the trace of the exponential of a vector in the Jordan algebra:
\begin{equation}\label{eq:trace_exponential_vector}
    \tr{e^{\mathbf{v}^\pp{k}}} := \tr{e^{\lambda_+(\mathbf{v}^\pp{k})}\mathbf{c}_+(\mathbf{v}^\pp{k}) + e^{\lambda_{-}(\mathbf{v}^\pp{k})}\mathbf{c}_-(\mathbf{v}^\pp{k})} = e^{\lambda_+(\mathbf{v}^\pp{k})}+e^{\lambda_{-}(\mathbf{v}^\pp{k})}
\end{equation}
We can extend the above identities for the multicone case:
\begin{align}
    e^{\mathbf{v}}&=(e^{\mathbf{v}^\pp{0}};\ldots;e^{\mathbf{v}^\pp{r-1}})\\
    \tr{\mathbf{v}} &:= \sum^{r-1}_{k=0} 2v^\pp{k}_0\\
    \Tr(\mathbf{v} \circ \mathbf{w}) &= \sum^{r-1}_{k=0} 2\mathbf{v}^{\pp{k}\top} \mathbf{w}^\pp{k} = 2 \mathbf{v}^\top \mathbf{w}\\
    \tr{e^{\mathbf{v}}} &:= \sum^{r-1}_{k=0} e^{\lambda_+(\mathbf{v}^\pp{k})}+e^{\lambda_{-}(\mathbf{v}^\pp{k})}
\end{align}
Interpreting $\lambda_{\pm}$ as eigenvalues also suggests a definition for a norm $\lVert \cdot \rVert_{\rm soc}$ for vectors:
\begin{align}
    \nrm{\mathbf{v}^\pp{k}}_{\rm soc} = \nrm{\operatorname{Arw}(\mathbf{v}^\pp{k})} =  |v_0| + \nrm{\vec{v}} = \max(|\lambda_+|, |\lambda_-|)
\end{align}
and for multicone vectors, $\nrm{\mathbf{v}}_{\rm soc} = \max_k \nrm{\mathbf{v}^\pp{k}}$.  Note that $\nrm{\mathbf{v}^\pp{k}}_{\rm soc} \geq \nrm{\mathbf{v}^\pp{k}}$.

A remark on notation used widely across the document: we use $\log$ as the natural logarithm. Throughout, we index starting at 0, so, for example, the set $[r] = \{0,1,\ldots,r-1\}$. 

\subsection{Second-order cone programs}

Second-order cone programming solves a convex optimization problem (minimizing a convex cost function subject to a number of convex constraints) over the convex set defined by the Cartesian product of second-order (Lorentz) cones, subject to linear inequality constraints.
\begin{definition} [Second-order cone programs]\label{Def:SOCP}
    Let the number of cones, $r$, the number of constraints, $m$, and the length of each cone, $n^\pp{0},\ldots,n^\pp{r-1}$, be positive integers. Given as input vectors $\mathbf{c}^\pp{k} \in \mathbb{R}^{n^\pp{k}}$ and matrices $A^\pp{k} \in \mathbb{R}^{m \times n^\pp{k}}$ for $k = 0,\ldots, r-1$, as well as vector $ \mathbf{b} \in \mathbb{R}^{m}$, the primal formulation of the second-order cone program is
\begin{equation}
\label{primal} 
\begin{array}{ll}
\text{maximize} & \sum_{k=0}^{r-1} \mathbf{c}^{\pp{k} \top} \mathbf{x}^\pp{k} \\
\text{subject to} & \sum_{k=0}^{r-1} A^\pp{k} \mathbf{x}^\pp{k} \leq \mathbf{b}, \\
& \mathbf{x}^\pp{k} \in \mathcal{L}^\pp{k} \quad \forall k \in [r].
\end{array}
\end{equation}
where $\leq$ denotes element-wise comparison, and maximization is taken over vectors $\mathbf{x}^\pp{k} \in \mathbb{R}^{n^\pp{k}}$.  

The dual formulation can be written as\footnote{The corresponding SOCP duality theory can be found in standard optimization textbooks 
\citep{alizadeh2003second}.}:
\begin{equation}
\label{dual}
\begin{array}{ll}
\text{minimize} & \mathbf{b}^\top \mathbf{z}\\
\text{subject to} & A^{\pp{k} \top} \mathbf{z} - \mathbf{c}^\pp{k} \in \mathcal{L}^\pp{k} \quad \forall k \in [r] \\
& \mathbf{z} \geq \mathbf{0}
\end{array}
\end{equation}
where $\mathbf{0} = (0;0;\ldots;0) \in \mathbb{R}^{m}$, and minimization is over vectors $\mathbf{z} \in \mathbb{R}^{m}$. 
\end{definition}

\begin{definition}[Strong duality]
An SOCP as in \cref{Def:SOCP} is said to satisfy \textit{strong duality} if both the primal and dual programs are feasible (i.e., there exist points that satisfy all of the constraints), and furthermore the optimal objective value for the primal of \cref{primal} is equal to the optimal objective value for the dual of \cref{dual}.
\end{definition}
We will assume strong duality holds, which is true in most cases. For example, it can be shown that if the feasible set for the primal and the dual each have a nonempty \textit{interior}, then strong duality is true \cite{alizadeh2003second}.  

We assume that the inputs to the SOCP, as defined above, satisfy certain normalisation conditions, which are formalized in the following definition. Prior to that, we fix some notation used throughout. We write \([\cdot]_{j,:}\) to denote the \emph{j-th row} of a matrix, meaning the row obtained by fixing the row index to \(j\) and selecting all columns. Similarly, \([\cdot]_{:,j}\) denotes the \emph{j-th column}, obtained by fixing the column index to \(j\) and selecting all rows. This slice notation provides a concise and unambiguous way to refer to specific rows or columns, and will be used throughout to simplify expressions and improve readability.
\begin{definition}[normalisation conditions]\label{def:normalisation_conditions}
Given an SOCP as in \cref{Def:SOCP}, we say that it obeys the normalisation conditions if the following hold: 
    \begin{itemize}
    \item \textbf{Objective}: The vectors $\mathbf{c}^\pp{k}$ are normalized such that $\nrm{\mathbf{c}^\pp{k}}_{\rm soc} \leq 1$ for all $k$. If an SOCP does not satisfy this relation, the vectors $\mathbf{c}^\pp{k}$ can all be scaled down by a constant such that it is satisfied, without changing the feasible set or optimal point(s). 
    \item \textbf{Constraints}:
    Each row of each constraint matrix satisfies $\nrm{A^\pp{k}_{j,:}}_{\rm soc} \leq 1$ for all $k \in [r]$ and $j \in [m]$. If this condition is violated for a certain $j$, the value $b_j$ and the row $A^\pp{k}_{j,:}$ can be scaled down (for all values of $k$) by a constant such that it is satisfied, without changing the feasible set or optimal point(s). 
\end{itemize}
\end{definition}
\begin{definition}[$R$-trace constrained]\label{def:R_trace}
        Given an SOCP as in \cref{Def:SOCP}, we say that it is $R$-trace constrained if the value $R$ is known and (i) there exists a feasible, optimal solution $(\mathbf{x}^\pp{0}; \allowbreak \mathbf{x}^\pp{1}; \allowbreak \ldots; \allowbreak \mathbf{x}^\pp{r-1})$
 to the SOCP that also satisfies $\sum_{k=0}^{r-1}\tr{\mathbf{x}^\pp{k}} \leq R$, and (ii) each entry $b_j$ of the vector $\mathbf{b}$ satisfies $|b_j| \leq R$.\footnote{\label{footnote:traceconditions} Note that item (ii) is essentially redundant with item (i). Observe that if item (i) is satisfied and the normalisation conditions are met, then  for the optimal solution $\mathbf{x}$ satisfying the trace bound we have for each $j$, $\sum_{k=0}^{r-1}A^\pp{k}_{j,:}\mathbf{x}^{\pp{k}}\leq \sum_{k=0}^{r-1} \nrm{A^\pp{k}_{j,:}} \nrm{\mathbf{x}} \leq \sum_{k=0}^{r-1} \nrm{A^\pp{k}_{j,:}}_{\rm soc} \nrm{\mathbf{x}}_{\rm soc} \leq \sum_{k=0}^{r-1} \nrm{\mathbf{x}}_{\rm soc} \leq  \sum_{k=0}^{r-1} \Tr(\mathbf{x}^\pp{k}) \leq R$, where the penultimate inequality holds since $\mathbf{x}^\pp{k} \succeq 0$.  Thus, if $|b_j| > R$, then we can replace $b_j$ with $R$ if $b_j>0$ or with  $-R$ if $b_j < 0$, ensuring item (ii) is met, without changing the feasibility of the point $\mathbf{x}$ and the validity of item (i).} 
\end{definition}
\begin{definition}[$\tilde{R}$-dual-trace constrained]\label{def:R_trace_dual}
        Given an SOCP as in \cref{Def:SOCP} with dual as in \cref{dual}, we say that it is $\tilde{R}$-dual-trace constrained if the value $\tilde{R}$ is known and there exists a feasible, optimal solution $\mathbf{z}$ to the dual formulation for which $\sum_{j=0}^{m-1} z_j \leq \tilde{R}. $ 
\end{definition}
For exact optimization of SOCP, it is always possible to find an equivalent SOCP and values of $R, \tilde{R}$, such that the SOCP satisfies the normalisation conditions and trace constraints.  However, we will be solving SOCPs approximately and the error can interplay with the normalisation, so it is important to first transform the SOCP instance so that it satisfies these conditions. 
In particular, we illustrate a couple examples of the interaction between the values $R$, $\tilde{R}$, the objective value, the normalisation conditions, and the ``scale invariant'' quantity $\gamma = R\tilde{R}/\epsilon$. First, if the vectors $\mathbf{c}^{\pp{k}}$ are all scaled down by a factor $F$ in order to meet the normalisation conditions, this leads the objective value (and hence the precision quantity $\epsilon$) to scale down by a factor of $F$ and the dual trace bound $\tilde{R}$ to scale down by a factor $F$, such that $\gamma$ is unchanged. On the other hand, if the constraint data $A^{(k)}$ and $\mathbf{b}$ is all scaled down by a factor $F$, the value of $\tilde{R}$ increases by a factor $F$, and so does $\gamma$. Ultimately, for fixed $r, m, n$, the complexity of our algorithms scales polynomially with $\gamma$. 

We follow prior work on SDPs and, instead of solving the general SOCP written above, our algorithm will instead only solve the question of feasibility for normalized SOCPs, promised that the SOCP is feasible or that it is $\theta$-far from feasible.

\begin{definition}[Unit trace Primal SOCP $\theta$-feasibility]\label{def:SOCP_feasibility}
Let $r$, $m$, $n^\pp{0},\ldots,n^\pp{r-1}$ be positive integers. Given as input an error parameter $\theta$, and matrices $A^\pp{k}$ for $k=0,\ldots,r-1$ satisfying the normalisation condition for constraints (\cref{def:normalisation_conditions}), as well as vector $\mathbf{b} \in \mathbb{R}^{m}$ satisfying $|b_j| \leq 1$ for all $j$, we define the set of $\theta$-feasible points $\mathcal{S}_{\theta}$ to contain all points $(\mathbf{x}^\pp{0}; \ldots;\mathbf{x}^\pp{r-1})$, for which
    \begin{equation}
    \begin{split}
\mathbf{x}^\pp{k} &\in \mathcal{L}^\pp{k} \quad \forall k \in [r]  \\
        \sum_{k=0}^{r-1} A^\pp{k} \mathbf{x}^\pp{k} &\leq \mathbf{b}+ \theta \mathbf{1} \\
       \sum_{k=0}^{r-1} \tr{\mathbf{x}^\pp{k}} &= 1\\
    \end{split}
    \end{equation}
    where $\mathbf{1} = (1;1;\ldots;1) \in \mathbb{R}^{m}$. Thus, $\mathcal{S}_{\theta}$ contains points with unit trace lying within the cones and which are at most $\theta$-far from satisfying each of the inequality constraints.  
    
    For a fixed value of $\theta$, suppose that we are promised that either
    \begin{enumerate}[(i)]
    \item $\mathcal{S}_{0}$ is nonempty (``feasible'')
    \item $\mathcal{S}_{\theta}$ is empty (``infeasible'')
    \end{enumerate}
    The $\theta$-approximate feasibility question is to determine whether (i) or (ii) is the case, and in the case of (i) to produce a vector $\mathbf{y} \in \mathbb{R}^m$ for which the set $\mathcal{S}_{\theta}$ contains the point\footnote{The semicolon represents row stacking.} 
    \begin{align}\label{eq:feasibility_output_primal_vector}
        \frac{1}{\sum_{k=0}^{r-1} \tr{e^{-A^{\pp{k}\top}\mathbf{y}}}}\left(e^{-A^{\pp{0}\top}\mathbf{y}}; e^{-A^{\pp{1}\top}\mathbf{y}}; \ldots ; e^{-A^{\pp{r-1}\top}\mathbf{y}} \right)
    \end{align} 
\end{definition}

It is well known that optimization of convex programs like SOCPs can be reduced to the feasibility question above, and prior work often omits this reduction explicitly from their presentation (see \cite{zheng2024primal} for an exception). We include the reduction here for completeness. 

\begin{lemma}[Reduction from general SOCP to primal feasibility problem]\label{feasibility_reduction}

Fix positive numbers $R$, $\tilde{R}$, and $\epsilon$. Let $\mathcal{P}$ be an SOCP with $r$ cones and $m$ constraints, defined by matrices $A^\pp{k}$ and vectors $\mathbf{b}, \mathbf{c}^\pp{k}$,  as in \cref{Def:SOCP}. Suppose that $\mathcal{P}$ satisfies the normalisation constraints and strong duality, and that it is $R$-trace and $\tilde{R}$-dual-trace constrained. 
Let $g^*$ denote the optimal value of $\mathcal{P}$, which satisfies the constraint $|g^*| \leq \min(R,\tilde{R})$, given the normalisation conditions and trace constraints.\footnote{This fact follows from the constraint on $\mathbf{c}^\pp{k}$, $R$-trace bound of $\mathbf{x}^\pp{k}$ and Cauchy--Schwarz via a similar calculation as in \cref{footnote:traceconditions}.
} Suppose that one has access to an oracle $\mathcal{O}_{\theta}$ that solves the $\theta$-feasibility question with probability at least 2/3 (when the promise is satisfied) for any SOCP $\hat{\mathcal{P}}$ formulated as in \cref{def:SOCP_feasibility}, where $\hat{\mathcal{P}}$ has $r+1$ cones (where the first $r$ cones are the same sizes as those of $\mathcal{P}$, and the final cone is of size 1) and $\hat{\mathcal{P}}$ has $m+1$ constraints.

Then, with $\bigOt(\log(R/\theta))$ calls to $\mathcal{O}_{\theta}$ with $\theta = \epsilon/(4R\tilde{R})$, one can determine a value $g$, and a vector $\mathbf{y}\in \mathbb{R}^{m+1}$
(satisfying $\mathbf{y} \geq \mathbf{0}$) such that, with probability at least $2/3$, the following are satisfied:
\begin{itemize}
    \item $g^* \in [g, g+\epsilon]$
    \item When we define the vectors:
    \begin{align}\label{eq:implicit_x_from_y_full_SOCP}
        \mathbf{x}^\pp{k}:= \frac{R\cdot 
        e^{-A^{\pp{k}\top} \mathbf{y}_{[0:m-1]}+\mathbf{c}^\pp{k}y_{m}}}{1+ \sum_{k=0}^{r-1} \tr{e^{-A^{\pp{k} \top}\mathbf{y}_{[0:m-1]}+\mathbf{c}^\pp{k} y_{m}}}}\,
    \end{align}
    We use \( \mathbf{y}_{[0:m-1]} \) to denote the vector \( (y_0, y_1, \dots, y_{m-1})^\top \).
    The vector $\mathbf{x} = (\mathbf{x}^\pp{0};\ldots;\mathbf{x}^\pp{r-1})$ achieves an objective value $\sum_{k=0}^{r-1} \mathbf{c}^{\pp{k}\top}\mathbf{x}^\pp{k} \geq g-\epsilon/4\tilde{R}$, and satisfies constraints \newline $\sum_{k=0}^{r-1} A^{\pp{k} \top} \mathbf{x}^\pp{k} \leq \mathbf{b} + (\epsilon/4\tilde{R}) \mathbf{1}$. 
\end{itemize}
\end{lemma}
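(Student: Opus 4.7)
The plan is the standard reduction from SOCP optimization to feasibility via binary search on the objective value. For each candidate value $g \in [-R, R]$, I construct an auxiliary feasibility SOCP $\hat{\mathcal{P}}(g)$ in the unit-trace form of \cref{def:SOCP_feasibility}: rescale $\tilde{\mathbf{x}}^\pp{k} := \mathbf{x}^\pp{k}/R$, introduce a slack scalar $\xi$ in an additional 1-dimensional Lorentz cone $\mathcal{L}^\pp{r}$, and impose the unit-trace constraint $\sum_{k=0}^{r-1}\tr{\tilde{\mathbf{x}}^\pp{k}} + \tr{\xi} = 1$. The $m+1$ inequality constraints are the rescaled original constraints $A\tilde{\mathbf{x}} \leq \mathbf{b}/R$ plus a single objective cut $-\mathbf{c}^\top \tilde{\mathbf{x}} \leq -g/R$; the slack cone contributes zero to every constraint. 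The normalisation conditions of \cref{def:normalisation_conditions} transfer directly, since $\nrm{A^\pp{k}_{j,:}}_{\rm soc}$ and $\nrm{\mathbf{c}^\pp{k}}_{\rm soc}$ are at most $1$ by assumption, while $|b_j|/R, |g|/R \leq 1$ using the $R$-trace bound and $|g^*| \leq R$.

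Completeness is direct: if $g \leq g^*$, take an optimal primal $\mathbf{x}^*$ with $\sum_k \tr{\mathbf{x}^{*\pp{k}}} \leq R$, rescale by $1/R$, and choose $\xi$ to absorb the remaining trace, yielding a point in $\mathcal{S}_0$ for $\hat{\mathcal{P}}(g)$. For soundness, suppose the oracle returns $\mathbf{y} \in \mathbb{R}^{m+1}$ with an associated Gibbs-like point in $\mathcal{S}_\theta$, and set $\mathbf{x} := R\tilde{\mathbf{x}}$. Then $\mathbf{x}^\pp{k} \in \mathcal{L}^\pp{k}$ exactly (the exponential form always lies in the cone by \cref{eq:Jordan_frame_exponential}), $\sum_k A^\pp{k}\mathbf{x}^\pp{k} \leq \mathbf{b} + R\theta \mathbf{1}$, and $\sum_k \mathbf{c}^{\pp{k}\top} \mathbf{x}^\pp{k} \geq g - R\theta$. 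Taking an optimal dual $\mathbf{z}^* \geq \mathbf{0}$ with $\sum_j z_j^* \leq \tilde{R}$ and $\mathbf{b}^\top \mathbf{z}^* = g^*$ (strong duality), and using self-duality of the Lorentz cone, i.e.\ $(A^{\pp{k}\top}\mathbf{z}^* - \mathbf{c}^\pp{k})^\top \mathbf{x}^\pp{k} \geq 0$ since both factors lie in $\mathcal{L}^\pp{k}$, I obtain
\[
    g - R\theta \;\leq\; \mathbf{c}^\top \mathbf{x} \;\leq\; \mathbf{z}^{*\top} A \mathbf{x} \;\leq\; \mathbf{z}^{*\top}\mathbf{b} + R\theta \sum_j z_j^* \;\leq\; g^* + R\theta\, \tilde{R},
\]
so $g \leq g^* + R\theta(1+\tilde{R}) \leq g^* + 2R\theta \tilde{R} = g^* + \epsilon/2$ with $\theta = \epsilon/(4R\tilde{R})$. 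Contrapositively, $g > g^* + \epsilon/2$ forces $\mathcal{S}_\theta = \emptyset$ and the oracle must return infeasible.

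The binary search then proceeds over a grid of spacing $\epsilon/2$ in $[-R, R]$. Outside the interval $(g^*, g^* + \epsilon/2]$ the promise of $\mathcal{O}_\theta$ holds, so answers are definite; at most one grid point lies inside this uncertainty window, which means the transition from feasible to infeasible is well defined up to a single step. Let $g_{\mathrm{out}}$ be the largest grid value at which the oracle returned feasible; the two one-sided conclusions above give $g^* \in [g_{\mathrm{out}} - \epsilon/2, g_{\mathrm{out}} + \epsilon/2]$, and reporting $g := g_{\mathrm{out}} - \epsilon/2$ yields $g^* \in [g, g+\epsilon]$. The output $\mathbf{y}$ is the one returned by the successful call at $g_{\mathrm{out}}$, and rescaling the auxiliary Gibbs-like vector by $R$ recovers exactly \cref{eq:implicit_x_from_y_full_SOCP}; the approximate-feasibility bounds established above translate to $\sum_k A^\pp{k}\mathbf{x}^\pp{k} \leq \mathbf{b} + (\epsilon/4\tilde{R}) \mathbf{1}$ and $\sum_k \mathbf{c}^{\pp{k}\top} \mathbf{x}^\pp{k} \geq g - \epsilon/(4\tilde{R})$. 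Standard majority-vote amplification over $\bigO(\log\log(R/\theta))$ repetitions per query reduces the per-call failure probability enough that a union bound over the $\bigOt(\log(R/\theta))$ queries preserves overall success probability $\geq 2/3$. The main subtlety I expect is in the soundness step: one must verify that the oracle's exponential output is cone-feasible exactly (so self-duality with $\mathbf{z}^*$ is legitimate), while only linear constraints are slackened by $\theta$, and that after undoing the $1/R$ rescaling the same $R\theta$ budget controls both the $m$ original inequality constraints and the additional objective-cut constraint uniformly.
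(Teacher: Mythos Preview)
Your approach is essentially the same as the paper's: the auxiliary feasibility instance $\hat{\mathcal{P}}(g)$ you build (rescale by $1/R$, append a size-$1$ slack cone, add the objective cut $-\mathbf{c}^\top \tilde{\mathbf{x}} \leq -g/R$) coincides with the paper's $\hat{\mathcal{P}}_g$, and both argue completeness/soundness and then binary-search on $g$. Your soundness argument via direct weak duality with the optimal dual $\mathbf{z}^*$ is a clean variant of the paper's, which instead bounds how much the primal optimum can move when $\mathbf{b}$ is relaxed to $\mathbf{b}+R\theta\mathbf{1}$; both ultimately use the $\tilde{R}$-dual-trace bound in the same way. (Note your step $1+\tilde R \le 2\tilde R$ tacitly uses $\tilde R \ge 1$; this is WLOG but should be stated.)

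There is, however, one genuine gap in your output step. You take $\mathbf{y}$ from the oracle call at $g_{\mathrm{out}}$, but $g_{\mathrm{out}}$ may lie in the uncertainty window $(g^*,\,g^*+\epsilon/2]$, where the promise of \cref{def:SOCP_feasibility} need not hold. The lemma only assumes $\mathcal{O}_\theta$ behaves correctly \emph{when the promise is satisfied}; outside it, the oracle may declare ``feasible'' yet return a $\mathbf{y}$ whose Gibbs-like point is not in $\mathcal{S}_\theta$, invalidating your constraint and objective bounds. The paper handles this explicitly: after the binary search pins down the output value $g$ (the lower endpoint of the final interval, guaranteed to satisfy $g \le g^*$), it makes \emph{one additional} oracle call at this $g$, where case~(i) of the promise is certified by completeness, so the returned $\mathbf{y}$ is guaranteed valid. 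Your fix is identical: call $\mathcal{O}_\theta$ once more at $g := g_{\mathrm{out}} - \epsilon/2 \le g^*$ and use that $\mathbf{y}$.
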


\begin{proof}
We are given as input the SOCP $\mathcal{P}$, as in \cref{Def:SOCP}. It is specified by matrices $A^\pp{k}$ and vectors $\mathbf{b}, \mathbf{c}^\pp{k}$, and we assume that it is $R$-trace and $\tilde{R}$-dual-trace constrained and also that it satisfies the normalisation constraints. The proof idea begins by considering an appropriate normalisation of the inputs and incorporating the objective vectors \( \mathbf{c}^\pp{k} \) and the guess \( g \) into the matrices \( A^\pp{k} \) and vector \( \mathbf{b} \), respectively. We then adjust the guess \( g \) to test for feasibility: if it is sufficiently low, a feasible point exists. This procedure enables a binary search to identify the optimal value and a corresponding feasible solution. 

Let $g^* \in \mathbb{R}$ be the optimal (unknown) value of $\mathcal{P}$, which satisfies $|g^*|\leq R$. Let $g \in [-R,R]$ be a tunable guess for $g^*$. We can define new variables that act as inputs for an instance denoted $\hat{P}_g$ of the feasibility problem of \cref{def:SOCP_feasibility}:\begin{align}\label{eq:redefinition_A}
\hat{A}^\pp{k} &:= (   A^\pp{k}_{0,:} \quad ;\quad A^\pp{k}_{1,:} \quad;\quad  \ldots \quad ; \quad A^\pp{k}_{m-1,:} \quad;\quad - \mathbf{c}^\pp{k}) \in \mathbb{R}^{(m+1) \times n^\pp{k}} \text{ for } k =0,\ldots,r-1 \\
\hat{A}^\pp{r} &:= \left(0;0;\cdots; 0\right)\in \mathbb{R}^{(m+1) \times 1}\\
\hat{\mathbf{b}} &:= \left(\frac{b_0}{R};\ldots;\frac{b_{m-1}}{R}; -\frac{g}{R}\right) \in \mathbb{R}^{m+1}. \label{eq:redefinition_b} 
\end{align}

Since we have assumed that the normalisation conditions (\cref{def:normalisation_conditions}) are in place, it should be noted that $\hat{A}^\pp{k}_{m,:}=-\mathbf{c}^\pp{k}$ satisfies $\nrm{\hat{A}^\pp{k}_{m,:}}_{\rm soc} \leq 1$ for all $k$.  Furthermore, since $\mathcal{P}$ is $R$-trace constrained and $g \in [-R,R]$, we have $|b_j| \leq R$ and hence $|\hat{b}_j|\leq 1$. Thus, $\hat{\mathcal{P}}_g$ can be taken as a valid instance of the SOCP feasibility problem defined in \cref{def:SOCP_feasibility}, with  $\hat{m}:= m+1$ constraints and $\hat{r} = r+1$ cones, where cone $k$ has size $n^\pp{k}$ for $k=0,\ldots,r-1$, and cone $r$ has size 1. The additional cone is introduced to enforce that the unity-trace condition is satisfied.

The instance $\hat{\mathcal{P}}_g$ can be fed as input to the oracle  $\bigO_{\theta}$ for any $g \in [-R,R]$. Let $\mathbf{x}^* = (\mathbf{x}^{*\pp{0}};\ldots; \mathbf{x}^{*\pp{r-1}})$ denote the optimal feasible point of $\mathcal{P}$ that achieves $\sum_{k=0}^{r-1}\mathbf{c}^{\pp{k}\top} \mathbf{x}^{*\pp{k}}= g^*$, while satisfying $\sum_{k=0}^{r-1} A^\pp{k} \mathbf{x}^{*\pp{k}}\leq \mathbf{b}$. Define
\begin{align}
    \hat{\mathbf{x}}^* = \left(\frac{\mathbf{x}^{*\pp{0}}}{R};\frac{\mathbf{x}^{*\pp{1}}}{R};\ldots;\frac{\mathbf{x}^{*(r-1)}}{R}; 1-\frac{\sum_{k=0}^{r-1} \Tr(\mathbf{x}^{*\pp{k}})}{R}\right) \in \mathbb{R}^{n^\pp{0}}\times \mathbb{R}^{n^\pp{1}} \times \cdots \times \mathbb{R}^{n^\pp{k-1}} \times \mathbb{R}^1
\end{align}

First, by inspection we observe that $g^* \geq g$ implies that $\sum_{k=0}^{r} \hat{A}^\pp{k} \hat{\mathbf{x}}^{*\pp{k}} \leq \hat{\mathbf{b}}$, and hence that the set $\mathcal{S}_0$ (as defined in \cref{def:SOCP_feasibility}) for the SOCP $\hat{\mathcal{P}}_g$ contains the point $\hat{\mathbf{x}}^*$. 

Next, we consider the case that $g^* < g$. Consider the modification of $\mathcal{P}$ where $\mathbf{b}$ is replaced by $\mathbf{b}+R\theta \mathbf{1}$, and compute its dual as in \cref{dual}: $\min \mathbf{b}^\top \mathbf{z} + R\theta\mathbf{1}^{\top} \mathbf{z}$ , subject to $A^{\pp{k}\top} \mathbf{z} - \mathbf{c}^\pp{k} \in \mathcal{L}^\pp{k}, \mathbf{z} \geq \mathbf{0}$. The fact that the unmodified SOCP $\mathcal{P}$ is $\tilde{R}$-dual-trace constrained implies that $\mathcal{P}$ has an optimal dual solution $\mathbf{z}^*\geq \mathbf{0}$ satisfying $\mathbf{1}^\top \mathbf{z}^* \leq \tilde{R}$. As a consequence, replacing $\mathbf{b}$ with $\mathbf{b}+R\theta \mathbf{1}$ can cause the optimal objective value of the dual to increase by at most $R\tilde{R}\theta$. By strong duality, the optimal value of the modified primal is equal to the optimal value of the modified dual. We conclude that if in fact $g^* < g-R\tilde{R}\theta$, then the optimal value of the modified primal program will still be less than $g$, and there is no point $\mathbf{x} \in \mathcal{L}$ for which $\sum_{k=0}^{r} \hat{A}^\pp{k} \hat{\mathbf{x}}^\pp{k} \leq \hat{\mathbf{b}}+\theta\mathbf{1}$. This implies that $\mathcal{S}_\theta = \varnothing$. 

In summary, we have shown
\begin{align}
    g \leq g^* &\implies \mathcal{S}_0 \neq \varnothing  \qquad \text{case (i) of \cref{def:SOCP_feasibility}} \\
    g > g^* + R\tilde{R}\theta  &\implies \mathcal{S}_\theta = \varnothing \qquad \text{case (ii) of \cref{def:SOCP_feasibility}}
\end{align}
These two statements enable a binary search of the interval $[-R,R]$ for the value of $g^*$. Given an interval $[a,b]$ (initially with $a=-R$ and $b=R$), we can choose $g=(a+b)/2$ to be the midpoint of the interval and run the oracle $\mathcal{O}_{\theta}$ on $\hat{P}_g$. If $g^* \geq g$, then the oracle will output ``feasible'' with probability at least 2/3 and if $g^* \leq g-R\tilde{R}\theta$, it will output ``infeasible'' with probability at least 2/3. We may boost this probability to $1-\zeta$ by repeating the oracle call $\mathcal{O}(\log(1/\zeta))$ times and taking the majority output. If $g^* \in [g-R\tilde{R}\theta,g]$, then we have no guarantees on the output of the oracle. Thus, if we obtain the output ``feasible'' we can update the search interval from $[a,b]$ to $[g-R\tilde{R}\theta, b]$, and if we obtain ``infeasible'' we can update the search interval to $[a,g]$---as long as the oracle call succeeded, we will not have eliminated $g^*$ from the search interval. Each step cuts off nearly half the size of the search interval (binary search). In particular, if the interval has size at least $4R\tilde{R}\theta$, then each iteration reduces the interval size by a factor between 1/2 and 3/4. Since we aim to reduce the search interval to a length of at most \( 4 R \tilde{R} \theta \), we determine the required number of iterations as follows. After \( T_{\rm bs} \) iterations, the interval length is at most \( 2R(3/4)^{T_{\rm bs}}\). 
Therefore, \(T_{\rm bs} =  \log_{4/3}\left( \frac{1}{2\tilde{R}\theta} \right) \) iterations are sufficient.

By outputting the value $g$ to be the lower point of the final search interval after $T_{\rm bs}$ iterations, we guarantee that the output $g$ satisfies $g^* \in [g, g + 4 R\tilde{R} \theta]$. By choosing 
\begin{equation}
    \theta = \frac{\epsilon}{4R \tilde{R}}
\end{equation}
we can achieve precision $\epsilon$ on the objective value, as stated in the theorem statement. We take $\zeta = 1/(3 T_{\rm bs})$---meaning that we need $\bigO(\log(1/\zeta)) = \bigO(\log(\log(1/\tilde{R}\theta)))$ repetitions of the oracle at each binary search step---which ensures all steps succeed and the overall error probability is bounded by $1/3$. 

Furthermore, once we have determined the output $g\in [g^*-4R\tilde{R}\theta, g^*]$, we may complete the procedure by running the oracle $\bigO_\theta$ one final time on the SOCP $\hat{\mathcal{P}}_g$. For this value of $g$, the analysis above guarantees that $\mathcal{S}_0 \neq \varnothing$. Thus, the oracle produces a $\mathbf{y} \in \mathbb{R}^{\hat{m}} = \mathbb{R}^{m+1}$ that implicitly generates a vector $\hat{\mathbf{x}}$ via \cref{eq:feasibility_output_primal_vector}, given here by
\begin{equation}
    \hat{\mathbf{x}} = \frac{1}{\sum_{k=0}^{r} \tr{e^{-\hat{A}^{\pp{k}\top}\mathbf{y}}}}\left(e^{-\hat{A}^{\pp{0}\top}\mathbf{y}}; e^{-\hat{A}^{\pp{1}\top}\mathbf{y}}; \ldots ; e^{-\hat{A}^{\pp{r}\top}\mathbf{y}} \right)
\end{equation}
which, by noting that $\hat{A}^\pp{r}$ is 0 and utilizing the structure of the other $\hat{A}^\pp{k}$, can be rewritten as
\begin{equation}
    \hat{\mathbf{x}} = \frac{1}{1+\sum_{k=0}^{r-1} \tr{e^{-A^{\pp{k}\top}\mathbf{y}_{[0:m-1]} + \mathbf{c}^\pp{k} y_{m}}}}\left(e^{-A^{\pp{1}\top}\mathbf{y}_{[0:m-1]} + \mathbf{c}^\pp{1} y_{m}}; \ldots ; e^{-A^{\pp{r-1}\top}\mathbf{y}_{[0:m-1]} + \mathbf{c}^\pp{r-1} y_{m}} ; 1\right)
\end{equation}
This $\hat{\mathbf{x}}$ lies in $\mathcal{S}_\theta$ and thus
satisfies  $\sum_{k=0}^{r} \hat{A}^\pp{k} \hat{\mathbf{x}}^\pp{k} \leq \hat{\mathbf{b}} + \theta \mathbf{1}$. The same $\mathbf{y}$ can be used to implicitly generate $\mathbf{x} = (R\hat{\mathbf{x}}^\pp{0}; \ldots; R\hat{\mathbf{x}}^\pp{r-1})$ satisfying $ \sum_{k=0}^{r-1} A^\pp{k} \mathbf{x}^\pp{k} \leq \mathbf{b} + R\theta \mathbf{1}$ and $\sum_{k=0}^{r-1}\mathbf{c}^{\pp{k}\top} \mathbf{x}^\pp{k} \geq g-R\theta$, verifying the theorem statement. 

We conclude that the total number of queries to $\mathcal{O}_{\theta}$ is given by $$\bigO(T_{\rm bs}\log(1/\zeta)) = \bigOt(\log(R/\theta)).$$

\end{proof}

\subsection{Access model}\label{section:access_model}

\subsubsection{Quantum registers and notation}
In the presentation so far, we allowed the $r$ cones to be of varying sizes $n^\pp{0}, \ldots, n^\pp{r-1}$ with $n = \sum_k n^\pp{k}$. To organize the notation for the quantum implementation, we assume all cone sizes are equal to $\bar n = \max_k  n^\pp{k} $ in the following quantum access oracles. This is achieved by appropriately padding the vectors $\mathbf{x}^\pp{k}$ and $\mathbf{c}^\pp{k}$, as well as the matrices $A^\pp{k}$, with zeros for all $k$ where $n^\pp{k}<\bar n$. Hence the total number of variables of the (padded) SOCP is $r\bar n$, where $r$ is the number of cones. This assumption can be taken without loss of generality since, as will show, the final query complexity of our quantum algorithm scales with $r$ and $\bar{n}$ as $\bigO(\sqrt{r}) \cdot \mathrm{polylog}(r,\bar{n})$ and thus the padding does not affect the core polynomial scaling of the algorithm. The padding may lead to a mild increase in the space requirement by at most a constant factor. 

The quantum algorithm acts on a set of quantum registers. The computational basis states of these registers index parts of the input data. Namely, we consider a computational basis state of the form 
\begin{equation}
    \underbrace{\ket{j}_{\rm row}\ket{k}_{\rm cone} \ket{i}_{\rm col}}_{\text{for indexing entry } A^\pp{k}_{ji}} \;\; \ket{f}_{\rm flag} \ket{h}_{\rm sampindex}\ket{a}_{\rm anc}
\end{equation}
where the first register holds the row index ($j$), the second register the cone index ($k$), and the third register the column index within the cone ($i$), relevant for indexing the input value $A^\pp{k}_{ji}$. These registers must contain at least $\log_2(m)$, $\log_2(r)$, and $\log_2(\bar{n})$ qubits, respectively, in order for the number of computational basis states to be greater than the number of index values in each case.
  There is also a single-qubit ``flag'' register holding $\ket{f}$, where $f=1$ corresponds to indication of a failure of some kind. One subroutine utilizes an additional register we call the sample index register of $\lceil \log_2(T')\rceil $ qubits, where $T'$ is an integer to be specified later (``number of samples''). 
Some subroutines also utilize additional ancilla registers storing $\ket{a}$, where the size can vary. Above, the subscripts are included for convenience but generally we leave them off except where helpful for clarity.  

For any of these registers, we use $\ket{\bar 0}$ to describe the state of a multiqubit register where all the qubits are initialised to $\ket{0}$. The precise number of qubits can vary and can be inferred from context. 
The state $\ket{ \rm garbage}$ refers to an unspecified pure state that is not useful for the intended computational or informational purpose. Similar to $\ket{\bar{0}}$, the number of qubits in $\ket{\rm garbage}$ varies and can be computed given the context. \newline

\subsubsection{Quantum access model}

Given an instance of the unit-trace feasibility problem of \cref{def:SOCP_feasibility}, defined by input data $A^\pp{0},\ldots,A^\pp{r-1}, \mathbf{b}$, we assume the quantum algorithm has access to the data in $A^\pp{k}$ and $\mathbf{b}$ by the following oracles.

\begin{oracle}[Row-prep oracle $O_R$]\label{O_R}
The quantum algorithm can access the data in the matrices $A^\pp{k}$ through a \textit{row-prep oracle} $O_R$, which prepares a quantum state encoding a given row of the matrix $A^\pp{k}$, controlled on the row and cone as follows
  \begin{align}
    O_R:\ket{j}_{\rm row}\ket{k}_{\rm cone}\ket{\overline{0}}_{\rm col} \ket{0}_{\rm flag} \mapsto\ket{j}_{\rm row }\ket{k}_{\rm cone}\Bigg( & \sum^{\bar n-1}_{i=0} A^{\pp{k} }_{ji}\ket{i}_{\rm col}\ket{0}_{\rm flag}+  \\
    &\sum^{\bar n-1}_{i=0}\sqrt{1-|A^{\pp{k} }_{ji}|^2}\ket{i}_{\rm col}\ket{1}_{\rm flag}\Bigg)\notag
\end{align}
This is well defined since we have assumed that the matrices $A^\pp{k}$ obey the normalisation conditions (\cref{def:normalisation_conditions}), and thus $\nrm{A^\pp{k}_{j,:}} \leq 1$. We assume the ability to implement both the gate and its adjoint. Each of them is also equipped with an additional control on an ancilla qubit.
\end{oracle}

A generalization of this access model might re-define $O_R$ in terms of a constant $\alpha \geq 1$ as:
\begin{align}
    O_R:\ket{j,k,\overline{0},0}\mapsto\ket{j,k}\left(\frac{1}{\alpha}\sum^{\bar n-1}_{i=0} A^{\pp{k}}_{ji}\ket{i}\ket{0}+\sum^{\bar n-1}_{i=0} \sqrt{1- \frac{1}{\alpha^2}|A^{\pp{k}}_{ji}|^2}\ket{i}\ket{1}\right)
\end{align}
Here, $\alpha$ would be analogous to the normalisation factor one uses when working with block-encodings, and it must satisfy $\alpha \geq \nrm{A^\pp{k}_{j,:}}$ for all $j,k$. Since we have assumed that the matrices $A^\pp{k}$ satisfy the normalisation conditions, we may take $\alpha=1$. If the input data was not normalised or normalising wasn't an option, one can follow the analysis of the algorithm carrying the $\alpha$ coefficient. The coefficient \(\alpha\) would appear in the normalisation of the block-encodings of the considered Arrowhead matrices and, consequently, would impact the algorithm's complexity.

\begin{oracle}[$O_\mathbf{b}$ oracle for $\mathbf{b}$ vector]\label{U_b}
The quantum algorithm can access the entries of the vector $\mathbf{b}$ through an oracle that encodes a given entry into the amplitude of the quantum state, controlled on the row register:
\begin{align}
O_\mathbf{b}: \ket{j}_{\rm row}\ket{0}_{\rm flag} \mapsto b_j\ket{j}_{\rm row} \ket{0}_{\rm flag}+\sqrt{1-|b_j|^2}\ket{j}_{\rm row} \ket{1}_{\rm flag}
\end{align}
This is well defined since we have assumed that the vector $\mathbf{b}$ obeys the condition $|b_j|\leq 1$ for all $j$ (\cref{def:SOCP_feasibility}). We assume the ability to implement both the gate and its adjoint. Each of them is also equipped with an additional control on an ancilla qubit.
\end{oracle}

Our complexity analysis focuses on the number of queries made to oracles. The oracle $O_R$ is analogous to an access model for instances of SDPs, where on input $j$, the oracle produces a block-encoding of the $j$-th constraint matrix---here, the constraints are encoded in the row vectors $A^\pp{k}_{j,:}$ rather than as matrices, so it makes sense to assume an oracle that can prepare quantum states encoding these vectors. The oracle $O_R$ can be implemented, for example, as a quantum circuit with $\polylog(mn)$ circuit depth using the quantum accessible data structure studied in \cite{Kerenidis2016QuantumSystems, Chakraborty2018TheSimulation}. 

The oracle $O_{\mathbf{b}}$ can be implemented as a quantum circuit with $\polylog_2(m)$ depth by loading the binary representation of $b_j$ into an ancilla register (e.g., with a log-depth circuit for quantum random access memory (QRAM)), using the ancilla register to control a rotation of the flag qubit, and then unloading the binary representation of $b_j$ to restore the ancilla register. Thus, in a cost model where circuit depth is the relevant metric, these access oracles can be implemented cheaply. This is essentially equivalent to an assumption of cheap QRAM \cite{jaques2023qramsurveycritique,dalzell2025distillation}. 

Our quantum algorithm will also involve a step where it computes some intermediate classical data it stores in a classical database, and then later accesses this dataset coherently. We require two versions of this, which are related. First, for a classically stored vector $\mathbf{y} \in \mathbb{R}^m$, we require the ability to prepare a quantum state encoding the vector into its amplitudes, via the oracle $O_{\mathbf{y}}$. Second, for a list of $T'$ cone indices $\mathcal{T} = (k_0,k_1,\ldots,k_{T'-1})$, we require the ability to prepare an equal superposition over $\ket{k_h}\ket{h}$ for $h = 0,\ldots, T'-1$, accomplished by the oracle $O_{\mathcal{T}}$. 

\begin{oracle}[State-prep oracle for classical data]\label{O_y} Let $\mathbf{y} = (y_0,\ldots,y_{m-1})^\top \in \mathbb{R}^m$ be a vector for which $y_j \geq 0$ for all $j$. The state-prep oracle $O_\mathbf{y}$ prepares a state encoding the entries of $\mathbf{y}$ into its amplitudes
\begin{align}
O_\mathbf{y}\colon \ket{\overline{0}}_{\rm row} \mapsto \frac{1}{\sqrt{\nrm{\mathbf{y}}_1}}\sum_{j=0}^{m-1} \sqrt{y_j}\ket{j}_{\rm row}
\end{align}
Similarly, for an integer $T'$, let $\mathcal{T} = (k_0,k_1,\ldots,k_{T'-1})$ where each $k_h$ is an integer in $[r]$. The state prep oracle $O_{\mathcal{T}}$ prepares a state
\begin{align}
    O_{\mathcal{T}}\colon     \ket{\bar{0}}_{\mathrm{cone}}\ket{\bar{0}}_{\mathrm{sampindex}} \mapsto \frac{1}{\sqrt{T'}}\sum_{h=0}^{T'-1}\ket{k_h}_{\mathrm{cone}} \ket{h}_{\mathrm{sampindex}}
\end{align}
\end{oracle}
We note that both of these oracles could be viewed under the same framework by thinking of $O_{\mathcal{T}}$ as preparing the state associated with the vector in $\mathbb{R}^{T'r}$ with a 1 in $T'$ of the entries. We also assume access to the corresponding adjoint. Both $\mathbf{y}$ and $O_{\mathcal{T}}$ can be implemented as quantum circuits with depth $\mathrm{polylog}(mn)$, although the total circuit size is at least $s$ and $T'$, respectively, where $s$ is the number of nonzero entries of $\mathbf{y}$.


\subsubsection{Classical access model}

We assume that our classical algorithm has an analogous kind of access to the data defining the SOCP instance. Namely, we assume the sample-and-query access model from \cite{tang2019quantumInspired, Chia20}. 

\begin{oracle}
    Given inputs $A^\pp{0},\ldots,A^\pp{r-1}$ and $\mathbf{b}$, we assume we have the ability to perform the following operations.
    \begin{itemize}
        \item Query access: for any $j,k,i$ we can query the value $A^\pp{k}_{ji}$, or the value $b_j$. 
        \item Sample access: for any $j,k$, we can sample a value $i$ with probability equal to $\frac{|A^\pp{k}_{ji}|^2}{\nrm{A^\pp{k}_{j,:}}^2}$. 
        \item Norm access: given $j,k$, we can query the value of the norm $\nrm{A^\pp{k}_{j,:}}$. 
    \end{itemize}
\end{oracle}

\section{Multiplicative Weights approach to SOCP} \label{sec:MWSOCP}

We will now discuss our proposed algorithm to solve the unit-trace SOCP $\theta$-feasibility problem, as described in \cref{def:SOCP_feasibility}, where there are $m$ constraints and $r$ cones of lengths $n^\pp{0}, n^\pp{1},\ldots, n^\pp{r-1}$. The framework of this section applies to both the quantum and the classical algorithm, which diverge only on implementation of the subroutines introduced here.

In \cref{fig:flow_chart}, we provide a flow chart organizing the layers of abstraction in our analysis, where the full SOCP is first reduced to the feasibility SOCP, and then the violated constraint oracle. The rest of the flow chart depicts how the violated constraint oracle is decomposed further into (quantum) subroutines and eventually into queries to the data access oracles of \cref{section:access_model}. 

\begin{figure}[h]
    \centering
    \includegraphics[draft=false, width=0.97\linewidth]{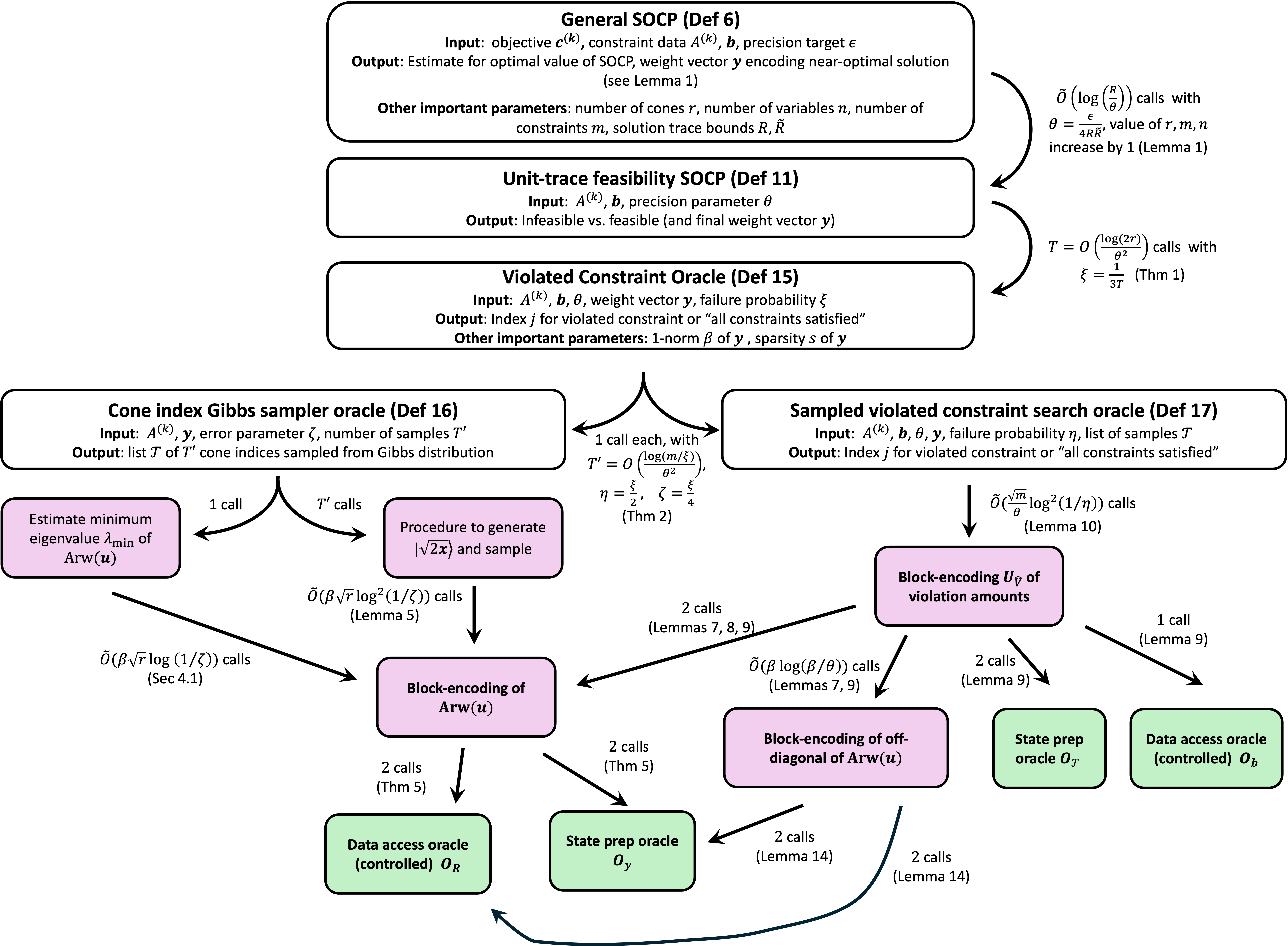}
    \caption{Summary of the subroutines in our analysis that are used to approximately solve an SOCP. White boxes represent subroutines which are agnostic to classical vs.~quantum, purple boxes represent subroutines specific to the quantum implementation, and green boxes represent the quantum data access oracles. }
    \label{fig:flow_chart}
\end{figure}

\subsection{Violated constraint oracle}
The main subroutine of the algorithm is “violated constraint oracle'', which, given an implicit representation of a point $\mathbf{x} \in \mathbb{R}^n$, either finds a constraint index $j \in [m]$ corresponding to a violated constraint or else returns that all constraints are satisfied. As stated in \cref{def:SOCP_feasibility},  we must be able to distinguish two scenarios: either $\mathcal{S}_0 \neq \varnothing$, or $\mathcal{S}_\theta = \varnothing$. The violated constraint oracle, roughly speaking, will test whether $\mathbf{x} \in \mathcal{S}_0$, or else find an index $j$ associated with a constraint that $\mathbf{x}$ violates by at least $\Omega(\theta)$. 

Core to this approach is a quantification of the amount of ``violation'' of each constraint. Given a point $\mathbf{x} = (\mathbf{x}^\pp{0};\ldots; \mathbf{x}^\pp{r-1})$, for each $j =0,1,\ldots,m-1$, we define
\begin{align}
    v_j = \sum_{k=0}^{r-1} A_{j,:}^\pp{k} \mathbf{x}^\pp{k} - b_j
\end{align}
to be the amount by which the $j$-th constraint is violated---here $A_{j,:}^{\pp{k}} \mathbf{x}^\pp{k}$ is the scalar quantity equal to the inner product between $\mathbf{x}^{\pp{k}}$ and the $j$-th row of $A^{\pp{k}}$.  We partition the set $[m]$ into subsets based on how much violation they have, as illustrated in \cref{fig:ThetaRegion} and the definitions below. 

\begin{definition}[Violated Constraints]
Let \(\theta \geq 0\) be a given threshold parameter. A constraint is said to be \textbf{violated} if its violation exceeds \(\theta\), that is, $v_j = \sum_{k=0}^{r-1} A_{j,:}^\pp{k} \mathbf{x}^\pp{k} - b_j > \theta $. The set of all such constraints is denoted by \(V_{> \theta} \subseteq [m]\) .
\end{definition}


\begin{definition}[\(\theta\)-Violated Constraints]
A constraint is said to be \textbf{\(\theta\)-violated} if its violation is within the range \((\theta/2, \theta]\), that is $\theta/2 < \sum_{k=0}^{r-1} A_{j,:}^\pp{k}\mathbf{x}^\pp{k} -b_j \leq \theta$. The set of all such constraints is denoted by \(V_{\theta} \subseteq [m]\).
\end{definition}

\begin{definition}[Extended Violated Constraints set V]
The set of all constraints that are either violated or \(\theta\)-violated is denoted by \(V\), i.e.,  
\[
V = V_{> \theta} \cup V_{\theta}.
\]
\end{definition}

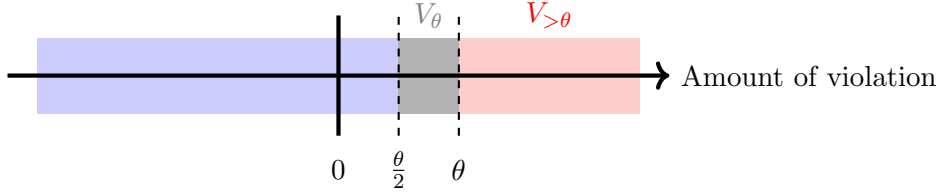
\begin{figure}
    \centering
    \begin{tikzpicture}
        \def\rectWidth{8}
        \def\rectHeight{1}
        \def\thetaPos{1.6} 
        
        \fill[blue!20] (0,0) rectangle (\rectWidth/2+\thetaPos/2,\rectHeight);
        \fill[gray!60] (\rectWidth/2+\thetaPos/2,0) rectangle (\rectWidth/2+\thetaPos,\rectHeight);
        \fill[red!20] (\rectWidth/2+\thetaPos,0) rectangle (\rectWidth,\rectHeight);

        \node[gray] at (\rectWidth/2 + \thetaPos*3/4,\rectHeight*1.3) {$V_{\theta}$};
        \node[red] at (\rectWidth*3/4 + \thetaPos/2,\rectHeight*1.3) {$V_{>\theta}$};
        
        \draw[->, ultra thick] (-\rectWidth*0.05,\rectHeight*0.5) -- (\rectWidth*1.05,\rectHeight*0.5) node[right] {Amount of violation};
        
        \draw[ultra thick] (\rectWidth/2,-0.3) -- (\rectWidth/2,\rectHeight+0.3);
        \node at (\rectWidth/2,-0.75) {0};
        
        \draw[thick, dashed] (\rectWidth/2 + \thetaPos/2,-0.3) -- (\rectWidth/2 + \thetaPos/2,\rectHeight+0.3);
        \node at (\rectWidth/2 + \thetaPos/2,-0.75) {$\frac{\theta}{2}$};

        \draw[thick, dashed] (\rectWidth/2 + \thetaPos,-0.3) -- (\rectWidth/2 + \thetaPos,\rectHeight+0.3);
        \node at (\rectWidth/2 + \thetaPos,-0.75) {$\theta$};
        
    \end{tikzpicture}
    \caption{Given a point $\mathbf{y} \in \mathbb{R}^m$, which implicitly defines $(\mathbf{x}^\pp{0};\ldots;\mathbf{x}^\pp{r-1})$ via \cref{eq:x^\pp{k}_implicit_VCO}, each constraint $j \in [m]$ is violated by an amount $\sum_{k=0}^{r-1} A^\pp{k}_{j,:}\mathbf{x}^\pp{k} - b_j$ (negative numbers indicate the constraint is satisfied). The set $V_{>\theta}$ contains values of $j$ for which the violation is more than $\theta$, and the set $V_{\theta}$ contains values of $j$  the violation is in the interval $(\theta/2,\theta]$.  }
    \label{fig:ThetaRegion}
\end{figure}

To understand why we have defined the sets this way, notice that we will never be able to ``measure'' $v_j$ perfectly; higher precision will come at higher cost, and we are trying to minimize the cost, hence we require buffer zones around the critical values of $v_j$ at 0 and $\theta$. 
Specifically, recall from \cref{def:SOCP_feasibility} that we will be promised that our situation is one of two cases. The first case is that there exists a point $\mathbf{x}$ for which all constraints are fully satisfied ($v_j \leq 0$ for all $j$), which also implies that $V = \varnothing$ for that point. Importantly, in this case, we need only produce a vector $\mathbf{y}$ that defines a point $\mathbf{x}$ for which $V_{>\theta}$ is empty. We don't actually need to produce a point where all constraints are fully satisfied; thus, it will be fine to consider constraints violated by $\theta/2$ or less as effectively satisfied, as in \cref{fig:ThetaRegion}. The second case is that for all $\mathbf{x}$, $V_{>\theta} \neq \varnothing$. In this case, as long as we can ``measure'' $v_j$ to precision, say, $\theta/4$, then for at least one $j$, our estimate of $j$ will be at least $3\theta/4$, and conversely any $j$ that satisfies this criteria must be in $V$. Thus, we can find a point with $\Omega(\theta)$ violation using only precision $\Omega(\theta)$. We distill our desiderata into the following ``violated constraint oracle.''

\begin{definition}[Violated constraint oracle]\label{def:violated_constraint_oracle}
    A violated constraint oracle is a (classical or quantum) subroutine that satisfies the following input-output criteria. It takes as input 
    \begin{itemize}
        \item An instance of the unit-trace SOCP feasibility problem of \cref{def:SOCP_feasibility}, specified by an error parameter $\theta$, matrices $A^\pp{0}, \ldots, A^\pp{r-1}$ and vector $\mathbf{b}$
        \item A vector $\mathbf{y} \in \mathbb{R}^m$ 
        \item A maximum failure probability $\xi$
    \end{itemize}
    The inputs implicitly define a point $\mathbf{x} = (\mathbf{x}^\pp{0};\ldots;\mathbf{x}^\pp{r-1})$ where
    \begin{equation}\label{eq:x^\pp{k}_implicit_VCO}
        \mathbf{x}^\pp{k} = \frac{e^{-A^{\pp{k} \top} \mathbf{y}}}{\sum_{k'=0}^{r-1} \Tr(e^{-A^{\pp{k'} \top} \mathbf{y}})}
    \end{equation}
    and sets $V_{>\theta}, V_{\theta}, V \subset [m]$. The output of the oracle is:
    \begin{enumerate}[label=(\roman*)]
        \item If $V_{>\theta}$ is nonempty, then with probability at least $1-\xi$, output a value $j \in V = V_{>\theta} \cup V_\theta$.
        \item If $V$ is empty, then with probability at least $1-\xi$, output ``all constraints satisfied''.
        \item Otherwise, $V_{>\theta}$ is empty but $V_{\theta}$ is not empty. Then, with probability at least $1-\xi$ output either ``all constraints satisfied'' or output a value $j \in V_{\theta}$.
    \end{enumerate}
\end{definition}

\subsection{Main algorithm}
The main algorithm makes repeated calls to the violated constraint oracle, and consistent with \cref{def:SOCP_feasibility}, it either outputs (i) (``Feasible'', $\mathbf{y} \in \mathbb{R}_{+}^m$), which means $\mathbf{y}$ builds a vector $\mathbf{x}$ via \cref{eq:feasibility_output_primal_vector} that is $\theta$-feasible; or (ii) (“Infeasible”), which means that the SOCP is infeasible. The probability of incorrect output provided the promise of \cref{def:SOCP_feasibility} is satisfied is taken to be at most 1/3. The multiplicative weights algorithm for the SOCP feasibility problem is given below.\\
\begin{algorithm}[H]
\caption{Pseudo-code for SOCP Feasibility MW algorithm.}\label{Alg1:description}
\begin{algorithmic}[1]
\Statex \textbf{Input:} $ A^\pp{0},\ldots,A^\pp{r-1},\mathbf{b}, \theta$ as in \cref{def:SOCP_feasibility}
\Statex \textbf{Output:} If set $\mathcal{S}_{\theta}$ is empty, outputs ``infeasible'' with probability at least 2/3. If set $\mathcal{S}_0$ is not empty, outputs  $\mathbf{y}$ that (implicitly) constructs $\theta$-feasible $\mathbf{x}^\pp{k}$ $\forall k$ with probability at least 2/3. (See \cref{def:SOCP_feasibility}.) 
\Statex
\Procedure{FeasibilitySOCP}{}
    \State $T \gets \frac{36\log (2r)}{\theta^2}$
    \State $\xi \gets 1/(3T)$ 
    \State $\mathbf{y}^{(0)} \gets \mathbf{0} \in \mathbb{R}^m$ 
    \For{$t =0$ to $T-1$}
    \State $j^{(t)} \gets \mathrm{ViolationConstraintOracle}(\mathbf{y}^{(t)},A^\pp{0},\ldots,A^\pp{r-1},\mathbf{b},\theta,\xi )$\label{line:VCO}
    \If{$j^{(t)} = $ ``all constraints satisfied''}
    \State  \Return (``Feasible'', $\mathbf{y} = \mathbf{y}^{(t)}$) \Comment{Conclude that $S_{0}\neq \emptyset$}
    \Else \Comment{$j^{(t)} \in V$ is a violated constraint}
    \State $\mathbf{y}^{(t+1)} \gets \mathbf{y}^{(t)} + \frac{\theta}{6} \mathbf{e}_{j^{(t)}}$
    \EndIf
    \EndFor
    \State \Return $\text{(“Infeasible”)}$. \Comment{If $j^{(t)} \in V$ found in all $T$ iterations, conclude that $S_{\theta}=\emptyset$}
\EndProcedure
\end{algorithmic}
\end{algorithm}

\subsection{Convergence of the main algorithm}

First, we outline some key facts of SOCP and lemmas that will help us prove \cref{theorem:MW_algo}. At a high level, this section is a translation of the SDP case from \cite{Kale07} to the specific case of SOCP. Notably, equivalent proofs for some of these lemmas can be found in \cite{Canyakmaz2023MultiplicativeUF}.

We first define notation used in the literature and in \cref{Alg1:description}. For $t=0,\ldots,T-1$, let $j^{(t)}$ denote the constraint index returned by the violated constraint oracle in line \ref{line:VCO} of \cref{Alg1:description} on iteration $t$. Here we can assume the situation when an index is returned directly. Let $\delta > 0$ be a tunable parameter (later we will identify $\delta = \theta/6$)
\begin{itemize}
    \item For each cone $k = 0,\ldots, r-1$ and each $t = 0,\ldots, T-1$, define
\begin{align}
\mathbf{m}^{(k,t)}& = \frac{1}{2}(\mathbf{e}^\pp{k}-(A^{\pp{k}\top})_{:,j^{(t)}})\text{, where $j^{(t)} \in [m]$},
\\
\phi^{(k,t)} &= \Tr(\exp(\delta \sum_{\tau=0}^{t-1} \mathbf{m}^{(k,\tau)})) \\
\mathbf{p}^{(k,t)} &= \frac{\exp(\delta \sum_{\tau=0}^{t-1} \mathbf{m}^{(k,\tau)})}{\phi^{(k,t)}}.
\end{align}

    \item To extend to the multicone case, we concatenate the cone vectors, forming:
\begin{align}
\mathbf{M}^{(t)}& = \left(\mathbf{m}^{(0,t)};\ldots;
\mathbf{m}^{(r-1,t)}\right)\in \mathbb{R}^{n^\pp{0}}\times\ldots\times\mathbb{R}^{n^\pp{r-1}} \\
\Phi^{(t)} &= \Tr(\exp(\delta \sum_{\tau=0}^{t-1} \mathbf{M}^{(\tau)})) = \sum^{r-1} _{k=0}\Tr(\exp(\delta \sum_{\tau=0}^{t-1} \mathbf{m}^{(k,\tau)})) = \sum^{r-1}_{k=0} \phi^{(k,t)} \label{eq:Phi^{(t)}}\\
\mathbf{P}^{(t)} &= \frac{\exp(\delta \sum_{\tau=0}^{t-1} \mathbf{M}^{(\tau)})}{\Phi^{(t)}}= \frac{1}{\Phi^{(t)}}\left(\exp(\delta \sum_{\tau=0}^{t-1} \mathbf{m}^{(0,\tau)});\ldots;\exp(\delta \sum_{\tau=0}^{t-1} \mathbf{m}^{(r-1,\tau)})\right)
\end{align}

\end{itemize}

We drop the superscripts $k, t$ (e.g., $\mathbf{m}^{(k,t)} \mapsto \mathbf{m}$) for the following lemmas:

\begin{lemma}[Exponential inequality]\label{Lemma:exponential_inequality}
    Given $\delta\leq 1$, let $\delta_1 :=e^{\delta}-1 $ and $\delta_2 :=1-e^{-\delta} $. If $\| \mathbf{m}\|_{\rm soc}\leq1$, then
    \begin{align}
             e^{\delta \mathbf{m}} &\preceq  \mathbf{e} +   \delta_1 \mathbf{m}  
            \quad \text { if }  \mathbf{m} \succeq \mathbf{0} \text{ i.e. all eigenvalues}\in [0,1]\\
             e^{\delta \mathbf{m}} &\preceq  \mathbf{e} +   \delta_2 \mathbf{m}      \quad \text { if }  \mathbf{m} \preceq \mathbf{0} \text{ i.e. all eigenvalues}\in [-1,0]
\end{align}
    and if $\forall k$ $\| \mathbf{m}^\pp{k}\|_{\rm soc} \leq 1$: 
        \begin{equation}
        e^{\delta \mathbf{M}} =(e^{\delta \mathbf{m}^\pp{0}};\ldots;e^{\delta \mathbf{m}^\pp{r-1}})   \preceq  
        \begin{cases}
        (\mathbf{e}^\pp{0} +\delta_1\mathbf{m}^\pp{0};\ldots;\mathbf{e}^\pp{r-1} +\delta_1\mathbf{m}^\pp{r-1})  & \text{if $\mathbf{m}^\pp{k} \succeq 0$ for all $k$}\\
        (\mathbf{e}^\pp{0} + \delta_2\mathbf{m}^\pp{0};\ldots;\mathbf{e}^\pp{r-1} + \delta_2\mathbf{m}^\pp{r-1})  & \text{if $\mathbf{m}^\pp{k} \preceq 0$ for all $k$}        
        \end{cases}
    \end{equation}
\end{lemma}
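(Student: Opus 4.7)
The plan is to reduce everything to a one-dimensional convexity argument by working in the Jordan frame. Recall that for $\mathbf{m}^\pp{k}$ with $\|\mathbf{m}^\pp{k}\|_{\rm soc}\leq 1$, the Jordan frame decomposition gives $\mathbf{m}^\pp{k}=\lambda_+\mathbf{c}_++\lambda_-\mathbf{c}_-$ with $|\lambda_\pm|\leq 1$, where $\mathbf{c}_++\mathbf{c}_-=\mathbf{e}^\pp{k}$. From \cref{eq:Jordan_frame_exponential}, $e^{\delta\mathbf{m}^\pp{k}}=e^{\delta\lambda_+}\mathbf{c}_++e^{\delta\lambda_-}\mathbf{c}_-$, and since the right-hand sides $\mathbf{e}^\pp{k}+\delta_i\mathbf{m}^\pp{k}=(1+\delta_i\lambda_+)\mathbf{c}_++(1+\delta_i\lambda_-)\mathbf{c}_-$ are expressed in the same Jordan frame, the statement $e^{\delta\mathbf{m}^\pp{k}}\preceq\mathbf{e}^\pp{k}+\delta_i\mathbf{m}^\pp{k}$ is equivalent to the two scalar inequalities $e^{\delta\lambda_\pm}\leq 1+\delta_i\lambda_\pm$, because a linear combination $a_+\mathbf{c}_++a_-\mathbf{c}_-$ lies in $\mathcal{L}^\pp{k}$ if and only if $a_\pm\geq 0$.

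The first step I would do explicitly is justify this last claim: from the definitions of $\mathbf{c}_\pm$, one can verify that $a_+\mathbf{c}_++a_-\mathbf{c}_-$ has scalar part $(a_++a_-)/2$ and vector part $(a_+-a_-)\vec{v}/(2\|\vec{v}\|)$, so its Euclidean-norm condition for cone membership reduces to $|a_+-a_-|\leq a_++a_-$, i.e., $a_\pm\geq 0$. Once this is in place, the problem reduces to checking the two scalar convexity inequalities for the exponential.

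For the scalar inequalities, I would use the convexity of $x\mapsto e^{\delta x}$: in the case $\mathbf{m}^\pp{k}\succeq\mathbf{0}$, we have $\lambda_\pm\in[0,1]$, and the chord from $(0,1)$ to $(1,e^\delta)$ lies above the curve on $[0,1]$, giving $e^{\delta\lambda_\pm}\leq 1+(e^\delta-1)\lambda_\pm=1+\delta_1\lambda_\pm$; in the case $\mathbf{m}^\pp{k}\preceq\mathbf{0}$, we have $\lambda_\pm\in[-1,0]$, and the chord from $(-1,e^{-\delta})$ to $(0,1)$ similarly lies above the curve, yielding $e^{\delta\lambda_\pm}\leq 1+(1-e^{-\delta})\lambda_\pm=1+\delta_2\lambda_\pm$. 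Both endpoints match exactly, which is why $\delta_1$ and $\delta_2$ have the particular forms claimed.

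The multicone case is an immediate consequence: by \cref{arrw_multi_vector}, the cone ordering $\preceq$ for $\mathbf{M}$ is defined blockwise, the exponential in the Jordan algebra is also defined blockwise (\cref{eq:Jordan_frame_exponential}), and $\mathbf{e}+\delta_i\mathbf{M}$ is a blockwise sum; so if the single-cone inequality holds in each block $k$, the multicone inequality follows by simple concatenation. The main (modest) obstacle is the first step—translating $\preceq$ in the Jordan algebra into a concrete, checkable condition on coefficients in a common Jordan frame—after which the rest of the proof is a direct scalar convexity computation.
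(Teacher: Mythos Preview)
Your proposal is correct and follows essentially the same approach as the paper: both reduce to the shared Jordan frame, invoke convexity of the exponential to obtain the scalar inequalities $e^{\delta\lambda_\pm}\leq 1+\delta_i\lambda_\pm$, and then pass to the multicone case blockwise. Your explicit verification that $a_+\mathbf{c}_++a_-\mathbf{c}_-\succeq 0$ iff $a_\pm\geq 0$ is a detail the paper glosses over, so if anything your write-up is slightly more complete.
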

    \begin{proof} 
        Note that by the definition of exponentiation for second-order cones, the left-hand side $e^{\delta \mathbf{m}}$ and the right-hand side $ \mathbf{e}+\delta_i \mathbf{m}$ share the same Jordan frame. Thus, it is sufficient to show that the  eigenvalues of the left-hand side are each less than or equal to the corresponding eigenvalues of the right-hand side. Let $\lambda_{\pm}$ denote two eigenvalues of $\mathbf{m}$, noting that $|\lambda_{\pm}|\leq 1$ by assumption. The two eigenvalues of the left-hand side are given by $e^{\delta \lambda_{\pm}}$. We note the following inequalities over real numbers, which follow from the convexity of the exponential function:
\begin{align}
     \exp (\delta \lambda_{\pm}) \leq \begin{cases}1+ (e^{\delta}-1) \lambda_{\pm}  = 1+\delta_1 \lambda_{\pm}  & \text { if } \lambda_{\pm} \in[0,1] \text{ and }\delta \leq 1 \\
      1+ (1-e^{-\delta}) \lambda_{\pm} =1+\delta_2 \lambda_{\pm}   &  \text { if } \lambda{\pm} \in[-1,0] \text{ and }\delta \leq 1
     \end{cases}
\end{align}
where we have substituted $\delta_1 := e^{\delta}-1$ and $\delta_2 := 1-e^{-\delta}$. This proves the single-cone statement. The multicone statement follows immediately, since each cone can be treated independently. 

    \end{proof}

\begin{lemma}\label{lem:trace_replace}
    If $\mathbf{A} \succeq 0$ and $\mathbf{B} \preceq \mathbf{C}$, then 
    \begin{align}
        \Tr(\mathbf{A} \circ \mathbf{B}) \leq \Tr(\mathbf{A} \circ \mathbf{C})
    \end{align}
\end{lemma}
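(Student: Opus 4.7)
The plan is to reduce the claim to the statement that the second-order cone is self-dual under the standard Euclidean inner product, i.e.\ $\mathbf{u}^\top \mathbf{v} \geq 0$ for any two elements $\mathbf{u},\mathbf{v} \in \mathcal{L}$.

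First I would use bilinearity of the circle product together with the linearity of the trace to rewrite the desired inequality as
\begin{equation*}
\Tr(\mathbf{A} \circ \mathbf{C}) - \Tr(\mathbf{A} \circ \mathbf{B}) = \Tr\bigl(\mathbf{A} \circ (\mathbf{C}-\mathbf{B})\bigr) \geq 0.
\end{equation*}
Setting $\mathbf{D} := \mathbf{C} - \mathbf{B}$, the hypothesis $\mathbf{B} \preceq \mathbf{C}$ is precisely $\mathbf{D} \in \mathcal{L}$, and we also have $\mathbf{A} \in \mathcal{L}$. Invoking the identity from \cref{eq:trace_circ_product} (extended cone-wise to the multicone case), the inequality becomes
\begin{equation*}
\Tr(\mathbf{A} \circ \mathbf{D}) = 2\,\mathbf{A}^\top \mathbf{D} = 2 \sum_{k=0}^{r-1} \mathbf{A}^{\pp{k}\top} \mathbf{D}^\pp{k} \geq 0.
\end{equation*}

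Next I would establish self-duality on a single cone. Fix $k$ and write $\mathbf{A}^\pp{k} = (A_0, \vec A)^\top$ and $\mathbf{D}^\pp{k} = (D_0,\vec D)^\top$ with $A_0 \geq \|\vec A\|$ and $D_0 \geq \|\vec D\|$ (both nonnegative). Then by Cauchy--Schwarz,
\begin{equation*}
\mathbf{A}^{\pp{k}\top}\mathbf{D}^\pp{k} = A_0 D_0 + \vec A^\top \vec D \geq A_0 D_0 - \|\vec A\|\,\|\vec D\| \geq 0,
\end{equation*}
since $A_0 D_0 \geq \|\vec A\|\,\|\vec D\|$. Summing this nonnegative quantity over all cones $k = 0,\ldots,r-1$ yields $\mathbf{A}^\top \mathbf{D} \geq 0$, which combined with the reduction above completes the proof.

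No step here is really an obstacle: the only conceptual input is the self-duality of the Lorentz cone, which is a one-line Cauchy--Schwarz calculation, and everything else is bookkeeping with the definitions introduced earlier in \cref{background}. The only thing to be careful about is to make sure the trace/inner-product identity is applied on the multicone object by decomposing into the $r$ summands rather than trying to manipulate $\mathbf{A} \circ \mathbf{D}$ as a monolithic object.
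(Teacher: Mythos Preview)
Your proof is correct and in fact takes a cleaner route than the paper's. Both arguments reduce to showing $\Tr(\mathbf{A}\circ(\mathbf{C}-\mathbf{B}))\geq 0$, but from there they diverge: the paper writes $\mathbf{C}-\mathbf{B}=\mathbf{D}\circ\mathbf{D}$ for $\mathbf{D}=\sqrt{\mathbf{C}-\mathbf{B}}$, manipulates $\Tr(\mathbf{A}\circ(\mathbf{D}\circ\mathbf{D}))$ into the quadratic form $2\,\mathbf{D}^\top\operatorname{Arw}(\mathbf{A})\,\mathbf{D}$, and then invokes positive semidefiniteness of $\operatorname{Arw}(\mathbf{A})$. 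You instead go straight to $\Tr(\mathbf{A}\circ\mathbf{D})=2\,\mathbf{A}^\top\mathbf{D}$ and use self-duality of the Lorentz cone via a one-line Cauchy--Schwarz bound. Your approach is more elementary: it avoids the Jordan-algebra square root and the arrowhead machinery entirely, relying only on the identity \eqref{eq:trace_circ_product}. The paper's argument has the virtue of mirroring the familiar SDP proof pattern (where one writes $C-B=D^*D$ and uses $\Tr(AD^*D)=\Tr(DAD^*)\geq 0$), which may be pedagogically useful given the SDP-to-SOCP analogy that runs through the work, but for this lemma in isolation your argument is shorter and sharper.
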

\begin{proof}
    We have $\mathbf{C}-\mathbf{B} \succeq 0$, hence the state $\mathbf{D} = \sqrt{\mathbf{C}-\mathbf{B}}$ is well defined and satisfies $\mathbf{D} \circ \mathbf{D} = \mathbf{C} - \mathbf{B}$. We have
    \begin{align}
        \Tr(\mathbf{A} \circ (\mathbf{C} - \mathbf{B})) &= \Tr(\mathbf{A} \circ (\mathbf{D} \circ \mathbf{D})) \\
        &= 2\mathbf{A}^{\top}\operatorname{Arw}(\mathbf{D})\mathbf{D} \\
        &= 2\mathbf{D}^{\top}\operatorname{Arw}(\mathbf{A})\mathbf{D} \\
        &\geq 2 \min_{\mathbf{x}: \nrm{\mathbf{x}}=\nrm{\mathbf{D}}} \mathbf{x}^{\top}\operatorname{Arw}(\mathbf{A}) \mathbf{x} \\
        &= 2 \nrm{\mathbf{D}}^2 \lambda_{\min}(\operatorname{Arw}(\mathbf{A})) \\
        &\geq 0
    \end{align}
    where the final inequality follows from the fact that $\mathbf{A} \succeq 0$. This implies the lemma statement.
\end{proof}

\begin{lemma}[From Golden-Thompson inequality \citep{Tao2021}]\label{lemma:Golden-Thomson}
Let \( V \) be any Euclidean Jordan algebra. Then for \( \mathbf{m}, \mathbf{q} \in V \),
\begin{equation}
\Tr(e^{\mathbf{m} + \mathbf{q}}) \leq \Tr(e^\mathbf{m} \circ e^\mathbf{q}). 
\end{equation}
where equality holds if and only if \( \mathbf{m} \) and \( \mathbf{q} \) share the same Jordan frame. 
\end{lemma}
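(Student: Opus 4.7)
The plan is to prove the single-cone inequality by direct computation via the Jordan frame, then extend to the multicone case by summing block-by-block. First I would expand $e^{\mathbf m}$ and $e^{\mathbf q}$ using the Jordan-frame exponential formula \cref{eq:Jordan_frame_exponential}, which yields the compact form $e^{\mathbf m} = e^{m_0}(\cosh(\|\vec m\|);\, \sinh(\|\vec m\|)\hat m)$, where $\hat m$ is the unit vector along $\vec m$ (and identically zero when $\vec m = 0$). Combining this with the trace identity $\Tr(e^{\mathbf v}) = 2 e^{v_0}\cosh(\|\vec v\|)$ from \cref{eq:trace_exponential_vector} and $\Tr(\mathbf u \circ \mathbf v) = 2 \mathbf u^\top \mathbf v$ from \cref{eq:trace_circ_product} produces closed forms for both sides of the inequality. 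Factoring out the common $e^{m_0+q_0}$ prefactor and setting $a = \|\vec m\|$, $b = \|\vec q\|$, $c = \hat m \cdot \hat q \in [-1,1]$, and $\alpha = (1+c)/2 \in [0,1]$, the lemma reduces to the scalar claim
\begin{equation}
\cosh\bigl(\sqrt{\alpha (a+b)^2 + (1-\alpha)(a-b)^2}\bigr) \;\leq\; \alpha \cosh(a+b) + (1-\alpha)\cosh(a-b),
\end{equation}
where the left side uses the cosine-law identity $\|\vec m + \vec q\|^2 = \alpha(a+b)^2 + (1-\alpha)(a-b)^2$ and the right uses $\cosh(a)\cosh(b)\pm\sinh(a)\sinh(b) = \cosh(a\mp b)$.

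The main technical step is recognizing this reduced inequality as Jensen's inequality applied to the function $f(x) = \cosh(\sqrt{x})$ on $[0,\infty)$, evaluated at the points $(a+b)^2$ and $(a-b)^2$ with weights $\alpha$ and $1-\alpha$. Convexity of $f$ follows immediately from the Taylor expansion $f(x) = \sum_{n=0}^{\infty} x^n/(2n)!$: every coefficient is nonnegative and each monomial $x^n$ is convex on $[0,\infty)$. This completes the per-cone argument. The multicone statement follows at once, since the exponential, the circle product, and the trace all act blockwise across the Cartesian product of cones, so $\Tr(e^{\mathbf m + \mathbf q}) - \Tr(e^{\mathbf m}\circ e^{\mathbf q}) = \sum_{k=0}^{r-1}\bigl[\Tr(e^{\mathbf m^\pp{k} + \mathbf q^\pp{k}}) - \Tr(e^{\mathbf m^\pp{k}}\circ e^{\mathbf q^\pp{k}})\bigr] \leq 0$.

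For the equality condition, equality in the Jensen step forces either $\alpha \in \{0,1\}$ (so $c = \pm 1$, i.e.\ $\vec m$ and $\vec q$ are parallel or antiparallel) or $(a+b)^2 = (a-b)^2$ (so $a=0$ or $b=0$). In each of these cases $\vec m^\pp{k}$ and $\vec q^\pp{k}$ are linearly dependent, which is exactly when $\mathbf m^\pp{k}$ and $\mathbf q^\pp{k}$ admit a common Jordan frame $\{\mathbf c_+, \mathbf c_-\}$; the degenerate case $\vec m^\pp{k} = 0$ or $\vec q^\pp{k} = 0$ corresponds to the frame being non-unique and always choosable to coincide. In the multicone setting, equality across the sum requires equality in each block. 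The step I expect to need the most care is the degenerate case in the equality analysis and the bookkeeping when a cone has size one (where the ``vector part'' is empty and the inequality trivializes); the convexity of $\cosh(\sqrt{\cdot})$ itself is elementary.
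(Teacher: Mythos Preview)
The paper does not actually prove this lemma; it is cited directly from \cite{Tao2021} without proof, with only the second-order-cone and multicone corollaries spelled out afterwards. So there is no ``paper's own proof'' to compare against.

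Your proposal is a genuinely different route: rather than invoking the general Euclidean-Jordan-algebra Golden--Thompson inequality as a black box, you give a self-contained computation specific to the second-order cone. The reduction to Jensen's inequality for $f(x)=\cosh(\sqrt{x})=\sum_{n\ge 0}x^n/(2n)!$ is correct and elegant; strict convexity of $f$ on $[0,\infty)$ then handles the equality case exactly as you describe. The multicone extension is immediate because all operations act blockwise. What your approach buys is an elementary argument with no dependence on the general EJA literature; what it gives up is the full generality of the stated lemma (``any Euclidean Jordan algebra''), since you only treat the Jordan algebra of the second-order cone and its Cartesian products. That suffices for every use of the lemma in the paper, but you should flag explicitly that you are proving only the rank-$2$ simple case and its direct sums, not the general statement.
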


Moreover, for the Euclidean Jordan algebra associated to the second-order cone,
\begin{equation}
\Tr(e^{\mathbf{m} + \mathbf{q}}) \leq \Tr(e^\mathbf{m} \circ e^\mathbf{q}) = 2 (e^{\mathbf{m}})^{\top} e^\mathbf{q}. 
\end{equation}

Then for the multicone case, 
\begin{equation}
\Tr(e^{\mathbf{M}+\mathbf{Q}}) \leq \sum^{r-1}_{k=0}\Tr(e^{\mathbf{m}^\pp{k}} \circ e^{\mathbf{q}^\pp{k}}) = \sum^{r-1}_{k=0}2(e^{\mathbf{m}^\pp{k}})^\top e^{\mathbf{q}^\pp{k}}\end{equation}

In the following proposition, we bound the relationship between our candidate solution $\mathbf{P}^{(t)}$ and any vector $\mathbf{Q}$ that has unit trace, and lies in the Cartesian product of second‐order cones, and could potentially be a solution to the feasibility problem. This comparison ensures that $\mathbf{P}^{(t)}$ reliably distinguishes between the sets $\mathcal{S}_0$ and $\mathcal{S}_{\theta}$, serving as an effective certificate of feasibility (see \cref{def:SOCP_feasibility}).

To prove convergence we use a potential function $\Phi$, which is a common tool in the MW literature \cite{Kale07}. Intuitively, the potential function measures the cumulative effect of past choices and the corresponding updates on the algorithm’s confidence, where each suboptimal update multiplicatively shrinks it.

\begin{proposition}\label{prop:tr_m_p}
    Suppose $\mathbf{M}^{(0)}, \ldots, \mathbf{M}^{(T-1)}$ are vectors satisfying $\mathbf{e} \succeq \mathbf{M}^{(t)} \succeq 0$ for all $t$. For a fixed $0<\delta \leq 1$, define 
    \begin{align}
    \mathbf{P}^{(t)} = \frac{e^{\delta \sum_{\tau=0}^{t-1} \mathbf{M}^{(\tau)}}}{\Tr(e^{\delta \sum_{\tau=0}^{t-1} \mathbf{M}^{(\tau)}})}\,.
    \end{align}
    Then for any $\mathbf{Q} \succeq 0$ satisfying $\Tr(\mathbf{Q}) = 1$, we have
    \begin{align}\label{eq:ineq_prop1}
        (1+\delta)\sum_{t=0}^{T-1}\Tr(\mathbf{M}^{(t)} \circ \mathbf{P}^{(t)}) \geq \sum_{t=0}^{T-1}\Tr(\mathbf{M}^{(t)} \circ \mathbf{Q}) - \frac{\log(2r)}{\delta}
    \end{align}
\end{proposition}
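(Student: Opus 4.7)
The plan is to follow the standard potential-function argument from the multiplicative weights literature, adapted to the Euclidean Jordan algebra of second-order cones. I would track the potential $\Phi^{(t)} = \Tr(\exp(\delta \sum_{\tau=0}^{t-1}\mathbf{M}^{(\tau)}))$ with starting value $\Phi^{(0)} = \Tr(\mathbf{e}) = 2r$, and sandwich $\log \Phi^{(T)}$ between a lower bound involving $\mathbf{Q}$ and an upper bound involving the $\mathbf{P}^{(t)}$.

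For the upper bound, I would peel off the last exponential factor in $\Phi^{(t+1)}$ using the Golden--Thompson-type inequality of \cref{lemma:Golden-Thomson}, writing
\begin{equation}
\Phi^{(t+1)} \leq \sum_{k=0}^{r-1} \Tr\bigl(e^{\delta \sum_{\tau=0}^{t-1}\mathbf{m}^{(k,\tau)}} \circ e^{\delta \mathbf{m}^{(k,t)}}\bigr).
\end{equation}
The hypothesis $\mathbf{0} \preceq \mathbf{M}^{(t)} \preceq \mathbf{e}$ ensures $\mathbf{m}^{(k,t)} \succeq \mathbf{0}$ with $\|\mathbf{m}^{(k,t)}\|_{\rm soc} \leq 1$, so \cref{Lemma:exponential_inequality} yields $e^{\delta \mathbf{m}^{(k,t)}} \preceq \mathbf{e}^{(k)} + \delta_1 \mathbf{m}^{(k,t)}$ where $\delta_1 = e^{\delta}-1$. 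Combining with \cref{lem:trace_replace} (noting $e^{\delta \sum_{\tau=0}^{t-1}\mathbf{m}^{(k,\tau)}} \succeq \mathbf{0}$) and recognizing that $\mathbf{p}^{(k,t)}$ inside a ratio produces $\Tr(\mathbf{P}^{(t)} \circ \mathbf{M}^{(t)}) = \sum_k (\phi^{(k,t)}/\Phi^{(t)}) \Tr(\mathbf{p}^{(k,t)} \circ \mathbf{m}^{(k,t)})$, I would obtain the inductive step
\begin{equation}
\Phi^{(t+1)} \leq \Phi^{(t)}\bigl(1 + \delta_1 \Tr(\mathbf{P}^{(t)} \circ \mathbf{M}^{(t)})\bigr) \leq \Phi^{(t)}\exp\bigl(\delta_1 \Tr(\mathbf{P}^{(t)} \circ \mathbf{M}^{(t)})\bigr).
\end{equation}
Telescoping from $\Phi^{(0)} = 2r$ and taking logs gives $\log \Phi^{(T)} \leq \log(2r) + \delta_1 \sum_{t}\Tr(\mathbf{P}^{(t)} \circ \mathbf{M}^{(t)})$.

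For the lower bound, I would exploit the pure-state analog of the Gibbs state. Defining $\mathbf{Y} = \Phi^{(T)\,-1} e^{\delta\sum_\tau \mathbf{M}^{(\tau)}}$, each eigenvalue of each $\mathbf{Y}^{(k)}$ lies in $(0,1]$ because $\Phi^{(T)}$ sums all these exponentiated eigenvalues; hence $\log \mathbf{Y} \preceq \mathbf{0}$. Since $\mathbf{Y}$ and $\delta\sum_\tau \mathbf{M}^{(\tau)}$ share a Jordan frame cone-by-cone, $\log \mathbf{Y} = \delta \sum_\tau \mathbf{M}^{(\tau)} - \log(\Phi^{(T)})\,\mathbf{e}$. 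Taking the circle product with $\mathbf{Q}$ and tracing, using $\Tr(\mathbf{Q} \circ \mathbf{e}) = \Tr(\mathbf{Q}) = 1$, yields $\log \Phi^{(T)} = \delta \sum_{t} \Tr(\mathbf{M}^{(t)} \circ \mathbf{Q}) - \Tr(\mathbf{Q} \circ \log \mathbf{Y})$. Applying \cref{lem:trace_replace} with $\mathbf{A} = \mathbf{Q} \succeq \mathbf{0}$, $\mathbf{B} = \log \mathbf{Y} \preceq \mathbf{0} = \mathbf{C}$ gives $\Tr(\mathbf{Q} \circ \log \mathbf{Y}) \leq 0$, so $\log \Phi^{(T)} \geq \delta \sum_{t}\Tr(\mathbf{M}^{(t)} \circ \mathbf{Q})$.

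Combining the two bounds, dividing by $\delta$, and using the elementary inequality $\delta_1/\delta = (e^{\delta}-1)/\delta \leq 1+\delta$ for $0 < \delta \leq 1$ (from the Taylor bound $e^\delta \leq 1 + \delta + \delta^2$) gives
\begin{equation}
\sum_{t}\Tr(\mathbf{M}^{(t)} \circ \mathbf{Q}) \leq \frac{\log(2r)}{\delta} + (1+\delta)\sum_{t}\Tr(\mathbf{M}^{(t)} \circ \mathbf{P}^{(t)}),
\end{equation}
which rearranges to the claim. I expect the main subtlety to be the lower bound: in the SDP version one invokes the variational formula $\log\Tr(e^X) = \max_\rho[\Tr(\rho X) + S(\rho)]$, which is not directly available here, so I would instead argue ``by hand'' using the identity $\log \mathbf{Y} = \mathbf{X} - \log(\Phi^{(T)})\mathbf{e}$ and the operator-monotone-style \cref{lem:trace_replace}. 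Verifying that the Jordan-algebraic exponential and logarithm compose correctly in this identity (which requires both to be computed via the common Jordan frame of each $\mathbf{M}^{(t)}$-sum) is the only non-mechanical step.
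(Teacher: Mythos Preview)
Your proof is correct. The upper bound on $\log\Phi^{(T)}$ is identical to the paper's: Golden--Thompson (\cref{lemma:Golden-Thomson}), the exponential inequality (\cref{Lemma:exponential_inequality}), and \cref{lem:trace_replace}, then telescope and use $(e^\delta-1)/\delta\le 1+\delta$.

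The lower bound differs. The paper argues via the maximum eigenvalue: $\Phi^{(T)}\ge e^{\delta\lambda_{\max}(\sum_\tau\mathbf{M}^{(\tau)})}$, then inserts $\Tr(\mathbf{Q})=1$ and uses the Rayleigh-quotient bound $\lambda_{\max}(\operatorname{Arw}(\cdot))\cdot 2q_0^{(k)}\ge 2\mathbf{q}^{(k)\top}(\cdot)$ cone-by-cone to arrive at $\Phi^{(T)}\ge e^{\delta\Tr(\mathbf{Q}\circ\sum_\tau\mathbf{M}^{(\tau)})}$. Your route instead writes $\log\mathbf{Y}=\delta\sum_\tau\mathbf{M}^{(\tau)}-\log(\Phi^{(T)})\mathbf{e}$, observes that every Jordan eigenvalue of $\mathbf{Y}$ lies in $(0,1]$ so $\log\mathbf{Y}\preceq\mathbf{0}$, and then invokes \cref{lem:trace_replace} with $\mathbf{A}=\mathbf{Q}$ to conclude $\Tr(\mathbf{Q}\circ\log\mathbf{Y})\le 0$. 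This is a clean Jordan-algebraic analogue of the Gibbs variational inequality $\log\Tr(e^X)\ge\Tr(\rho X)$ that you correctly anticipated needing to reconstruct by hand. Both arguments are short; the paper's avoids defining $\log\mathbf{Y}$ and works directly with scalar eigenvalues, while yours keeps everything at the level of the trace inner product and reuses \cref{lem:trace_replace} symmetrically in both halves of the proof. One small remark: \cref{lem:trace_replace} is stated for a single cone, but its proof extends verbatim to the multicone case since $\operatorname{Arw}$ is block-diagonal; you are implicitly using this extension.
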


\begin{proof}
Define $\Phi^{(t)}$ as in \cref{eq:Phi^{(t)}}.  We prove \cref{eq:ineq_prop1} by establishing upper and lower bounds on the potential function at step $T$. We begin with the upper bound:
\begin{align}
    \Phi^{(t+1)} &= \Tr(\exp(\delta \sum_{\tau=0}^t \mathbf{M}^{(\tau)})) \\
   &\leq \Tr(e^{\delta \sum_{\tau=0}^{t-1} \mathbf{M}^{(\tau)}} \circ e^{\delta \mathbf{M}^{(t)}}) \quad  \quad  \quad \quad \quad \because  \text{lemma \ref{lemma:Golden-Thomson}} \\
   &\leq \Tr(e^{\delta \sum_{\tau=0}^{t-1} \mathbf{M}^{(\tau)}} \circ (\mathbf{e} + (e^{\delta}-1) \mathbf{M}^{(t)}) )  \qquad \qquad \because \text{lemmas \ref{Lemma:exponential_inequality}, \ref{lem:trace_replace}}\\ 
   &= \Tr(e^{\delta \sum_{\tau=0}^{t-1} \mathbf{M}^{(\tau)}} ) + (e^{\delta}-1) \Tr(e^{\delta \sum_{\tau=0}^{t-1} \mathbf{M}^{(\tau)}} \circ \mathbf{M}^{(t)})  \\
   &= \Phi^{(t)} + (e^{\delta}-1)\Phi^{(t)} \Tr(\mathbf{P}^{(t)} \circ \mathbf{M}^{(t)}) \\
   &= \Phi^{(t)}(1+(e^{\delta}-1)\Tr(\mathbf{P}^{(t)} \circ \mathbf{M}^{(t)})) \\
   &\leq \Phi^{(t)}\exp((e^{\delta}-1)\Tr(\mathbf{P}^{(t)} \circ \mathbf{M}^{(t)})) 
    \end{align}

Then, by induction:
\begin{equation}
    \Phi^{(T)}\leq \Phi^{(0)}e^{(e^{\delta}-1) \sum_{\tau = 0}^{T-1}  \Tr(\mathbf{P}^{(\tau)} \circ \mathbf{M}^{(\tau)}) }
\end{equation}

Observe that $\phi^{(k,0)}=2$ for each cone $k$, and thus $\Phi^{(0)} = 2r$. Hence,
\begin{equation}\label{eq:upper_bound_potential_function}
    \Phi^{(T)}\leq 2re^{(e^{\delta}-1)\sum_{\tau=0}^{T-1} \Tr(\mathbf{P}^{(\tau)} \circ \mathbf{M}^{(\tau)})}
\end{equation}
We now turn to the lower bound on the potential function at step $T$:
\begin{align}
    \Phi^{(T)} &= \Tr(e^{\delta \sum_{\tau=0}^{T-1} \mathbf{M}^{(\tau)}}) =  \sum^{r-1}_{k=0} e^{\delta\lambda_+(\sum_{\tau=0}^{T-1}\mathbf{m}^{(k,\tau)})}+e^{\delta\lambda_{-}(\sum_{\tau=0}^{T-1}\mathbf{m}^{(k,\tau)})}\\
    &\geq \sum^{r-1}_{k=0} e^{\delta\lambda_{\max}(\sum_{\tau=0}^{T-1}\mathbf{m}^{(k,\tau)})} \geq \max_k e^{\delta\lambda_{\max}(\sum_{\tau=0}^{T-1}\mathbf{m}^{(k,\tau)})} \\
    &= e^{\delta \lambda_{\max}(\sum_{\tau=0}^{T-1} \mathbf{M}^{(\tau)})} = e^{\delta \lambda_{\max}(\sum_{\tau=0}^{T-1} \mathbf{M}^{(\tau)})\cdot \Tr(\mathbf{Q})}=e^{\delta \lambda_{\max}(\sum_{\tau=0}^{T-1} \mathbf{M}^{(\tau)})\cdot (\sum^{r-1}_{k=0}2q^\pp{k}_0)}\\
    &\geq e^{\delta \sum^{r-1}_{k=0}2\mathbf{q}^{\pp{k}\top}  (\sum_{\tau=0}^{T-1} \mathbf{m}^{(k,\tau)})} = e^{\delta \Tr(\mathbf{Q} \circ \sum_{\tau=0}^{T-1} \mathbf{M}^{(\tau)})} = e^{\delta \sum_{\tau=0}^{T-1} \Tr(\mathbf{Q} \circ  \mathbf{M}^{(\tau)})}
\end{align}
where $\mathbf{Q} = (\mathbf{q}^{\pp{0}}; \ldots; \mathbf{q}^{\pp{k-1}}$) has trace 1, and $\lambda_{\max}(\sum_{\tau=0}^{T-1} \mathbf{M}^{(\tau)})$ is the maximum eigenvalue of $\operatorname{Arw}(\sum_{\tau=0}^{T-1} \mathbf{M}^{(\tau)})$, which is equivalent to $\max_{k}\lambda_{\rm max}(\operatorname{Arw}(\sum_{\tau=0}^{T-1} \mathbf{m}^{k(\tau)}))$. Combining these inequalities and taking the logarithm of both sides yields 
\begin{align}
    \log(2r) + (e^{\delta}-1) \sum_{\tau=0}^{T-1} \Tr( \mathbf{P}^{(\tau)} \circ \mathbf{M}^{(\tau)}) \geq \delta \sum_{\tau=0}^{T-1} \Tr(\mathbf{Q} \circ \mathbf{M}^{(\tau)})
\end{align}
Subtracting $\log(2r)$ from both sides, dividing by $\delta$, and noting that $(e^{\delta}-1)/\delta \leq 1+ \delta$ for $0 < \delta \leq 1$ yields the theorem statement. 
\end{proof}

 \begin{theorem}[Correctness of main algorithm]\label{theorem:MW_algo} 
 \Cref{Alg1:description} correctly solves the SOCP feasibility problem of \cref{def:SOCP_feasibility} with probability at least 2/3. It uses at most $T = \frac{36\log(2r)}{\theta^2}$ queries to the Violated Constraint oracle (\cref{def:violated_constraint_oracle}), with parameter setting $\xi = 1/(3T)$. 

\end{theorem}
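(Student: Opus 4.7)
The plan is to separate the probabilistic argument from the deterministic one. For the probabilistic part, each of the $T$ calls to the violated-constraint oracle fails with probability at most $\xi = 1/(3T)$, so by a union bound all $T$ calls succeed simultaneously with probability at least $2/3$; the remainder of the proof will work under this event. Conditioning on success, the output is either ``Feasible'' (when the oracle returns ``all constraints satisfied'' at some iteration $t$) or ``Infeasible'' (when all $T$ oracle calls return a constraint $j^{(t)} \in V$). In the first case, by \cref{def:violated_constraint_oracle} the set $V_{>\theta}$ is empty for $\mathbf{y}^{(t)}$, which by \cref{def:SOCP_feasibility} directly gives that the implicit $\mathbf{x}$ lies in $\mathcal{S}_\theta$; so this case requires no further work. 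The nontrivial direction is to show that the ``always-violated'' case implies $\mathcal{S}_0=\varnothing$.

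To handle this, I argue by contradiction, assuming an $\mathbf{x}^* \in \mathcal{S}_0$ exists and taking $\mathbf{Q} = \mathbf{x}^*$ (so that $\Tr(\mathbf{Q}) = 1$ and $\mathbf{Q} \succeq 0$). The key step is to apply \cref{prop:tr_m_p} with $\delta = \theta/3$. This specific choice is forced by matching: unrolling the algorithm's update yields $\mathbf{y}^{(t)} = (\theta/6)\sum_{\tau<t}\mathbf{e}_{j^{(\tau)}}$, and substituting the definition $\mathbf{m}^{(k,\tau)} = \tfrac{1}{2}(\mathbf{e}^{(k)} - A^{\pp{k}\top}_{:,j^{(\tau)}})$ into the exponential $e^{\delta\sum_\tau \mathbf{m}^{(k,\tau)}}$ produces a scalar factor $e^{\delta t/2}$ (using that $\mathbf{e}^{(k)}$ acts as the Jordan identity and therefore operator-commutes with everything) that cancels in the trace normalisation. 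The result is $\mathbf{p}^{(k,t)} = \mathbf{x}^{\pp{k}}_t$. The hypothesis $\mathbf{e} \succeq \mathbf{M}^{(t)} \succeq 0$ needed to invoke \cref{prop:tr_m_p} follows from the normalisation condition $\nrm{A^\pp{k}_{j,:}}_{\rm soc} \leq 1$, which pins the eigenvalues of $\mathbf{m}^{(k,t)}$ in $[0,1]$.

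The last step is to convert the violation condition and the feasibility of $\mathbf{x}^*$ into bounds on $\Tr(\mathbf{M}^{(t)} \circ \mathbf{P}^{(t)})$ and $\Tr(\mathbf{M}^{(t)} \circ \mathbf{Q})$. Using $A^{\pp{k}\top}_{:,j^{(t)}} = \mathbf{e}^{(k)} - 2\mathbf{m}^{(k,t)}$ together with $\Tr(\mathbf{P}^{(t)}) = 1$ gives the identity $\sum_k A^\pp{k}_{j^{(t)},:}\mathbf{x}^\pp{k}_t = \tfrac{1}{2} - \Tr(\mathbf{M}^{(t)} \circ \mathbf{P}^{(t)})$, and analogously for $\mathbf{x}^*$ and $\mathbf{Q}$. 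Combining with $v_{j^{(t)}} > \theta/2$ and with feasibility yields $\Tr(\mathbf{M}^{(t)} \circ \mathbf{P}^{(t)}) < \tfrac{1}{2} - b_{j^{(t)}} - \theta/2$ and $\Tr(\mathbf{M}^{(t)} \circ \mathbf{Q}) \geq \tfrac{1}{2} - b_{j^{(t)}}$, respectively. Substituting both into \cref{prop:tr_m_p} and simplifying gives $\delta \sum_t (\tfrac{1}{2} - b_{j^{(t)}}) + \log(2r)/\delta > (1+\delta) T\theta/2$. I expect the main obstacle to be closing this inequality to a contradiction: the naive bound from $|b_j|\leq 1$ only gives $\sum_t (\tfrac{1}{2}-b_{j^{(t)}}) \leq 3T/2$, which is too loose by a small constant factor. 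The essential improvement is that feasibility itself forces $b_{j^{(t)}} \geq -1/2$, since $b_{j^{(t)}} \geq \sum_k A^\pp{k}_{j^{(t)},:}\mathbf{x}^{*\pp{k}} = \tfrac{1}{2} - \Tr(\mathbf{M}^{(t)}\circ\mathbf{Q}) \geq -1/2$ (the last step uses $\mathbf{e}\succeq\mathbf{M}^{(t)}$ together with self-duality of the second-order cone to bound $\Tr(\mathbf{M}^{(t)}\circ\mathbf{Q}) \leq 1$). With this sharper estimate $\sum_t (\tfrac{1}{2}-b_{j^{(t)}}) \leq T$, plugging in $\delta=\theta/3$ and $T = 36\log(2r)/\theta^2$ produces the desired contradiction.
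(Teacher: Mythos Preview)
Your argument is correct and follows essentially the same route as the paper: a union bound over the $T$ oracle calls, then a contradiction via \cref{prop:tr_m_p} (taking $\mathbf{Q}=\mathbf{x}^*$ and matching $\delta=\theta/3$ so that $\mathbf{P}^{(t)}$ coincides with the algorithm's implicit iterate) whenever every iteration returns a violated index despite $\mathcal{S}_0\neq\varnothing$. The only substantive difference is in closing the final inequality. The paper simply invokes the input normalisation $b_{j^{(t)}}\geq -1$, whereas you correctly observe that with the trace identity giving $\Tr(\mathbf{M}^{(t)}\circ\mathbf{P}^{(t)})<\tfrac12-b_{j^{(t)}}-\theta/2$ this naive bound falls just short, and instead extract the sharper estimate $b_{j^{(t)}}\geq -\tfrac12$ from feasibility of $\mathbf{Q}$ together with $\mathbf{M}^{(t)}\preceq\mathbf{e}$ (via \cref{lem:trace_replace}). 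In fact the paper's displayed chain contains some minor inconsistencies---it mixes constants corresponding to $\delta=\theta/6$ with an exponent written as $\theta/3$, and writes $1-b_{j^{(t)}}$ where the stated definition of $\mathbf{M}^{(t)}$ actually yields $\tfrac12-b_{j^{(t)}}$---so your bookkeeping is the cleaner of the two, and your extra observation is exactly what is needed to repair it.
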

\begin{proof}
The stated complexity of the theorem is a direct consequence of the fact that the algorithm terminates after at most $T$ iterations, and uses parameter setting $\xi = 1/3T$. Moreover, each call to violated constraint oracle fails with probability at most $\xi$, thus by the union bound the probability that all $T$ calls to the oracle succeed is at least 2/3. Hence, what remains to show is that the output of the algorithm is correct whenever the oracle is correct on every iteration, in each of the two cases detailed in \cref{def:SOCP_feasibility}. 

\begin{enumerate}[label=(\roman*)]
\item \textbf{ If $\mathcal{S}_0 \neq \varnothing$, the algorithm outputs ``feasible'' and a vector $\mathbf{y}$ implicitly defining a vector  $\mathbf{x} \in \mathcal{S}_\theta$ via \cref{eq:feasibility_output_primal_vector}. } 
First, suppose for contradiction that the algorithm outputs ``infeasible''. This implies that each time the violated constraint oracle is queried, it finds a violated constraint, i.e., there is a sequence $j^{(0)},\ldots,j^{(T-1)}$ of constraint indices such that for each $t=0,\ldots,T-1$,
\begin{align}\label{eq:violated_constraint_t}
\sum_{k=0}^{r-1}A^\pp{k}_{j^{(t)},:} \frac{e^{-\frac{\theta}{6}\sum_{\tau = 0}^{t-1}A^{\pp{k} \top}\mathbf{e}_{j^{(\tau)}}}}{\sum_{k'=0}^{r-1} \Tr(e^{-\frac{\theta}{6}\sum_{\tau = 0}^{t-1}A^{\pp{k'} \top}\mathbf{e}_{j^{(\tau)}}})} \geq b_{j^{(t)}} + \frac{\theta}{2}
\end{align}
For $t = 0,\ldots, T-1$, define multicone vectors
\begin{align}
\mathbf{M}^{(t)} = \frac{1}{2}(\mathbf{e}^\pp{0} - A^{\pp{0} \top} \mathbf{e}_{j^{(t)}}; \ldots;\mathbf{e}^\pp{r} -  A^{\pp{r-1} \top} \mathbf{e}_{j^{(t)}}) =  \frac{\mathbf{e} - ((A^\pp{0}_{j^{(t)},:})^\top;\ldots;(A^\pp{r-1}_{j^{(t)},:})^\top)}{2}
\end{align}
that is, $\mathbf{M}^{(t)}$ is proportional to the identity vector $\mathbf{e}$ minus the $j^{(t)}$-th row of the constraint matrix $A = (A^\pp{1},\ldots,A^\pp{r-1})$. Since the normalisation conditions ensure that $\nrm{A^\pp{k}_{j^{(t)},:}}_{\rm soc} \leq 1$, we may assert that $\mathbf{e} \succeq \mathbf{M}^{(t)} \succeq 0$. For $t=0,\ldots,T-1$, we define
\begin{align}
    \mathbf{P}^{(t)} = \frac{e^{\frac{\theta}{3} \sum_{\tau=0}^{t-1} \mathbf{M}^{(\tau)}}}{\Tr(e^{\frac{\theta}{3} \sum_{\tau=0}^{t-1} \mathbf{M}^{(\tau)}})}
\end{align}
Note that if we add a multiple of the identity vector $\mathbf{e}$ to the exponent in the numerator and the denominator of \cref{eq:violated_constraint_t}, it will cancel and not impact the vector. Also note that $2\mathbf{e}^\top \mathbf{p} = \Tr(\mathbf{p})$ for any $\mathbf{p}$. Thus we may equivalently rewrite \cref{eq:violated_constraint_t} as
\begin{align}
\mathbf{M}^{(t)\intercal} \mathbf{P}^{(t)}  \leq \frac{1-b_{j^{(t)}} - \frac{\theta}{2}}{2}
\end{align}
which is equivalent to
\begin{align}
\Tr(\mathbf{M}^{(t)} \circ \mathbf{P}^{(t)})  \leq 1-b_{j^{(t)}} - \frac{\theta}{2}
\end{align}
Since we have assumed that $\mathcal{S}_0 \neq 0$, there exists a vector $\mathbf{Q}$ that is feasible. This implies that 
\begin{align}
    \Tr(\mathbf{M}^{(t)} \circ \mathbf{Q})\geq 1-b_{j^{(t)}}
\end{align}
for all $t$. We now have
\begin{align}
    (1+\frac{\theta}{6})\sum_{t=0}^{T-1} (1-b_{j^{(t)}}-\frac{\theta}{2}) &\geq  (1+\frac{\theta}{6}) \sum_{t=0}^{T-1} \Tr(\mathbf{M}^{(t)} \circ \mathbf{P}^{(t)}) \\
    &\geq \sum_{t=0}^{T-1}\Tr(\mathbf{M}^{(t)} \circ \mathbf{Q})-\frac{6 \log(2r)}{\theta} \quad \because \cref{prop:tr_m_p} \\
    &\geq \sum_{t=0}^{T-1}(1-b_{j^{(t)}})-\frac{6 \log(2r)}{\theta} 
\end{align}
This is equivalent to
\begin{align}
    -\frac{\theta T}{3} -\frac{\theta^2T}{12} \geq - \frac{6\log(2r)}{\theta }+ \frac{\theta}{6}\sum_{t=0}^{T-1}b_{j^{(t)}} 
\end{align}
and since $b_{j^{(t)}} \geq -1$, it implies 
\begin{align}
 T \leq \frac{6\log(2r)}{\theta(\frac{\theta}{6} + \frac{\theta^2}{12}) } \leq \frac{36\log(2r)}{\theta^2}
\end{align}
Thus, since we have chosen $T > \frac{36 \log(2r)}{\theta^2}$, there is a contradiction, and we may conclude that in this case the algorithm does not output ``infeasible.'' 

Consequently, at some iteration $t <T$ the algorithm calls the violated constraint oracle and receives the output ``all constraints satisfied''. By \cref{def:violated_constraint_oracle}, this output is only possible when the vector $\mathbf{P}^{(t)}$ that is generated from the vector $\mathbf{y}^{(t)}$ satisfies all the constraints, up to an additive tolerance of $\theta$, or in other words, $ \mathbf{P}^{(t)} \in \mathcal{S}_{\theta}$, verifying that the output is correct.

\item \textbf{If $\mathcal{S}_{\theta} =\varnothing$, then the  algorithm outputs ``infeasible''} 
This is true because at each iteration $t$, regardless of $\mathbf{y}^{(t-1)} \in \mathbb{R}^{m}$, there will always exist a constraint that is more than $\theta$-violated (otherwise, $\mathcal{S}_{\theta}$ would be nonempty). Thus, the violated constraint oracle will return a value $j^{(t)}$ at each of the $T$ iterations and correctly return ``infeasible''. 
 \end{enumerate}
\end{proof}

\subsection{Two-step approach to implementing violated constraint oracle}\label{sec:two_step_approach}

A straightforward approach to implementing the violated constraint oracle would be to simply compute the vector $\mathbf{x} \propto e^{-A^\top \mathbf{y}} \in \mathbb{R}^n$ from the input $\mathbf{y} \in \mathbb{R}^m$, and then use $\mathbf{x}$ to compute the amount of violation $v_j$ for each $j$---the matrix multiplication $A^\top \mathbf{y}$ alone would have a classical cost $\bigO(mn)$ in general. However, this strategy is overkill; for example, it also allows us to compute $v_j$ to exact precision for all the $j$, whereas the problem we are solving explicitly allows for some tolerance of size $\theta$ on the violation amount. We can achieve better cost by instead \textit{sampling} some of the cones $k$ biased toward those with larger weight within the vector $\mathbf{x}$ (as measured by the trace). These samples are expensive but they can be re-used; all the $v_j$ are estimated using the same samples. This can be viewed as a simplified version of the quantum OR lemma that featured in multiplicative weights--based quantum algorithms for solving SDPs.

Specifically, both our classical and our quantum algorithms implement the violated constraint oracle through the same two-step approach, which we describe here. Recall from \cref{def:violated_constraint_oracle} that  we are given a vector $\mathbf{y} \in \mathbb{R}^m$, and the goal is to examine whether the implicitly defined point $\mathbf{x} = (\mathbf{x}^\pp{0}; \ldots; \mathbf{x}^\pp{r-1})$ violates any of the constraints. 

To understand this approach, for $k=0,\ldots, r-1$, we define the numbers
\begin{align}\label{eq:Z^\pp{k}}
    \mathcal{Z}^\pp{k} &= \Tr(e^{-A^{\pp{k} \top} \mathbf{y}}) 
\end{align}
and the unit-trace vectors
\begin{align}\label{eq:p^\pp{k}}
    \mathbf{p}^\pp{k} = \frac{e^{-A^{\pp{k} \top} \mathbf{y}}}{\mathcal{Z}^\pp{k}}\,.
\end{align}
We may also define
\begin{align}
    \mathcal{Z} &= \sum_{k=0}^{r-1} \mathcal{Z}^\pp{k} \label{eq:mathcalZ} 
\end{align}
and then write the point $\mathbf{x}$ as
\begin{align}
    \mathbf{x} &= (\mathbf{x}^\pp{0}; \ldots; \mathbf{x}^\pp{r-1}) = \frac{1}{\mathcal{Z}} \left(\mathcal{Z}^\pp{0} \mathbf{p}^\pp{0};\ldots;\mathcal{Z}^{\pp{r-1}} \mathbf{p}^\pp{r-1}\right)\label{eq:p_vector}
\end{align}
This form shows that the vector $\mathbf{x}$ has unit trace and can be understood as analogous to a Gibbs distribution.  

We define a subroutine, the ``cone index Gibbs sampler oracle,'' which repeatedly samples a cone index $k$ with probability equal to $\mathcal{Z}^\pp{k}/\mathcal{Z}$, up to some error tolerance. 

\begin{definition}[Cone index Gibbs sampler oracle, $C_{\rm samp}$]\label{oracle:cone_index_Gibbs_sampling}
    A cone index Gibbs sampler oracle is a (classical or quantum) subroutine that satisfies the following input-output criteria. It takes as input 
    \begin{itemize}
    \item An instance of the unit-trace SOCP feasibility problem of \cref{def:SOCP_feasibility}, specified by matrices $A^\pp{0}, \ldots, A^\pp{r-1}$, where $m$ is the number of constraints, $r$ is the number of cones, and $n$ the total number of variables.
    \item A vector $\mathbf{y} \in \mathbb{R}^m$, where $s$ denotes the sparsity of $\mathbf{y}$, and $\beta = \nrm{\mathbf{y}}_1$
        \item An error parameter $\zeta$
        \item An integer $T'$
    \end{itemize}
    The inputs implicitly define numbers $\mathcal{Z}^\pp{k}$, $\mathcal{Z}$, as in \cref{eq:Z^\pp{k},eq:mathcalZ}. Let $\mathcal{P}$ be the probability distribution that assigns probability $\mathcal{Z}^\pp{k}/\mathcal{Z}$ to each integer $k \in [r]$. The output of the sampler is:
    \begin{itemize}
        \item $T'$ samples $k_0,k_1,\ldots,k_{T'-1} \in [r]$, where the joint distribution over the $T'$ samples is at most $\zeta$-far in total variation distance from the distribution where each $k_h$ is chosen i.i.d.~from $\mathcal{P}$.
    \end{itemize}
\end{definition}

Given a fixed set of $T'$ samples $k_0, k_1,\ldots, k_{T'-1} \in [r]$, for each $j =0,1,\ldots,m-1$, we may define the violation $v_{j,h}$ of sample $h$, and the overall average violation $v_j$ for the set of samples, by
\begin{align}
    \hat{v}_{j,h} &= A^\pp{k_h}_{j,:}\mathbf{p}^\pp{k_h} - b_j \label{eq:hat_v_jh}\\
    \hat{v}_j &= \frac{1}{T'} \sum_{h=0}^{T'-1} v_{j,h} \label{eq:hat_v_j}
\end{align}

We can also define sets analogous to $V_{\theta}$ and $V_{>\theta}$. Namely we let
\begin{align}
    \hat{V}_{>\theta}  &= \{j \colon \hat{v}_j > 5\theta/6\} \\
    \hat{V}_{\theta}  &= \{j \colon \hat{v}_j \in (4\theta/6, 5\theta/6]\} \\
    \hat{V} &= \hat{V}_{>\theta} \cup \hat{V}_\theta
\end{align}
Now we define a ``sampled violated constraint search oracle'' defined in terms of the sets $\hat{V}_{>\theta}$ and $\hat{V}_{\theta}$ based on the samples, as an analogue of the violated constraint oracle previously defined in terms of the sets $V_{>\theta}$ and $V_{\theta}$. 

\begin{definition}[Sampled violated constraint search oracle, $C_{\rm search}$,  cf.~\cref{def:violated_constraint_oracle}]\label{oracle:sampled_constraint_search}
    A sampled violated constraint search oracle is a (classical or quantum) subroutine that satisfies the following input-output criteria. It takes as input 
    \begin{itemize}
        \item An instance of the unit-trace SOCP feasibility problem of \cref{def:SOCP_feasibility}, specified by an error parameter $\theta$, matrices $A^\pp{0}, \ldots, A^\pp{r-1}$ and vector $\mathbf{b}$, where $m$ is the number of constraints, $r$ is the number of cones, and $n$ the total number of variables.
        \item A vector $\mathbf{y} \in \mathbb{R}^m$, where $s$ denotes the sparsity of $\mathbf{y}$, and $\beta = \nrm{\mathbf{y}}_1$.
        \item An error parameter $\eta$
        \item A list of $T'$ samples of cone indices $\mathcal{T} = (k_0,\ldots, k_{T'-1}) \in [k]^{T'}$
    \end{itemize}
    The inputs implicitly define numbers $\hat{v}_j$, as in \cref{eq:hat_v_j}, and sets $\hat{V}_{>\theta}$ and $\hat{V}_{\theta}$. The output of the search is:
    \begin{enumerate}[label=(\roman*)]
        \item If $\hat{V}_{>\theta}$ is nonempty, then with probability at least $1-\eta$, output a value $j \in \hat{V} = \hat{V}_{>\theta} \cup \hat{V}_\theta$.
        \item If $\hat{V}$ is empty, then with probability at least $1-\eta$, output ``all constraints satisfied.''
        \item Otherwise, $\hat{V}_{>\theta}$ is empty but $\hat{V}_{\theta}$ is not empty. Then, with probability at least $1-\eta$ output either ``all constraints satisfied'' or output a value $j \in \hat{V}_{\theta}$.
        \end{enumerate}
\end{definition}

The reason to perform the two-step approach is that the cost of each step is additive, rather than multiplicative, as captured in the following theorem. 

\begin{theorem}\label{theorem:Csamp_Csearch_implements_alg1}
    Let $A^\pp{0},\ldots, A^\pp{r-1}, \mathbf{b},\theta$ denote a fixed instance of the feasibility SOCP (\cref{def:SOCP_feasibility}) with $r$ cones, $n=\sum^{r-1}_{k=0}n^\pp{k}$ total variables, and $m$ constraints, and let $\mathbf{y} \in \mathbb{R}^m$ be a fixed $s$-sparse non-negative vector with $\beta = \nrm{\mathbf{y}}_1$. Suppose one has a cone index sampler (\cref{oracle:cone_index_Gibbs_sampling}) that has complexity upper bounded by $C_{\rm samp}(m,r,n,s,\beta,\zeta,T')$ for drawing $T'$ samples with error parameter $\zeta$, and a sampled violated constraint search oracle (\cref{oracle:sampled_constraint_search}) that has complexity upper bounded by $C_{\rm search}(m,r,n,\beta,\theta,\eta, T')$ for error parameter $\eta$ and $T'$ total input samples. Then one can construct a violated constraint oracle (\cref{def:violated_constraint_oracle}) with failure probability $\xi$ with complexity 
    \begin{align}
        C_{\rm samp}(m,r,n,s,\beta,\frac{\xi}{4}, \frac{288 \log(\frac{8m}{\xi})}{\theta^2}) +   C_{\rm search}(m,r,n,\beta,\theta, \frac{\xi}{2},  \frac{288 \log(\frac{8m}{\xi})}{\theta^2})\,.
    \end{align}
\end{theorem}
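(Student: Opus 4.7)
The plan is to implement the violated constraint oracle by composing the two subroutines in the obvious way: first invoke the cone-index Gibbs sampler $C_{\rm samp}$ with error $\zeta=\xi/4$ and sample size $T' = 288\log(8m/\xi)/\theta^2$, then feed the returned list $\mathcal{T} = (k_0,\ldots,k_{T'-1})$ into the search oracle $C_{\rm search}$ with error $\eta = \xi/2$, and return whatever it returns. The resulting complexity is manifestly the sum $C_{\rm samp} + C_{\rm search}$ with the parameters stated. The rest of the argument is dedicated to showing that the overall failure probability is at most $\xi$ and that the output satisfies the three cases of \cref{def:violated_constraint_oracle}.

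The failure analysis decomposes into three additive sources. First, by \cref{oracle:cone_index_Gibbs_sampling}, the joint distribution of $\mathcal{T}$ is at most $\zeta = \xi/4$ away in total variation from the ideal i.i.d.\ distribution over $\mathcal{P}$, so at a cost of $\xi/4$ in the failure budget we may assume $k_0,\ldots,k_{T'-1}$ are drawn i.i.d.\ with $\Pr[k_h = k] = \mathcal{Z}^{(k)}/\mathcal{Z}$. Second, under that assumption, $\mathbb{E}[\hat v_{j,h}] = v_j$ for each $j$, since $\sum_k (\mathcal{Z}^{(k)}/\mathcal{Z}) A^{(k)}_{j,:} \mathbf{p}^{(k)} = \sum_k A^{(k)}_{j,:}\mathbf{x}^{(k)}$. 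Bounding $|\hat v_{j,h}| \leq |A^{(k_h)}_{j,:}\mathbf{p}^{(k_h)}| + |b_j| \leq 2$ (using the normalisation $\nrm{A^{(k)}_{j,:}}_{\rm soc}\leq 1$, the fact that $\mathbf{p}^{(k)}$ has unit trace and lies in $\mathcal{L}^{(k)}$, and $|b_j|\leq 1$), Hoeffding's inequality gives $\Pr[|\hat v_j - v_j| > \theta/6] \leq 2\exp(-T' \theta^2/288)$; a union bound over the $m$ constraints, together with the chosen $T'$, yields total concentration failure at most $\xi/4$. Third, conditioned on the ``good event'' $\mathcal{E} := \{\forall j:\,|\hat v_j - v_j| \leq \theta/6\}$, the search oracle $C_{\rm search}$ fails with probability at most $\eta = \xi/2$. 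Summing gives $\xi/4 + \xi/4 + \xi/2 = \xi$.

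Correctness on the good event $\mathcal{E}$ follows by matching up the thresholds $\theta/2, \theta$ (used to define $V_\theta, V_{>\theta}$) with $4\theta/6, 5\theta/6$ (used to define $\hat V_\theta, \hat V_{>\theta}$). The key inclusions are (a) if $j \in V_{>\theta}$, then $\hat v_j > \theta - \theta/6 = 5\theta/6$, so $j \in \hat V_{>\theta}$; (b) if $j \in \hat V$, then $\hat v_j > 4\theta/6$, so $v_j > 4\theta/6 - \theta/6 = \theta/2$, meaning $j \in V$; and (c) if $V$ is empty, then all $v_j \leq \theta/2$, so all $\hat v_j \leq \theta/2 + \theta/6 = 4\theta/6$, making $\hat V$ empty. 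Case (i) of \cref{def:violated_constraint_oracle} follows from (a) and (b): $V_{>\theta}\neq \varnothing \Rightarrow \hat V_{>\theta}\neq\varnothing$, triggering case (i) of $C_{\rm search}$, whose output $j \in \hat V$ sits inside $V$. Case (ii) follows from (c): $V = \varnothing \Rightarrow \hat V = \varnothing$, triggering case (ii) of $C_{\rm search}$. In case (iii), $V_{>\theta}=\varnothing$ but $V_\theta \neq \varnothing$: whatever $C_{\rm search}$ returns is acceptable, because its only two possible non-failure outputs are ``all constraints satisfied'' (allowed) and an index $j\in \hat V$, which by (b) lies in $V = V_\theta$ (allowed).

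I do not foresee a serious obstacle; the main thing to get right is the careful bookkeeping of the thresholds so that a $\theta/6$ sampling tolerance cleanly separates the $\theta/2$ and $\theta$ boundaries used on the ``clean'' and ``sampled'' sides, and the verification that $|\hat v_{j,h}|$ is absolutely bounded by a constant independent of $n,r,m$ (so Hoeffding gives the desired $1/\theta^2$ sample complexity rather than something worse). Both of these are easy to handle once the normalisation conditions (\cref{def:normalisation_conditions}) and the unit-trace structure of $\mathbf{p}^{(k)}$ are invoked.
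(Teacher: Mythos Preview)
Your proposal is correct and follows essentially the same approach as the paper's proof: the same parameter choices $\zeta=\xi/4$, $\eta=\xi/2$, $T'=288\log(8m/\xi)/\theta^2$, the same Hoeffding bound using $|\hat v_{j,h}|\leq 2$, the same union bound over $m$ constraints, and the same threshold comparison (inclusions $V_{>\theta}\subseteq \hat V_{>\theta}$ and $\hat V\subseteq V$ on the good event) to verify the three output cases. The only cosmetic difference is that you phrase the TV-distance contribution as a coupling (``at cost $\xi/4$ assume i.i.d.''), whereas the paper adds the TV deviation after the Hoeffding step; both yield the same $\xi/4+\xi/4+\xi/2=\xi$ budget.
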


\begin{proof}
    We aim to implement the violated constraint oracle of \cref{def:violated_constraint_oracle} with precision parameter $\theta$ and failure probability $\xi$. Here we specify a choice of parameters
    \begin{align}
        \eta &= \xi/2 \\
        \zeta &= \xi/4\\
        T' &= \frac{288\log(8m/\xi)}{\theta^2}
    \end{align}
    We query the Gibbs sampling oracle of \cref{oracle:cone_index_Gibbs_sampling} which produces samples $k_0,k_1,\ldots,k_{T'-1} \in [r]$ from some joint distribution that is $\zeta$-close in total variation distance to $T'$ independent samples from the ideal distribution $\mathcal{P}$ (for which $k$ is sampled with probability $\mathcal{Z}^\pp{k}/\mathcal{Z}$).  We view the quantities $\hat{v}_{j,h}$ defined in \cref{eq:hat_v_jh} as random variables that depend on $k_0,\ldots, k_{T'-1}$,  and we denote by $\mathbb{E}_{\mathcal{P}}$ the expectation value over samples from the ideal distribution. When sampling from the ideal distribution, the expectation value of $\hat{v}_{j,h}$ is $v_j$ for all $h$, verified by
    \begin{align}
        \mathbb{E}_{\mathcal{P}} [\hat{v}_{j,h}] &= \sum_{k=0}^{r-1}\frac{\mathcal{Z}^\pp{k}}{\mathcal{Z}}\left(A^\pp{k}_{j,:}\mathbf{p}^{\pp{k}} - b_j\right) \\ \nonumber
        &= \sum_{k=0}^{r-1} A_{j,:}^\pp{k} \mathbf{x}^\pp{k} - b_j \\ \nonumber 
        &= v_j
    \end{align}
    Since $\hat{v}_j$ is simply the average of $\hat{v}_{j,h}$, we thus have
    \begin{align}
        \mathbb{E}_{\mathcal{P}} [\hat{v}_{j}] = v_j
    \end{align}
    Furthermore, we may observe that $|\hat{v}_{j,h}| \leq 2$, since $|b_j|\leq 1$ and $\nrm{A_{j,:}^{\pp{k}}}_{\rm soc} \leq 1$ and $\Tr(\mathbf{p}^\pp{k}) =1$ implies $|A_{j,:}^{\pp{k}}\mathbf{p}^\pp{k}|\leq 1 $. Thus, we may use Hoeffding's inequality to assert that, when the samples are chosen according to $\mathcal{P}$,  the probability that $\hat{v}_{j}$ deviates from its mean $v_j$ by more than $\theta/6$ is upper bounded by $2\exp(-\theta^2 T'/288)$. Applied here, our choice of $T' = 288\log(8m/\xi)/\theta^2$ ensures that for each value of $j$
\begin{align}\label{eq:hat_v_j_close_to_v_j}
    \left\lvert \hat{v}_j - v_j \right\rvert \leq \frac{\theta}{6}
\end{align}
holds except with probability at most $\xi/(4m)$ over the random samples drawn from $\mathcal{P}$. By the union bound, the probability that this holds simultaneously for all $m$ values of $j$ is at least $1-\xi/4$. Since the actual distribution that produced $k_0,\ldots,k_{T'-1}$ is at most $\zeta$-far from this distribution, the probability that a certain event occurs can deviate by at most $\zeta = \xi/4$---thus, we conclude that $|\hat{v}_j - v_j|\leq \theta/6$ holds for all $j$ except with probability at most $\xi/2$. 
    

Due to the modified intervals in the definitions of $\hat{V}_{>\theta}$ and $\hat{V}_{\theta}$, compared to $V_{> \theta}$ and $V_{\theta}$, when the bound of \cref{eq:hat_v_j_close_to_v_j} holds for all $j$, we may assert that if $j \in V_{>\theta}$ then $j \in \hat{V}_{> \theta}$. Moreover, if $j \not\in V$ then $j \not\in \hat{V}$, or in  other words $\hat{V} \subset V$.

We run one call to the sampled violated constraint search oracle of \cref{oracle:sampled_constraint_search}, on inputs $k_0,\ldots, k_{T'-1}$. There is at most $\xi/2$ probability that \cref{eq:hat_v_j_close_to_v_j} does not hold, in which case the output may fail. Assuming it does hold, we must verify the three cases of the violated constraint oracle (\cref{def:violated_constraint_oracle}).   If $V_{>\theta}$ is not empty (case (i)), then by the above logic, $\hat{V}_{>\theta}$ is not empty, and the sampled violated constraint search oracle is guaranteed to produce a $j \in \hat{V} \subset V$, except with probability $\eta = \xi/2$. If $V$ is empty (case (ii)) then $\hat{V} \subset V$ is also empty,  and the sampled violated constraint search oracle is guaranteed to output ``all constraints satisfied'' except with probability $\eta = \xi/2$. Otherwise (case (iii)), $V_{> \theta}$ is empty but $V_{\theta}$ is not empty. Here we may be in any of the three cases of \cref{oracle:sampled_constraint_search}, but it suffices to observe that the sampled violated constraint search oracle outputs either ``all constraints satisfied'' or a value $j \in \hat{V} \subset V$, except with probability $\xi/2$. Considering the chance that \cref{eq:hat_v_j_close_to_v_j} fails, the overall failure probability is at most $\xi$, verifying the output is correct. 

The complexity of the algorithm is one call to the cone index Gibbs sampler oracle (producing $T'$ samples) and one call to the sampled violated constraint search oracle, with the appropriate parameters. 
\end{proof}

\section{Quantum implementation of two-step violated constraint oracle}\label{sec:quantum_implementation}

In \cref{sec:MWSOCP}, we illustrated how the feasibility SOCP can be solved with repeated calls to a violated constraint oracle of \cref{def:violated_constraint_oracle}, which roughly speaking finds a $j$ for which the $j$-th constraint is violated by at least $\theta$, if one exists. We then explained in \cref{sec:two_step_approach} how the violated constraint oracle can be solved with a two-step process; the first step is to sample cone indices with a cone index Gibbs sampler oracle (\cref{oracle:cone_index_Gibbs_sampling}), and the second step is to use those samples within a sampled violated constraint search oracle (\cref{oracle:sampled_constraint_search}). Here, we show how these two oracles can be implemented with a quantum algorithm and quantify their complexity. This enables us to prove the following theorems on the overall complexity of the violated constraint oracle, and by extension, the full SOCP, which we state here, referencing in the proof statements shown later in the section. 

\begin{theorem}\label{thm:quantum_complexity_violated_constraint_oracle}
    There is a quantum algorithm that implements the violated constraint oracle of \cref{def:violated_constraint_oracle} for a program with $r$ cones, $n$ total variables, $m$ constraints, and $\beta = \nrm{\mathbf{y}}_1$, with failure probability $\xi$ and precision parameter $\theta$ while incurring complexity
    \begin{align}
       \bigOt\left(\frac{\sqrt{r} \beta \log^2(1/\xi)}{\theta^2}  + \frac{\sqrt{m} \beta \log^2(1/\xi)}{\theta}\right) 
    \end{align}
    calls to the oracles $O_R$, $O_{\mathbf{y}}$,  $O_{\mathcal{T}}$, and their inverses, $O_{\mathbf{b}}$, plus the same number of additional single- and two-qubit gates up to a factor $\bigO(\log(r\bar{n}m))$. 
\end{theorem}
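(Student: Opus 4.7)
The plan is to invoke \cref{theorem:Csamp_Csearch_implements_alg1} to reduce the construction of the violated constraint oracle (with failure probability $\xi$ and precision $\theta$) to building two quantum subroutines: a cone index Gibbs sampler $C_{\rm samp}$ and a sampled violated constraint search oracle $C_{\rm search}$, instantiated with $\zeta=\xi/4$, $\eta=\xi/2$, and $T' = 288\log(8m/\xi)/\theta^2$. Once those two oracles are quantified, their query complexities simply sum, so the proof reduces to exhibiting quantum implementations with complexities $\bigOt\bigl(\sqrt{r}\beta\log(1/\xi)/\theta^2\bigr)$ and $\bigOt\bigl(\sqrt{m}\beta\log(1/\xi)/\theta\bigr)$. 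The extra factor $\log(1/\xi)$ that appears squared in the theorem statement arises from boosting the subroutine success probabilities.

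First I would implement $C_{\rm samp}$. Using the access oracles $O_R$ and $O_{\mathbf{y}}$, I would construct a block-encoding of the cone-controlled arrowhead matrix $\operatorname{Arw}(-A^{\pp{k}\top}\mathbf{y})$ with block-encoding factor $\bigO(\beta)$, which is valid because $\nrm{A^{\pp{k}\top}\mathbf{y}}_{\rm soc} \le \sum_j y_j \nrm{A^{\pp{k}}_{j,:}}_{\rm soc} \le \nrm{\mathbf{y}}_1 = \beta$. Applying QSVT with a polynomial of degree $\bigO(\beta\log(1/\varepsilon))$ that approximates $x\mapsto e^{x/2}$ on the relevant spectrum, and then acting with the result on a superposition over cones each holding the identity element $\mathbf{e}^{\pp{k}}$, produces an unnormalised state whose weight on cone $k$ is proportional to $\mathcal{Z}^{\pp{k}}$. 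One round of amplitude amplification costing $\bigO(\sqrt{r})$, since the success amplitude is at least $\sqrt{\mathcal{Z}/(2r)}$ in the worst case, turns this into a normalised state whose measurement yields $k$ with distribution approximately $\mathcal{P}$. Drawing the $T'$ samples sequentially, each with individual total variation error $\bigO(\zeta/T')$, and taking a union bound gives the joint bound required by \cref{oracle:cone_index_Gibbs_sampling}, at total cost $\bigOt(T'\sqrt{r}\beta) = \bigOt(\sqrt{r}\beta\log^2(1/\xi)/\theta^2)$.

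Next I would implement $C_{\rm search}$ as a Grover-style search over the $m$ constraint indices. For each candidate index $j$, I would build a quantum estimator of $\hat{v}_j$ by (i) using $O_{\mathcal{T}}$ to prepare a uniform superposition over the pre-sampled cone indices $k_h$, (ii) applying the same QSVT state-preparation primitive used in $C_{\rm samp}$ to produce $\mathbf{p}^{\pp{k_h}}$ coherently conditioned on $k_h$, and (iii) using $O_R$ and $O_{\mathbf{b}}$ to compute the scalar $A^{\pp{k_h}}_{j,:}\mathbf{p}^{\pp{k_h}} - b_j$ as an amplitude. Quantum amplitude estimation then returns an estimate of $\hat{v}_j$ to additive precision $\theta/12$ using $\bigOt(\beta/\theta)$ queries. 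Embedding this estimator as the checking oracle inside amplitude amplification over the $m$ candidates, with thresholds chosen consistently with the definitions of $\hat{V}_{>\theta}$ and $\hat{V}_\theta$, multiplies the cost by $\bigO(\sqrt{m})$, and boosting success to probability $1-\eta$ contributes an extra $\log(1/\eta)$ factor, yielding $\bigOt(\sqrt{m}\beta\log^2(1/\xi)/\theta)$.

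Summing the two contributions gives the stated complexity. The main obstacle I anticipate is the joint-distribution guarantee demanded by \cref{oracle:cone_index_Gibbs_sampling}: the samples must be close to i.i.d.\ from $\mathcal{P}$, not merely close marginally. The sequential-draw plus union-bound strategy sketched above handles this cleanly but inflates the per-sample precision by a factor of $T'$, which is absorbed into the $\bigOt$. A subsidiary concern is the precision loss inherent to the QSVT approximation of $e^{x/2}$ together with the amplitude-amplification step, which must be tracked carefully so that the $\beta$-dependence remains linear up to logarithmic factors, rather than polynomial of higher order.
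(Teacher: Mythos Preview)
Your high-level plan---invoke \cref{theorem:Csamp_Csearch_implements_alg1} with $\zeta=\xi/4$, $\eta=\xi/2$, $T'=\Theta(\log(m/\xi)/\theta^2)$ and then bound $C_{\rm samp}$ and $C_{\rm search}$ separately---is exactly what the paper does; its proof of this theorem is a one-line citation of \cref{cor:C_samp_quantum} and \cref{cor:C_search_quantum}. The substantive gap is in your sketch of $C_{\rm samp}$: the claimed success amplitude ``at least $\sqrt{\mathcal{Z}/(2r)}$'' does not follow from the construction you describe, and the amplitude-amplification cost is not $\bigO(\sqrt{r})$ without a further idea. Any QSVT polynomial must satisfy $|p(x)|\le 1$ on $[-1,1]$, so approximating $\lambda\mapsto e^{-\lambda/2}$ for $\lambda\in[-\beta,\beta]$ forces a rescaling by $e^{-\beta/2}$; the success amplitude then becomes $\sqrt{e^{-\beta}\mathcal{Z}/(2r)}$, and since $\mathcal{Z}$ itself may be as small as $2re^{-\beta}$, amplification costs $e^{\Theta(\beta)}\sqrt{r}$ queries rather than $\sqrt{r}$. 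The paper's remedy is to first estimate $\lambda_{\min}=\min_k\lambda_-^{(k)}$ to $\bigO(1)$ additive precision via ground-state energy estimation (cost $\bigOt(\beta\sqrt{r})$), and then block-encode the \emph{shifted} operator $\operatorname{Arw}(\mathbf{u})-\lambda_{\min}I$ so that $e^{-(\lambda-\lambda_{\min})/2}\le 1$ on the whole spectrum and the minimum-eigenvalue branch already contributes amplitude $\Omega(1/\sqrt{r})$ regardless of $\beta$. Your ``subsidiary concern'' is therefore the central obstacle, not a side issue.

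A second, related gap affects your $C_{\rm search}$ sketch: reusing ``the same QSVT state-preparation primitive'' to prepare $\ket{\sqrt{2\smash{\mathbf{p}^{(k_h)}}}}$ coherently in $k_h$ inherits the same $e^{\Theta(\beta)}$ blow-up, because one cannot shift by a single $\lambda_{\min}$ that works uniformly across cones. The paper uses a different construction: it block-encodes only the \emph{off-diagonal} part of $\operatorname{Arw}(\mathbf{u}^{(k)})$, with eigenvalues $\pm\nrm{\vec{u}^{(k)}}$, and applies a QSVT polynomial for the sigmoid $x\mapsto 1/(1+e^{\beta x})$, which is already bounded by $1$ on $[-1,1]$ and yields $\ket{\sqrt{2\smash{\mathbf{p}^{(k)}}}}$ with success amplitude $\ge 1/2$ for every $k$. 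This gives a coherent unitary $U_{\sqrt{2\mathbf{p}^{(k)}}}$ at cost $\bigOt(\beta)$, which is sandwiched with $O_{\mathcal{T}}$ and a row-controlled arrowhead block-encoding to produce a diagonal block-encoding of $\hat{V}=\sum_j\hat{v}_j\ket{j}\bra{j}$; a Heaviside QSVT followed by fixed-point amplification over $[m]$ completes the search. Your amplitude-estimation-inside-Grover framing is close in spirit but would still need this per-cone primitive to avoid the $\beta$ blow-up, and the block-encoding/Heaviside route also sidesteps the coherence bookkeeping of embedding phase estimation inside a Grover iterate.
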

\begin{proof}
    This follows from the expression of \cref{theorem:Csamp_Csearch_implements_alg1} in terms of $C_{\rm samp}$ and $C_{\rm search}$, and the evaluation of these complexity expressions in \cref{cor:C_samp_quantum} and \cref{cor:C_search_quantum}, respectively. 
\end{proof}
\begin{corollary}\label{cor:quantum_complexity_feasibility_problem}
    There is a quantum algorithm for solving the unit-trace SOCP feasibility problem of \cref{def:SOCP_feasibility} for a program with $r$ cones, $n$ total variables,  $m$ constraints, and precision parameter $\theta$, while incurring complexity
    \begin{align}
    \bigOt\left(\frac{\sqrt{r}}{\theta^5}  + \frac{\sqrt{m}}{\theta^4}\right) 
    \end{align}
    calls to the oracles $O_R$, $O_{\mathbf{y}}$, $O_{\mathcal{T}}$ and their inverses, $O_{\mathbf{b}}$,  plus the same number of additional single- and two-qubit gates up to a factor $\bigO(\log(r\bar{n}m))$.
\end{corollary}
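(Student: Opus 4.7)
The plan is to combine the multiplicative weights convergence result (\cref{theorem:MW_algo}) with the per-call quantum complexity of the violated constraint oracle (\cref{thm:quantum_complexity_violated_constraint_oracle}), treating the corollary as a bookkeeping exercise. First I would invoke \cref{theorem:MW_algo} to assert that solving the feasibility problem requires at most $T = 36\log(2r)/\theta^2 = \bigOt(1/\theta^2)$ calls to the violated constraint oracle, with per-call failure probability parameter $\xi = 1/(3T)$. This choice immediately gives $\log(1/\xi) = \bigO(\log(\log(r)/\theta))$, which is absorbed into the $\bigOt$ notation.

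The main quantitative step is to bound the 1-norm parameter $\beta = \nrm{\mathbf{y}}_1$ that appears in \cref{thm:quantum_complexity_violated_constraint_oracle}. Inspecting \cref{Alg1:description}, the algorithm starts with $\mathbf{y}^{(0)} = \mathbf{0}$ and each iteration performs the update $\mathbf{y}^{(t+1)} \gets \mathbf{y}^{(t)} + (\theta/6)\mathbf{e}_{j^{(t)}}$, so the 1-norm grows by exactly $\theta/6$ per iteration. After $T$ iterations, this yields the uniform bound
\begin{equation}
    \beta = \nrm{\mathbf{y}^{(t)}}_1 \leq \frac{T\theta}{6} = \bigOt\!\left(\frac{1}{\theta}\right) \quad \text{for all } t \in \{0,\ldots,T-1\}.
\end{equation}

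Substituting this $\beta$ and $\log(1/\xi) = \bigOt(1)$ into the per-call cost $\bigOt(\sqrt{r}\beta\log^2(1/\xi)/\theta^2 + \sqrt{m}\beta\log^2(1/\xi)/\theta)$ from \cref{thm:quantum_complexity_violated_constraint_oracle} gives a per-call cost of $\bigOt(\sqrt{r}/\theta^3 + \sqrt{m}/\theta^2)$. Multiplying by the number of iterations $T = \bigOt(1/\theta^2)$ produces the claimed total $\bigOt(\sqrt{r}/\theta^5 + \sqrt{m}/\theta^4)$ oracle queries, and the per-query gate overhead of $\bigO(\log(r\bar{n}m))$ carries through directly. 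The only subtlety worth flagging is ensuring the failure probabilities compose correctly: the MW analysis already sets $\xi = 1/(3T)$ so that by a union bound the overall failure probability is at most $1/3$, so no additional amplification is needed. There is no real obstacle here — the nontrivial work has already been done in proving \cref{theorem:MW_algo} and \cref{thm:quantum_complexity_violated_constraint_oracle}, and the corollary just stitches them together using the sparsity/norm structure of the MW update.
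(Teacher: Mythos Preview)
Your proposal is correct and follows essentially the same approach as the paper: both invoke \cref{theorem:MW_algo} for the iteration count $T=\bigOt(1/\theta^2)$ and \cref{thm:quantum_complexity_violated_constraint_oracle} for the per-iteration cost, with the key observation that $\beta = \nrm{\mathbf{y}^{(t)}}_1 \leq T\theta/6 = \bigOt(1/\theta)$ because each MW update adds $\theta/6$ to the $1$-norm. Your write-up is simply a more detailed version of the paper's one-line proof.
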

\begin{proof}
    This follows from \cref{thm:quantum_complexity_violated_constraint_oracle} and \cref{theorem:MW_algo}, noting that the maximum value of $\beta$ across all $T$ iterations of the algorithm is $\frac{T\theta}{6} = \bigO(\log(2r)/\theta)$. 
\end{proof}

\begin{corollary}[Complexity of Quantum implementation of SOCP MW] Let $\mathcal{P}$ be an instance of the general primal SOCP of \cref{Def:SOCP} with $r$ cones, $n$ total variables, and $m$ constraints. Assume $\mathcal{P}$ obeys the normalisation conditions and strong duality, and that it is $R$-trace and $\tilde{R}$-dual-trace constrained. Given error parameter $\epsilon$, there is a quantum algorithm that approximately solves $\mathcal{P}$ up to error $\epsilon$, in the sense of the statement of \cref{feasibility_reduction}, while incurring complexity
    \begin{align}
    \bigOt \left(\sqrt{r}\left(\frac{R\tilde{R}}{\epsilon}\right)^5 + \sqrt{m}\left(\frac{R\tilde{R}}{\epsilon}\right)^4\right) 
    \end{align}
    calls to the oracles  $O_R$, $O_{\mathbf{y}}$, $O_{\mathcal{T}}$ and their inverses, $O_{\mathbf{b}}$, plus the same number of additional single- and two-qubit gates up to a factor $\bigO(\log(r\bar{n}m))$. 
\end{corollary}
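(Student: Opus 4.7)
The plan is to combine Lemma \ref{feasibility_reduction} with Corollary \ref{cor:quantum_complexity_feasibility_problem} in a straightforward way, since the former reduces the general SOCP to a sequence of feasibility-problem instances and the latter gives a quantum algorithm for each such instance. No new technical machinery is required; the argument is essentially bookkeeping of parameters and a verification that the reduction is compatible with the quantum access model.

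First, I would apply Lemma \ref{feasibility_reduction} with $\theta = \epsilon/(4R\tilde R)$. This lemma asserts that approximating $g^*$ to within $\epsilon$ and producing an appropriate implicit primal solution can be done using $\bigOt(\log(R/\theta)) = \bigOt(\log(R\tilde R/\epsilon))$ calls to a feasibility oracle $\mathcal O_\theta$, each on an instance $\hat{\mathcal P}_g$ with $\hat r = r+1$ cones and $\hat m = m+1$ constraints. Since $\sqrt{\hat r} = \Theta(\sqrt r)$ and $\sqrt{\hat m} = \Theta(\sqrt m)$, the asymptotic cost of each oracle call is unaffected by the pad of one extra cone and constraint.

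Second, I would implement $\mathcal O_\theta$ using the quantum algorithm from Corollary \ref{cor:quantum_complexity_feasibility_problem}, whose cost per call is $\bigOt(\sqrt r/\theta^5 + \sqrt m/\theta^4)$. Substituting $\theta = \epsilon/(4R\tilde R)$ and absorbing the constant factor into $\bigOt$ gives
\begin{equation*}
\bigOt\!\left(\sqrt r\,(R\tilde R/\epsilon)^5 + \sqrt m\,(R\tilde R/\epsilon)^4\right)
\end{equation*}
per call. Multiplying by the $\bigOt(\log(R\tilde R/\epsilon))$ outer iterations (including the $\bigO(\log\log(1/\tilde R\theta))$ overhead for boosting each oracle success probability to $1-\zeta$ as in the proof of Lemma \ref{feasibility_reduction}) introduces only polylogarithmic factors, which are absorbed into $\bigOt$, yielding the claimed complexity.

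The only nontrivial point is confirming that the feasibility-problem oracle of Corollary \ref{cor:quantum_complexity_feasibility_problem} can actually be invoked on the modified instances $\hat{\mathcal P}_g$ constructed in Lemma \ref{feasibility_reduction}, which requires access to $\hat A^\pp{k}$ and $\hat{\mathbf b}$ via oracles $O_R$ and $O_{\mathbf b}$. Inspecting \cref{eq:redefinition_A}--\cref{eq:redefinition_b}, the new rows of $\hat A^\pp{k}$ are either the original rows of $A^\pp{k}$ or $-\mathbf c^\pp{k}$ (whose normalisation was already established in the proof of Lemma \ref{feasibility_reduction}), and the new $\hat A^\pp{r}$ is zero; the new $\hat{\mathbf b}$ is the original $\mathbf b/R$ with an extra entry $-g/R$. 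These oracles can be synthesised from the original $O_R$, $O_{\mathbf b}$, and classically known $\mathbf c^\pp{k}, g, R$ with $\bigO(1)$ query overhead and $\bigO(\mathrm{polylog}(r\bar n m))$ gate overhead, so the stated complexity in terms of queries to $O_R, O_{\mathbf b}, O_{\mathbf y}, O_{\mathcal T}$ and their adjoints (plus $\bigO(\log(r\bar n m))$ times as many single- and two-qubit gates) is preserved. I do not expect a genuine obstacle here; the main care required is simply tracking that each parameter substitution and padding remains consistent with the assumptions of the quantum subroutines.
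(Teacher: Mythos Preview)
Your proposal is correct and takes essentially the same approach as the paper: invoke \cref{feasibility_reduction} with $\theta = \epsilon/(4R\tilde R)$ to reduce to $\bigOt(\log(R/\theta))$ feasibility calls, then apply \cref{cor:quantum_complexity_feasibility_problem} to each call and absorb the outer logarithmic factor into $\bigOt$. The paper's proof is in fact terser than yours; your additional checks on oracle compatibility for the padded instances $\hat{\mathcal P}_g$ are a reasonable elaboration but not strictly required for the argument.
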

\begin{proof}
As described in \cref{feasibility_reduction}, it suffices to make $\bigOt(\log(R/\epsilon))$ calls to an oracle for the feasibility SOCP with error parameter $\theta = \epsilon/(4R\tilde{R})$. Thus, the complexity follows from \cref{cor:quantum_complexity_feasibility_problem}.
\end{proof}
Thus, the corollary establishes that, as stated in the introduction, the quantum algorithm has runtime $    \bigOt \left(\sqrt{r} \gamma^5 + \sqrt{m}\gamma^4\right) $, where $\gamma =\frac{R\tilde{R}}{\epsilon}$.



\subsection{Quantum implementation of cone index Gibbs sampler oracle}

We aim to implement the oracle of \cref{oracle:cone_index_Gibbs_sampling}, where the inputs are the SOCP inputs $A^\pp{0},\ldots$, $A^\pp{r-1},$ $\mathbf{b},\theta$ , a vector $\mathbf{y} \in \mathbb{R}^m$, and an integer $T'$, and the desired output is $T'$ samples $k_0,\ldots,k_{T'-1}$ from the Gibbs distribution over cones. The entries of $\mathbf{y}$ are non-negative; let $\beta = \sum_{j=0}^{m-1} y_j$. The implementation utilizes a unitary $2\beta$-block-encoding of the arrowhead matrix for the vector
\begin{align}
    \mathbf{u} = (\mathbf{u}^\pp{0}; \mathbf{u}^\pp{1}; \ldots; \mathbf{u}^\pp{r-1}) = (A^{\pp{0}\top}\mathbf{y}; A^{\pp{1} \top}\mathbf{y}; \ldots; A^{(r-1)\top}\mathbf{y})
\end{align}
The terminology ``\((x, y)\)-block-encoding'' \cite{Gilyn2019} refers to a unitary block-encoding with subnormalisation \(x\) and approximation error at most \(y\). If only \(x\) is specified, it refers to the subnormalisation factor alone, and $y$ is presumed to be 0. 
That is, our construction queries a unitary $U_{\operatorname{Arw}(\mathbf{u})}$, where a certain block flagged by an ancilla register being in $\ket{\bar{0}}$ is equal to the block diagonal arrowhead matrix of \cref{arrw_multi_vector}. Subscripts on registers are included for convenience, consistent with \cref{section:access_model}. 
\begin{align}
    (I_{\rm cone} \otimes I_{\rm col} \otimes \bra{\bar{0}}_{\rm anc}) U_{\operatorname{Arw}(\mathbf{u})}(I_{\rm cone} \otimes I_{\rm col} \otimes \ket{\bar{0}}_{\rm anc}) = \frac{1}{2\beta}\sum_{k=0}^{r-1} \ket{k}\bra{k}_{\rm cone} \otimes \operatorname{Arw}\left( A^{\pp{k}\top} \mathbf{y} \right)_{\rm col}
\end{align}
The unitary $U_{\operatorname{Arw}(\mathbf{u})}$ can in turn be implemented using the data access oracles specified in \cref{section:access_model}---as per \cref{BE_Arrohead_matrix_theorem}, it can be implemented using 1 call to each of $O_R$, $O^{\dagger}_R$, $O_{\mathbf{y}}$, and $O^{\dagger}_{\mathbf{y}}$, plus $\bigO(\log(\bar{n}))$ other single- and two-qubit gates. 

The quantities $\mathcal{Z}^\pp{k}$, $\mathbf{p}^\pp{k}$ defined in \cref{eq:Z^\pp{k}} and \cref{eq:p^\pp{k}} can be expressed in terms of $\mathbf{u}^\pp{k}$ as
\begin{align}
    \mathcal{Z}^\pp{k} &= \Tr(e^{-\mathbf{u}^\pp{k}}) \\
    \mathbf{p}^\pp{k} &= \frac{e^{-\mathbf{u}^\pp{k}}}{\mathcal{Z}^\pp{k}}
\end{align}
Here, we recall that exponentiation  is understood in terms of the Jordan frame of the vector $\mathbf{u}^\pp{k}$
\begin{align}
    \mathbf{u}^\pp{k} &= \lambda_+^\pp{k} \mathbf{c}_+^\pp{k} + \lambda_-^\pp{k} \mathbf{c}_-^\pp{k} \\
    \mathbf{p}^\pp{k} &= \frac{e^{-\lambda_+^\pp{k}} \mathbf{c}_+^\pp{k} + e^{-\lambda_-^\pp{k}} \mathbf{c}_-^\pp{k}}{e^{-\lambda_+^\pp{k}}+e^{-\lambda_-^\pp{k}}} \,,
\end{align}
where $\lambda_{\pm}^\pp{k}$ are the eigenvalues and $\mathbf{c}_{\pm}^\pp{k}$ are the eigenvectors of $\mathbf{u}^\pp{k}$.
By construction we have that $\Tr(\mathbf{p}^\pp{k})=1$; however, it does not hold that $\nrm{\mathbf{p}^\pp{k}} = 1$ under the standard Euclidean vector norm, which is the relevant one when working with quantum states. 
To rectify this, it will be useful to define
\begin{align}
    \sqrt{\mathbf{p}^\pp{k}} = \frac{e^{-\mathbf{u}^\pp{k}/2}}{\sqrt{\mathcal{Z}^\pp{k}}} = \frac{e^{-\lambda_+^\pp{k}/2}\mathbf{c}_{+}^\pp{k}+ e^{-\lambda_-^\pp{k}/2}\mathbf{c}_{-}^\pp{k}}{\sqrt{e^{-\lambda_+^\pp{k}}+e^{-\lambda_-^\pp{k}}}}
\end{align}
We note the identity $\sqrt{\mathbf{p}^\pp{k}} \circ \sqrt{\mathbf{p}^\pp{k}} = \mathbf{p}^\pp{k}$. We also note that, due to the orthogonality of $\mathbf{c}_{\pm}^\pp{k} $ and the fact that $\nrm{\mathbf{c}_{\pm}^\pp{k}} = 1/\sqrt{2}$, we have
\begin{align}
    \nrm{\sqrt{\mathbf{p}^\pp{k}}} =  \frac{1}{\sqrt{2}}\,,
\end{align}
independent of $k$. 

Having observed this, we can define 
the normalized quantum state
\begin{align}\label{eq:ket_sqrt_2p^\pp{k}}
    \ket{\sqrt{2\smash{\mathbf{p}^\pp{k}}}}_{\rm col} = \sum_{i=0}^{\bar{n}-1}\sqrt{2}(\sqrt{\vphantom{2}\smash{\mathbf{p}^\pp{k}}})_{i} \ket{i}_{\rm col}
\end{align}
to have its amplitudes proportional to the entries of $\sqrt{\mathbf{p}^\pp{k}}$.

Now, consider the state which is a (weighted) superposition over these states for different values of $k$:
\begin{equation}
    \ket{\sqrt{2\mathbf{x}}}_{\rm cone, col} = \sum_{k=0}^{r-1} \sqrt{\frac{\mathcal{Z}^\pp{k}}{\mathcal{Z}}}\ket{k}_{\rm cone} \ket{\sqrt{2\smash{\mathbf{p}^\pp{k}}}}_{\rm col}\,.
\end{equation}
If this state can be prepared, then measuring the cone register yields sample $k$ with probability $\mathcal{Z}^\pp{k}/\mathcal{Z}$, the desired output of the cone index Gibbs sampler oracle. We will prepare this state by transforming the eigenvalues $\lambda^\pp{k}_{\pm}$ of the arrowhead matrix $\operatorname{Arw}(\mathbf{u})$ with QSVT, using a polynomial approximation of the exponential function. 

First, we need to have an approximation to the minimum eigenvalue $\min_k \lambda_{-}^\pp{k}$. 
We will accomplish this by using \cref{theorem:min_finding} (reproduced from \cite[Theorem 8]{lin2020near}), considering $\operatorname{Arw}(\mathbf{u})$ as our Hamiltonian and considering $$\frac{1}{\sqrt{r}}\sum^{r-1}_{k=0}\ket{k}_{\rm cone} \ket{\bar{0}}_{\rm col}$$ as the initial state, which can be easily prepared with $\bigO(\log(r))$ gate complexity. We also have a guarantee that this state has overlap $1/\sqrt{2r}$
to the eigenstate of $\operatorname{Arw}(\mathbf{u})$ with minimum eigenvalue---this is due to the fact that for any Jordan frame with orthogonal normalized (in Euclidean norm) eigenvectors $\ket{\sqrt{2}\mathbf{c}_{\pm}}$, we have $\ket{\bar{0}} = \frac{1}{\sqrt{2}}\left(\ket{\sqrt{2}\mathbf{c}_+} + \ket{\sqrt{2}\mathbf{c}_-}\right)$. Then, using \cref{theorem:min_finding}, the cost of finding the minimum eigenvalue of the Arrowhead matrix, with $\eta_\lambda$ precision and $1-\eta_{\rm fail}$ probability, is $\bigOt(\frac{\beta \sqrt{2r}\log(1/\eta_{\rm fail})}{\eta_{\lambda}})$ queries to the block-encoding of $\operatorname{Arw}(\mathbf{u})$ (with normalisation factor $2\beta$), and $\bigOt(\frac{\beta \sqrt{2r}\log(1/\eta_{\rm fail})\log(rm\bar{n})}{\eta_{\lambda}})$ other single- and two-qubit gates.\footnote{This $\log(r \bar n  m)$ factor is a loose upper bound; the true cost should be closer to $\bigO(\log(\bar n m))$, but we include the extra $\log r$ to capture any other possible hidden overhead.}

\begin{lemma}[Preparing \(\ket{\sqrt{2\mathbf{x}}}\) with QSVT ]\label{lem:preparing_ket_sqrt_2x}
  Let \(A^\pp{0},\dots,A^\pp{r-1},\theta\)
  define an instance of the SOCP feasibility problem, and
  let \(\mathbf{y}\in\mathbb{R}^{m}\) satisfy \(\|\mathbf{y}\|_{1}= \beta\).
  Suppose a value \(\tilde{\lambda}_{\min}\) satisfying $\lambda_{\rm min} - \eta_{\lambda}\leq \tilde{\lambda}_{\rm min} \leq \lambda_{\rm min} + \eta_{\lambda}\,,
$
    where $\lambda_{\rm min} = \min_{k \in [r]} \lambda_{-}^k $ 
  is known (with accuracy parameter \(\eta_{\lambda}=1/2\)).
  Given an error parameter \(\varepsilon_{\mathbf{x}}>0\), there exists a quantum circuit that outputs a state \(\rho_{\mathrm{out}}\) obeying
  \[
     D\!\bigl(\rho_{\mathrm{out}},
              \ket{\sqrt{2\mathbf{x}}}\!\bra{\sqrt{2\mathbf{x}}}
              \bigr)
     \;\le\;\varepsilon_{\mathbf{x}},
  \]
  where \(D(\cdot,\cdot)\) is the trace distance.
  The procedure uses
\begin{equation}
    \bigOt\left(\beta \sqrt{r} \log^2\left(\frac{1}{\varepsilon_{\mathbf{x}}} \right)\right)
\end{equation}
  calls to the block-encoding oracle
  \(U_{\operatorname{Arw}(\mathbf{u})}\),
  and an asymptotically equivalent number of additional one- and two-qubit gates, up to a factor $\bigO(\log(r\bar{n}m))$.
\end{lemma}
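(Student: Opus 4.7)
The plan is to apply quantum singular value transformation (QSVT) to the block-encoding $U_{\operatorname{Arw}(\mathbf{u})}$ in order to approximately realize the non-unitary operator $e^{-(\operatorname{Arw}(\mathbf{u})-\tilde{\lambda}_{\min}I)/2}$, apply it to the easily-prepared state $\tfrac{1}{\sqrt{r}}\sum_{k}\ket{k}\ket{\bar{0}}$, and then boost the success amplitude via fixed-point amplitude amplification. The shift by $\tilde{\lambda}_{\min}$ keeps the target operator bounded by $1$, which is what allows its polynomial approximation to be realized as a valid block-encoding.

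First I would establish the identity
\begin{equation}
\ket{\sqrt{2\mathbf{x}}} \;=\; \sqrt{\tfrac{2}{\mathcal{Z}}}\,\sum_{k=0}^{r-1}\ket{k}\,e^{-\operatorname{Arw}(\mathbf{u}^\pp{k})/2}\ket{\bar{0}},
\end{equation}
which follows from the Jordan-frame decomposition of $\mathbf{u}^\pp{k}$ together with $\ket{\bar{0}}=\tfrac{1}{\sqrt{2}}(\ket{\sqrt{2}\mathbf{c}_+^\pp{k}}+\ket{\sqrt{2}\mathbf{c}_-^\pp{k}})$ and the orthogonality of the remaining eigenvectors of $\operatorname{Arw}(\mathbf{u}^\pp{k})$ to $\ket{\bar{0}}$. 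Applying $e^{-(\operatorname{Arw}(\mathbf{u})-\tilde{\lambda}_{\min}I)/2}$ to the uniform superposition therefore produces, up to the global prefactor $e^{\tilde{\lambda}_{\min}/2}/\sqrt{r}$, a vector whose flag-success branch has squared norm $\Theta(\mathcal{Z}\,e^{\tilde{\lambda}_{\min}}/r)$. Using that $\mathcal{Z}\ge e^{-\lambda_{\min}}$ combined with $|\tilde{\lambda}_{\min}-\lambda_{\min}|\le 1/2$, this success probability is $\Omega(1/r)$, uniformly over SOCP instances.

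Second, I would invoke a standard polynomial approximation of the exponential (e.g.\ Gilyén--Su--Low--Wiebe) to construct a polynomial $p$ of degree $d=\bigOt(\beta\log(1/\varepsilon'))$, bounded by $1$ on all of $[-1,1]$, and $\varepsilon'$-close to $f(x)=e^{-(2\beta x-\tilde{\lambda}_{\min})/2}$ on the spectrum-containing subinterval $[(\tilde{\lambda}_{\min}-\eta_\lambda)/(2\beta),\,1/2]$. Feeding $U_{\operatorname{Arw}(\mathbf{u})}$ into QSVT with $p$ yields a $(1,\varepsilon')$-block-encoding of $e^{-(\operatorname{Arw}(\mathbf{u})-\tilde{\lambda}_{\min}I)/2}$. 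Applying this unitary to $\tfrac{1}{\sqrt{r}}\sum_k\ket{k}\ket{\bar{0}}$ and flagging the block-encoding ancilla in $\ket{\bar{0}}$ then yields a success branch that is $\bigO(\varepsilon')$-close to $\ket{\sqrt{2\mathbf{x}}}$ with probability $q=\Omega(1/r)$. Finally, fixed-point amplitude amplification with $O(\sqrt{r}\log(1/\varepsilon_{\mathbf{x}}))$ rounds boosts the success amplitude to within $\varepsilon_{\mathbf{x}}/2$ of $1$; choosing $\varepsilon'=\Theta(\varepsilon_{\mathbf{x}}/\sqrt{r})$ ensures that the per-round QSVT error contributes at most $\varepsilon_{\mathbf{x}}/2$ to the final trace distance. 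Each round makes $d$ queries to $U_{\operatorname{Arw}(\mathbf{u})}$, giving the claimed $\bigOt(\beta\sqrt{r}\log^2(1/\varepsilon_{\mathbf{x}}))$ total query complexity, with an asymptotically equal count of additional one- and two-qubit gates up to the $\log(r\bar{n}m)$ overhead from QSVT phase synthesis and from the data-access primitives inside $U_{\operatorname{Arw}(\mathbf{u})}$.

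The main obstacle I anticipate is the careful bookkeeping needed to combine three distinct sources of error---the $\eta_\lambda=1/2$ slack in the eigenvalue estimate, the QSVT polynomial error $\varepsilon'$, and the residual rotation error from fixed-point amplification---into a single trace-distance guarantee of $\varepsilon_{\mathbf{x}}$. A related subtlety is constructing the polynomial $p$ so that it is bounded by $1$ on \emph{all} of $[-1,1]$ rather than merely on the spectral window containing $\operatorname{Arw}(\mathbf{u})/(2\beta)$, since only then does QSVT produce a well-defined block-encoding; this is the standard justification for shifting by $\tilde{\lambda}_{\min}$ before exponentiating.
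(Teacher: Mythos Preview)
Your proposal is correct and follows essentially the same approach as the paper: QSVT with a polynomial approximation to the shifted exponential applied to the uniform superposition $\tfrac{1}{\sqrt{r}}\sum_k\ket{k}\ket{\bar 0}$, followed by fixed-point amplitude amplification exploiting the $\Omega(1/r)$ success probability. One small detail to fix in a full write-up: shifting by $\tilde{\lambda}_{\min}$ alone can leave $f(x)=e^{-(2\beta x-\tilde{\lambda}_{\min})/2}$ slightly above $1$ at the bottom of the spectral window (by a factor $e^{\eta_\lambda/2}$), so the paper shifts by $\tilde{\lambda}_{\min}-\eta_\lambda$ and divides by an extra constant to guarantee the polynomial can be taken $\leq 1/2$ on all of $[-1,1]$; this does not affect the asymptotic complexity.
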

\begin{proof}
We use QSVT to exponentiate the eigenvalues of the arrowhead matrix, we apply the resulting matrix to an equal superposition of the corresponding eigenvectors, and finally we use amplitude amplification to boost the success probability. We begin by preparing an equal superposition over the identity vector of each of the $k$ cones, using $\bigO(\log(r))$ gates. We identify the notation $\ket{\mathbf{e}^\pp{k}}_{\rm cone,col}:= \ket{k}_{\rm cone}\ket{\bar{0}}_{\rm col}$, since the identity element has all its amplitude on the $0$ entry of the $k$-th cone. That is, we prepare the state
\begin{align}
    \frac{1}{\sqrt{r}}\sum^{r-1}_{k=0}\ket{\mathbf{e}^\pp{k}}_{\rm cone, col} =  \frac{1}{\sqrt{r}}\sum^{r-1}_{k=0}\ket{k}_{\rm cone} \ket{\bar{0}}_{\rm col} = \frac{1}{\sqrt{2r}}\sum^{r-1}_{k=0}\ket{k}_{\rm cone} \left(\ket{\sqrt{2}\mathbf{c}_+^{\pp{k}}} + \ket{\sqrt{2}\mathbf{c}_-^{\pp{k}}} \right)_{\rm col}\,,
\end{align}
where the final decomposition follows from the definition of the eigenvectors and the fact that they add to the identity element. 
Ideally, we would then implement the map:
\begin{align}
  &\frac{1}{\sqrt{2r}}\sum^{r-1}_{k=0}\ket{k}_{\rm cone} \left(\ket{\sqrt{2}\mathbf{c}_+^{\pp{k}}} + \ket{\sqrt{2}\mathbf{c}_-^{\pp{k}}} \right)_{\rm col} \longmapsto{}\nonumber\\
    &   \frac{1}{\sqrt{2r}}\sum^{r-1}_{k=0}\ket{k}_{\rm cone}\left(\sqrt{e^{-(\lambda^\pp{k}_+ -\lambda_{\min })}} \ket{\sqrt{2}\mathbf{c}^\pp{k}_+}+ \sqrt{e^{-(\lambda^\pp{k}_{-}-\lambda_{\min })}} \ket{\sqrt{2}\mathbf{c}^\pp{k}_-}\right)_{\rm col}\ket{0}_{\rm flag}
    \\
   & \qquad  +\ket{\text{garbage}^{\prime \prime}}_{\rm cone, col}\ket{1}_{\rm flag} \nonumber
\end{align}

We approximate the map by replacing the exponential function with a polynomial approximation. In particular, we approximate the function $f(x) = e^{-2\beta x} / 4$, for $x\in [0,1]$,\footnote{The denominator 4 comes from requiring the domain of the exponential function to be between $[-1/4,0]$. In particular, this requirement is sourced from \cref{QSVT_poly_eig_trans_theorem}, which requires $|P(x)|\leq \frac{1}{2}$, and to prevent any possible deviation from our implementation, we make it satisfy $|P(x)|\leq \frac{1}{4}$.} with a $\delta_{exp}$ approximate polynomial $\tilde{P}(x)$ from \cref{Lemma_poly_approx}. The reason to subtract $\lambda_{\min}$ is to shift the domain of the exponentiation function from $[-1,1]$ to $[0,1]$. This is necessary in order to obtain a good polynomial approximation to the exponential function, while still obeying the QSVT constraints of \cref{Lemma_poly_approx}.

We use \cref{QSVT_poly_eig_trans_theorem} applied to the block-encoding $\text{Arw}(\textbf{u})-(\tilde{\lambda}_{\min}-\eta_{\lambda})I$, where the terms are combined with the linear combination of unitaries method,
and thus the subnormalisation of the final block-encoding increases to $4\beta$, as $|\tilde{\lambda}_{\min}-\eta_{\lambda}|\leq 2\beta $. The inclusion of $\eta_\lambda$ ensures that the argument remains in the range $[0,1]$, even if the estimate of $\lambda_{\min}$ is inaccurate. 

In summary, we implement the map 
\begin{align}
\ket{\mathbf{e}^\pp{k}}\ket{0} &\mapsto \tilde{P} \left( \frac{1}{4\beta } (\text{Arw}(\textbf{u}^\pp{k})-(\tilde{\lambda}_{\min }-\eta_{\lambda})I) \right)  \ket{\mathbf{e}^\pp{k}}\ket{0} +\ket{\operatorname{garbage}^{\prime \prime \prime}}\ket{1}\\
&\approx \frac{1}{4\sqrt{2r}}\sum^{r-1}_{k=0}\left(\sqrt{e^{-(\lambda^\pp{k}_+ -(\tilde{\lambda}_{\min }-\eta_\lambda))}} \ket{\sqrt{2}\mathbf{c}^\pp{k}_+}+ \sqrt{e^{-(\lambda^\pp{k}_{-}-(\tilde{\lambda}_{\min }-\eta_\lambda))}} \ket{\sqrt{2}\mathbf{c}^\pp{k}_-}\right)\ket{0}\\
&\quad +\ket{\text{garbage}^{\prime \prime\prime}}\ket{1} \nonumber
\end{align}
where the equality is approximate to additive precision $\delta_{exp}$ in each amplitude,
using $\bigO(\operatorname{deg}(\tilde{P}))=\bigO(4\beta  \log (1 / \delta_{exp}))$ calls to ($4\beta$,0)-block-encoding of $\text{Arw}(\textbf{u}^\pp{k})-(\tilde{\lambda}_{\min}-\eta_{\lambda})I$ and its inverse. The number of additional single- and two-qubit gates for creating the LCU and performing the QSVT sequence is asymptotically similar to the number of queries, up to a factor of at most $\bigO(\log(\bar{n}m))$ (for reflection about the $\bigO(\log(\bar{n}m))$ block-encoding ancillas). 

Finally, we use fixed-point amplitude amplification to boost the success probability, while controlling the approximation errors. Observe that the estimate of $\tilde{\lambda}_{\min}$ is only accurate to an additive error of $\xi := \tilde{\lambda}_{\min} - \lambda_{\min}$ with $|\xi| \leq \eta_\lambda = 0.5$. As a result, our attempt to block-encode $e^{-0.5(\lambda-(\lambda_{\min} - \eta_\lambda))}$ would actually block-encode $e^{-0.5(\lambda-(\lambda_{\min} + \xi - \eta_\lambda))} = e^{-0.5(\lambda-(\lambda_{\min} - \eta_\lambda))}e^{0.5 \xi}$. The unknown factor $e^{0.5 \xi}$ is the same for all of the eigenvectors, and can be absorbed into the subnormalisation factor without too much loss. In particular, we can lower bound the amplitude of obtaining $\ket{0}$ in the ancilla register by dropping all the terms in the sum except the one where $\lambda = \lambda_{\min}$---this gives a lower bound of $\frac{e^{-0.5(\eta_{\lambda} -\xi)}}{4\sqrt{2r}} \geq \frac{1}{8\sqrt{2r}}$ since $|\xi|\leq \eta_\lambda$ and $\eta_\lambda = 0.5$. We can boost the amplitude of the state flagged by $\ket{0}$ to at least $\sqrt{1-\omega_{AA}^2}$ using fixed-point amplitude amplification with $\bigO(8\sqrt{2r}\log(1/\omega_{AA}))$ calls to the procedure that prepares the state (plus an asymptotically equivalent number of other single- and two-qubit gates for the QSVT single-qubit rotations and reflections, up to a factor $\bigO(\log(\bar{n}m))$). Ultimately, there are two sources of error; the error in fixed-point amplitude amplification ($\omega_{AA}$), and the error in the approximation of the exponential function ($\delta_{exp}$). These can be accounted for using \cref{Lemma:ImperfectAABE}. The resulting trace-distance in the output state is bounded by $32\sqrt{2r\delta_{exp}}\log(1/\omega_{AA}) + \omega_{AA} + \sqrt{2\omega_{AA}} $. We set
\begin{align}
    \omega_{AA} &= \frac{\varepsilon_{\mathbf{x}}^2}{32} \\
    \delta_{exp} &= \frac{\varepsilon_{\mathbf{x}}^2}{8192r\log^2\left(\frac{32}{\varepsilon_{\mathbf{x}}^2} \right)}
\end{align}
to ensure the trace distance of the amplified state is bounded by $\varepsilon_{\mathbf{x}}$. The resulting circuit makes 
\begin{equation}
    \bigOt\left(\beta \sqrt{r} \log^2\left(\frac{1}{\varepsilon_{\mathbf{x}}} \right)\right)
\end{equation}
calls to the block-encoding oracle
  \(U_{\operatorname{Arw}(\mathbf{u})}\),
  and an asymptotically equivalent number of additional one- and two-qubit gates.

\end{proof}

\begin{corollary}\label{cor:C_samp_quantum}
    There is a quantum procedure that implements the cone index Gibbs sampler oracle of \cref{oracle:cone_index_Gibbs_sampling}---that is, it generates $T'$ samples from a joint distribution at most $\zeta$-far in total variation distance from $T'$ i.i.d.~samples from the ideal distribution---while incurring cost
    \begin{align}
        C_{\rm samp}(m,r,n,s,\beta,\zeta, T') = \bigOt(\beta \sqrt{r} T' \log^2(1/\zeta) )
    \end{align} 
    queries to oracles $O_R$,  $O^\dagger_R$, $O_{\mathbf{y}}$, $O^\dagger_{\mathbf{y}}$, and an asymptotically equivalent number of single- and two-qubit gates up to a factor $\bigO(\log(r\bar{n}m) )$. 
\end{corollary}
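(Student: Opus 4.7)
The plan is to combine the state-preparation routine of \cref{lem:preparing_ket_sqrt_2x} with the min-finding subroutine of \cref{theorem:min_finding}, then obtain each sample by measuring the cone register of the prepared state. First, I would invoke \cref{theorem:min_finding} to compute an estimate $\tilde{\lambda}_{\min}$ of $\lambda_{\min} = \min_{k} \lambda_{-}^{\pp{k}}$ to precision $\eta_\lambda = 1/2$. The initial state is the easily prepared equal superposition $\frac{1}{\sqrt{r}}\sum_k \ket{k}_{\rm cone}\ket{\bar 0}_{\rm col}$, which has overlap at least $1/\sqrt{2r}$ with the minimum eigenvector of $\operatorname{Arw}(\mathbf{u})$ as noted in the discussion preceding \cref{lem:preparing_ket_sqrt_2x}; hence min-finding uses $\tilde{\bigO}(\beta\sqrt{r}\log(1/\eta_{\mathrm{fail}}))$ calls to the $(2\beta)$-block-encoding of $\operatorname{Arw}(\mathbf{u})$, with $\eta_{\mathrm{fail}}$ taken small enough (say, polynomially in $\zeta/T'$) that this preprocessing step fails with negligible probability relative to $\zeta$; this one-time cost is dominated by the sampling cost below.

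Next, for each $h = 0,\ldots,T'-1$, I would independently run the circuit of \cref{lem:preparing_ket_sqrt_2x} with error parameter $\varepsilon_{\mathbf{x}} = \zeta/(2T')$, producing a density matrix $\rho^{(h)}$ with $D(\rho^{(h)}, \ket{\sqrt{2\mathbf{x}}}\!\bra{\sqrt{2\mathbf{x}}}) \leq \varepsilon_{\mathbf{x}}$, and then measure the cone register in the computational basis to obtain $k_h$. By the definition of $\ket{\sqrt{2\mathbf{x}}}$, measurement of the ideal state yields $k$ with probability exactly $\mathcal{Z}^\pp{k}/\mathcal{Z}$, so the induced marginal distribution over each $k_h$ is $\varepsilon_{\mathbf{x}}$-close in total variation distance to the ideal distribution $\mathcal{P}$ (since trace distance upper-bounds the TV distance of any measurement outcome distribution). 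The samples are drawn from independent state preparations, so a standard hybrid/triangle-inequality argument over the $T'$ product distributions shows the joint distribution of $(k_0,\ldots,k_{T'-1})$ is at most $T'\varepsilon_{\mathbf{x}} = \zeta/2$-far from the ideal i.i.d.~product distribution, well within the required tolerance $\zeta$.

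For the cost accounting, each invocation of \cref{lem:preparing_ket_sqrt_2x} uses $\tilde{\bigO}(\beta\sqrt{r}\log^2(1/\varepsilon_{\mathbf{x}})) = \tilde{\bigO}(\beta\sqrt{r}\log^2(T'/\zeta))$ queries to $U_{\operatorname{Arw}(\mathbf{u})}$ and an equivalent number of additional gates (up to the stated $\log(r\bar n m)$ factor). By \cref{BE_Arrohead_matrix_theorem}, each call to $U_{\operatorname{Arw}(\mathbf{u})}$ can be realized with $\bigO(1)$ queries each to $O_R$, $O_R^\dagger$, $O_{\mathbf{y}}$, $O_{\mathbf{y}}^\dagger$ plus $\bigO(\log \bar n)$ further gates, so each sample costs $\tilde{\bigO}(\beta\sqrt{r}\log^2(T'/\zeta))$ queries to these input oracles. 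Summing over $T'$ samples and absorbing the $\log T'$ into the $\tilde{\bigO}$ yields the claimed total query complexity $\tilde{\bigO}(\beta\sqrt{r}T'\log^2(1/\zeta))$, and the one-time min-finding cost is subsumed.

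The only non-routine step is ensuring that sample-wise trace-distance error lifts to the required TV bound on the joint distribution; this is immediate from independence of the preparations and the contractivity of TV under measurement, so no substantial obstacle is anticipated. The main bookkeeping concern is choosing $\varepsilon_{\mathbf{x}}$ small enough by a factor of $1/T'$ relative to $\zeta$, which adds only a harmless $\log T'$ factor inside the squared logarithm and is absorbed by $\tilde{\bigO}$.
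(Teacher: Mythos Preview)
Your proposal is correct and follows essentially the same approach as the paper: run the minimum-finding routine once, then prepare $T'$ independent copies of the state from \cref{lem:preparing_ket_sqrt_2x} with $\varepsilon_{\mathbf{x}} = \zeta/(2T')$ and measure the cone register, accounting for the min-finding failure probability separately (the paper takes $\eta_{\rm fail} = \zeta/2$ and combines via $\eta_{\rm fail} + T'\varepsilon_{\mathbf{x}} \leq \zeta$). The only minor refinement in the paper's version is the explicit remark that the $T'$ samples are independent only \emph{conditionally} on a fixed value of $\tilde{\lambda}_{\min}$, with the min-finding failure case contributing at most $\eta_{\rm fail}$ to the total variation distance; your treatment implicitly assumes this but the argument goes through identically.
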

\begin{proof}
    We run the minimum-finding procedure to obtain an estimate $\tilde{\lambda}_{\rm min}$, which is correct up to $\bigO(1)$ additive error except with probability $\eta_{\rm fail}$.  We use this estimate to produce $T'$ copies of a state approximating $\ket{\sqrt{2\mathbf{x}}}$, with error parameter $\varepsilon_{\mathbf{x}}$. Conditioned on a fixed value of $\tilde{\lambda}$ that meets the additive error bound,  each sample is drawn independently from a distribution that is at most  $\varepsilon_{\mathbf{x}}$-far from the ideal distribution in total variation distance, and thus the joint distribution over the $T'$ samples is at most $T'\varepsilon_{\mathbf{x}}$-far. The actual distribution over the samples is a mixture of the distribution obtained from each possible value of $\tilde{\lambda}_{\rm min}$, but in the case the additive error bound fails, the total variation distance is still upper bounded by 1.
    
    Thus the overall total variation distance, choosing $\varepsilon_{\mathbf{x}} = \zeta/(2T')$, $\eta_{\rm fail}=\zeta/2$, is upper bounded by $\eta_{\rm fail} +T'\varepsilon_{\mathbf{x}}=\zeta$. The 
    number of queries to $\operatorname{Arw}(\mathbf{u})$ is given by the query complexity to find $\tilde{\lambda}_{\min}$ plus the complexity to prepare $\ket{\sqrt{2\mathbf{x}}}$ (\cref{lem:preparing_ket_sqrt_2x}), which reduces to preparing $U_{\operatorname{Arw}(\mathbf{u})}$ (\cref{BE_Arrohead_matrix_theorem}), multiplied by $T'$. The latter contribution dominates.
\end{proof}

\subsection{Quantum implementation of sampled violated constraint search oracle}

Now, we explain how, given $T'$ samples $k_0,k_1,\ldots,k_{T'-1}$, the quantum algorithm can implement the search for a violated constraint as in \cref{oracle:sampled_constraint_search}. This subroutine again uses $U_{\operatorname{Arw}(\mathbf{u})}$. It will also query a unitary $U_{\hat{V}}$ which is an approximate $(3,0)$-block-encoding of a matrix $\hat{V}$ containing the violation amounts $\hat{v}_j$ from \cref{eq:hat_v_j} on its diagonal. 
\begin{align}
    \left\lVert 3 (I_{\rm row} \otimes \bra{\bar{0}}) U_{\hat{V}} (I_{\rm row} \otimes \ket{\bar{0}}) - \hat{V} \right\rVert  \leq \nu \qquad \text{where} \qquad 
    \hat{V} = \sum_{j=0}^{m-1} \hat{v}_j \ket{j}\bra{j}
\end{align}

We explain how $U_{\hat{V}}$ can be constructed out of the data access oracles. The key is observing how the quantities $\hat{v}_j$ can be expressed in terms of overlaps between quantum states we can prepare. Recall that $\hat{v}_j = \frac{1}{T'}\sum_{h = 0}^{T'-1} \hat{v}_{j,h}$, where $\hat{v}_{j,h} = A^\pp{k_h}_{j,:} \mathbf{p}^\pp{k_h} - b_j$. We may rewrite this using the following lemma. 
\begin{lemma}\label{Lemma:ComputeViolation}
     For any $k \in [r]$, we have 
    \begin{align}
       A^\pp{k}_{j,:} \mathbf{p}^{\pp{k}} = \frac{1}{2}\bra{\sqrt{2\mathbf{p}^\pp{k}}} \operatorname{Arw}(A_{j,:}^{\pp{k}\top})\ket{\sqrt{2\smash{\mathbf{p}^\pp{k}}}}\,,
    \end{align}
    where $\ket{\sqrt{2 \mathbf{p}^\pp{k}}}$ is given in \cref{eq:ket_sqrt_2p^\pp{k}}. 
\end{lemma}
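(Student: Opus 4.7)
The plan is to peel off the square-root representation on both sides, reduce the expression to a trace of a triple Jordan product, and then use the identity $\sqrt{\mathbf{p}^\pp{k}}\circ\sqrt{\mathbf{p}^\pp{k}}=\mathbf{p}^\pp{k}$ to collapse the product. First, I would expand the right-hand side via the definition of $\ket{\sqrt{2\mathbf{p}^\pp{k}}}$ in \cref{eq:ket_sqrt_2p^\pp{k}}, which turns it into the real inner product $2(\sqrt{\smash{\mathbf{p}^\pp{k}}})^{\top}\operatorname{Arw}(A_{j,:}^{\pp{k}\top})\sqrt{\smash{\mathbf{p}^\pp{k}}}$ (noting that all relevant vectors are real and $\operatorname{Arw}(A_{j,:}^{\pp{k}\top})$ is symmetric).

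Next, I would invoke the key identity from \cref{def:jordan_product} that $\operatorname{Arw}(\mathbf{v})\mathbf{w}=\mathbf{v}\circ \mathbf{w}$, turning the right-hand side into $2(\sqrt{\smash{\mathbf{p}^\pp{k}}})^{\top}\bigl(A_{j,:}^{\pp{k}\top}\circ\sqrt{\smash{\mathbf{p}^\pp{k}}}\bigr)$. Using \cref{eq:trace_circ_product} to convert the vector inner product into a trace of a Jordan product, this becomes $\Tr\bigl(\sqrt{\smash{\mathbf{p}^\pp{k}}}\circ(A_{j,:}^{\pp{k}\top}\circ\sqrt{\smash{\mathbf{p}^\pp{k}}})\bigr)$.

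The crucial step is then a ``trace-associativity'' identity: although $\circ$ is not associative in general, under the trace it is, i.e., $\Tr(\mathbf{a}\circ(\mathbf{b}\circ\mathbf{c}))=\Tr((\mathbf{a}\circ\mathbf{b})\circ\mathbf{c})$. I would prove this in one line by repeatedly using $\operatorname{Arw}(\mathbf{v})\mathbf{w}=\mathbf{v}\circ\mathbf{w}$, \cref{eq:trace_circ_product}, and the symmetry of $\operatorname{Arw}(\mathbf{b})$: $\Tr(\mathbf{a}\circ(\mathbf{b}\circ\mathbf{c}))=2\mathbf{a}^{\top}\operatorname{Arw}(\mathbf{b})\mathbf{c}=2(\operatorname{Arw}(\mathbf{b})\mathbf{a})^{\top}\mathbf{c}=\Tr((\mathbf{b}\circ\mathbf{a})\circ\mathbf{c})$, then apply commutativity. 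Applying this to reorder the product yields $\Tr\bigl(A_{j,:}^{\pp{k}\top}\circ(\sqrt{\smash{\mathbf{p}^\pp{k}}}\circ\sqrt{\smash{\mathbf{p}^\pp{k}}})\bigr)=\Tr(A_{j,:}^{\pp{k}\top}\circ \mathbf{p}^\pp{k})$, using the identity for $\sqrt{\mathbf{p}^\pp{k}}$ stated just above the lemma.

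Finally, one more application of \cref{eq:trace_circ_product} gives $\Tr(A_{j,:}^{\pp{k}\top}\circ\mathbf{p}^\pp{k})=2A_{j,:}^{\pp{k}}\mathbf{p}^\pp{k}$, and dividing by $2$ yields the claim. The only mildly subtle point is the trace-associativity identity; this is where I expect any reader confusion could arise, since $\circ$ is not associative in general, so I would spell out the one-line derivation above rather than cite it implicitly. Everything else is straightforward unwinding of definitions.
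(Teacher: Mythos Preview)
Your proof is correct and essentially matches the paper's argument. The paper works from the left-hand side to the right, staying in the inner-product world and directly chaining the identities $\sqrt{\mathbf{u}}\circ\sqrt{\mathbf{u}}=\mathbf{u}$, $\operatorname{Arw}(\mathbf{u})\mathbf{v}=\mathbf{u}\circ\mathbf{v}$, symmetry of $\operatorname{Arw}$, and commutativity of $\circ$; you work from right to left and package the same manipulation as a ``trace-associativity'' step, but the underlying identities and logic are identical.
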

\begin{proof}
    Generally, let $\mathbf{a}$, $\mathbf{b}$, be two vectors in the second-order cone Euclidean Jordan algebra, with $\mathbf{b} \succeq 0$ (so that the state $\sqrt{2\mathbf{b}}$ is well defined). We may write
\begin{align}
    \mathbf{a}^{\top} \mathbf{b} &= \frac{1}{2} \mathbf{a}^{\top} (2\mathbf{b}) = \frac{1}{2}\mathbf{a}^{\top} (\sqrt{2\mathbf{b}} \circ  \sqrt{2\mathbf{b}}) \tag{$\sqrt{\mathbf{u}}\circ \sqrt{\mathbf{u}} = \mathbf{u}$ if $\mathbf{u} \succeq 0$}\\
    &= \frac{1}{2} \mathbf{a}^\top  \operatorname{Arw}(\sqrt{2\mathbf{b}}) \sqrt{2\mathbf{b}} \tag{$\operatorname{Arw}(\mathbf{u})\mathbf{v}=\mathbf{u} \circ \mathbf{v}$}\\
    &= \frac{1}{2} (\operatorname{Arw}(\sqrt{2\mathbf{b}})\mathbf{a})^\top   \sqrt{2\mathbf{b}} \tag{$\operatorname{Arw}(\mathbf{u})^\top = \operatorname{Arw}(\mathbf{u})$} \\
    &= \frac{1}{2} (\operatorname{Arw}(\mathbf{a})\sqrt{2\mathbf{b}} )^\top   \sqrt{2\mathbf{b}} \tag{$ \operatorname{Arw}(\mathbf{u})\mathbf{v}=\operatorname{Arw}(\mathbf{v})\mathbf{u}$} \\
    &= \frac{1}{2} \sqrt{2\mathbf{b}}^\top   \operatorname{Arw}(\mathbf{a})\sqrt{2\mathbf{b}}  \tag{$\operatorname{Arw}(\mathbf{u})=\operatorname{Arw}(\mathbf{u})^\top$} 
\end{align}
Now, it suffices to take $\mathbf{a} = A^{\pp{k}\top}_{j,:}$, and $\mathbf{b} = \mathbf{p}^\pp{k}$, and note that the state $\ket{\sqrt{2\smash{\mathbf{p}^\pp{k}}}}$ is the normalized state whose entries are exactly equal to those of the vector $\sqrt{2\mathbf{p}^\pp{k}}$. 
\end{proof}

Let $U_{\sqrt{2\mathbf{p}^\pp{k}}}$ be a unitary that maps $\ket{k}_{\rm cone} \ket{\bar{0}}_{\rm col} \mapsto \ket{k}_{\rm cone} \ket{\sqrt{2\smash{\mathbf{p}^\pp{k}}}}_{\rm col}$. This can be implemented with the following complexity. 
\begin{lemma}[Implementing $U_{\sqrt{2 \mathbf{p}^\pp{k}}}$]

The unitary $U_{\sqrt{2 \mathbf{p}^\pp{k}}}$ can be implemented up to error $\epsilon_{\mathbf{p}}$ (in spectral norm) using $\bigO(\beta' \log(\beta'/\epsilon_{\mathbf{p}})\log(1/\epsilon_{\mathbf{p}}))$ calls to  $O_R$,  $O^{\dagger}_R$, $O_{\mathbf{y}}$, and $O^{\dagger}_{\mathbf{y}}$, where $\beta' = \max(1,\beta)$, and an asymptotically equivalent number of single- and two-qubit gates, up to a factor of $\bigO(\log(\bar{n}))$. 
\end{lemma}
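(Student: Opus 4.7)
The plan is to build $U_{\sqrt{2\mathbf{p}^\pp{k}}}$ using the same QSVT plus fixed-point amplitude amplification template as the proof of \cref{lem:preparing_ket_sqrt_2x}, but specialized so that the cone index $k$ is carried through as a control rather than summed over. The starting block-encoding is $U_{\operatorname{Arw}(\mathbf{u})}$ with subnormalisation $2\beta$, assembled from one call each to $O_R$, $O_R^\dagger$, $O_\mathbf{y}$, $O_\mathbf{y}^\dagger$ plus $\bigO(\log \bar n)$ additional single- and two-qubit gates via \cref{BE_Arrohead_matrix_theorem}. Within each cone block, the only eigenspaces touched by the input $\ket{k}\ket{\bar 0}=\tfrac{1}{\sqrt 2}\ket{k}\bigl(\ket{\sqrt{2}\mathbf{c}_+^\pp{k}}+\ket{\sqrt{2}\mathbf{c}_-^\pp{k}}\bigr)$ are those of $\lambda_\pm^\pp{k}$, which lie in $[-2\beta,2\beta]$; QSVT only has to transform these two eigenvalues.

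Next I would use \cref{Lemma_poly_approx} to construct a polynomial $\tilde P$ of degree $\bigO(\beta'\log(\beta'/\delta_{\exp}))$ that uniformly $\delta_{\exp}$-approximates $e^{-x/2}$ on the rescaled eigenvalue interval, with $|\tilde P|\leq 1/2$ as required by \cref{QSVT_poly_eig_trans_theorem}. Applying this polynomial via QSVT to $U_{\operatorname{Arw}(\mathbf{u})}$ and then to $\ket{k}\ket{\bar 0}$ produces, up to the approximation error, the unnormalised state $\propto e^{-\lambda_+^\pp{k}/2}\ket{\sqrt{2}\mathbf{c}_+^\pp{k}}+e^{-\lambda_-^\pp{k}/2}\ket{\sqrt{2}\mathbf{c}_-^\pp{k}}$ flagged by the QSVT ancilla register in $\ket{\bar 0}$. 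This unnormalised vector is precisely $\sqrt{\mathcal{Z}^\pp{k}/2}\ket{\sqrt{2\mathbf{p}^\pp{k}}}$ up to a $k$-independent overall constant that is absorbed into the QSVT subnormalisation. Fixed-point amplitude amplification then boosts the amplitude of the flag being $\ket{\bar 0}$ to $\geq \sqrt{1-\omega_{AA}^2}$, and the two error sources $\delta_{\exp}$ and $\omega_{AA}$ are combined via \cref{Lemma:ImperfectAABE}; setting both polynomially small in $\epsilon_\mathbf{p}$ gives the claimed total query count of $\bigO(\beta'\log(\beta'/\epsilon_\mathbf{p})\log(1/\epsilon_\mathbf{p}))$, with an asymptotically equivalent number of single- and two-qubit gates up to a factor $\bigO(\log\bar n)$.

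The main obstacle will be ensuring that the success amplitude after the QSVT step is lower-bounded by an absolute constant independent of $k$, so that fixed-point amplification costs only $\bigO(\log(1/\epsilon_\mathbf{p}))$ rounds rather than $\bigO(e^\beta)$; without this, an exponential in $\beta$ would enter the final complexity. The structural reason such a bound should hold is that the target state has constant overlap with the starting state, namely
\[
\langle \bar 0\,|\,\sqrt{2\mathbf{p}^\pp{k}}\rangle \;=\; \frac{e^{-\lambda_+^\pp{k}/2}+e^{-\lambda_-^\pp{k}/2}}{\sqrt{2\mathcal{Z}^\pp{k}}}\;\ge\;\frac{1}{\sqrt{2}},
\]
by the elementary bound $(a+b)^2\geq a^2+b^2$ for $a,b\ge 0$. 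Translating this state overlap into a $k$-uniform amplitude on the QSVT flag requires matching the QSVT subnormalisation to the $k$-dependent norm $\|e^{-\operatorname{Arw}(\mathbf{u}^\pp{k})/2}\|$ instead of the worst-case uniform $e^{\beta}$. The cleanest way I see to do this is to let the QSVT polynomial implement a shifted exponential whose shift is either computed per-cone from the easily accessible quantity $u_0^\pp{k}=\langle \bar 0|\operatorname{Arw}(\mathbf{u}^\pp{k})|\bar 0\rangle$ (obtained from the same block-encoding) or folded into an intermediate controlled rotation using the two-dimensional structure of the $\mathbf{c}_\pm^\pp{k}$ subspace; bookkeeping this shift carefully, while keeping the polynomial degree within the stated bound, is the step that will dominate the proof.
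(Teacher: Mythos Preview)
You have the right overall template (QSVT + fixed-point amplification, coherently over the cone index) and you correctly pinpoint the crux: the success amplitude before amplification must be $\Omega(1)$ uniformly in $k$. But your proposed fix does not close the gap. Suppose you do implement the $k$-dependent shift by $u_0^\pp{k}$ --- this is exactly the same as working with the off-diagonal block $\operatorname{Arw}(\mathbf{u}^\pp{k})-u_0^\pp{k}I$, whose eigenvalues on $\ket{\sqrt{2}\mathbf{c}_\pm^\pp{k}}$ are $\pm\nrm{\vec{u}^\pp{k}}$. Now applying your polynomial for $e^{-x/2}$ to the eigenvalue $-\nrm{\vec{u}^\pp{k}}$ produces $e^{+\nrm{\vec{u}^\pp{k}}/2}$, which can be as large as $e^{\beta/2}$ and so cannot satisfy the QSVT boundedness constraint $|P(x)|\le 1/2$ on $[-1,1]$. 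Shifting instead by $\lambda_-^\pp{k}$ would fix the boundedness but forces the shift to depend nontrivially on $\nrm{\vec{u}^\pp{k}}$, which you cannot read off from a single matrix element; and a global shift $\min_k \lambda_-^\pp{k}$ (as in \cref{lem:preparing_ket_sqrt_2x}) brings back the exponential-in-$\beta$ amplitude loss for the cones with large $\lambda_-^\pp{k}$. Your alternative of an ``intermediate controlled rotation'' is too vague to evaluate and would seem to need some form of eigenvalue estimation, at additional cost.

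The paper resolves this with a different function: it block-encodes the off-diagonal part $A_{\rm od}$ directly (via \cref{lemma_off_diag}), and then applies QSVT for the \emph{sigmoid} $\sigma(x)=1/(1+e^{\beta x})$, approximated through a degree-$\bigO(\beta'\log(\beta'/\varepsilon))$ polynomial for $\tanh$ (\cref{lem:polynomial_for_tanh}). The sigmoid is bounded by $1$ everywhere, yet has the crucial ratio property $\sigma(a)/\sigma(-a)=e^{-a}$, so acting on $\tfrac{1}{\sqrt2}(\ket{\sqrt{2}\mathbf{c}_+^\pp{k}}+\ket{\sqrt{2}\mathbf{c}_-^\pp{k}})$ yields $C^\pp{k}\ket{\sqrt{2\mathbf{p}^\pp{k}}}$ with $C^\pp{k}=\tfrac{\sqrt{1+e^{2\nrm{\vec{u}^\pp{k}}}}}{\sqrt{2}(1+e^{\nrm{\vec{u}^\pp{k}}})}\ge 1/2$ for every $k$. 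That uniform bound is what makes the $\bigO(\log(1/\epsilon_{\mathbf{p}}))$ round-count for fixed-point amplification go through. Replacing your shifted exponential with the sigmoid on the off-diagonal block is the missing ingredient.
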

\begin{proof}
    The unitary $U_{\sqrt{2 \mathbf{p}^\pp{k}}}$ is constructed via a sequence of steps: (i) we note the ability to apply a unitary $U_{\rm od}$ which is a $(\beta,0)$-block-encoding the offdiagonal portion of $\sum_k \ket{k}\bra{k} \otimes \operatorname{Arw}(\mathbf{u}^\pp{k})$ (\cref{lemma_off_diag}), (ii) applying QSVT to $U_{\rm od}$ for a well-chosen polynomial, we construct a unitary $U_{\sigma}$ which has the property that $U_{\sigma}\ket{k}\ket{\bar{0}}\ket{0} = C^\pp{k} \ket{k} \ket{\sqrt{2\smash{\mathbf{p}^\pp{k}}}}\ket{0} + \ket{\rm garbage}\ket{1}$, where the constant $C^\pp{k} \geq 1/2$ for all $k$, (iii) we form $U_{\sqrt{2\mathbf{p}^\pp{k}}}$ by wrapping fixed-point amplitude amplification around $U_{\sigma}$ to boost the amplitude of each  $\ket{k}\ket{\sqrt{2\smash{\mathbf{p}^\pp{k}}}}\ket{0}$  to 1. In the remainder of the proof, we explain each of these steps in more detail and justify its correctness. 
    
    We recall some notation. We let $\mathbf{c}_\pm^\pp{k} = \frac{1}{2}(1; \pm\frac{\vec{u}^\pp{k}}{\nrm{\vec{u}^\pp{k}}})$ be the eigenvectors (in the second-order cone Euclidean Jordan algebra sense) of the vector $\mathbf{u}^{\pp{k}} = A^{\pp{k} \top} \mathbf{y}$, and we let $\ket{\sqrt{2}\mathbf{c}_\pm^\pp{k}}_{\rm col}$ denote normalized quantum states with amplitudes  proportional to the entries of $\sqrt{2}\mathbf{c}_{\pm}^\pp{k}$. We recall that for any $k$ we may decompose the state $\ket{k} \ket{\bar{0}} \equiv \ket{\mathbf{e}^\pp{k}}$ into an equal superposition of the eigenvectors, that is
    \begin{align}
        \ket{\bar{0}}_{\rm col} = \frac{1}{\sqrt{2}}\left(\ket{\sqrt{2}\mathbf{c}_+^\pp{k}} + \ket{\sqrt{2}\mathbf{c}_-^\pp{k}}\right)_{\rm col}\,.
    \end{align}
    Now we continue with the construction. 
\begin{enumerate}[(i)]
        \item Let $U_{\rm od}$ denote the $(\beta,0)$-block-encoding of the offdiagonal portion of the arrowhead matrix of $\mathbf{u}^\pp{k}$; specifically, $U_{\rm od}$ satisfies
    \begin{align}
        &(I_{\rm cone} \otimes I_{\rm col} \otimes \bra{\bar{0}}_{\rm anc})U_{\rm od}(I_{\rm cone} \otimes I_{\rm col} \otimes \ket{\bar{0}}_{\rm anc}) \nonumber \\
        ={}& \frac{1}{\beta} \sum_{k=0}^{r-1} \ket{k}\bra{k}_{\rm cone} \otimes \left(\ket{\bar{0}}\bra{\mathbf{u}^\pp{k}}+\ket{\mathbf{u}^\pp{k}}\bra{\bar{0}} - 2u_0^\pp{k}\ket{\bar{0}}\bra{\bar{0}}\right)_{\rm col} =: \frac{A_{\rm od}}{\beta}
    \end{align}
    where the right-hand side defines the matrix $A_{\rm od}$. 
    We may note that $\ket{k} \otimes \ket{\sqrt{2} \mathbf{c}_{\pm}^\pp{k}}$ are normalized eigenvectors of $A_{\rm od}/\beta$, corresponding to eigenvalues $\pm \nrm {\vec{u}^\pp{k}}/\beta$. Thus, by applying QSVT to $U_{\rm od}$, we can apply polynomial transformations to these eigenvalues. We can construct $U_{\rm od}$ using \cref{lemma_off_diag} with one query to each of $O_R$, $O^{\dagger}_R$,  $O_\mathbf{y}$ and $O^{\dagger}_\mathbf{y}$, plus $\bigO(\log(\bar{n}))$ additional single- and two-qubit gates.

    \item For the QSVT, we will use the sigmoid (logistic) function. Let $\sigma(x) = 1/(1 + e^{\beta x})$ be the (stretched and inverted) sigmoid function, and note that $\sigma(x) = \frac{1}{2}(1-\tanh(\frac{\beta x}{2}))$. Let $p(x)$ be a polynomial approximation for $\tanh(\frac{\beta x}{2})/2$, which by \cref{lem:polynomial_for_tanh} can achieve error $\varepsilon_{\rm tanh} < 1/2$ with degree $d=\bigO(\beta'\log(\beta'/\varepsilon_{\rm tanh}))$. We have that $|p(x)| \leq  1$ when $|x|\leq 1$, so by QSVT, we can construct a $(1,0)$-block-encoding of the matrix $p(A_{\rm od}/\beta)$ using $d$ calls to $U_{\rm od}$ and its inverse. Then, by  linear combination of unitaries with a $(1,0)$-block-encoding of the identity $I$, we can construct a unitary $U_{\sigma}$ which is a $(3/2,0)$-block-encoding of the matrix $\frac{I}{2} - p(A_{\rm od}/\beta)$. The unitary $U_{\sigma}$ is thus a $(3/2,\epsilon_{\rm tanh})$-block-encoding of $\sigma(A_{\rm od}/\beta)$.
    
    Consider the initial state $\ket{\psi_k} = \ket{k}_{\rm cone} \otimes \ket{\bar{0}}_{\rm col}$. The matrix $A_{\rm od}$ has the property that 
    \begin{align}
        \sigma(A_{\rm od}/\beta) \ket{\psi_k} 
        &= \sigma(A_{\rm od}/\beta)\frac{1}{\sqrt{2}}\ket{k} \otimes \left(\ket{\sqrt{2}\mathbf{c}_+^\pp{k}}+\ket{\sqrt{2}\mathbf{c}_-^\pp{k}} \right) \\
        &= \frac{1}{\sqrt{2}}\ket{k} \otimes \left(\frac{1}{1+e^{\nrm{\vec{u}^\pp{k}}}}\ket{\sqrt{2}\mathbf{c}_+^\pp{k}}+\frac{1}{1+e^{-\nrm{\vec{u}^\pp{k}}}}\ket{\sqrt{2}\mathbf{c}_-^\pp{k}} \right) \\
        &=\frac{e^{\nrm{\vec{u}^\pp{k}}/2}}{\sqrt{2}(1+e^{\nrm{\vec{u}^\pp{k}}})}\ket{k} \otimes \left(e^{-\nrm{\vec{u}^\pp{k}}/2}\ket{\sqrt{2}\mathbf{c}_+^\pp{k}}+e^{\nrm{\vec{u}^\pp{k}}/2}\ket{\sqrt{2}\mathbf{c}_-^\pp{k}} \right) \\
        &= C^\pp{k} \ket{k} \otimes \ket{\sqrt{2 \mathbf{p}^\pp{k}}}
    \end{align}
    where
    \begin{align}
        C^\pp{k} = \frac{ \sqrt{1
        +e^{\nrm{2 \vec{u}^\pp{k}}}}}{\sqrt{2}(1+e^{\nrm{\vec{u}^\pp{k}}})} \geq \frac{1}{2}\,,
    \end{align}
    for all $k$.
    \item Next, we perform fixed-point amplitude amplification. For any fixed value of $k$, the unitary $U_{\sigma}$, when acting on the state $\ket{k}\ket{\bar{0}}$, prepares a $2\varepsilon_{\rm tanh}/3$ approximation to the subnormalized state $(2C^\pp{k}/3)\ket{k}\ket{\sqrt{2\smash{\mathbf{p}^\pp{k}}}}$ when one projects on the block-encoding ancillas being $\ket{\bar{0}}$. Since $C^\pp{k} \geq 1/2$ and $\varepsilon_{\rm tanh} < 1/2$, the amplitude of this state is bounded by a $k$-independent lower bound $\Lambda = \Omega(1)$. Fixed-point amplitude amplification prepares the state $\ket{k}\ket{\sqrt{2\smash{\mathbf{p}^\pp{k}}}}$ up to error $\delta + \varepsilon_{\rm tanh}/C^\pp{k} \leq \delta + 2\varepsilon_{\rm tanh}$ using $\bigO(\log(1/\delta))$ calls to $U_{\sigma}$, and $\bigO(\log(1/\delta))$ reflections about the initial state $\ket{\psi_k}$. 
    
    However, we wish for $k$ to vary, and for the unitary $U_{\sqrt{2 \mathbf{p}^\pp{k}}}$ to act on superpositions of different values of $k$. We note that $A_{\rm od}$ has a block-diagonal structure with respect to the cone register. Thus, if we begin in the state $\ket{\psi_k} = \ket{k}\ket{\bar{0}}$ and apply sequences of $U_{\rm od}$, we will not leave the subspace where the first register is $\ket{k}$. Consequently, the outcome is unmodified if in fixed-point amplitude amplification, we replace the reflection about $\ket{\psi_k}$ with a reflection about the image of the projector $I_{\rm cone} \otimes \ket{\bar{0}}\bra{\bar{0}}_{\rm col}$. Since this projector is independent of $k$, we conclude that the resulting unitary $U_{\sqrt{2 \mathbf{p}^\pp{k}}}$ has the property that for any initial state $\ket{k}\ket{\bar{0}}$, it prepares $\ket{k}\ket{\sqrt{2\smash{\mathbf{p}^\pp{k}}}}$ up to error $2\epsilon_{\rm tanh} + \delta$. 
    \end{enumerate}
    By taking $\epsilon_{\rm tanh} = \epsilon_{\mathbf{p}}/4$ and $\delta = \epsilon_{\mathbf{p}}/2$, we have the claimed error bound. The total complexity is $\bigO(\log(1/\delta)\beta\log(\beta/\epsilon_{\mathrm{tanh}}))$ (if  $\beta > 1$) or $\bigO(\log(1/\delta)\log(1/\epsilon_{\mathrm{tanh}}))$ (if $\beta < 1$) calls to the oracles $O_R$, $ O^{\dagger}_R$, $O_{\mathbf{y}}$, and $O^{\dagger}_{\mathbf{y}}$ and requires $\bigO(\log(\bar{n}))$ number of additional single- and two-qubit gates per oracle call. 
\end{proof}

 Let $U_{A}$ be a $(2,0)$-block-encoding of the operator
 \begin{align}
    \sum_{j=0}^{m-1}\sum_{k=0}^{r-1} \ket{j}\bra{j}_{\rm row} \otimes \ket{k}\bra{k}_{\rm cone} \otimes  \operatorname{Arw}(A^{\pp{k}\top}_{j,:})_{\rm col}
 \end{align} 
 \begin{lemma}
     The unitary $U_A$ can be implemented with 1 query to each of  $O_R$ and  $O^{\dagger}_R$, plus \newline$\bigO(\log(r\bar{n}m))$ other one- and two-qubit gates\footnote{Again, $\bigO(\log(r\bar{n}m))$ is a loose upper bound, we expect the actual cost to be $\bigO(\log(\bar{n}m))$}
 \end{lemma}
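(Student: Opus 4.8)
The plan is to obtain $U_A$ by specializing the arrowhead block-encoding of \cref{BE_Arrohead_matrix_theorem} — equivalently, by combining the off-diagonal block-encoding of \cref{lemma_off_diag} with a cheaply obtained diagonal piece — from a $\mathbf{y}$-weighted combination of the rows of $A^\pp{k}$ down to a single row selected coherently by the row register. The key observation is that, restricted to the $\ket{0}_{\rm flag}$ subspace and read with the row and cone registers as passive quantum controls, $O_R$ is exactly a state-preparation oracle for the rows of the constraint matrices: controlled on $\ket{j}_{\rm row}\ket{k}_{\rm cone}$, it sends $\ket{\bar{0}}_{\rm col}\ket{0}_{\rm flag}$ to a state whose $\ket{0}_{\rm flag}$ component is the subnormalized vector $\sum_{i=0}^{\bar n-1}A^\pp{k}_{ji}\ket{i}_{\rm col}$, whose amplitudes are the (real) entries of the $j$-th row of $A^\pp{k}$. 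Writing $\mathbf{a} := A^{\pp{k}\top}_{j,:}$ for that row as a column vector, with scalar part $a_0 = A^\pp{k}_{j0}$ and vector part $\vec{a}$, the normalisation conditions of \cref{def:normalisation_conditions} give $\nrm{\mathbf{a}}_{\rm soc} = |a_0| + \nrm{\vec{a}}\le 1$, which is what pins the subnormalisation to $2$ rather than to the $2\beta$ of \cref{BE_Arrohead_matrix_theorem} (the single row plays the role of a unit-weight vector).

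Concretely, fix $(j,k)$ and use the decomposition $\operatorname{Arw}(\mathbf{a}) = a_0\, I + \bigl(\ket{0}\bra{w} + \ket{w}\bra{0}\bigr)$, where $\ket{w} := \sum_{i\ge 1}A^\pp{k}_{ji}\ket{i}$ is the (padded) vector part of $\mathbf{a}$ placed in the column register; the first summand is diagonal of norm $|a_0|\le 1$ and the second is the ``arrow shaft'' of norm $\nrm{\vec{a}}\le 1$. The off-diagonal summand $\ket{0}\bra{w}+\ket{w}\bra{0}$ is precisely the operator that \cref{lemma_off_diag} block-encodes once its weight vector is replaced by the one-hot vector $\mathbf{e}_j$ carried in the row register (so that $A^{\pp{k}\top}\mathbf{e}_j = \mathbf{a}$ and $\beta = \nrm{\mathbf{e}_j}_1 = 1$, and the role of $O_{\mathbf{y}}$ collapses to ``do nothing''); recall the identity $\ket{\bar{0}}\bra{\mathbf{a}} + \ket{\mathbf{a}}\bra{\bar{0}} - 2a_0\ket{\bar{0}}\bra{\bar{0}} = \ket{0}\bra{w}+\ket{w}\bra{0}$. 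This gives a subnormalisation-$1$ block-encoding at the cost of one call each to $O_R$ and $O_R^\dagger$ plus $\bigO(\log(r\bar n m))$ one- and two-qubit gates (the $\log m$ because the row index is now an explicit control, the loose $\log r$ for the cone register, as in the footnote to \cref{BE_Arrohead_matrix_theorem}): roughly, $O_R$ loads $\sum_i A^\pp{k}_{ji}\ket{i}$ into a fresh copy of the column register, a reflection removes its $\ket{0}_{\rm col}$ component, an ancilla-qubit-controlled $\operatorname{SWAP}$ assembles $\ket{0}\bra{w}$ together with its adjoint, and $O_R^\dagger$ restores the workspace. The diagonal summand $a_0\, I$ requires no further oracle call, because the amplitude $a_0$ it needs is already the $\ket{\bar{0}}_{\rm col'}\ket{0}_{\rm flag}$-overlap of the same row state: $(I_{\rm col}\otimes\bra{\bar{0}}_{\rm col'}\bra{0}_{\rm flag})\,O_R\,(I_{\rm col}\otimes\ket{\bar{0}}_{\rm col'}\ket{0}_{\rm flag})$ equals $a_0\, I_{\rm col}$ on the $(j,k)$ block and leaves the actual column register untouched, so this branch is folded into the same unitary. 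A final linear-combination-of-unitaries step with weights $\tfrac{1}{2},\tfrac{1}{2}$ stitches the two subnormalisation-$1$ block-encodings into a $(2,0)$-block-encoding of $\operatorname{Arw}(\mathbf{a})$; and since $O_R$, and hence the whole circuit, acts conditioned on the row and cone registers, the result is automatically block-diagonal in $(j,k)$, producing $\sum_{j,k}\ket{j}\bra{j}_{\rm row}\otimes\ket{k}\bra{k}_{\rm cone}\otimes\operatorname{Arw}(A^{\pp{k}\top}_{j,:})_{\rm col}$.

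The main obstacle is the bookkeeping that keeps the oracle budget at exactly one call each to $O_R$ and $O_R^\dagger$: a naive implementation would block-encode $a_0\, I$ and $\ket{0}\bra{w}+\ket{w}\bra{0}$ separately and thereby double the $O_R$ count. The resolution is exactly the point above — the single row state prepared by one $O_R$ call simultaneously exposes the overlap $a_0$ (for the diagonal term) and, after projecting off its $\ket{0}$ component, supplies $\ket{w}$ (for the arrow shaft) — so both inputs to the LCU are generated before the shared $O_R^\dagger$ uncomputes the workspace. The remaining verifications — that the reflections, $\operatorname{SWAP}$s, and LCU gadgets of \cref{lemma_off_diag} and \cref{BE_Arrohead_matrix_theorem}, which there act only on the column register, still compose correctly once the $\log m$-qubit row register and $\log r$-qubit cone register are threaded through as passive controls, and that the block-encoding error stays exactly $0$ in spectral norm since no polynomial approximation is involved — are routine and account only for the stated $\bigO(\log(r\bar n m))$ gate overhead.
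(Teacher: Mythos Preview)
Your proposal is correct and takes essentially the same approach as the paper: both specialize the arrowhead block-encoding of \cref{BE_Arrohead_matrix_theorem} to the one-hot weight vector $\mathbf{y}=\mathbf{e}_j$ (so $\beta=1$ and the subnormalisation becomes $2$), then make $j$ a coherent control carried by the row register. The paper's proof is more concise because it simply invokes \cref{BE_Arrohead_matrix_theorem} as a black box and observes that $O_{\mathbf{y}}$ becomes a fixed pattern of $X$ gates, which is made $j$-dependent via $\lceil\log_2 m\rceil$ CNOTs mapping $\ket{j}\ket{\bar 0}\mapsto\ket{j}\ket{j}$; you instead re-open the diagonal/off-diagonal split of that theorem and argue the single-$O_R$ budget directly, which is correct but duplicates machinery already packaged there.
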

 \begin{proof}
     Note that $A^{\pp{k}\top}_{j,:} = A^{\pp{k}\top}\mathbf{y}$ with $\mathbf{y}$ the vector with a 1 in the $j$-th position and 0s elsewhere, that is, $\ket{\mathbf{y}} = \ket{j}$. For this choice of $\mathbf{y}$, the unitary that enacts $\ket{\bar{0}} \mapsto\ket{\mathbf{y}}$ is simply a particular pattern of $X$ gates, corresponding to the binary representation of $j$. Thus, using \cref{BE_Arrohead_matrix_theorem} with $\ket{\mathbf{y}}=\ket{j}$, we can implement a $(2,0)$-block-encoding of $\sum_k \ket{k}\bra{k} \otimes \operatorname{Arw}(A^{\pp{k}\top}_{j,:})$ using 1 call to $O_R, O^{\dagger}_R$ and at most $\bigO(\log(\bar{n}))$ other gates, by replacing the oracle $O_{\mathbf{y}}$ with the corresponding fixed pattern of $X$ gates. Now, we wish to control also on a value $j$ in the row register. The pattern of $X$ gates depends on $j$, but can be implemented coherently as a simple set of $\lceil \log_2(m) \rceil$ controlled-not gates that map $\ket{j}\ket{0} \mapsto \ket{j}\ket{j}$. This yields the unitary $U_A$. 
 \end{proof}
 Recall that the oracle $O_{\mathcal{T}}$ for the dataset $(k_0,\ldots,k_{T'-1})$ performs the transformation
\begin{align}
    \ket{\bar{0}}_{\mathrm{cone}} \ket{\bar{0}}_{\mathrm{sampindex}} \mapsto \frac{1}{\sqrt{T'}}\sum_{h=0}^{T'-1}\ket{k_h}_{\mathrm{cone}} \ket{h}_{\mathrm{sampindex}}
\end{align}
Then from \cref{Lemma:ComputeViolation}, we may assert that
    \begin{align}\label{eq:block_encoding_of_inner_product}
        &(I_{\rm row} \otimes \bra{\bar{0}}_{\rm cone,col,sampindex})O_{\mathcal{T}}^\dag U_{\sqrt{2 \mathbf{p}^\pp{k}}}^\dag U_A U_{\sqrt{2 \mathbf{p}^\pp{k}}}O_{\mathcal{T}} (I_{\rm row} \otimes \ket{\bar{0}}_{\rm cone,col,sampindex}) \nonumber\\
        ={}& \sum_{j=0}^{m-1}\left(\frac{1}{2T'}\sum_{h=0}^{T'-1} A_{j,:}^\pp{k_h} \mathbf{p}^\pp{k_h} \right) \ket{j}\bra{j}_{\rm row}
    \end{align}

This identity gives a method for implementing $U_{\hat{V}}$ by linear combination of unitaries.  
\begin{lemma}
    The unitary $U_{\hat{V}}$ with error parameter $\gamma$ can be implemented using one call to $O_{\mathbf{b}}$, $\bigO(\beta \log^2(1/\gamma))$ calls to $O_R$, $O^\dagger_R$, $O_{\mathbf{y}}$, $O^\dagger_{\mathbf{y}}$ and $2$ calls to the quantum lookup table $O_{\mathcal{T}}, O^\dagger_{\mathcal{T}}$ storing $(k_0,\ldots,k_{T-1})$, and an asymptotically equivalent number of other single- and two-qubit gates, up to a factor of $\bigO(\log(r\bar{n}m))$.
\end{lemma}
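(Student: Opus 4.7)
The plan is to construct $U_{\hat V}$ as a linear combination of two simpler block-encodings: one encoding the averaged inner product $\frac{1}{T'}\sum_{h}A^\pp{k_h}_{j,:}\mathbf{p}^\pp{k_h}$, the other encoding $b_j$. For the first piece, invoke \cref{eq:block_encoding_of_inner_product}: the composite unitary $O_{\mathcal{T}}^\dag U_{\sqrt{2\mathbf{p}^\pp{k}}}^\dag U_A U_{\sqrt{2\mathbf{p}^\pp{k}}} O_{\mathcal{T}}$, projected onto $\ket{\bar 0}$ in the cone, column, and sampindex registers, equals the diagonal matrix $\sum_j\bigl[\tfrac{1}{2T'}\sum_{h}A^\pp{k_h}_{j,:}\mathbf{p}^\pp{k_h}\bigr]\ket{j}\bra{j}$. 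Absorbing the factor of $1/2$ into the subnormalization, this is a $(2,0)$-block-encoding of $\sum_j\bigl[\tfrac{1}{T'}\sum_{h}A^\pp{k_h}_{j,:}\mathbf{p}^\pp{k_h}\bigr]\ket{j}\bra{j}$ when $U_{\sqrt{2\mathbf{p}^\pp{k}}}$ is ideal. For the second piece, note that $O_\mathbf{b}$ acting on $\ket{j}_{\rm row}\ket{0}_{\rm flag}$ produces $\ket{j}\bigl(b_j\ket{0}+\sqrt{1-b_j^2}\ket{1}\bigr)$, which is already a $(1,0)$-block-encoding of $\sum_j b_j\ket{j}\bra{j}$ using a single call to $O_\mathbf{b}$.

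Next, perform LCU with coefficients $+1$ and $-1$ on these two block-encodings to assemble a $(3,0)$-block-encoding of $\sum_j \hat v_j\ket{j}\bra{j}=\hat V$; the subnormalization $3$ arises from the sum of the subnormalizations $2$ and $1$ weighted by the unit coefficients. Since $U_{\sqrt{2\mathbf{p}^\pp{k}}}$ is only implemented to operator-norm accuracy $\epsilon_\mathbf{p}$, error propagation through the sandwich (using $\lVert U_A\rVert\le 1$ and the triangle inequality on the two approximate copies) shows the first sub-block-encoding has error $\bigO(\epsilon_\mathbf{p})$, and LCU preserves this up to a constant factor from the subnormalization. Choosing $\epsilon_\mathbf{p}=\Theta(\nu)$ is therefore enough to guarantee a $(3,\nu)$-block-encoding of $\hat V$.

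For the cost, one evaluation uses one call each to $O_{\mathcal{T}}$ and $O^\dag_{\mathcal{T}}$, one call to $U_A$ (i.e.\ one call each to $O_R$ and $O^\dag_R$ plus $\bigO(\log(r\bar n m))$ extra gates), two calls to $U_{\sqrt{2\mathbf{p}^\pp{k}}}$ (each costing $\bigO(\beta\log^2(1/\nu))$ calls to $O_R, O^\dag_R, O_\mathbf{y}, O^\dag_\mathbf{y}$ by the preceding lemma with $\epsilon_\mathbf{p}=\Theta(\nu)$), and one call to $O_\mathbf{b}$ for the second sub-block-encoding. The LCU itself adds only an $\bigO(1)$ single-qubit ancilla and the usual constant-depth controlled calls to the sub-block-encodings, contributing negligibly to the query counts. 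Summing these pieces yields the stated complexity. The only step requiring any real care is the error bookkeeping from $\epsilon_\mathbf{p}$ through the nested block-encoding sandwich and the subsequent LCU; everything else is a routine assembly of primitives already established in the earlier lemmas.
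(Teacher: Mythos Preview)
Your proposal is correct and follows essentially the same approach as the paper: identify $O_{\mathbf{b}}$ as a $(1,0)$-block-encoding of $\sum_j b_j\ket{j}\bra{j}$, identify the sandwich $O_{\mathcal{T}}^\dag U_{\sqrt{2\mathbf{p}^\pp{k}}}^\dag U_A U_{\sqrt{2\mathbf{p}^\pp{k}}} O_{\mathcal{T}}$ as a $(2,0)$-block-encoding of the averaged inner-product diagonal, combine via LCU with subnormalization $3$, and set $\epsilon_{\mathbf{p}}=\Theta(\gamma)$. The paper additionally remarks that in the LCU only $U_A$ and $O_{\mathbf{b}}$ need be controlled (the outer gates cancel with their adjoints when the control is off), and your count of $\bigO(1)$ calls to $O_{\mathcal{T}},O_{\mathcal{T}}^\dag$ is in fact tighter than the $\bigO(\log^2(1/\gamma))$ stated in the lemma, but both are consistent.
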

\begin{proof}
    Recall the action of $O_\mathbf{b}$, which sends $\ket{j}_{\rm row}\ket{0}_{\rm flag} \mapsto b_j \ket{j}_{\rm row} \ket{0}_{\rm flag} + \ket{\rm garbage}_{\rm row} \ket{1}_{\rm flag}$. This may be rewritten as 
    \begin{align}
        (I_{\rm row} \otimes \bra{0}_{\rm flag}) O_{\mathbf{b}} (I_{\rm row} \otimes \ket{0}_{\rm flag}) = \sum_{j=0}^{m-1} b_j \ket{j}\bra{j}_{\rm row}
    \end{align}
    showing that it is a $(1,0)$-block-encoding of the operator $\sum_j b_j \ket{j}\bra{j}$. The result of \cref{eq:block_encoding_of_inner_product} showed that $O^{\dag}_{\mathcal{T}} U^{\dag}_{\sqrt{2 \mathbf{p}^\pp{k}}} U_A U_{\sqrt{2 \mathbf{p}^\pp{k}}}O_{\mathcal{T}} $ is a $(2,0)$-block-encoding of $\sum_j \left(\frac{1}{T} \sum_{h} A_{j,:}^\pp{k_h} \mathbf{p}^\pp{k_h}\right)\ket{j}\bra{j}$. We can thus combine these via linear combination of unitaries (LCU) into a $(3,0)$-block-encoding of the difference between the operators, $U_{\hat{V}}$. Naively, the cost for the LCU is one controlled query to each of $O_{\mathbf{b}}$, $U_A$, $O_{\mathcal{T}}$, $O_{\mathcal{T}}^\dag$, $U_{\sqrt{2 \mathbf{p}^\pp{k}}}$ and $U_{\sqrt{2 \mathbf{p}^\pp{k}}}^\dag$ and $\bigO(1)$ other gates---however, we may observe that we need only control $U_A$ and $O_{\mathbf{b}}$, as the other gates will cancel with their adjoints when the control is off.\footnote{Furthermore, controlled $U_A$ can be performed by controlling all the calls to $U_{\operatorname{Arw}(\mathbf{u})}$, and one may note that in that implementation, $O_{\mathbf{y}}$ need not be controlled.} Each of these gates is exact except $U_{\sqrt{2 \mathbf{p}^\pp{k}}}$ and $U_{\sqrt{2 \mathbf{p}^\pp{k}}}^\dag$. To achieve error $\gamma$ on $U_{\hat{V}}$, it suffices to choose the error on $U_{\sqrt{2 \mathbf{p}^\pp{k}}}^\dag$ to be $\epsilon_{\mathbf{p}} = \gamma/6$. Thus, the overall cost is $\bigO(\log(1/\gamma)^2)$ calls to $O_R, O^\dagger_R, O_{\mathbf{y}}$ and $O^\dagger_{\mathbf{y}}$, one call to $O_{\mathbf{b}}$ and $\bigO(\log(1/\gamma)^2\log_2(\bar{n}))$ other single- and two-qubit gates. 
\end{proof}

We use this construction of $U_{\hat{V}}$, which encodes the degree of violation or satisfaction of each constraint. The next lemma shows how, by repeatedly using \( U_{\hat{V}} \), we can construct a new block-encoding \( U_{\Theta} \) that approximates the Heaviside function applied to \( \hat{V} \). When \( U_{\Theta} \) is applied to an equal superposition over all constraints, it maps non-violated constraints to approximately \( 0 \) and violated constraints to approximately \( 1 \). After applying amplitude amplification, measuring the resulting state produces one of the three possible outcomes described in \cref{def:violated_constraint_oracle}.

\begin{lemma}
    Let $\gamma = \theta/24$ and let $U_{\hat{V}}$ be a $(3,\gamma)$-block-encoding of $\hat{V}$. Then, there is a quantum procedure that implements the sampled violated constraint search oracle (\cref{oracle:sampled_constraint_search}) with failure probability at most $\eta$ using $\bigOt(\frac{\sqrt{m}}{\theta}\log^2(1/\eta))$ calls to $U_{\hat{V}}$, plus an asymptotically equivalent number of additional single- and two-qubit gates, up to a factor of $\bigO(\log(r\bar{n}m))$.
\end{lemma}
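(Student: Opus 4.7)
The plan is to apply QSVT to $U_{\hat V}$ with a polynomial approximating a Heaviside step function, producing a block-encoding $U_\Theta$ whose flagged block approximately marks the violated constraints on its diagonal, and then to run a Grover/BBHT search on the row register to either output some $j\in\hat V$ or declare that no violated constraint exists.

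First I would construct a polynomial $P(x)$ of degree $d=\bigOt(1/\theta)$, bounded by $1/2$ on $[-1,1]$, that approximates $0$ for inputs below a low threshold and $1/2$ above a high threshold, where the two thresholds are placed inside the interval $(4\theta/18,5\theta/18)$ with a safety buffer of $\bigO(\gamma/3)=\bigO(\theta/72)$ on each side to absorb the block-encoding error. Since the resulting transition band still has width $\Theta(\theta)$, such a smoothed-step polynomial exists with degree $\bigO\bigl(\tfrac{1}{\theta}\log(1/\epsilon_P)\bigr)$ using standard erf/sign constructions from the QSVT toolbox. Applying QSVT to the $(3,\gamma)$-block-encoding $U_{\hat V}$ with $P$ then produces $U_\Theta$, whose $\ket{\bar 0}$-flagged block is the diagonal operator $\sum_j P(\tilde v_j/3)\ket{j}\!\bra{j}$ for some $\tilde v_j$ with $|\tilde v_j-\hat v_j|\leq \gamma$; the safety buffer guarantees that each diagonal entry is at most $\epsilon_P$ when $j\notin \hat V$ and at least $1/2-\epsilon_P$ when $j\in \hat V_{>\theta}$.

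Next I would prepare the uniform superposition $\tfrac{1}{\sqrt m}\sum_j\ket{j}_{\rm row}$ and apply $U_\Theta$. The squared amplitude on the flag-$\ket{\bar 0}$ subspace is $\Omega(|\hat V_{>\theta}|/m)$ when $\hat V_{>\theta}\neq\varnothing$ and $\bigO(\epsilon_P^2)$ when $\hat V=\varnothing$; any $j$ measured after amplification must have $P(\tilde v_j/3)$ bounded away from $\epsilon_P$, which in turn forces $\hat v_j>4\theta/6$ and hence $j\in\hat V$. Since $|\hat V_{>\theta}|$ is unknown, I would wrap the state-preparation-plus-$U_\Theta$ circuit in the BBHT exponential-search algorithm, which doubles the Grover amplification depth up to $\bigO(\sqrt m)$ iterations, with $\bigO(\log(1/\eta))$ outer repetitions for soundness; if any trial yields the flag in $\ket{\bar 0}$, output the measured $j$, and otherwise output ``all constraints satisfied.'' The three output cases of \cref{oracle:sampled_constraint_search}, including the ambiguous case (iii), follow immediately from the above bounds.

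The main obstacle is juggling the joint error budget. The QSVT circuit produces $P$ applied exactly to the perturbed eigenvalues $\tilde v_j/3$, so the low/high plateaus of $P$ must be wide enough to swallow a shift of $\gamma/3=\theta/72$ while still leaving a transition band of width $\Theta(\theta)$; this is exactly why the lemma hypothesises $\gamma=\theta/24\ll\theta/6$. The polynomial error $\epsilon_P$ must additionally be inverse polynomial in $m/\eta$ to prevent the $m$ sub-threshold entries from collectively imitating a marked item in the empty-$\hat V$ case. Granting $\epsilon_P=\Theta(\eta/m)$, each application of $U_\Theta$ costs $d=\bigOt(1/\theta)$ queries to $U_{\hat V}$, and BBHT wraps $\bigOt(\sqrt m\log(1/\eta))$ such applications, giving the stated $\bigOt\bigl(\sqrt m/\theta\cdot\log^2(1/\eta)\bigr)$ query bound, with one factor of $\log(1/\eta)$ coming from the BBHT repetitions and the other absorbed into $\log(1/\epsilon_P)$ in the polynomial degree.
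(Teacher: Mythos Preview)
Your proposal is correct and follows essentially the same approach as the paper: apply QSVT with a smoothed step polynomial (using a $\Theta(\gamma)$ buffer inside the $(4\theta/18,5\theta/18)$ window to absorb the block-encoding error) to produce a diagonal ``marking'' block-encoding, then amplify the flagged subspace over the uniform superposition. The one difference is the amplification primitive: you use BBHT exponential search with $O(\log(1/\eta))$ outer repetitions, whereas the paper uses a single run of fixed-point amplitude amplification to depth $O(\sqrt{m}\log(1/\omega))$ and handles the imperfect-Heaviside error via the robustness-of-QSVT lemma (\cref{lemma:robustness_qsvt} and \cref{Lemma:ImperfectAABE}); both choices yield the same $\bigOt(\tfrac{\sqrt{m}}{\theta}\log^2(1/\eta))$ complexity and the same case analysis for the three output conditions.
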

\begin{proof}
Roughly speaking, we can use QSVT to implement a unitary $U_\Theta$ which block-encodes an approximation of the Heaviside function applied to $\hat{V}$. The Heaviside function transforms the diagonal entries such that unviolated constraints ($\hat{v}_j \leq \frac{4\theta}{6}$) are mapped to $0$ and violated constraints ($\hat{v}_j > \frac{5\theta}{6}$) to $1$. Applying $U_\Theta$ to the initial state $\ket{0}\frac{1}{\sqrt{m}} \sum_j \ket{j}$ and post-selecting on the ancilla in state $\ket{0}$ samples a violated constraint, if one exists, or returns $\ket{1}$ if all constraints are satisfied. The complexity of the search can be improved quadratically using fixed point amplitude amplification.

Observe that the block-encoding is subnormalized by a factor of 3, that is, 
\begin{equation}
    \lVert \hat{V}/3 - \left(\bra{0}^{\otimes m} \otimes I\right) U_{\hat{V}} \left(\ket{0}^{\otimes m} \otimes I\right) \rVert \leq \gamma/3 .
\end{equation}
We shift the block-encoding by $-\theta/4$ (i.e., we instead examine a block-encoding of $\frac{\hat{V}}{3}-\frac{\theta}{4}I$) to move the interval $ [\frac{4\theta}{18},\frac{5\theta}{18})$ corresponding to $v_j \in \hat{V}_{\theta}$ (\cref{oracle:sampled_constraint_search}) so that it is symmetric around 0. 
This can be done with LCU incurring at most $\bigO(1)$ factor increase in the normalisation factor of the block-encoding, which can be absorbed into the big-$\bigO$ notation.
That is, the interval $[\frac{4\theta}{6},\frac{5\theta}{6}]$ after being scaled down by 3 and shifted, becomes $[-\frac{\theta}{36}, \frac{\theta}{36}]$. This facilitates the use of symmetric QSVT functions in subsequent steps. 
We narrow the transition interval by a buffer of width $\gamma/3$; that is, our approximation to the Heaviside function will transition from close to 0 to close to 1 on the interval $[-\frac{\theta}{36} + \frac{\gamma}{3},\frac{\theta}{36}-\frac{\gamma}{3}]$. This ensures the error after applying the approximate Heaviside function to $\hat{v}_j \notin \hat{V}_\theta$ is bounded by $\delta$, whenever the error on $\hat{v}_j$ is bounded by $\gamma$. We choose $\gamma=\theta/24$, so that the transition region is now between $[-\frac{\theta}{72}, \frac{\theta}{72}]$. The QSVT circuit for $U_\Theta$ makes $\mathcal{O}\left(\frac{1}{\theta} \log\left(\frac{1}{\delta}\right)\right)$ calls to $U_{\hat{V}}$, corresponding the degree of the polynomial required to approximate the Heaviside function to error $\delta$, except on the transition window.

Fixed-point amplitude amplification is used to find violated constraints. In the worst case, only one constraint is violated, and the success probability is at least $\frac{(1-\delta)^2}{m}$, where the numerator accounts for the approximation in the Heaviside function in the first step and can always be bounded by a constant, so is neglected in the following discussion. Hence we use a degree $\mathcal{O}\left(\sqrt{m} \log\left(\frac{1}{\omega} \right)  \right)$ polynomial implemented via QSVT, which makes the same number of calls to the Heaviside block-encoding $U_\Theta$. Here $\omega$ bounds the deviation of the amplified values from $1$.

Overall, the algorithm makes $\mathcal{O}\left(\frac{\sqrt{m}}{\theta} \log\left(\frac{1}{\omega} \right) \log\left(\frac{1}{\delta} \right)  \right)$ calls to the block-encoding of $\hat{V}$ and a similar number of other single- and two-qubit gates, up to factors of $\log_2(\bar{n}m)$ due to the QSVT circuits. To account for the failure modes of the oracle, we examine each case separately. In case (i) ($\hat{V}_{>\theta}$ is nonempty) failure occurs if the circuit outputs $\ket{0^\perp}$ on the ancilla register (indicating all constraints are satisfied due to a failure of amplitude amplification) or the algorithm returns $j \notin \hat{V}$ (due to amplification of satisfied constraints with small but non-zero amplitude after the imperfect Heaviside function). The probability of either of these occurrences can be bounded by the trace distance to the ideal output state, which by \cref{Lemma:ImperfectAABE} is bounded by $4 \sqrt{m \delta}\log\left(\frac{1}{\omega}\right) + \omega +\sqrt{2\omega}$. In case (ii) ($\hat{V}$ is empty) failure occurs if we sample $\ket{0}$. This can result from amplification of satisfied constraints with small but non-zero amplitude after the imperfect Heaviside function. The probability is bounded
\begin{equation}
\lVert Q(\tilde{A})\ket{\psi_0}\rVert^2 \leq \lVert Q(A)-Q(\tilde{A})\rVert^2 \leq 16m\delta \log^2\left(\frac{1}{\omega} \right)    
\end{equation}
where $Q(A)$ is the fixed point amplitude amplification polynomial applied to the projected unitary encoding $A = \ket{0}\bra{0}U_\Theta \ket{\psi_0}\bra{\psi_0}$, and we have used that $A=0$ and applied \cite[Lemma 22]{Gilyn2019}. Finally, the failure mode of case (iii) ($\hat{V}_{>\theta}$ is empty but $\hat{V}_{\theta}$ is not) is the same as case (ii). Each of these cases are mutually exclusive. All of the contributions to the failure probability can be exponentially suppressed. We set $\omega=\eta^2/32$, and
\begin{align}
    \delta &= \frac{\eta^2}{64m\log^2\left(\frac{32}{\eta^2}\right)} ,
\end{align}
which are sufficient to ensure the maximum probability of failure is at most $\eta$.
\end{proof}

\begin{corollary}\label{cor:C_search_quantum}
    There is a quantum procedure that implements the sampled violated constraint search oracle of \cref{oracle:sampled_constraint_search} while incurring cost
    \begin{align}
    &C_{\rm search}(m,r,n,\beta,\theta,\eta, T')  ={}
        \bigOt(\frac{\sqrt{m}\beta}{\theta} \log^2(1/\eta)) 
    \end{align}
    calls to $O_R$, $O_R^\dagger$,  $O_{\mathbf{y}}$, $O_{\mathbf{y}}^
    \dagger$, $O_{\mathcal{T}}$, $O^\dagger_{\mathcal{T}}$, $O_{\mathbf{b}}$ plus up to a factor $\bigO(\log(r\bar{n}m)) $ other single- and two-qubit gates. 
\end{corollary}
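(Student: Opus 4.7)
The plan is to directly chain together the two preceding lemmas: the first constructs the $(3,\gamma)$-block-encoding $U_{\hat{V}}$ from the primitive data-access oracles, and the second reduces the sampled violated constraint search oracle to $\bigOt(\sqrt{m}/\theta \cdot \log^2(1/\eta))$ calls to $U_{\hat{V}}$ with the specific choice $\gamma = \theta/24$. So the corollary should follow by substituting the per-call cost of $U_{\hat{V}}$ into the outer lemma and carefully tracking what gets absorbed into $\bigOt$.

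First I would fix $\gamma = \theta/24$, as required by the hypothesis of the outer lemma. Then, by the block-encoding construction lemma, each call to $U_{\hat{V}}$ uses one call to $O_{\mathbf{b}}$, together with $\bigO(\beta \log^2(1/\gamma))$ calls to each of $O_R, O_R^\dagger, O_{\mathbf{y}}, O_{\mathbf{y}}^\dagger$, and $\bigO(\log^2(1/\gamma))$ calls to $O_{\mathcal{T}}, O_{\mathcal{T}}^\dagger$, plus an asymptotically equivalent number of other single- and two-qubit gates up to a factor $\bigO(\log(r\bar{n}m))$. With $\gamma = \theta/24$, the per-call cost of $U_{\hat{V}}$ becomes $\bigO(\beta \log^2(1/\theta))$ oracle queries (and one $O_{\mathbf{b}}$ query) per invocation.

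Next I would multiply by the $\bigOt(\sqrt{m}/\theta \cdot \log^2(1/\eta))$ total calls to $U_{\hat{V}}$ guaranteed by the outer lemma. The overall oracle cost becomes $\bigOt\bigl(\frac{\sqrt{m}\beta}{\theta}\log^2(1/\theta)\log^2(1/\eta)\bigr) = \bigOt\bigl(\frac{\sqrt{m}\beta}{\theta}\log^2(1/\eta)\bigr)$, since the $\log^2(1/\theta)$ factor is absorbed into the $\bigOt$ notation (which hides polylogarithmic factors in $1/\theta$, $m$, $n$, and $r$). The call to $O_{\mathbf{b}}$, being at most one per call to $U_{\hat{V}}$, is dominated by the same expression. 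The analogous multiplicative bookkeeping carries over to the additional single- and two-qubit gate count, which remains asymptotically equivalent to the oracle cost up to the $\bigO(\log(r\bar{n}m))$ factor.

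The main (mild) point is just the bookkeeping of logarithmic factors, in particular verifying that $\bigOt$ legitimately absorbs the $\log^2(1/\theta)$ from the block-encoding construction while $\log^2(1/\eta)$ must be kept explicit because $\eta$ is not one of the parameters hidden by $\bigOt$. There is no substantive obstacle to overcome: the proof is a direct composition of the two preceding lemmas, and the claimed bound matches exactly.
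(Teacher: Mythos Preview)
Your proposal is correct and matches the paper's approach: the corollary is stated without an explicit proof and follows immediately by composing the two preceding lemmas exactly as you describe, with $\gamma = \theta/24$ and the $\log^2(1/\theta)$ factor absorbed into $\bigOt$.
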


\section{Classical implementation of two-step violated constraint oracle}\label{section:classical_implementation_violated_constraint}

We now give a classical algorithm in the sample-and-query access model that implements the violated constraint oracle, and by extension the full SOCP, as described in the following statements, which rely on claims shown later in the section. 

\begin{theorem}\label{thm:classical_complexity_violated_constraint_oracle}
    There is a classical algorithm that implements the violated constraint oracle of \cref{def:violated_constraint_oracle} for a program with $r$ cones, $n$ total variables, $m$ constraints, and $\beta = \nrm{\mathbf{y}}_1$, with failure probability $\xi$ and precision parameter $\theta$ while incurring complexity
    \begin{align}
       \bigOt\left(sn + \frac{m\log(1/\xi)}{\theta^4}\right) 
    \end{align}
    samples and queries to the instance data. 
\end{theorem}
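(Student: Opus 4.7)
The plan is to instantiate the two-step framework of \cref{theorem:Csamp_Csearch_implements_alg1} with classical implementations of the cone-index Gibbs sampler (\cref{oracle:cone_index_Gibbs_sampling}) and the sampled violated constraint search oracle (\cref{oracle:sampled_constraint_search}), both executed in the sample-and-query access model. The theorem will then follow by plugging the resulting classical $C_{\rm samp}$ and $C_{\rm search}$ into the complexity expression of that theorem.

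For the sampler, the key observation is that since $\mathbf{y}$ has at most $s$ nonzero entries, the vector $\mathbf{u}^\pp{k} = A^{\pp{k}\top}\mathbf{y}$ can be computed \emph{exactly} by summing $s$ scaled rows of $A^\pp{k}$, at total cost $\bigO(sn)$ queries aggregated over all $k$. From $\mathbf{u}^\pp{k}$ the Jordan eigenvalues $\lambda^\pp{k}_{\pm}=u^\pp{k}_0 \pm \|\vec u^\pp{k}\|$, and hence $\mathcal{Z}^\pp{k}=e^{-\lambda^\pp{k}_+}+e^{-\lambda^\pp{k}_-}$, are immediate. Once the vector $(\mathcal{Z}^\pp{k}/\mathcal{Z})_{k=0}^{r-1}$ is tabulated, drawing $T'$ exact i.i.d.\ samples costs $\bigOt(T')$ via the alias method, so the sampler operates with $\zeta=0$ at cost $C_{\rm samp} = \bigOt(sn + T')$.

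For the search oracle, I would approximately compute $\tilde v_j \approx \hat v_j$ for every $j$ and then return any $j$ with $\tilde v_j > 5\theta/6$, or else ``all constraints satisfied.'' Two structural observations drive the cost. First, given $\mathbf{u}^\pp{k}$, each entry of $\mathbf{p}^\pp{k}$ is available in $\bigO(1)$ time from the closed forms $p^\pp{k}_0 = 1/2$ and $p^\pp{k}_i = -\tanh(\|\vec u^\pp{k}\|)\cdot u^\pp{k}_i/(2\|\vec u^\pp{k}\|)$ for $i \geq 1$, with $|p^\pp{k}_i|\leq 1/2$. Second, each inner product $A^\pp{k_h}_{j,:}\mathbf{p}^\pp{k_h}$ is estimated by the standard sample-and-query inner-product estimator: sample $i$ with probability $|A^\pp{k_h}_{ji}|^2/\|A^\pp{k_h}_{j,:}\|^2$ and return $\|A^\pp{k_h}_{j,:}\|^2 p^\pp{k_h}_i/A^\pp{k_h}_{ji}$, which is unbiased with variance at most $\|A^\pp{k_h}_{j,:}\|^2 \|\mathbf{p}^\pp{k_h}\|^2 \leq 1/2$. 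Using median-of-means with $K = \bigOt(\log(mT'/\xi)/\theta^2)$ independent samples per pair $(j,h)$ ensures $|\tilde v_{j,h} - A^\pp{k_h}_{j,:}\mathbf{p}^\pp{k_h}|\leq \theta/12$ with failure probability at most $\xi/(2mT')$; a union bound over the $mT'$ pairs then gives $|\tilde v_j - \hat v_j|\leq \theta/12$ for every $j$ with probability $\geq 1-\xi/2$. Combined with the Hoeffding bound $|\hat v_j - v_j|\leq \theta/12$ for $T' = \bigOt(\log(m/\xi)/\theta^2)$ established inside \cref{theorem:Csamp_Csearch_implements_alg1}, we obtain $|\tilde v_j - v_j|\leq \theta/6$ for all $j$ with probability $\geq 1-\xi$. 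A short three-case check against \cref{def:violated_constraint_oracle}, structurally identical to the argument in \cref{theorem:Csamp_Csearch_implements_alg1}, confirms that the threshold rule ``report $j$ iff $\tilde v_j > 5\theta/6$'' yields a valid violated constraint oracle. The total search cost is therefore $C_{\rm search} = mT'K = \bigOt(m\log(1/\xi)/\theta^4)$.

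The main delicate step is the nested concentration argument: the per-entry estimator $\|A^\pp{k_h}_{j,:}\|^2 p^\pp{k_h}_i/A^\pp{k_h}_{ji}$ has unbounded range because $|A^\pp{k_h}_{ji}|$ may be arbitrarily small compared to the row norm, so Hoeffding is unavailable and one must rely on median-of-means (or equivalent sub-Gaussian boosting from Chebyshev) to obtain a $\log(1/\xi)$, rather than $1/\xi$, dependence on the confidence parameter. Aside from this, combining the sampler and search costs via \cref{theorem:Csamp_Csearch_implements_alg1} and absorbing subleading polylogarithmic factors into $\bigOt$ yields the advertised bound $\bigOt(sn + m\log(1/\xi)/\theta^4)$.
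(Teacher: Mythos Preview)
Your proposal is correct and follows essentially the same approach as the paper: compute each $\mathbf{u}^\pp{k}=A^{\pp{k}\top}\mathbf{y}$ exactly using the $s$-sparsity of $\mathbf{y}$, draw exact Gibbs samples, and estimate each $A^\pp{k_h}_{j,:}\mathbf{p}^\pp{k_h}$ via the sample-and-query inner-product estimator with median-of-means boosting (the paper invokes \cite[Prop.~4.2]{tang2019quantumInspired} for this same primitive). The only minor wrinkle is a framing inconsistency: you announce that you are plugging into \cref{theorem:Csamp_Csearch_implements_alg1}, yet your correctness check and threshold $5\theta/6$ bypass the $\hat V$-based search oracle and verify \cref{def:violated_constraint_oracle} directly against $V$; the paper instead uses threshold $3\theta/4$ against $\hat V$ and lets \cref{theorem:Csamp_Csearch_implements_alg1} handle the passage to $V$, but either route works with trivial constant adjustments.
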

\begin{proof}
    This follows from the expression of \cref{theorem:Csamp_Csearch_implements_alg1} in terms of $C_{\rm samp}$ and $C_{\rm search}$, and the evaluation of these complexity expressions in \cref{lem:C_samp_classical} and \cref{lem:C_search_classical}, respectively. 
\end{proof}
\begin{corollary}\label{cor:classical_complexity_feasibility_problem}
    There is a classical algorithm for solving the unit-trace SOCP feasibility problem of \cref{def:SOCP_feasibility} for a program with $r$ cones, $n$ total variables,  $m$ constraints, and precision parameter $\theta$, while incurring complexity
    \begin{align}
    \bigOt\left(\frac{n}{\theta^4}  + \frac{m}{\theta^6}\right) 
    \end{align}
    samples and queries to the instance data. 
\end{corollary}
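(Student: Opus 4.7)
The plan is to derive the corollary by combining \cref{thm:classical_complexity_violated_constraint_oracle} with \cref{theorem:MW_algo}, exactly mirroring the structure of the analogous quantum argument in \cref{cor:quantum_complexity_feasibility_problem}. \Cref{theorem:MW_algo} guarantees that $T = 36\log(2r)/\theta^2$ iterations of the main algorithm suffice, with each iteration making a single call to the violated constraint oracle at failure parameter $\xi = 1/(3T)$.

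The key step is to track how the parameters $s = \nrm{\mathbf{y}^{(t)}}_0$ (sparsity) and $\beta = \nrm{\mathbf{y}^{(t)}}_1$ grow over the course of the algorithm. Inspecting \cref{Alg1:description}, at each iteration the vector $\mathbf{y}^{(t)}$ gains at most one new nonzero coordinate of magnitude $\theta/6$, so across all iterations both the sparsity and the $\ell_1$ norm are bounded as
\begin{align}
s \leq T = \bigOt(1/\theta^2), \qquad \beta \leq T \cdot \frac{\theta}{6} = \bigOt(1/\theta).
\end{align}

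Substituting these worst-case values into the per-iteration complexity $\bigOt(sn + m\log(1/\xi)/\theta^4)$ from \cref{thm:classical_complexity_violated_constraint_oracle}, and multiplying by the $T$ iterations, yields a total cost
\begin{align}
T \cdot \bigOt\!\left(sn + \frac{m}{\theta^4}\right) = \bigOt\!\left(T s n + \frac{T m}{\theta^4}\right) = \bigOt\!\left(\frac{n}{\theta^4} + \frac{m}{\theta^6}\right),
\end{align}
which is the claimed bound. The union bound over the $T$ oracle calls contributes only logarithmic overhead (already absorbed into $\bigOt$) and the overall failure probability remains at most $1/3$ as guaranteed by \cref{theorem:MW_algo}.

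I do not expect any serious obstacles: the only subtle point is that $\beta$ and $s$ are inputs to the complexity expression rather than fixed quantities, so one must confirm that the worst-case bounds above hold uniformly across iterations. This follows immediately from the update rule $\mathbf{y}^{(t+1)} = \mathbf{y}^{(t)} + (\theta/6)\mathbf{e}_{j^{(t)}}$ and the fact that $\mathbf{y}^{(0)} = \mathbf{0}$, so no deeper analysis is needed.
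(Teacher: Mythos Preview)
Your proposal is correct and follows essentially the same approach as the paper's proof, which also invokes \cref{thm:classical_complexity_violated_constraint_oracle} and \cref{theorem:MW_algo} and notes that the maximum sparsity $s$ across all iterations is $T = \bigO(\log(2r)/\theta^2)$. You are slightly more explicit (e.g., also tracking $\beta$, though it does not enter the classical complexity bound), but the argument is the same.
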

\begin{proof}
    This follows from \cref{thm:classical_complexity_violated_constraint_oracle} and \cref{theorem:MW_algo}, noting that the maximum value of $s$ across all $T$ iterations of the algorithm is $T = \bigO(\log(2r)/\theta^2)$. 
\end{proof}

\begin{corollary}[Complexity of classical implementation of SOCP MW] Let $\mathcal{P}$ be an instance of the general primal SOCP of \cref{Def:SOCP} with $r$ cones, $n$ total variables, and $m$ constraints. Assume $\mathcal{P}$ obeys the normalisation conditions and strong duality, and that it is $R$-trace and $\tilde{R}$-dual-trace constrained. Given error parameter $\epsilon$, there is a classical algorithm that approximately solves $\mathcal{P}$ up to error $\epsilon$, in the sense of the statement of \cref{feasibility_reduction}, while incurring complexity
    \begin{align}
    \bigOt \left(n\left(\frac{R\tilde{R}}{\epsilon}\right)^4 + m\left(\frac{R\tilde{R}}{\epsilon}\right)^6\right) 
    \end{align}
    samples and queries to the instance data. 
\end{corollary}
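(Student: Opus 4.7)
The plan is to mirror the proof of the preceding quantum corollary, composing two ingredients that have already been established earlier in the paper: (i) the reduction from the general primal SOCP to the unit-trace feasibility problem given by \cref{feasibility_reduction}, and (ii) the classical complexity bound for the feasibility problem given by \cref{cor:classical_complexity_feasibility_problem}. Since the entire quantum-versus-classical distinction has been abstracted away into the implementation of the violated-constraint oracle (and hence the feasibility solver), the top-level reduction argument is identical to the one used in the quantum corollary, up to substituting the classical feasibility complexity in place of the quantum one.

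Concretely, I would proceed as follows. First, I would invoke \cref{feasibility_reduction}: for the given $R$, $\tilde{R}$, and $\epsilon$, one solves $\mathcal{P}$ by making $\bigOt(\log(R/\theta))$ calls to an oracle $\mathcal{O}_\theta$ for the feasibility problem of \cref{def:SOCP_feasibility}, with the precision parameter set to
\begin{equation}
\theta \;=\; \frac{\epsilon}{4 R \tilde{R}}.
\end{equation}
The instances produced inside the binary search have $r+1$ cones (the extra cone has size $1$) and $m+1$ constraints, and the sizes of the original cones are unchanged; in particular, the total number of variables is $n+1$. Thus for the purposes of asymptotic scaling, we can continue to use $r$, $n$, and $m$.

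Next, I would invoke \cref{cor:classical_complexity_feasibility_problem} to bound the cost of a single call to $\mathcal{O}_\theta$, which is
\begin{equation}
\bigOt\!\left(\frac{n}{\theta^4} + \frac{m}{\theta^6}\right)
\end{equation}
samples and queries in the sample-and-query access model. Multiplying by the $\bigOt(\log(R/\theta))$ binary-search calls only contributes polylogarithmic factors, which are absorbed into the $\bigOt$ notation. Substituting $\theta = \epsilon/(4R\tilde{R})$ and noting that constant factors inside $\theta$ can likewise be absorbed, we obtain
\begin{equation}
\bigOt\!\left(n \left(\frac{R\tilde{R}}{\epsilon}\right)^{4} + m \left(\frac{R\tilde{R}}{\epsilon}\right)^{6}\right),
\end{equation}
which is exactly the claimed bound.

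I do not anticipate a genuine obstacle in this proof, since both ingredients have already been proven in the excerpt; the only things worth being careful about are (a) checking that the sample-and-query access model carries over to the padded instance $\hat{\mathcal{P}}_g$ produced in \cref{feasibility_reduction} (the new row $-\mathbf{c}^\pp{k}$ and the trivial extra cone can be simulated from the original accesses together with direct access to the classically known $\mathbf{c}^\pp{k}$ and $\mathbf{b}$), and (b) verifying that the failure-probability budget can be distributed across the $\bigOt(\log(R/\theta))$ binary-search steps while only paying a polylog penalty, which is precisely what was done in the proof of \cref{feasibility_reduction} via the $\zeta = 1/(3 T_{\rm bs})$ boosting.
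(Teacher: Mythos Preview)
Your proposal is correct and matches the paper's own proof essentially verbatim: the paper simply invokes \cref{feasibility_reduction} to reduce to $\bigOt(\log(R/\epsilon))$ calls to the feasibility oracle with $\theta = \epsilon/(4R\tilde{R})$, and then cites \cref{cor:classical_complexity_feasibility_problem} for the per-call cost. Your additional remarks about the access model for $\hat{\mathcal{P}}_g$ and the failure-probability budget are sound but go beyond what the paper states explicitly.
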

\begin{proof}
As described in \cref{feasibility_reduction}, it suffices to make $\bigOt(\log(R/\epsilon))$ calls to an oracle for the feasibility SOCP with error parameter $\theta = \epsilon/(4R\tilde{R})$. Thus, the complexity follows from \cref{cor:classical_complexity_feasibility_problem}.
\end{proof}

We leave as an open problem whether the classical complexity can be reduced from $\bigOt(n+m)$ to $\bigOt(r+m)$ (when $R\tilde{R}/\epsilon = \bigO(1)$), which would closer match the quantum algorithm.  

\subsection{Classical implementation of cone index Gibbs sampler oracle}

\begin{lemma}\label{lem:C_samp_classical}
There is a classical procedure that implements the cone index Gibbs sampler oracle of \cref{oracle:cone_index_Gibbs_sampling}---that is, it generates $T'$ samples from a joint distribution at most $\zeta$-far in total variation distance from $T'$ i.i.d.~samples from the ideal distribution---while incurring cost
    \begin{align}
        C_{\rm samp}(m,r,n,s,\beta,\zeta, T') = \bigO(sn + T')
    \end{align} 
    samples and queries to the input data and other arithmetic operations.
\end{lemma}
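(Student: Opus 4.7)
The plan is to compute the exact Gibbs weights $\mathcal{Z}^{(k)}$ explicitly in classical memory, and then draw $T'$ samples by standard inverse-CDF sampling. The key observation is that only $s$ rows of each $A^{(k)}$ participate in $\mathbf{u}^{(k)} := A^{(k)\top}\mathbf{y}$, since $\mathbf{y}$ has only $s$ nonzero entries, so the full vector $\mathbf{u}^{(k)}$ can be assembled with $s \cdot n^{(k)}$ entry queries. Summing over $k$ gives $\sum_k s\,n^{(k)} = sn$ queries and arithmetic operations to form $\mathbf{u}^{(0)},\ldots,\mathbf{u}^{(r-1)}$.

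Next, from each $\mathbf{u}^{(k)}$ I would directly compute the two Jordan-frame eigenvalues $\lambda_{\pm}^{(k)} = u_0^{(k)} \pm \|\vec{u}^{(k)}\|$ in $O(n^{(k)})$ time, then form $\mathcal{Z}^{(k)} = e^{-\lambda_+^{(k)}} + e^{-\lambda_-^{(k)}}$ and the partition function $\mathcal{Z} = \sum_{k=0}^{r-1} \mathcal{Z}^{(k)}$ in an additional $O(r) \le O(n)$ arithmetic operations. Because the normalisation conditions give $\|A^{(k)}_{j,:}\|_{\rm soc}\le 1$ and $\|\mathbf{y}\|_1=\beta$, the eigenvalues satisfy $|\lambda_{\pm}^{(k)}| \le \beta$, so the exponentials and their sum are well-behaved numerical quantities.

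I would then precompute the $r$ prefix sums of $(\mathcal{Z}^{(0)},\ldots,\mathcal{Z}^{(r-1)})$ once, in $O(r)$ time, and draw each of the $T'$ samples by generating a uniform $u \in [0,\mathcal{Z}]$ and doing a binary search in the prefix-sum table, at cost $O(\log r)$ per sample. The sampling phase therefore contributes $O(T'\log r) = \tilde{O}(T')$, and the overall cost is $\tilde{O}(sn + T')$, as claimed.

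The main obstacle is a finite-precision/TV-distance argument: the claim is that the joint distribution is within $\zeta$ of $T'$ i.i.d.\ draws from the ideal $\mathcal{P}$, but in finite precision we only produce samples from some $\mathcal{P}'$ close to $\mathcal{P}$. Since TV distance tensorises, it suffices to have $\|\mathcal{P}-\mathcal{P}'\|_{\rm TV} \le \zeta/T'$, which in turn follows from computing $\mathcal{Z}^{(k)}/\mathcal{Z}$ to additive accuracy $O(\zeta/(rT'))$. Given the a priori bound $|\lambda_{\pm}^{(k)}|\le \beta$, carrying $O(\log(r T' \beta / \zeta))$ bits of precision through every arithmetic step is sufficient; this is a standard forward-error analysis and its multiplicative overhead is absorbed into the $\tilde{O}$ in the stated complexity.
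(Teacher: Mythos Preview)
Your proposal is correct and follows essentially the same approach as the paper: explicitly compute each $\mathbf{u}^{(k)}=A^{(k)\top}\mathbf{y}$ using the $s$-sparsity of $\mathbf{y}$ at cost $O(sn)$, read off $\mathcal{Z}^{(k)}$, and then sample. You are in fact more careful than the paper, which simply asserts ``exact precision'' and $O(T')$ sampling; your finite-precision TV argument fills that gap, and the extra $\log r$ from binary search can be removed (to match the stated $O$ rather than $\tilde O$) by using the alias method for $O(1)$-time sampling after $O(r)\le O(n)$ preprocessing.
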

\begin{proof}
    We use query access to the data in $A^\pp{k}$ for $k=0,\ldots, r-1$  to explicitly compute $\mathbf{u}^\pp{k} = A^{\pp{k} \top}\mathbf{y}$ for each $k$ via matrix multiplication. Since $\mathbf{y}$ is $s$-sparse, this matrix multiplication requires $\bigO(n^\pp{k} s)$ query complexity and other arithmetic operations for each $k$, for a total of $\bigO(ns)$ complexity to iterate over all values of $k$. Once $\mathbf{u}^\pp{k}$ is computed exactly, one can compute $\mathcal{Z}^\pp{k}$ exactly for each $k$ with $\bigO(n)$ arithmetic operations. Using classical random number generator, one can then draw $T'$ samples from the correct distribution with exact precision, at total cost $\bigO(sn)$ queries and $\bigO(sn + T')$ arithmetic operations. 
\end{proof}

It could be possible to implement the cone index Gibbs sampler oracle in complexity scaling linearly with $r$ rather than $n$, although an immediate approach fails to achieve this, as we now explain. The idea would be to estimate each of the $r$ values $\mathcal{Z}^{\pp{k}}$ without writing down the whole length-$n^{\pp{k}}$ vector $\mathbf{u}^{\pp{k}}$, by leveraging the ability to sample from rows of $A^{\pp{k}}$. One can make progress toward this by noting that the sample-and-query model enables one to estimate norms of matrix-vector products, such as $\nrm{\mathbf{u}^{\pp{k}}} = \nrm{A^{\pp{k}\top} \mathbf{y}}$ and to compute $u_0^{\pp{k}}$, which could then be used to estimate $\mathcal{Z}^{\pp{k}}$. The roadblock is that achieving additive precision $\Delta$ on the estimate of $\nrm{\mathbf{u}^{\pp{k}}}$ requires $\mathrm{poly}(1/\Delta)$ complexity (in contrast to the quantum approach, which achieves high precision directly QSVT polynomials with degree scaling as $\bigO(\log(1/\Delta)$). Since there are $r$ values of $k$ to estimate, it may be necessary to take $\Delta = 1/r$, which ruins the linear-in-$r$ complexity. 

\subsection{Classical implementation of sampled violated constraint search oracle}

\begin{lemma}\label{lem:C_search_classical}
    There is a classical procedure that implements the sampled violated constraint search oracle of \cref{oracle:sampled_constraint_search} while incurring cost
    \begin{align}
    C_{\rm search}(m,r,n,\beta,\theta,\eta,  T') = \bigO(sn) + \bigO\left(\frac{mT'\log(m/\eta)}{\theta^2}\right)
    \end{align}
    samples and queries to the input data, and other arithmetic operations. 
\end{lemma}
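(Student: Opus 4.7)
The plan is to split the oracle into a deterministic preprocessing phase costing $\bigO(sn)$ queries and a sampling-based estimation phase costing $\bigO(mT'\log(m/\eta)/\theta^2)$ samples. In the preprocessing phase, for each distinct cone index $k$ appearing in $\mathcal{T}$, I explicitly compute $\mathbf{u}^\pp{k}=A^{\pp{k}\top}\mathbf{y}$ by a sparse matrix--vector product: for each of the $s$ nonzero entries of $\mathbf{y}$, query the corresponding row of $A^\pp{k}$ and accumulate into a length-$n^\pp{k}$ accumulator. Summing over the at most $\min(r,T')$ distinct cones in $\mathcal{T}$ gives at most $\bigO(sn)$ queries, after which the Jordan-frame exponential $\mathbf{p}^\pp{k}=e^{-\mathbf{u}^\pp{k}}/\mathcal{Z}^\pp{k}$ is formed entrywise in linear additional arithmetic. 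This mirrors the sample-and-query analog of the quantum construction of $U_{\sqrt{2\mathbf{p}^\pp{k}}}$, trading the QSVT block-encoding for an explicit vector in memory.

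In the estimation phase, for each pair $(j,h)\in[m]\times[T']$ I estimate $A^\pp{k_h}_{j,:}\mathbf{p}^\pp{k_h}$ by importance sampling: draw $i\sim|A^\pp{k_h}_{ji}|^2/\nrm{A^\pp{k_h}_{j,:}}^2$ from the sample oracle, query $A^\pp{k_h}_{j,i}$ and the norm $\nrm{A^\pp{k_h}_{j,:}}$, and form $Y=\nrm{A^\pp{k_h}_{j,:}}^2\, p^\pp{k_h}_i/A^\pp{k_h}_{j,i}$. A direct calculation shows $\mathbb{E}[Y]=A^\pp{k_h}_{j,:}\mathbf{p}^\pp{k_h}$ and $\mathbb{E}[Y^2]\le \nrm{A^\pp{k_h}_{j,:}}^2\nrm{\mathbf{p}^\pp{k_h}}^2\le 1/2$, where the final bound uses the row normalisation $\nrm{A^\pp{k_h}_{j,:}}\le 1$ and the fact that any unit-trace element of $\mathcal{L}^\pp{k_h}$ has Euclidean norm at most $1/\sqrt{2}$ (since $p_0^\pp{k_h}=1/2$ and $\nrm{\vec{p}^\pp{k_h}}\le p_0^\pp{k_h}$). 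A standard median-of-means construction with $L=\bigO(\log(mT'/\eta))$ independent batches of $M=\bigO(1/\theta^2)$ samples per batch then estimates each $\hat{v}_{j,h}=A^\pp{k_h}_{j,:}\mathbf{p}^\pp{k_h}-b_j$ to additive error $\theta/12$ with failure probability $\eta/(mT')$, at a per-pair cost of $\bigO(\log(m/\eta)/\theta^2)$ samples (absorbing $\log T'$ into $\log m$).

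Union-bounding over the $mT'$ pairs gives $|\hat{v}'_j-\hat{v}_j|\le\theta/12$ simultaneously for all $j$ with probability at least $1-\eta$, where $\hat{v}'_j:=\frac{1}{T'}\sum_h\hat{v}'_{j,h}$. The algorithm then scans $j\in[m]$ and returns the first index with $\hat{v}'_j>3\theta/4$, or else ``all constraints satisfied.'' The threshold $3\theta/4=9\theta/12$ sits exactly in the buffer between $\hat{V}_{>\theta}$ (where $\hat{v}_j>5\theta/6$ forces $\hat{v}'_j>3\theta/4$) and $[m]\setminus\hat{V}$ (where $\hat{v}_j\le 4\theta/6$ forces $\hat{v}'_j\le 3\theta/4$); indices in $\hat{V}_\theta$ may go either way, which is exactly what case (iii) of \cref{oracle:sampled_constraint_search} permits. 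Summing the two phases yields the claimed cost.

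The delicate step is the concentration analysis of $Y$. Because $Y$ can be arbitrarily large when $A^\pp{k_h}_{j,i}$ is small, Hoeffding's inequality is unavailable, and a plain Chebyshev plus union bound over the $mT'$ pairs would incur a polynomial-in-$m$ sample blow-up. The median-of-means boost is the ingredient that converts the $\bigO(1)$ second-moment bound into the sub-Gaussian-style $\log(m/\eta)/\theta^2$ per-pair sample count, and it is this conversion that is responsible for the $1/\theta^2$ factor in the second additive term of the stated cost.
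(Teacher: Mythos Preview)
Your proposal is correct and follows essentially the same approach as the paper. The paper's proof invokes \cite[Prop.~4.2]{tang2019quantumInspired} as a black box for the inner-product estimate, while you unpack that proposition explicitly (importance-sampling estimator plus median-of-means), but the underlying mechanism and the resulting bounds are identical; the paper also uses the same $\theta/12$ precision and the same $3\theta/4$ decision threshold. One small remark: your absorption of $\log T'$ into $\log m$ is not literally justified by the hypotheses of the lemma alone (the oracle definition does not constrain $T'$ in terms of $m$), though it is harmless in the context of \cref{theorem:Csamp_Csearch_implements_alg1} where $T'=\bigO(\log(m/\xi)/\theta^2)$; the paper's own proof has the analogous looseness in taking $\delta=\eta/m$ while union-bounding over $mT'$ pairs.
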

\begin{proof}
    We use query access to the data in $A^\pp{k_h}$ for $h=0,\ldots, T'-1$  to explicitly compute $\mathbf{u}^\pp{k_h} = A^{\pp{k_h}  \top}\mathbf{y}$ for each $k_h$ via matrix multiplication. Since $\mathbf{y}$ is $s$-sparse, this matrix multiplication requires $\bigO(n^\pp{k_h} s)$ query complexity and other arithmetic operations for each $k_h$, for a total of no more than $\bigO(ns)$ complexity to iterate over all values of $h$. Once $\mathbf{u}^\pp{k_h}$ is computed exactly, one can compute $\mathbf{p}^\pp{k_h}$ exactly by exponentiation and normalisation, again accomplished for all values of $h$ in no more than $\bigO(n)$ arithmetic operations. Recall that in our classical data access model, we assume sample-and-query access to the input data in each row $A_{j,:}^\pp{k}$. Using this access along with query access to $\mathbf{p}$, the result of \cite[Prop.~4.2]{tang2019quantumInspired} shows that for each $h$ and for each $j$, we can compute an estimate for the inner product $A^\pp{k}_{j,:} \mathbf{p}$. This estimate for the inner product can be shifted by $b_j$ (which can be queried exactly) to compute an estimate $\tilde{v}_{j,h}$ for $\hat{v}_{j,h} = A_{j,:}^\pp{k_h} \mathbf{p}^\pp{k_h} - b_j$, which satisfies $|\hat{v}_{j,h}-\tilde{v}_{j,h} | \leq \mu\nrm{A^{\pp{k-h}}_{j,:}}\nrm{\mathbf{p}^\pp{k_h}}$ with probability at least $1-\delta$. The complexity is $\bigO(\log(1/\delta)/\mu^2)$ samples, queries, and other arithmetic operations. We note that since $\mathbf{p}^\pp{k_h} \succeq 0$, $\nrm{\mathbf{p}^\pp{k_h}} \leq \sqrt{2}p^\pp{k_h}_0 \leq 1/\sqrt{2}$, and we have assumed by convention that $\nrm{A^\pp{k_h}_{j,:}} \leq 1$. We may choose $\delta = \eta/m$, $\mu = \theta/12$. By averaging the $\tilde{v}_{j,h}$ over $h$, we obtain an estimate $\tilde{v}_j$ satisfying $|\tilde{v}_j - \hat{v}_j|\leq \theta/12$, which by the union bound holds simultaneously for all $j$ with probability at least $1-\eta$. If there exists an estimate $\tilde{v}_j > 3\theta/4$, we output one such value of $j$; otherwise, we output ``all constraints satisfied.'' Due to the bounded deviation between $\tilde{v}_j$ and $\hat{v}_j$, we may confirm output conditions (i), (ii), and (iii) for the oracle. The total complexity is $\bigO(sn)$ queries to construct the $\mathbf{p}^\pp{k_h}$ vectors, and $\bigO(mT'\log(m/\eta)/\theta^2)$ samples and queries to estimate all $mT'$ inner products to the desired precision, with a similar number of arithmetic operations. 
\end{proof}

\section*{Acknowledgements}
We thank Fernando Brand\~ao, Andr\'as Gily\'en, Urmila Mahadev and John Preskill for helpful discussions. We
also thank Simone Severini,  James Hamilton, Nafea Bshara, Peter DeSantis, and Andy Jassy for their involvement and support of the
research activities at the AWS Center for Quantum
Computing. Author MIFG was supported by the National Science Foundation under grant NSF CAREER award CCF-2048204. Part of this work was carried out while author MIFG was visiting the Simons Institute for the Theory of Computing.

\bibliographystyle{alpha}
\bibliography{Bib}

\appendix

\section{Block-encoding of Arrowhead matrix}

We provide an instantiation of the block-encoding of the arrowhead matrix based on the access model detailed in \cref{section:access_model}. First, we define the following vector (as in the main text) to ease the notation: 
\begin{equation}
    \mathbf{u}^\pp{k}:= A^{\pp{k} \top} \mathbf{y}
\end{equation}

\begin{equation}
    \mathbf{u}^\pp{k} = \begin{bmatrix}
        u^{\pp{k}}_0 \\
        \vec{u}^\pp{k}
    \end{bmatrix}  = \begin{bmatrix}
        (A^{\pp{k} \top})_{0,:}\mathbf{y}  \\
        (A^{\pp{k} \top})_{1:,:}\mathbf{y}
    \end{bmatrix}
\end{equation}

For the one cone case:
\begin{equation}
    \mathrm{diag}(u_0^\pp{k}/\beta):=
        \frac{u^\pp{k}_0}{\beta} I   
\end{equation}

\begin{theorem} [$U_{\operatorname{Arw}(\mathbf{u})}$, $(2\beta,0)$-block-encoding of Arrowhead matrix $\operatorname{Arw}(\mathbf{u})$] \label{BE_Arrohead_matrix_theorem}
    Given oracle access to the matrices \( A^{\pp{0}},\ldots, A^{\pp{r-1}} \in \mathbb{R}^{m \times \bar n} \) through \( O_R, O^{\dagger}_R \) and oracle access to a vector \( \mathbf{y} \in \mathbb{R}^m \) through \( O_\mathbf{y}, O^{\dagger}_\mathbf{y} \) along with its associated constant \( \beta = \nrm{\mathbf{y}}_1 \), we can construct the block-encoding $U$, with a subnormalisation constant \( 2\beta  \), for the matrix:
    \[
    \sum_{k=0}^{r-1} \ket{k}\bra{k} \otimes \operatorname{Arw}(A^{\pp{k}\top} \mathbf{y})
    \]
     This construction requires one query to each of $O_R$, $O_R^\dagger, O_{\mathbf{y}}$, and $O_{\mathbf{y}}^\dagger$, and $\bigO(\log(\bar{n}))$ additional single- and two-qubit gates.
\end{theorem}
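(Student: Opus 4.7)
My plan is to exploit the additive decomposition
\[
\operatorname{Arw}(\mathbf{u}^{(k)}) \;=\; u_0^{(k)}\, I \;+\; \bigl(|\bar 0\rangle\langle\mathbf{u}^{(k)}| + |\mathbf{u}^{(k)}\rangle\langle\bar 0| - 2u_0^{(k)}|\bar 0\rangle\langle\bar 0|\bigr),
\]
which splits the arrowhead into a scaled identity (diagonal) plus a correction supported only on the first row and first column. I would block-encode each piece with subnormalization $\beta$ and then combine them by linear combination of unitaries (LCU) on a single control qubit, giving the claimed $(2\beta, 0)$-block-encoding of $\sum_k |k\rangle\langle k|\otimes \operatorname{Arw}(\mathbf{u}^{(k)})$.

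The shared building block is a coherent state-preparation of $|\mathbf{u}^{(k)}\rangle/\beta$. Starting from $|\bar 0\rangle_{\rm row}|k\rangle_{\rm cone}|\bar 0\rangle_{\rm col}|0\rangle_{\rm flag}$, applying $O_{\mathbf{y}}$ yields $\tfrac{1}{\sqrt\beta}\sum_j \sqrt{y_j}|j\rangle$ on the row register, and the subsequent $O_R$ writes amplitude $\sqrt{y_j/\beta}\, A^{(k)}_{ji}$ onto $|j\rangle_{\rm row}|i\rangle_{\rm col}|0\rangle_{\rm flag}$. Conjugating by $O_{\mathbf{y}}^\dagger$ on the row register collapses the sum over $j$ into $\frac{1}{\beta}\sum_j y_j A^{(k)}_{ji} = u_i^{(k)}/\beta$ on $|\bar 0\rangle_{\rm row}|i\rangle_{\rm col}|0\rangle_{\rm flag}$, which is the desired coherent encoding of $|\mathbf{u}^{(k)}\rangle/\beta$. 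The off-diagonal piece is then block-encoded by the earlier lemma on off-diagonal block-encodings, which places this state-preparation inside a SWAP and reflection network on the col register to realize $|\bar 0\rangle\langle\mathbf{u}^{(k)}| + |\mathbf{u}^{(k)}\rangle\langle\bar 0| - 2u_0^{(k)}|\bar 0\rangle\langle\bar 0|$ with subnormalization $\beta$. The diagonal piece $u_0^{(k)} I$ is handled by extracting the scalar $u_0^{(k)}/\beta$, which is exactly the $|\bar 0\rangle_{\rm col}$-component of the same prepared state, and acting as the identity on the system col register via a fresh col ancilla.

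To meet the stated budget of one query to each of $O_R$, $O_R^\dagger$, $O_{\mathbf{y}}$, $O_{\mathbf{y}}^\dagger$, the LCU must share its oracle calls between the two branches rather than querying them independently. The construction places the single forward pair $O_R O_{\mathbf{y}}$ before the LCU-controlled operations (which are just the SWAP network between the system col register and an auxiliary col register, plus a reflection about $|\bar 0\rangle_{\rm col}$), and the single reverse pair $O_{\mathbf{y}}^\dagger O_R^\dagger$ after; the Hadamard-controlled LCU flag qubit selects between the diagonal arm and the off-diagonal arm. The main technical obstacle will be the bookkeeping: verifying that projecting all the ancillas (row, $O_R$-flag, LCU qubit, auxiliary col) onto $|\bar 0\rangle$ at the end really recovers $\operatorname{Arw}(\mathbf{u}^{(k)})/(2\beta)$ on the (cone, col) system---especially that the asymmetric "first-row-and-first-column" pattern of $\operatorname{Arw}$ emerges correctly from the symmetric SWAP/reflection network, and that the garbage branches from $O_R$ (where its flag qubit ends in $|1\rangle$) do not contaminate the block. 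The $\bigO(\log \bar n)$ gate overhead comes from the SWAP network on the $\bar n$-dimensional col registers and the LCU's controlled reflections.
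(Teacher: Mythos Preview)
Your high-level strategy---split $\operatorname{Arw}(\mathbf{u}^{(k)})$ into the diagonal $u_0^{(k)}I$ and the off-diagonal piece, block-encode each at subnormalisation $\beta$, and combine via a single-qubit LCU---is exactly what the paper does, and your coherent state-preparation paragraph (the $O_{\mathbf{y}},O_R,O_{\mathbf{y}}^\dagger$ sandwich yielding amplitude $u_i^{(k)}/\beta$) is correct. The gap is in your oracle-sharing mechanism. With unconditional $O_R O_{\mathbf{y}}$ placed before the LCU-controlled region and unconditional $O_{\mathbf{y}}^\dagger O_R^\dagger$ after, and only SWAPs and a reflection inside, the diagonal branch fails: $O_R$ and $O_R^\dagger$ acting on the same auxiliary column register, separated only by a SWAP or a reflection, either cancel to the identity or produce an expression quadratic in $A^{(k)}_{j0}$, never the linear sum $u_0^{(k)}/\beta=\sum_j (y_j/\beta)A^{(k)}_{j0}$. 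The off-diagonal lemma you invoke is also not a ``SWAP plus reflection'' construction; sandwiching a SWAP between system and auxiliary column registers inside $O_R,\,O_R^\dagger$ block-encodes a rank-one object of the type $A^{(k)\top}\mathrm{diag}(\mathbf{y})A^{(k)}/\beta$, not the first-row-and-first-column pattern of the arrowhead.

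The paper resolves the sharing with a different device. The LCU-controlled SWAP (to an extra column ancilla) sits \emph{between} $O_{\mathbf{y}}$ and $O_R$, not outside both; and $O_R$, $O_R^\dagger$ are each controlled by an auxiliary single-qubit ``control'' register, whose value is toggled by an open control testing whether the column register equals $|\bar 0\rangle$. The net effect is that $O_R$ fires whenever the (possibly swapped) column is $|\bar 0\rangle$---always in the diagonal branch, and in the off-diagonal branch only when the input is $|\bar 0\rangle$---while $O_R^\dagger$ fires only in the off-diagonal branch on inputs $|i\rangle\neq|\bar 0\rangle$. This conditional-on-$|\bar 0\rangle$ application is precisely what generates the asymmetric row-and-column structure while keeping each of $O_R$, $O_R^\dagger$, $O_{\mathbf{y}}$, $O_{\mathbf{y}}^\dagger$ to a single call.
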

\begin{proof}
We present as proof the schematic circuit in \cref{LCU_schematic} and its fully detailed implementation in \cref{LCU_detailed}. We can divide the block-encoding of the arrowhead matrix in two parts, the diagonal matrix $U_d$, and the off-diagonal $U_{od}$, as it can be easily viewed in \cref{def:arrowhead_def}. 

As demonstrated in \cref{Lemma_diag_be}, a single invocation of each oracle $O_R,\;O_{\mathbf{y}},\;O_{\mathbf{y}}^{\dagger}$ combined with $\bigO\!\bigl(\log\bar{n}\bigr)$ additional one- and two-qubit gates, suffices to implement the block-encoding $U_d$ that contains the desired diagonal elements.

Similarly, we can construct the block-encoding with off diagonal matrix $U_{od}$ with one query to $O_R$,$O^{\dagger}_R$, $O_{\mathbf{y}}$, $O^{\dagger}_{\mathbf{y}}$ and $\bigO(\log(\bar{n}))$ queries to one qubit and two qubit gates, as we prove in \cref{lemma_off_diag}.

Once we have access to these two block-encodings, we can sum them using standard techniques, in particular, we perform Linear Combination of Unitaries (LCU) between both block-encodings:
$$
U_{\mathrm{SEL}}:=|0\rangle\langle 0| \otimes U_{d}+|1\rangle\langle 1| \otimes U_{od},
$$
\[
V^{\dagger}_{\mathrm{PREP}}=
V_{\mathrm{PREP}} = H\,,
\]
where $H$ is the Hadamard gate.
Let $I_s$ denote the identity matrix of dimension $2^s$, where $s= \text{(number of qubits
 \cref{LCU_schematic})}-1$ is the number of qubits on which $U_d$ and $U_{od}$ act. Then, define the block-encoding unitary $U$ by
$$
U=\left(V^{\dagger}_{\mathrm{PREP}} \otimes I_s\right) U_{\mathrm{SEL}}\left(V_{\mathrm{PREP}} \otimes I_s\right)
$$
where $U$ is a block-encoding of $\operatorname{Arw}(\mathbf{u})$ with subnormalisation $2\beta$. An equivalent schematic representation of $U$ is given below.\\

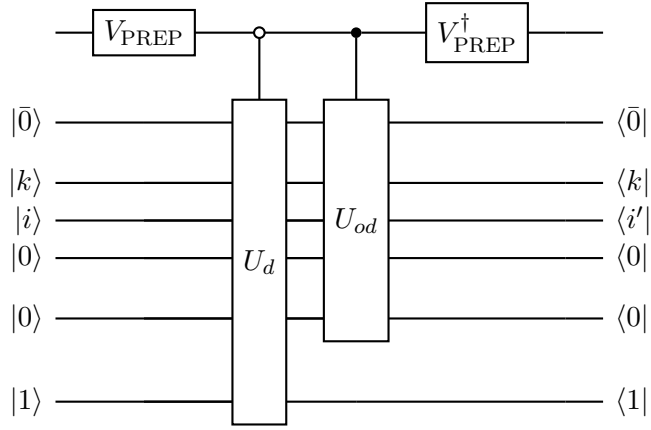
\begin{figure}[H]
\centering
\begin{quantikz}
 & \gate{V_{\mathrm{PREP}}} & \octrl{1} & \ctrl{1}& \gate{V^{\dagger}_{\mathrm{PREP}}} & \qw & \qw \\
 \lstick{$\ket{\bar{0}}$} & \qw & \gate[6]{U_d} & \gate[5]{U_{od}}& \qw & \qw & \rstick{$\bra{\bar{0}}$}\qw \\
\lstick{$\ket{k}$} & \qw & \qw & \qw & \qw & \qw & \rstick{$\bra{k}$}  \qw \\
\lstick{$\ket{i}$} & \qw & \qw & \qw & \qw & \qw &\rstick{$\bra{i'}$} \qw \\
\lstick{$\ket{0}$} & \qw & \qw & \qw & \qw & \qw & \rstick{$\bra{0}$} \qw \\
\lstick{$\ket{0}$} & \qw & \qw & \qw & \qw & \qw & \rstick{$\bra{0}$}\qw \\
\lstick{$\ket{1}$} & \qw & \qw & \qw & \qw & \qw  &\rstick{$\bra{1}$}\qw 
\end{quantikz}
\caption{Schematic LCU circuit.}\label{LCU_schematic}
\end{figure}
Since $U_{od}$ and $U_d$ have a common structure, they need not be performed sequentially and independently. A diagram of the circuit is shown in \cref{LCU_schematic}.

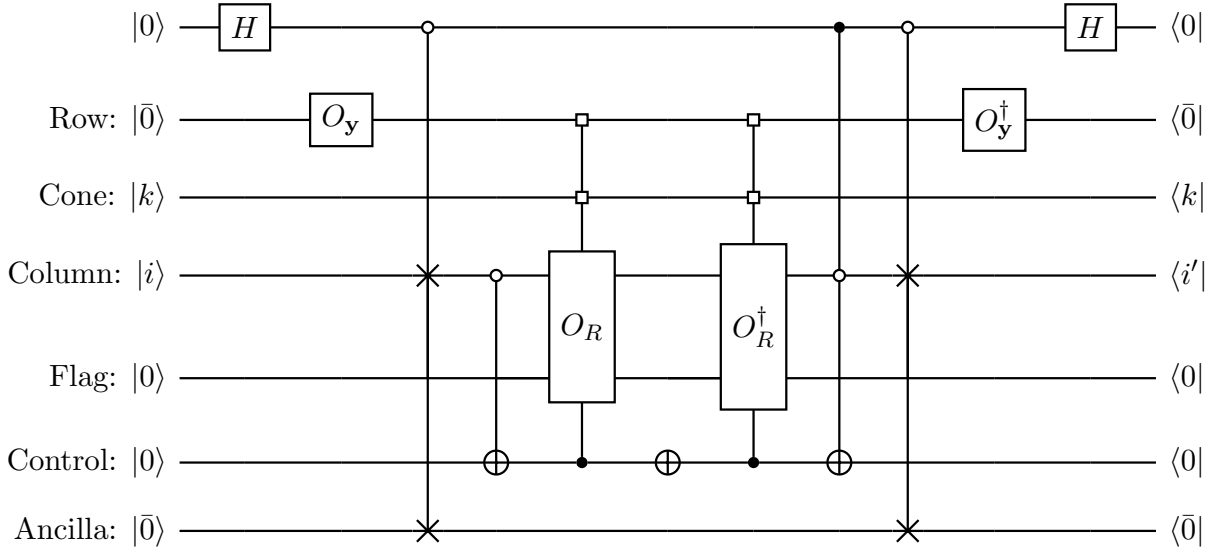
\begin{figure}[H]

\centering
\resizebox{\textwidth}{!}{
\begin{quantikz}
\lstick{$\ket{0}$} & \gate{H} & \qw &  \octrl{4}  &  \qw & \qw    & \qw & \qw  & \ctrl{3} & \octrl{4} & \qw  & \gate{H}& \qw \rstick{$\bra{0}$} \\
\lstick{Row: $\ket{\bar{0}}$} & \qw  & \gate{O_{\mathbf{y}}}      & \qw  & \qw & \mltplex{1}  & \qw  & \mltplex{1} & \qw & \qw &\gate{O^{\dagger}_{\mathbf{y}}}& \qw &  \rstick{$\bra{\bar 0}$}\qw   \\
\lstick{Cone: $\ket{k}$} & \qw &\qw &\qw   & \qw  & \mltplex{2} & \qw & \mltplex{2}  & \qw & \qw & \qw & \qw  & \rstick{$\bra{k}$}\qw \\
  \lstick{Column: $\ket{i}$} &\qw & \qw &\swap{3}  & \octrl{2}& \gate[2]{O_R} &\qw  & \gate[2]{O^{\dagger}_R}  & \octrl{2} & \swap{3} & \qw & \qw & \rstick{$\bra{i'}$}\qw\\
 \lstick{Flag: $\ket{ 0}$} & \qw & \qw & \qw  & \qw  & \qw & \qw & \qw &\qw & \qw & \qw  &\qw  &   \rstick{$\bra{ 0}$} \qw\\
  \lstick{Control: $\ket{0}$} & \qw &\qw   & \qw  & \targ{} & \ctrl{-1}  & \targ{}  & \ctrl{-1} & \targ{} & \qw & \qw & \qw & \rstick{$\bra{0}$}\qw\\
 \lstick{Ancilla: $\ket{\bar{0}}$}  & \qw & \qw   & \swap{} & \qw   & \qw & \qw & \qw &\qw & \swap{} & \qw & \qw &\rstick{$\bra{\bar{0}}$}\qw\\
\end{quantikz} 
}
\caption{Detailed LCU circuit.}\label{LCU_detailed}
\end{figure}

We provide an overview of the operations occurring between $V_{\mathrm{PREP}}$ and $V^{\dagger}_{\mathrm{PREP}}$. Let's start by considering that starting with the first register on $\ket{0}$ and post-measuring on $\bra{0}$ we obtain the correct result, that being, we are controlling on the diagonal block-encoding. Up to the gate $O^{\dagger}_R$:
\begin{align}
    \ket{0,\bar{0},k,i,0,0,\bar{0}} \mapsto \ket{0}
    \Bigg( &  \sum_{j,\tilde{i}} A^{\pp{k} }_{j\tilde{i}}\sqrt{\frac{ y_j}{2\beta}} \ket{j,k,\tilde{i}}\ket{0,0} +  \\
     &\sum_{j,\tilde{i}}\sqrt{1-|A^{\pp{k}}_{j\tilde{i}}|^2}\sqrt{\frac{ y_j}{2\beta}}\ket{j,k,\tilde{i}}\ket{1,0} \Bigg)\ket{i} \notag
\end{align}
Post-selecting on $\bra{0}$ on the first register, up to $O^{\dagger}_R$ from the left:
\begin{equation}
    \bra{0,\bar{0},k,i',0,0,\bar{0}} \mapsto \bra{0}\left( \sum^m_{j=1}\sqrt{\frac{y_j}{2\beta}}\bra{j,k,\bar{0}}\bra{0,0}\right)\bra{i'}
\end{equation}
Therefore: 
\begin{equation}
    z:=\bra{0,\bar{0},k,i',0,0,\bar{0}} U_{\mathrm{SEL}} \ket{0,\bar{0},k,i,0,0,\bar{0}} 
\end{equation}
\begin{itemize}
    \item If $ i'=i   \mapsto z= u^\pp{k}_0/2\beta $ 
    \item If $ i'\neq i \mapsto z=0 $ 
\end{itemize}
We do the same starting with $\ket{1}$ on the first register and post-selecting on $\bra{1}$, that being, we are controlling on the off-diagonal block-encoding. Up to the gate $O^{\dagger}_R$, if $i=\bar{0}$
\begin{align}
    \ket{1,\bar{0},k,i=\bar{0},0,0,\bar{0}} \mapsto \ket{1}
    \Bigg(&\sum_{j,\tilde{i}} A^{\pp{k} }_{j\tilde{i}}\sqrt{\frac{ y_j}{2\beta}} \ket{j,k,\tilde{i}}\ket{0,0} + \\
    &\sum_{j,\tilde{i}}\sqrt{1-|A^{\pp{k}}_{j\tilde{i}}|^2}\sqrt{\frac{ y_j}{2\beta}}\ket{j,k,\tilde{i}}\ket{1,0}\Bigg)\ket{\bar{0}} \notag 
\end{align}

\noindent If $i\neq\bar{0}$:
\begin{equation}
    \ket{1,\bar{0},k,i\neq\bar{0},0,0,\bar{0}} \mapsto \ket{1}
\sum_{j} \sqrt{\frac{ y_j}{2\beta}} \ket{j,k,i}\ket{0,1,\bar{0}} 
\end{equation}
Post-selecting on $\bra{1}$ on the first register, up to $O^{\dagger}_R$, for $i'=\bar{0}$
\begin{align}
    \bra{1,\bar{0},k,i'=\bar{0},0,0,\bar{0}} \mapsto \bra{1}\Bigg(&  \sum_{j,\tilde{i}}A^{\pp{k} }_{j\tilde{i}}\sqrt{\frac{y_j}{2\beta}}\bra{j,k,\tilde{i},0}\bra{0,1}+ \\
    & \sum_{j,\tilde{i}}\sqrt{1-|A^{\pp{k} }_{j\tilde{i}}|^2}\sqrt{\frac{y_j}{2\beta}}\bra{j,k,\tilde{i}}\bra{1,1}\Bigg)\bra{\bar{0}}  \notag
\end{align}
If $i\neq\bar{0}$:
\begin{equation}
    \bra{1,\bar{0},k,i'\neq\bar{0},0,0,\bar{0}} \mapsto \bra{1} \sum_{j}\sqrt{\frac{y_j}{2\beta}}\bra{j,k,i'}\bra{0,0,\bar{0}}
\end{equation}
Therefore: 
\begin{equation}
    z:=\bra{1,\bar{0},k,i',0,0,\bar{0}} U_{\mathrm{SEL}} \ket{1,\bar{0},k,i,0,0,\bar{0}} 
\end{equation}
\begin{itemize}
    \item If $ i'=\bar{0}, i=\bar{0}  \mapsto z=0 $ 
    \item If $ i'\neq \bar{0}, i=\bar{0} \mapsto z=\vec u_{i'}/2\beta $ 
    \item If $ i'= \bar{0}, i\neq\bar{0} \mapsto z=\vec u_{i}/2\beta$     
    \item If $ i'\neq \bar{0}, i\neq \bar{0} \mapsto z=0 $ 
\end{itemize}
Hence, given the reuse of queries to build the $U_{od}$ and $U_d$, the final query complexity is: 
one query to $O_R$, $O_R^\dagger$, $O_{\mathbf{y}}$ and $O_{\mathbf{y}}^\dagger$, and $\bigO(log_2(\bar{n}))$ additional single- and two-qubit gates. 
\end{proof}

Next, we give the constructions for the block-encodings of the diagonals and off-diagonal elements.

\begin{lemma}[$U_d$, $(\beta,0)$-block-encoding of diagonal matrix]\label{Lemma_diag_be}
Given oracle access to the matrices \( A^{\pp{0}},\ldots, A^{\pp{r-1}} \in \mathbb{R}^{m \times \bar n} \) through \( O_R, O^{\dagger}_R \) and oracle access to a vector \( \mathbf{y} \in \mathbb{R}^m \) through \( O_\mathbf{y}, O^{\dagger}_\mathbf{y} \) along with its associated constant \( \beta \) (as specified in \cref{O_y}), we can construct the block-encoding $U_d$, with a subnormalisation constant \( \beta  \), for the matrix:
\[
\sum_{k=0}^{r-1} \ket{k}\bra{k} \otimes \mathrm{diag}(u^\pp{k}_0/\beta)
\]
 This construction requires one query to each of $ O_R, O_{\mathbf{y}}$ and $O_{\mathbf{y}}^\dagger$, and $\bigO(\log(\bar{n}))$ additional single- and two-qubit gates.
\end{lemma}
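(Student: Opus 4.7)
The plan is to build $U_d$ as a straightforward state-preparation block-encoding of the scalar $u_0^{\pp{k}} = \sum_j A^{\pp{k}}_{j,0}\, y_j$, which is the inner product between the first column of $A^{\pp{k}}$ and $\mathbf{y}$. Because the target operator is the identity on the column register tensored against a cone-diagonal scalar, the circuit will leave the input column register untouched and only access $A^{\pp{k}}_{j,0}$ through a fresh column-ancilla that serves as the column input of $O_R$.

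Concretely, I will allocate three auxiliary registers in addition to the input cone and column registers: a row ancilla of $\lceil \log_2 m \rceil$ qubits, a column ancilla of $\lceil \log_2 \bar n \rceil$ qubits, and a single-qubit flag, all initialised to $\ket{\bar 0}$, $\ket{\bar 0}$, $\ket{0}$ respectively. The circuit consists of three layers: first apply $O_{\mathbf{y}}$ on the row ancilla to obtain $\tfrac{1}{\sqrt{\beta}}\sum_j \sqrt{y_j}\ket{j}$; then apply $O_R$ on (row, cone, col-ancilla, flag) with the col-ancilla starting in $\ket{\bar 0}$; finally apply $O_{\mathbf{y}}^{\dagger}$ on the row ancilla. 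The main col register is never touched, which automatically makes the resulting block-encoded operator trivial on col.

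To verify correctness I will compute the matrix element of the above circuit sandwiched between $\ket{\bar 0}_{\rm row}\ket{\bar 0}_{\rm col\text{-}anc}\ket{0}_{\rm flag}$ on the auxiliary registers. Using the identity $\bra{\bar 0}_{\rm row} O_{\mathbf{y}}^{\dagger} = \tfrac{1}{\sqrt{\beta}}\sum_j \sqrt{y_j}\bra{j}_{\rm row}$, the two $\mathbf{y}$-prep factors combine into $y_j/\beta$, while the $\bra{\bar 0}\bra{0}$ projection onto the (col-anc, flag) output of $O_R$ selects exactly the $\tilde i=0$ column entry $A^{\pp{k}}_{j,0}$ and discards the $\sqrt{1-|A^{\pp{k}}_{j,i}|^2}$ slack branch. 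Summing over $j$ yields $u_0^{\pp{k}}/\beta$ on each $\ket{k}\bra{k}$ block of the cone register, so the overall action is $\sum_k \ket{k}\bra{k} \otimes (u_0^{\pp{k}}/\beta)\, I_{\rm col}$, matching the claimed target.

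The cost accounting is then immediate: one call each to $O_R$, $O_{\mathbf{y}}$, and $O_{\mathbf{y}}^{\dagger}$, plus $\bigO(\log \bar n)$ single- and two-qubit gates for ancilla initialisation and register routing. No real obstacle is anticipated; the only bookkeeping points are to keep the col-ancilla that feeds $O_R$ distinct from the main col register on which $U_d$ must act trivially, and to project the $O_R$ flag onto $\ket 0$ so that the sub-unit-norm slack branch is correctly discarded.
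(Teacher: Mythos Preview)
Your proposal is correct and follows essentially the same approach as the paper: prepare $\tfrac{1}{\sqrt{\beta}}\sum_j\sqrt{y_j}\ket{j}$ via $O_{\mathbf{y}}$, feed a fresh $\ket{\bar 0}$ column input into $O_R$, project the column output onto $\ket{\bar 0}$ and the flag onto $\ket{0}$ to isolate $A^{\pp{k}}_{j,0}$, and uncompute with $O_{\mathbf{y}}^{\dagger}$. The only difference is cosmetic: rather than allocating a separate column ancilla, the paper swaps the main column register out to an ancilla and back (and carries an extra control qubit), which is done so that $U_d$ and $U_{od}$ can share the same $O_R$, $O_R^{\dagger}$ calls when combined in the LCU circuit for the full arrowhead block-encoding; for the standalone lemma your simpler wiring is perfectly fine.
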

\begin{proof}


\begin{figure}[H]\label{figure:diagonal}
\centering
\resizebox{0.7\textwidth}{!}{
\begin{quantikz}
\lstick{Row: $\ket{\bar{0}}$} & \gate{O_{\mathbf{y}}}  &  \qw       & \qw & \mltplex{1}   & \qw  &\qw & \gate{O^{\dagger}_y}  & \rstick{$\bra{\bar 0}$}\qw \\
\lstick{Cone: $\ket{k}$ } & \qw  & \qw &  \qw & \mltplex{1}   & \qw   & \qw & \qw   & \qw\\
\lstick{Column: $\ket{i}$}& \qw & \swap{3} &  \qw &  \gate[2]{O_R}    & \qw & \swap{3} & \qw  & \qw \rstick{$\bra{ i'}$} \qw\\
 \lstick{Flag: $\ket{0}$}& \qw   & \qw &  \qw &  \qw  & \qw  & \qw   & \qw   &  \rstick{$\bra{ 0}$} \qw\\
 \lstick{Control: $\ket{ 0}$} & \qw & \qw & \targ{} & \ctrl{-1}   & \targ{}    & \qw & \qw   & \qw \rstick{$\bra{ 0}$} \qw\\
   \lstick{Ancilla: $\ket{\bar{0}}$}  & \qw  & \swap{} & \qw   &  \qw  & \qw & \swap{}    & \qw  &  \rstick{$\bra{\bar{0}}$} \qw\\
\end{quantikz} 
}\caption{Diagonal block-encoding. $\ket{\bar{0}}$ here represents top to bottom $\ket{0^{log_2(\bar{m})}}$ and  $\ket{0^{log_2(\bar{n})}}$. $\text{\scalebox{0.5}{$\square$}}$ is here a simplification for $\text{\scalebox{0.5}{$\square$}}^{\otimes log_2(m)}$, $\text{\scalebox{0.5}{$\square$}}^{\otimes log_2(r)}$}
\end{figure}

Up to the $O_R$ gate, we see the circuit behaves as the following map
\begin{align}
  \ket{\bar{0},k,i,0,0,\bar{0}} & \mapsto 
\left(\sum_{j,\tilde{i}} A^{\pp{k}}_{j\tilde{i}} \sqrt{\frac{y_j}{\beta} }\ket{j,k,\tilde{i}}\ket{0} + \sum_{j,\tilde{i}} \sqrt{1-|A^{\pp{k} }_{j\tilde{i}}|^2} \sqrt{\frac{y_j}{\beta} }\ket{j,k,\tilde{i}}\ket{1}\right)\ket{1,i}
 \end{align}
For an initialisation $k,i'$ from left to right:
\begin{align}
  \ket{\bar{0},k,i',0,0,\bar{0}} & \mapsto 
 \sum_{j}\sqrt{\frac{y_j}{\beta}} \ket{j,k,\bar{0},0,1,i'}
 \end{align}
Then for $i=i'$ we are selecting on the correct value $ u^\pp{k}_0/\beta$. Then the circuit description with the appropriate initialization gives the diagonal block-encoding:
\begin{equation}
    \bra{\bar{0},k,I_{\bar{n}},0,0,\bar{0}}  
    U_d\ket{\bar{0} ,k,I_{\bar{n}},0,0,\bar{0}} = \mathrm{diag}(u^\pp{k}_0/\beta)  
\end{equation}

Therefore, up to a swap between $\ket{column,cone}$ registers, we confirm the above circuit finalises the proof of construction of $\sum_{k=0}^{r-1} \ket{k}\bra{k} \otimes \mathrm{diag}(u^\pp{k}_0/\beta)$. 

\end{proof}
As with the diagonal block-encoding, we now define the off-diagonal matrix we wish to instantiate:
\begin{equation}
    \text{Off-diag}(\vec u^\pp{k}/ \beta):= \begin{pmatrix}
        0 & \vec u^{\pp{k} \top}/ \beta  \\
        \vec u^\pp{k}/ \beta & 0         
    \end{pmatrix}
\end{equation}

 \begin{lemma}
[$U_{od}$, $(\beta,0)$-block-encoding of off-diagonal matrix] \label{lemma_off_diag}
Given oracle access to the matrices \( A^{\pp{0}},\ldots, A^{\pp{r-1}} \in \mathbb{R}^{m \times \bar n} \) through \( O_R, O^{\dagger}_R \) and oracle access to a vector \( \mathbf{y} \in \mathbb{R}^m \) through \( O_\mathbf{y}, O^{\dagger}_\mathbf{y} \) along with its associated constant \( \beta \) (as specified in \cref{O_y}), we can construct the block-encoding $U_{od}$, with a subnormalisation constant \( \beta  \), for the matrix:
\[
\sum_{k=0}^{r-1} \ket{k}\bra{k} \otimes \text{Off-diag}(\vec u^\pp{k}/ \beta)
\]
 This construction requires one query to each of $ O_R$, $O^{\dagger}_R$, $ O_{\mathbf{y}}$,  $O_{\mathbf{y}}^\dagger$, and \newline $\bigO(\log(\bar{n}))$ additional single- and two-qubit gates.
\end{lemma}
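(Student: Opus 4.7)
The plan is to exhibit an explicit circuit for $U_{od}$ in the same spirit as the circuit for $U_d$ in the proof of \cref{Lemma_diag_be}, and to verify correctness by a case analysis on the input and output Column register indices. The target matrix has nonzero entries only in its first row and first column: the amplitude at row $i'$, column $i$ of the encoded block should be $\vec u^\pp{k}_{i'}/\beta = (\sum_j y_j A^\pp{k}_{ji'})/\beta$ when $i=\bar 0$ and $i' \neq \bar 0$, it should be $\vec u^\pp{k}_{i}/\beta$ when $i' = \bar 0$ and $i \neq \bar 0$, and it must vanish in the two ``wrong'' cases $i = i' = \bar 0$ and $i, i' \neq \bar 0$.

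The circuit I would use proceeds as follows. First, apply $O_{\mathbf{y}}$ to prepare $\sum_j \sqrt{y_j/\beta}\,\ket{j}$ on the Row register. Next, SWAP the Column register into a fresh Ancilla register (as in the diagonal case) so that the Column register is placed in $\ket{\bar 0}$, the state required by $O_R$. Apply $O_R$, controlled on Row and Cone, to load $\sum_{\tilde i} A^\pp{k}_{j\tilde i}\ket{\tilde i}\ket{0}_{\rm flag} + \sum_{\tilde i}\sqrt{1-|A^\pp{k}_{j\tilde i}|^2}\ket{\tilde i}\ket{1}_{\rm flag}$ into the Column and Flag registers. Then flip an auxiliary Control qubit via a CNOT conditioned on the Column register being nonzero; this qubit is the gadget that, on projection, will force the two ``wrong'' cases to cancel. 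Finally, apply $O_R^\dagger$, undo the CNOT and the SWAP, and apply $O_{\mathbf{y}}^\dagger$ to restore the Row register. This construction uses exactly one call to each of $O_R$, $O_R^\dagger$, $O_{\mathbf{y}}$, $O_{\mathbf{y}}^\dagger$, together with SWAPs and multi-controlled gates on the $\bigO(\log \bar n)$ qubits of the Column and Ancilla registers, for an overhead of $\bigO(\log \bar n)$ additional single- and two-qubit gates.

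To finish, I would compute $z := \bra{k,i',\bar 0_{\rm anc}}\,U_{od}\,\ket{k,i,\bar 0_{\rm anc}}$ in each of the four cases of $(i,i')$, mirroring the bra/ket calculation performed in the proof of \cref{BE_Arrohead_matrix_theorem} for the off-diagonal branch of the LCU circuit there. In the case $i = \bar 0$, $i' \neq \bar 0$ (and symmetrically $i \neq \bar 0$, $i' = \bar 0$), the Row preparation supplies the factor $\sqrt{y_j/\beta}$ on each side of the bracket, the $O_R$--$O_R^\dagger$ pair together with the projection on $\ket{0}_{\rm flag}$ selects the coefficient $A^\pp{k}_{ji'}$, and summing over $j$ produces $z = \sum_j y_j A^\pp{k}_{ji'}/\beta = \vec u^\pp{k}_{i'}/\beta$, as required. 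In the cases $i = i' = \bar 0$ and $i,i' \neq \bar 0$, the Control qubit is flipped on exactly one side of the inner product (or on both, respectively), so on projection onto $\ket{0}_{\rm control}$ the amplitude vanishes.

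The main obstacle is bookkeeping: the controlled SWAPs and CNOTs must be placed so that every ancilla register (the Flag qubit from $O_R$, the Control qubit, and the workspace used by the SWAP) returns cleanly to $\ket{\bar 0}$ in the retained branch, and so that the projection onto the ancilla subspace kills the two ``wrong'' cases exactly, not merely approximately. This is the same kind of bookkeeping exercise already carried out for $U_d$, but with the polarity of the Column-register controls reversed so that the matrix elements are supported on the first row and column rather than on the diagonal. Once this is set up correctly, the query and gate counts stated in the lemma follow immediately from the circuit description.
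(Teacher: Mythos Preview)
Your circuit as written does not block-encode an off-diagonal matrix: the SWAP/undo-SWAP pair forces the Column register to be a pass-through. Concretely, after step~2 the original Column value $i$ sits in the Ancilla register; steps 3--6 never touch the Ancilla; and the final SWAP moves that same $i$ back into Column. Hence $\bra{i'}_{\rm col}\bra{\bar 0}_{\rm anc}\,U_{od}\,\ket{i}_{\rm col}\ket{\bar 0}_{\rm anc}$ carries an explicit factor $\langle i'|i\rangle=\delta_{i,i'}$ regardless of what the Control qubit does in between. The block-encoded matrix is therefore diagonal in the Column index --- exactly the mechanism that makes the $U_d$ construction work, and exactly what you must \emph{avoid} for $U_{od}$. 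The heuristic ``reverse the polarity of the controls'' does not fix this: the obstruction is the SWAP, not the control polarity.

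The paper's construction drops the SWAP and the extra ancilla entirely. Instead it uses an open-controlled CNOT from Column to a Control qubit (firing when Column $=\bar 0$), applies $O_R$ \emph{controlled on} Control, then an unconditional $X$ on Control, then controlled-$O_R^\dagger$, then the open-controlled CNOT again. The point is that on input $i=\bar 0$ the $O_R$ actually writes the superposition $\sum_{\tilde i}A^\pp{k}_{j\tilde i}\ket{\tilde i}$ into the Column register itself, so the output index $i'$ is read off directly from those amplitudes; the middle $X$ gate then ensures the Control qubit is orthogonal between the $i=\bar 0$ and $i\neq\bar 0$ branches, killing the $(\bar 0,\bar 0)$ and $(\neq\!\bar 0,\neq\!\bar 0)$ matrix elements. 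This is the idea your proposal is missing: the Column register must be allowed to change from input to output, which is incompatible with sandwiching the active part of the circuit between a SWAP and its inverse.
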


\begin{proof}

We provide as proof the below circuit \cref{circuit_off_diag} that builds  $U_{od}$, the block-encoding of the matrix $
\sum_{k=0}^{r-1} \ket{k}\bra{k} \otimes \text{Off-diag}(\vec u^\pp{k}/ \beta)
$, up to an extra swap between $\ket{row, cone}$\\

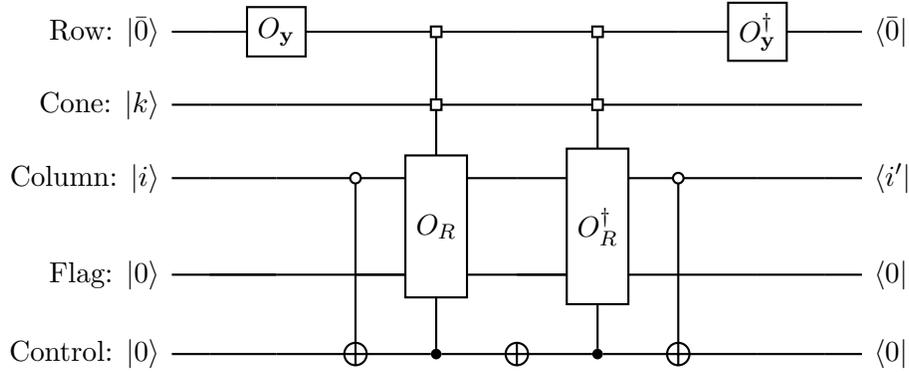
\begin{figure}[H]
\centering
\begin{quantikz}
\lstick{\text{Row: }$\ket{\bar 0}$}  & \qw  & \gate{O_{\mathbf{y}}} & \qw & \mltplex{1} & \qw & \mltplex{1} & \qw & \gate{O^{\dagger}_{\mathbf{y}}} & \qw   & \rstick{$\bra{\bar 0}$} \qw\\
 \lstick{\text{Cone: }$\ket{k}$}   & \qw  & \qw &\qw & \mltplex{1} & \qw & \mltplex{1} & \qw & \qw & \qw & \qw  & \\
 \lstick{\text{Column: }$\ket{i}$}  & \qw & \qw & \octrl{2} & \gate[2]{O_R} & \qw  & \gate[2]{O^{\dagger}_R} & \octrl{2} & \qw & \qw & \qw  \rstick{$\bra{i'}$}  \\
\lstick{\text{Flag: }$\ket{0}$}   & \qw & \qw \qw & \qw & \qw & \qw & \qw & \qw & \qw & \qw   & \rstick{$\bra{0}$} \qw\\
\lstick{\text{Control: }$\ket{0}$} & \qw   & \qw & \targ{} & \ctrl{-1} & \targ{} & \ctrl{-1} & \targ{} & \qw  & \qw   & \rstick{$\bra{0}$} \qw\\
\end{quantikz}
\caption{ Off-diagonal elements block-encoding. Top to bottom, the first $\ket{\bar{0}}=\ket{0^{log_2(m)}}$. We also clarify $\text{\scalebox{0.5}{$\square$}}$ is a simplification for $\text{\scalebox{0.5}{$\square$}}^{\otimes log_2(m)}$ and $\text{\scalebox{0.5}{$\square$}}^{\otimes log_2(r)}$ .
Similarly $\circ$ here refers to open control and its application on the column register is $\circ^{\otimes log_2(\bar{n})}$.The open control acts when all the registers it controls are on state 0, instead of the conventional one that acts when the register is on state 1.  \label{circuit_off_diag}}
\end{figure}

To understand the circuit, we give the following pointers: it is symmetrical around the X gate in the middle of the circuit. We see that the left part of the circuit performs the following map:
\begin{equation}
    \ket{\bar{0},k,i=\bar{0},0,0} \mapsto \left( \sum_{j,\tilde{i}}A^{\pp{k}}_{j\tilde{i}}\sqrt{\frac{y_j}{\beta}}\ket{j,k,\tilde{i}}\ket{0}+\sum_{j,\tilde{i}}\sqrt{1-|A^{\pp{k}}_{j\tilde{i}}|^2}\sqrt{\frac{y_j}{\beta}}\ket{j,k,\tilde{i}}\ket{1} \right)\ket{1}
\end{equation}

Next, we see what happens in the right part of the circuit, for the case where $i'\neq\bar{0}$:
\begin{equation}
    \bra{\bar{0},k,i'\neq\bar{0},0,0} \mapsto  \sum^{m-1}_{j=0}\sqrt{\frac{y_j}{\beta }} \bra{j,k,i',0,0} 
\end{equation}

By considering the bit-flip gate at the center of the circuit, we conclude the proof that the construction realizes the correct block-encoding. Specifically, for $ z := \bra{\bar{0},k,i',0,0} \, U_{od} \, \ket{\bar{0},k,i,0,0}$, we have

\begin{itemize}
    \item If $ i'=\bar{0}, i=\bar{0}  \mapsto z=0 $ 
    \item If $ i'\neq \bar{0}, i=\bar{0} \mapsto z=\vec u_{i'}/\beta $ 
    \item If $ i'= \bar{0}, i\neq\bar{0} \mapsto z=\vec u_{i}/ \beta$     
    \item If $ i'\neq \bar{0}, i\neq \bar{0} \mapsto z=0 $ 
\end{itemize}

\end{proof}

\section{Quantum implementation toolset}
\subsection{Minimum finding}

\begin{theorem}[\cite{lin2020near} Theorem 8]\label{theorem:min_finding}
Suppose we have Hamiltonian $H=\sum_k \lambda_k\left|\psi_k\right\rangle\left\langle\psi_k\right| \in$ $\mathbb{C}^{N \times N}$, where $\lambda_k \leq \lambda_{k+1}$, given through its $(\beta, a, 0)$-block-encoding\footnote{%
This refers to a $(\beta, 0)$-block-encoding implemented with $a$ ancilla qubits, following the notation convention used throughout the document.%
}
 $U_H$. Also suppose we have an initial state $\left|\phi_0\right\rangle$ prepared by circuit $U_I$, as well as the promise $|\braket{\phi_0|\psi_0}|\geq \nu$, where $\nu>0$. Then the ground energy can be estimated to precision $\eta_\lambda$ with probability $1-\eta$ with the following costs:
\begin{enumerate}
    \item Query complexity: 
    \[
    \mathcal{O}\left(\frac{\beta}{\nu \eta_\lambda} \log \left(\frac{\beta}{\eta_\lambda}\right) \log \left(\frac{1}{\nu}\right) \log \left(\frac{\log (\beta / \eta_\lambda)}{\eta}\right)\right) \text{ queries to } U_H
    \]
    \[
    \text{and } \mathcal{O}\left(\frac{1}{\nu} \log \left(\frac{\beta}{\eta_\lambda}\right) \log \left(\frac{\log (\beta / \eta_\lambda)}{\eta}\right)\right) \text{ queries to } U_I,
    \]
    \item Other one- and two-qubit gates: 
    $\mathcal{O}\left(\frac{a \beta}{\nu \eta_\lambda} \log \left(\frac{\beta}{\eta_\lambda}\right) \log \left(\frac{1}{\nu}\right) \log \left(\frac{\log (\beta / \eta_\lambda)}{\eta}\right)\right)$.
\end{enumerate}

\end{theorem}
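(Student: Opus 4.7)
The plan is to prove the theorem by combining three ingredients that have become standard for Hamiltonian spectrum estimation via block-encoding: a polynomial threshold filter built by QSVT, fixed-point amplitude amplification, and a binary-search wrapper over the spectral interval $[-\beta,\beta]$. Concretely, I would use the block-encoding $U_H$ of $H/\beta$ (whose eigenvalues lie in $[-1,1]$) to apply, via QSVT, a polynomial $p_x(t)$ that approximates the indicator of the interval $t \le x/\beta$ up to some error $\delta_p$, with a smooth transition region of width $\eta_\lambda/(2\beta)$ around $x/\beta$. By Low--Chuang style results, such a polynomial can be constructed with degree $O((\beta/\eta_\lambda)\log(1/\delta_p))$, so the circuit that implements the corresponding block-encoding of $p_x(H/\beta)$ uses the same number of queries to $U_H$.

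Next, I would feed the initial state $|\phi_0\rangle = U_I|\bar 0\rangle$ into this filter. By the promise $|\langle\phi_0|\psi_0\rangle|\ge \nu$, whenever the threshold $x$ lies above $\lambda_0 + \eta_\lambda/2$ the amplitude on the flagged subspace is at least $\nu - O(\delta_p)$, while if $x$ lies below $\lambda_0 - \eta_\lambda/2$ the amplitude is at most $O(\delta_p)$. Fixed-point amplitude amplification then distinguishes these two cases with $O(\tfrac{1}{\nu}\log(1/\eta'))$ rounds at failure probability $\eta'$, contributing the $1/\nu$ factor in the query count and the $1/\nu$ factor in the $U_I$ count (since each round uses one application of $U_I$ and one application of the filter). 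Composing the filter with amplification gives a single comparator that, with high probability, decides whether $\lambda_0 \le x$ or $\lambda_0 > x$ for any chosen $x$.

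Finally, I would run a binary search on $x$ over $[-\beta,\beta]$ to precision $\eta_\lambda$. This requires $T_{\mathrm{bs}} = O(\log(\beta/\eta_\lambda))$ rounds of the comparator. To make the overall algorithm succeed with probability $1-\eta$, I would set the per-round failure probability to $\eta' = \eta/T_{\mathrm{bs}}$ and apply the union bound, which inserts the $\log(\log(\beta/\eta_\lambda)/\eta)$ factor. Multiplying (filter-degree) $\times$ (amplification rounds) $\times$ (binary-search rounds) gives the stated
\[
\widetilde O\!\left(\frac{\beta}{\nu\,\eta_\lambda}\right)
\]
queries to $U_H$ and $\widetilde O(1/\nu)$ queries to $U_I$, while the additional one- and two-qubit gate count is dominated by the $a$ block-encoding ancillas used by each reflection inside QSVT/amplification, yielding the extra factor $a$ in the last line.

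The main obstacle I expect is the careful propagation of approximation errors: the polynomial filter has both a transition-region width and a pointwise error $\delta_p$, and inside fixed-point amplitude amplification the tolerated error shrinks roughly like $\nu/\log(1/\eta')$, so $\delta_p$ must be chosen small enough that the amplified state is genuinely in the correct branch of the binary search. Getting this bookkeeping tight (rather than losing extra polynomial factors in $1/\nu$ or $1/\eta_\lambda$) is the delicate step and is ultimately what fixes the concrete constants and the $\log$ factors in the statement; the rest is a direct composition of known primitives.
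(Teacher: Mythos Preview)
The paper does not supply its own proof of this statement; it is quoted from \cite{lin2020near} in the appendix as a ready-made primitive, with no argument given. Your sketch---a QSVT threshold filter on $H/\beta$, an amplitude-based comparator, and binary search over $[-\beta,\beta]$ with per-step failure $\eta'=\eta/T_{\mathrm{bs}}$---is precisely the Lin--Tong strategy, and the factor bookkeeping you outline reproduces the stated bounds once one takes $\delta_p=\Theta(\nu)$ (this is exactly where the $\log(1/\nu)$ factor appearing in the $U_H$ count but \emph{not} the $U_I$ count originates: it sits in the filter degree, not in the number of amplification rounds).

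Your caveat about error propagation through the amplification step is the right place to be careful, and there is one genuine subtlety in your phrasing: fixed-point amplitude amplification guarantees the output amplitude is near $1$ only when the input amplitude exceeds the design threshold, so it does not by itself ``distinguish'' the two regimes---in the below-threshold case the output is uncontrolled unless you invoke a separate bound (e.g.\ QSVT robustness on the degree-$O((1/\nu)\log(1/\eta'))$ AA polynomial, which then forces $\delta_p$ somewhat below $\nu$ and costs an extra log). Lin--Tong sidestep this by structuring the comparator as $O(1/\nu)$ amplification rounds followed by measurement, repeated $O(\log(1/\eta'))$ times independently for boosting; with that organization $\delta_p=\Theta(\nu)$ suffices and the counts match the statement exactly.
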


\subsection{QSVT}

Similarly, some of the following statements have been drawn from \cite{vanapeldoorn19} and \cite{Gilyn2019}.

\begin{theorem} [Polynomial eigenvalue transformation,  \citep{zerosumLP} Theorem 6]\label{QSVT_poly_eig_trans_theorem}
Suppose that $U$ is an $a$-qubit block-encoding of a Hermitian matrix $A$, and $P \in \mathbb{R}[x]$ is a degree-d polynomial satisfying that\\

(i) for all $x \in[-1,1]:|P(x)| \leq \frac{1}{2}$, or

(ii) for all $x \in[-1,1]:|P(x)| \leq 1$ and $|P(x)|=|P(-x)|$.\\

Then there is a quantum circuit $\tilde{U}$, which is an $(a+2)$-qubit block-encoding of $P(A)$, and which consists of $d$ applications of $U$ and $U^{\dagger}$, (and in case (i) a single application of controlled $U^{ \pm 1}$ ) and $\mathcal{O}((a+1) d)$ other one- and two-qubit gates. 
\end{theorem}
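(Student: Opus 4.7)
The plan is to reduce the statement to the standard quantum singular value transformation (QSVT) machinery, since the theorem is essentially a specialisation of the Gilyén--Su--Low--Wiebe framework to Hermitian block-encodings. The first step is qubitization: given the $a$-qubit block-encoding $U$ of $A$, together with the reflection $R = (2|0^a\rangle\langle 0^a| - I) \otimes I$ about the flag, the operator $UR$ preserves, for each eigenpair $(\lambda_j, |v_j\rangle)$ of $A$, the two-dimensional subspace $\mathrm{span}\{|0^a\rangle|v_j\rangle, |\perp_j\rangle\}$ and acts within it as a rotation by an angle $\theta_j$ with $\cos\theta_j = \lambda_j$. The reflection costs $\mathcal{O}(a)$ two-qubit gates.

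The second step is the phase-sequence construction. Interleaving $d$ applications of $U$ and $U^\dagger$ with single-qubit $Z$-rotations $e^{i\phi_k Z}$ on an additional control qubit yields, in each 2D invariant subspace, a $2\times 2$ unitary whose $(0,0)$ entry equals $P(\lambda_j)$ for a real polynomial $P$ of degree $d$ with definite parity matching that of $d$. By the QSP existence theorem, any real polynomial of degree $d$, definite parity, and magnitude at most $1$ on $[-1,1]$ arises from some such phase sequence. This directly handles case (ii), because the symmetry condition $|P(x)|=|P(-x)|$ forces $P$ to have definite parity, and the construction uses one extra ancilla qubit plus $\mathcal{O}((a+1)d)$ elementary gates, giving the claimed $(a+1)$-qubit block-encoding (a second ancilla is introduced when we impose the standard real-part projection on the signal qubit, yielding the stated $(a+2)$-qubit count).

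The third step handles case (i) by parity decomposition. Write $P = P_{\mathrm{even}} + P_{\mathrm{odd}}$; both summands have degree at most $d$ and, by a standard averaging argument, $|P_{\mathrm{even}}(x)|, |P_{\mathrm{odd}}(x)| \leq |P(x)| \leq 1/2$ on $[-1,1]$. Each summand is realised individually by case (ii), using $d$ queries to $U$ and $U^\dagger$. A linear-combination-of-unitaries step then coherently adds the two block-encodings, contributing one extra flag qubit (hence the upgrade from $(a+1)$ to $(a+2)$) and one controlled query to $U^{\pm 1}$, corresponding to the parity-selection between the even and odd branches. The bound $|P|\leq 1/2$ ensures the resulting LCU-encoded operator still has norm at most $1$, so no renormalisation is needed.

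The main subtlety to verify is that the eigenvalue transformation produced by QSVT on a Hermitian $A$ genuinely equals $P(A)$ rather than $P(|A|)$; this is exactly what the parity condition guarantees, and it is the reason the two cases are stated separately. Beyond that, the argument is bookkeeping: counting the phase rotations, the reflections about $|0^a\rangle$, and the one additional controlled-$U$ from the LCU yields the claimed $\mathcal{O}((a+1)d)$ gate count and the specified ancilla overhead.
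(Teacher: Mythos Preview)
The paper does not prove this theorem; it is imported verbatim from \cite{zerosumLP} (and ultimately from the Gily\'en--Su--Low--Wiebe QSVT framework \cite{Gilyn2019}) as a black-box tool, so there is no ``paper's own proof'' to compare against. Your sketch is the right outline of the standard GSLW argument, and the high-level decomposition into qubitization, definite-parity QSP, and an LCU for case~(i) is exactly how the result is established in the literature.

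Two points in your write-up are imprecise and would not survive as written. First, the inequality $|P_{\mathrm{even}}(x)| \le |P(x)|$ is false in general; what the averaging argument actually gives is $|P_{\mathrm{even}}(x)| = \tfrac{1}{2}|P(x)+P(-x)| \le \tfrac{1}{2}(|P(x)|+|P(-x)|) \le \tfrac{1}{2}$, and similarly for the odd part. The conclusion you need ($\le 1$, so case~(ii) applies to each part) is correct, but the intermediate claim is wrong. Second, and more substantively, your description ``each summand is realised individually by case~(ii), using $d$ queries, then LCU'' would cost roughly $2d$ applications of $U,U^\dagger$, not $d$. The actual GSLW construction does not run two separate QSVT circuits and then add them; instead it uses a single alternating sequence of $d$ applications of $U$ and $U^\dagger$, with the LCU control qubit selecting between two different phase sequences (one for the even part, one for the odd). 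Because the even and odd parts differ in degree by exactly one, a single extra controlled-$U^{\pm 1}$ suffices to reconcile the parity mismatch---this is the origin of the ``single application of controlled $U^{\pm 1}$'' clause in the theorem statement, not a generic LCU artifact as you suggest. Without this sharing of queries, the stated $d$-query bound does not follow from your argument.
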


\begin{lemma}[Polynomial approximations, modified \citep{zerosumLP} Lemma 7]\label{Lemma_poly_approx} Let $2\beta \geq 1, \varepsilon \leq 1 / 2$. There exist a polynomial $\tilde{P}$ such that
$$\forall x \in[-1,1]:|\tilde{P}(x)| \leq \frac{1}{2} \text{ and for all } x \in[-1,0]:\left|\tilde{P}(x)-e^{2\beta x} / 4\right| \leq \varepsilon$$
    
moreover $\operatorname{deg}(\tilde{P}) =\mathcal{O}(\beta \log (1 / \varepsilon))$
\end{lemma}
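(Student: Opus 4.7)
Since this statement is a close variant of [zerosumLP, Lemma~7], the proof plan mirrors that argument with adjustments to the normalisation constants $1/4$ and $1/2$. The construction I would use is $\tilde{P}(x) = Q(x)\,P_1(x)$, a product of a polynomial $P_1$ that tightly approximates $e^{2\beta x}/4$ on $[-1,1]$ and a smooth cutoff polynomial $Q$ that is close to $1$ on $[-1,-\Delta]$ and close to $0$ on $[\Delta,1]$ for a suitable transition width $\Delta$. The central issue is that $P_1$ alone reaches magnitude $\bigO(e^{2\beta})$ on $[0,1]$ (exponentially larger than the allowed $1/2$), so the cutoff must suppress it by an exponential factor there while leaving the values on $[-1,0]$ essentially intact.

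First I would obtain $P_1$ from a truncated Taylor or Chebyshev expansion of $e^{2\beta x}/4$, achieving uniform additive error $\varepsilon/4$ on $[-1,1]$ at degree $\bigO(\beta+\log(1/\varepsilon))$. Second I would construct $Q$ using a standard smoothed-Heaviside polynomial from the QSVT toolbox (via polynomial approximation of $\operatorname{sign}$ or $\operatorname{erf}$), satisfying $|Q(x)-1|\leq \delta$ on $[-1,-\Delta]$, $|Q(x)|\leq \delta'$ on $[\Delta,1]$, and $|Q|\leq 1+\delta$ throughout $[-1,1]$, with degree $\bigO\!\left(\Delta^{-1}\log(1/\min(\delta,\delta'))\right)$.

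Verification then consists of a triangle-inequality argument for the approximation bound on $[-1,-\Delta]$, which gives $|\tilde{P}(x)-e^{2\beta x}/4|\leq (1+\delta)\varepsilon/4 + \delta/4 \leq \varepsilon$ when $\delta=\Theta(\varepsilon)$; a direct check on the transition window $[-\Delta,0]$ where $|P_1|$ is still $\lesssim 1/4$ and both factors remain $\bigO(1)$; and a boundedness check on $[\Delta,1]$ forcing $\delta' = \bigO(e^{-2\beta})$ so that $|\tilde{P}|\leq \delta'\cdot e^{2\beta}/4 \leq 1/2$.

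The main obstacle is achieving the total degree $\bigO(\beta\log(1/\varepsilon))$ rather than the naive $\bigO(\beta^2)$ that arises from the straightforward choice $\Delta=\Theta(1/\beta)$ combined with $\delta'=\Theta(e^{-2\beta})$. Closing this gap requires either invoking the sharper exponential-approximation results in the Gily\'en--Su--Low--Wiebe toolbox (which produce polynomials bounded on $[-1,1]$ whose growth outside the approximation region is much milder than the worst-case $e^{2\beta}$), or a more careful two-stage construction of the cutoff. In either case, the argument closely follows the proof of [zerosumLP, Lemma~7], and all the technical pieces---truncated-series estimates, sign-function polynomial approximations, and the final bookkeeping---are standard.
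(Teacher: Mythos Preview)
The paper does not give its own proof of this lemma; it is stated as a modification of \cite[Lemma~7]{zerosumLP} and used as a black box. Your plan follows the standard construction underlying that reference, and you correctly identify the one genuine difficulty, namely that a raw Taylor truncation of $e^{2\beta x}$ combined with a rectangle cutoff of width $\Delta=\Theta(1/\beta)$ and suppression level $\delta'=\Theta(e^{-2\beta})$ yields degree $\bigO(\beta^2)$ rather than the claimed $\bigO(\beta\log(1/\varepsilon))$. Your proposed fix---replacing the raw Taylor polynomial by the bounded exponential approximation from the Gily\'en--Su--Low--Wiebe toolbox---is indeed the correct resolution: their construction already produces a polynomial with $|\tilde P|\le 1$ on $[-1,1]$ and the desired accuracy on the target subinterval, at the stated degree, so no separate cutoff step is needed. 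In that sense your two-factor decomposition is the ``by-hand'' version of what the cited lemma packages up.

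One concrete slip in your sketch: the transition window must lie entirely in $[0,2\Delta]$, not symmetrically in $[-\Delta,\Delta]$. The lemma requires $|\tilde P(x)-e^{2\beta x}/4|\le\varepsilon$ at every point of $[-1,0]$, including $x\in[-\Delta,0]$; your ``direct check'' there only verifies that $|\tilde P|=\bigO(1)$, which is not enough, since on that window the target $e^{2\beta x}/4$ is close to $1/4$ while $Q$ may be close to $0$, giving an error of order $1/4$ rather than $\varepsilon$. Shifting the cutoff so that $Q\approx 1$ on all of $[-1,0]$ fixes this immediately and does not affect the degree analysis.
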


\begin{theorem}[Fixed-point amplitude amplification, modified from theorem 27 \citep{Gilyn2019}]\label{theorem:fixed_point_aa}
    Let $U$ be a unitary and $\Pi$ be an orthogonal projector such that $a\left|\psi_G\right\rangle=\Pi U\left|\psi_0\right\rangle$, and $a>\delta>0$. There is a unitary circuit $\tilde{U}$ such that $\|\left|\psi_G\right\rangle-\tilde{U}\left|\psi_0\right\rangle \| \leq \varepsilon$, which uses a single ancilla qubit and consists of $\mathcal{O}\left(\frac{\log (1 / \varepsilon)}{\delta}\right) U, U^{\dagger}, C_{\Pi} N O T$, $C_{\left|\psi_0\right\rangle\left\langle\psi_0\right|} N O T$ and $e^{i \phi \sigma_z}$ gates.
\end{theorem}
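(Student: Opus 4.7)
The plan is to treat this as an instance of QSVT applied to the projected unitary $\Pi U \ket{\psi_0}\!\bra{\psi_0}$, whose only nonzero singular value is $a$, with left singular vector $\ket{\psi_G}$ and right singular vector $\ket{\psi_0}$. Fixed-point amplitude amplification is then implemented by applying an appropriate polynomial $P$ to this singular value: since $P$ will be designed to be close to $1$ on the entire interval $[\delta,1]$, the output amplitude does not depend on knowing $a$ precisely, only on the lower bound $a \geq \delta$.

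First I would establish the existence of an odd polynomial $P \in \mathbb{R}[x]$ of degree $d = \bigO(\log(1/\varepsilon)/\delta)$ satisfying (i) $|P(x)| \leq 1$ for all $x \in [-1,1]$, (ii) $P(-x) = -P(x)$, and (iii) $|P(x) - 1| \leq \varepsilon$ for all $x \in [\delta,1]$ (and hence $|P(x) + 1| \leq \varepsilon$ on $[-1,-\delta]$ by the odd symmetry). This is a standard polynomial approximation to the sign function with gap $\delta$ and error $\varepsilon$; the degree bound matches the optimal scaling and the construction is well documented in the QSP/QSVT literature.

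Next I would invoke the QSVT polynomial-eigenvalue/singular-value transformation machinery (an odd-polynomial variant analogous to \cref{QSVT_poly_eig_trans_theorem}, formulated for projected unitaries with reflectors about $\Pi$ and $\ket{\psi_0}\!\bra{\psi_0}$). Each step of the QSVT sequence uses one application of $U$ or $U^\dagger$, one reflector—realized as $C_\Pi \mathrm{NOT}$ or $C_{\ket{\psi_0}\!\bra{\psi_0}}\mathrm{NOT}$ on a single ancilla qubit—and one $e^{i\phi_k \sigma_z}$ rotation parameterized by the phase factors that encode $P$. Running the QSVT circuit $\tilde U$ on $\ket{\psi_0}$ (tensored with the ancilla in a fiducial state) produces, inside the flagged block, the vector $P(a)\ket{\psi_G}$, and by property (iii) this differs from $\ket{\psi_G}$ in Euclidean norm by at most $\varepsilon$, yielding the claimed bound $\lVert \ket{\psi_G} - \tilde U \ket{\psi_0}\rVert \leq \varepsilon$ after appropriate phase alignment. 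The gate count in the theorem statement follows directly from $\deg(P) = \bigO(\log(1/\varepsilon)/\delta)$.

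The main conceptual step is the polynomial construction, but this is a known building block. The main bookkeeping obstacle is aligning the QSVT conventions with the specific projected-unitary setup here: verifying that the odd polynomial transformation lifts correctly through the reflectors $R_\Pi$ and $R_{\ket{\psi_0}\!\bra{\psi_0}}$, that the single ancilla suffices, and that the intended Euclidean-norm error bound (as opposed to, say, a block-encoding spectral bound) follows cleanly from property (iii). Once those identifications are made, the proof is essentially a reduction to the existence of the approximating polynomial and the standard QSVT implementation theorem.
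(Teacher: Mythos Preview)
The paper does not actually supply a proof of this theorem: it is stated in the appendix as a cited result (``modified from theorem 27 \citep{Gilyn2019}'') alongside several other imported QSVT facts, none of which are reproved. Your proposal is correct and is essentially the proof given in the original reference: one views $\Pi U \ket{\psi_0}\!\bra{\psi_0}$ as a rank-one projected unitary with singular value $a$, constructs an odd real polynomial of degree $\bigO(\log(1/\varepsilon)/\delta)$ that is bounded by $1$ on $[-1,1]$ and $\varepsilon$-close to $1$ on $[\delta,1]$, and implements it via the QSVT phase sequence built from alternating $U$, $U^\dagger$, and the two reflections (realized through $C_\Pi\mathrm{NOT}$, $C_{\ket{\psi_0}\!\bra{\psi_0}}\mathrm{NOT}$, and single-qubit $e^{i\phi\sigma_z}$ rotations on one ancilla). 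So there is nothing to compare against here---your sketch simply reconstructs the argument the paper is citing.
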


\begin{corollary}[Quantum signal processing using reflections, \citep{Gilyn2019} Corollary 8 ]\label{corollary:8_qsvt}Let $P \in \mathbb{C}[x]$ be a degree-d polynomial, such that
\begin{itemize}
    \item $P$ has parity- $(d \bmod 2)$,
    \item $\forall x \in[-1,1]:|P(x)| \leq 1$,
    \item $\forall x \in(-\infty,-1] \cup[1, \infty):|P(x)| \geq 1$,
    \item if $d$ is even, then $\forall x \in \mathbb{R}: P(i x) P^*(i x) \geq 1$.
\end{itemize}
Observe that $c \geq 1$ for all $a, b \geq 0$ and thus $\sqrt{c^2-1} \in \mathbb{R}$. Then there exists $\Phi \in \mathbb{R}^d$ such that

$$
\prod_{j=1}^d\left(e^{i \phi_j \sigma_z} R(x)\right)=\left[\begin{array}{cc}
P(x) & \cdot \\
\cdot & \cdot
\end{array}\right]
$$

\noindent Moreover for $x \in\{-1,1\}$ we have that $P(x)=x^d \prod_{j=1}^d e^{i \phi_j}$, and for $d$ even $P(0)=e^{-i \sum_{j=1}^d(-1)^j \phi_j}$.
\end{corollary}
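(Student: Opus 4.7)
The plan is to adapt the proof strategy of Low--Yoder--Chuang (later recast in Gily\'en et al.), which combines a characterization of the achievable algebraic form with a constructive algorithm for extracting the phases. I would prove the forward direction (achievability of any product as a polynomial) and the backward direction (realization of any admissible polynomial) separately, both by induction on the degree $d$.

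First, a direct induction on $d$ shows that the product $U_\Phi(x) := \prod_{j=1}^d (e^{i\phi_j \sigma_z} R(x))$ has the form
\[
U_\Phi(x) = \begin{pmatrix} P(x) & i Q(x)\sqrt{1-x^2} \\ i Q^*(x)\sqrt{1-x^2} & P^*(x) \end{pmatrix},
\]
for some $P, Q \in \mathbb{C}[x]$ with $\deg P \leq d$, $\deg Q \leq d-1$, and parities $d \bmod 2$ and $(d-1) \bmod 2$ respectively. Unitarity of $U_\Phi$ on $[-1,1]$ forces the Pythagorean identity $|P(x)|^2 + (1-x^2)|Q(x)|^2 = 1$, from which the conditions (i)--(iv) on $P$ follow by elementary algebra (for example, (iii) comes from analytically continuing the identity outside $[-1,1]$, where $1-x^2$ becomes negative).

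For the converse, given $P$ satisfying (i)--(iv), I would first construct a companion polynomial $Q$ by applying the F\'ejer--Riesz theorem to the nonnegative polynomial $1 - |P(x)|^2$ on $[-1,1]$. Conditions (iii) and (iv) guarantee the required divisibility by $(1-x^2)$ and nonnegativity structure, so that a $Q$ of degree $d-1$ with the correct parity exists. The phases $\Phi = (\phi_1, \ldots, \phi_d)$ are then obtained by iteratively peeling off the outermost layer: left-multiplying by $R(x) e^{-i\phi_1 \sigma_z}$ and matching the leading coefficients in $P$ and $Q$ uniquely determines $\phi_1$, and the residual pair $(P', Q')$ again satisfies the Pythagorean identity with $\deg P' \leq d-1$. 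The induction hypothesis then produces the remaining phases.

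The main obstacle is the bookkeeping of the inductive step: verifying that after extracting $\phi_1$, the residual polynomial $P'$ still obeys all four conditions (so that the induction hypothesis actually applies), and that the F\'ejer--Riesz factorization gives a $Q$ with the correct parity, not merely the correct degree. These checks are algebraic but delicate, since the conditions (iii) and (iv) are not closed under arbitrary algebraic manipulation. Finally, the endpoint identities $P(\pm 1) = (\pm 1)^d \prod_j e^{i\phi_j}$ follow by substituting $R(\pm 1) = \pm \sigma_z$, which makes the entire product diagonal; and $P(0) = e^{-i \sum_j (-1)^j \phi_j}$ for even $d$ follows from $R(0) = \sigma_x$ together with $\sigma_x e^{i\phi \sigma_z} \sigma_x = e^{-i\phi \sigma_z}$, so that pairs of adjacent rotations collapse to a single diagonal rotation with alternating sign in the exponent.
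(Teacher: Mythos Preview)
The paper does not prove this statement at all; it is quoted verbatim from Gily\'en et al.\ 2019 as part of the appendix toolset, with no accompanying argument. There is therefore nothing in the paper to compare your proposal against.

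That said, your sketch is exactly the standard proof from the cited reference (and from Low--Yoder--Chuang before it): characterize the achievable $2\times 2$ form by induction, use the Pythagorean constraint $|P|^2 + (1-x^2)|Q|^2 = 1$ together with F\'ejer--Riesz to manufacture the companion $Q$, and then peel off the phases one layer at a time by matching leading coefficients. The endpoint computations via $R(\pm 1) = \pm\sigma_z$ and $R(0) = \sigma_x$ are also the standard ones. One minor caution: the explicit $2\times 2$ form you wrote down, with $iQ(x)\sqrt{1-x^2}$ off-diagonal and $P^*(x)$ in the bottom-right, is the form for the \emph{rotation}-based signal operator $W(x) = e^{i\arccos(x)\sigma_x}$ (Theorem 3 of Gily\'en et al.), not the reflection $R(x)$ of Corollary 8. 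For reflections the bottom-right entry picks up an extra sign depending on parity, and the off-diagonal factors of $i$ are absent; you can see this already at $d=1$ where $e^{i\phi_1\sigma_z}R(x)$ has bottom-right entry $-e^{-i\phi_1}x = -P^*(x)$. This does not change the strategy, but it does change the precise recursion when you peel layers, so the bookkeeping you flag as the ``main obstacle'' needs to be done with the correct form.
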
 

\begin{lemma}[Robustness of singular value transformation, \citep{Gilyn2019} Lemma 22 ]\label{lemma:robustness_qsvt}If $P \in \mathbb{C}[x]$ is a degree-n polynomial satisfying the requirements of \cref{corollary:8_qsvt}, moreover $A, \tilde{A} \in \mathbb{C}^{\tilde{d} \times d}$ are matrices of operator norm at most 1, then we have that
$$
\left\|P^{(S V)}(A)-P^{(S V)}(\tilde{A})\right\| \leq 4 n \sqrt{\|A-\tilde{A}\|}
$$
\end{lemma}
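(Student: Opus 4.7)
The plan is to reduce the bound to two separate estimates: first, a perturbation bound on a \emph{canonical} unitary block-encoding of $A$ versus $\tilde A$; second, a telescoping bound through the $n$-query QSVT circuit that realizes $P^{(SV)}$. Concretely, for any contraction $A$, I would define the Halmos-type dilation
\[
U_A \;=\; \begin{pmatrix} A & \sqrt{I-AA^{\dagger}} \\ \sqrt{I-A^{\dagger}A} & -A^{\dagger}\end{pmatrix},
\]
which is unitary and has $A$ in its top-left block. By \cref{corollary:8_qsvt}, for the phase sequence $\Phi\in\mathbb{R}^n$ associated with $P$, the singular value transformation $P^{(SV)}(A)$ is exactly the projected image of the alternating product $\prod_{j=1}^{n}\bigl(e^{i\phi_j Z}\,U_A\bigr)$ (or a variant using $U_A$ and $U_A^{\dagger}$), where the phase gates act on a single flag qubit and hence do not depend on $A$. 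This reduces the problem to comparing the two product circuits obtained by substituting $U_A$ or $U_{\tilde A}$.

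With this reduction in hand, the next step is the standard telescoping bound for products of unitaries: for any two sequences of unitaries $(V_j)$ and $(\tilde V_j)$ each of length $n$ whose factors have operator norm $\leq 1$, one has $\bigl\lVert \prod_j V_j - \prod_j \tilde V_j\bigr\rVert \leq \sum_j \lVert V_j - \tilde V_j\rVert$. Applied to our product (with the phase gates common to both sides and hence dropping out), this yields
\[
\bigl\lVert P^{(SV)}(A)-P^{(SV)}(\tilde A)\bigr\rVert \;\leq\; n\,\bigl\lVert U_A-U_{\tilde A}\bigr\rVert.
\]
The remaining task is to prove $\lVert U_A - U_{\tilde A}\rVert \leq 4\sqrt{\lVert A-\tilde A\rVert}$, after which the lemma follows immediately.

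To bound the block-by-block differences in $U_A - U_{\tilde A}$, the top row contributes $\lVert A-\tilde A\rVert$ (trivial) and $\lVert \sqrt{I-AA^{\dagger}} - \sqrt{I-\tilde A\tilde A^{\dagger}}\rVert$, and symmetrically for the bottom row. For the square-root terms I would invoke the operator monotone inequality $\lVert \sqrt{M}-\sqrt{N}\rVert \leq \sqrt{\lVert M-N\rVert}$ for positive semidefinite $M,N$ (a consequence of the integral representation $\sqrt{x}=\frac{1}{\pi}\int_0^\infty \frac{x}{x+t}\frac{dt}{\sqrt{t}}$), combined with the elementary estimate $\lVert A^{\dagger}A-\tilde A^{\dagger}\tilde A\rVert \leq 2\lVert A-\tilde A\rVert$ (using $\lVert A\rVert,\lVert \tilde A\rVert\leq 1$ and the triangle inequality on $A^{\dagger}(A-\tilde A)+(A-\tilde A)^{\dagger}\tilde A$). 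This gives
\[
\bigl\lVert \sqrt{I-AA^{\dagger}} - \sqrt{I-\tilde A\tilde A^{\dagger}}\bigr\rVert \;\leq\; \sqrt{2\lVert A-\tilde A\rVert},
\]
and similarly for the other square root. Summing the four contributions and using $\lVert A-\tilde A\rVert \leq 2\sqrt{\lVert A-\tilde A\rVert}$ (since $\lVert A-\tilde A\rVert\leq 2\leq 4$) yields the advertised bound with constant $4$.

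\textbf{Main obstacle.} The delicate point is the Hölder-type loss of a square root when passing from $\lVert A-\tilde A\rVert$ to $\lVert U_A - U_{\tilde A}\rVert$ through the operator square root. This non-Lipschitz dependence is genuine and cannot be improved in general, which is precisely why the final bound degrades to $\sqrt{\lVert A-\tilde A\rVert}$ rather than remaining linear. A secondary subtlety is ensuring that the QSVT phase-sequence identification in \cref{corollary:8_qsvt} applies \emph{simultaneously} with the same $\Phi$ to both $U_A$ and $U_{\tilde A}$, which is automatic since $\Phi$ depends only on $P$ and not on the matrix being transformed, but it is worth verifying that the projectors defining the ``$(SV)$'' block are identical in both cases.
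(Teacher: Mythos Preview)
The paper does not prove this lemma; it is quoted verbatim from \cite{Gilyn2019} as part of the appendix toolset, with no proof given. Your approach---canonical Halmos dilation, telescoping through the $n$-factor QSVT product, and an operator-square-root perturbation bound---is the standard argument from \cite{Gilyn2019}, so the strategy is correct and there is nothing in the present paper to compare against.

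There is one arithmetic slip in your final step. ``Summing the four contributions'' of $U_A-U_{\tilde A}$ gives
\[
2\lVert A-\tilde A\rVert + 2\sqrt{2\lVert A-\tilde A\rVert},
\]
and after your substitution $\lVert A-\tilde A\rVert\le 2\sqrt{\lVert A-\tilde A\rVert}$ this becomes $(4+2\sqrt{2})\sqrt{\lVert A-\tilde A\rVert}$, not $4\sqrt{\lVert A-\tilde A\rVert}$. The fix is to decompose $U_A-U_{\tilde A}$ as its block-diagonal part plus its block-antidiagonal part and use that a $2\times 2$ block matrix of either shape has operator norm equal to the \emph{maximum} (not the sum) of its nonzero block norms. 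This yields
\[
\lVert U_A-U_{\tilde A}\rVert \;\le\; \lVert A-\tilde A\rVert + \sqrt{2\lVert A-\tilde A\rVert} \;\le\; 2\sqrt{2}\,\sqrt{\lVert A-\tilde A\rVert} \;<\; 4\sqrt{\lVert A-\tilde A\rVert},
\]
using $\lVert A-\tilde A\rVert\le 2$, and the stated constant follows.
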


\begin{lemma}\label{Lemma:ApproxHeaviside}
    (Approximation of Heaviside step function) Given a diagonal block-encoding $A = \sum_j a_j \ket{j}\bra{j}$, there is a quantum circuit that implements a block-encoding of $f(A)$ defined such that $|f(x) - \Theta(X)| \leq \delta$ for $x \in [-1, -c] \cup [c, 1]$. The circuit makes $\mathcal{O}\left(\frac{1}{c} \log\left(\frac{1}{\delta} \right) \right)$ calls to the block-encoding of $A$.
\end{lemma}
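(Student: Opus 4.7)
The plan is to reduce the construction of a block-encoding of $f(A)$ to a standard QSVT application of a polynomial approximation to the sign function, then convert to the Heaviside function by a trivial affine shift. Concretely, I would first invoke the known polynomial approximation result (e.g.\ the approximation of $\operatorname{sign}(x)$ based on the error function, as used in Low--Chuang and reproduced in \cite{Gilyn2019}): for any $c,\delta>0$ there exists a polynomial $P_{\mathrm{sgn}}(x)$ of odd parity, with $\deg(P_{\mathrm{sgn}}) = \bigO\!\left(\frac{1}{c}\log\!\left(\frac{1}{\delta}\right)\right)$, satisfying $|P_{\mathrm{sgn}}(x)|\le 1$ for $x\in[-1,1]$ and $|P_{\mathrm{sgn}}(x)-\operatorname{sign}(x)|\le 2\delta$ for $x\in[-1,-c]\cup[c,1]$.

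Next, since $P_{\mathrm{sgn}}$ has odd parity and is bounded by $1$ on $[-1,1]$, it meets condition (ii) of \cref{QSVT_poly_eig_trans_theorem}. Applying that theorem to the given block-encoding of $A$ yields a block-encoding of $P_{\mathrm{sgn}}(A)$ using $\deg(P_{\mathrm{sgn}}) = \bigO\!\left(\frac{1}{c}\log(1/\delta)\right)$ calls to the block-encoding of $A$ and its inverse, plus $\bigO(\deg(P_{\mathrm{sgn}}))$ additional one- and two-qubit gates. Because $A$ is diagonal, $P_{\mathrm{sgn}}(A) = \sum_j P_{\mathrm{sgn}}(a_j)\ket{j}\bra{j}$, so the guaranteed error bound on $P_{\mathrm{sgn}}$ translates entry-wise into the desired approximation of $\operatorname{sign}(a_j)$ for each $a_j\in[-1,-c]\cup[c,1]$.

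To pass from the sign function to the Heaviside function $\Theta$, I would use the identity $\Theta(x) = \tfrac{1}{2}\bigl(1+\operatorname{sign}(x)\bigr)$ and combine the block-encoding of $P_{\mathrm{sgn}}(A)$ with the trivial block-encoding of the identity via a single-qubit linear combination of unitaries (LCU). This incurs only a constant multiplicative overhead in the subnormalisation factor and one extra ancilla qubit, so the overall query count remains $\bigO\!\left(\frac{1}{c}\log(1/\delta)\right)$. Defining $f(x) := \tfrac{1}{2}(1 + P_{\mathrm{sgn}}(x))$ then guarantees $|f(x)-\Theta(x)|\le \delta$ on $[-1,-c]\cup[c,1]$ after relabelling $\delta$ by a constant factor absorbed into the big-$\bigO$.

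The only subtle point in the proof is ensuring that the polynomial $P_{\mathrm{sgn}}$ both lies within the QSVT-admissible window (to apply \cref{QSVT_poly_eig_trans_theorem} without loss) and simultaneously achieves the stated $c,\delta$ trade-off; this is handled by quoting the standard error-function-based construction, whose properties are well documented. Everything else is a direct application of QSVT and LCU as described above, so no further technical obstacle is expected.
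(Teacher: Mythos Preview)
Your proposal is correct and takes essentially the same approach as the paper: apply QSVT with a polynomial approximation of the step function having the stated degree. The paper's proof is in fact a one-line citation to the Heaviside approximation in \cite{Martyn2021}, so your version is actually more detailed than theirs; the extra step of building $\Theta$ from $\operatorname{sign}$ via LCU is fine (and is how the cited reference constructs it anyway).
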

\begin{proof}
    Apply QSVT using the approximation of the Heaviside step function defined in \cite{Martyn2021}.
\end{proof}

\begin{lemma}\label{Lemma:ImperfectAABE}
    (Fixed point amplitude amplification of imperfect block-encoding) Given a $\delta$- approximate block-encoding $\widetilde{W}$ of an operator $X$, a state preparation unitary for state $\ket{\psi_0}$, and a known lower bound $\Lambda \leq ||X \ket{\psi_0}||$, there exists a quantum circuit that prepares $\ket{\psi_g} = \frac{\ket{0}X \ket{\psi_0}}{||X\ket{\psi_0}||}$ to trace-distance bounded by $\frac{4\sqrt{\delta}}{\Lambda}\log\left(\frac{1}{\omega}\right) + \omega + \sqrt{2\omega}$ which makes $\mathcal{O}\left(\frac{1}{\Lambda
    } \log\left(\frac{1}{\omega}\right) \right)$ calls to $\widetilde{W}$. 
\end{lemma}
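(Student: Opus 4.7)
The plan is to combine the fixed-point amplitude amplification guarantee of \cref{theorem:fixed_point_aa} with the robustness of the singular value transformation of \cref{lemma:robustness_qsvt}, and then convert the resulting Euclidean-norm error to a trace distance. The three error contributions in the claimed bound map naturally to three independent sources of inaccuracy: the QSVT approximation error incurred when using $\widetilde{W}$ in place of an ideal block-encoding $W$, the intrinsic error of fixed-point AA itself, and the subnormalisation overhead from passing from an (almost-) unit vector on the block-encoded space to a bona fide quantum state.

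The first step is to cast fixed-point amplitude amplification as a QSVT polynomial transformation of degree $d = \bigO(\log(1/\omega)/\Lambda)$ applied to the projected block of $\widetilde{W}$. Let $A = (\bra{0}\otimes I)W(\ket{0}\otimes I) = X$ and $\tilde{A} = (\bra{0}\otimes I)\widetilde{W}(\ket{0}\otimes I)$, so that $\nrm{A - \tilde{A}} \le \delta$. By \cref{corollary:8_qsvt}, the Gily\'en-style construction implements an odd polynomial $P$ of degree $d$ on the singular values, such that $P^{(SV)}(A) \ket{\psi_0}$ is a vector on the flagged subspace whose Euclidean distance from the target $\ket{\psi_g} = \ket{0}X\ket{\psi_0}/\nrm{X\ket{\psi_0}}$ is at most $\omega$ (this is precisely the guarantee of \cref{theorem:fixed_point_aa} applied to the perfect block-encoding, using the lower bound $\Lambda \le \nrm{X\ket{\psi_0}}$).

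The second step is to propagate the $\delta$-inaccuracy of the block-encoding through the QSVT circuit using \cref{lemma:robustness_qsvt}, which yields
\begin{equation*}
    \nrm{P^{(SV)}(A) - P^{(SV)}(\tilde{A})} \;\le\; 4d\sqrt{\delta} \;=\; \bigO\!\left(\frac{\sqrt{\delta}}{\Lambda}\log(1/\omega)\right).
\end{equation*}
Applying this operator inequality to $\ket{\psi_0}$ and combining with the fixed-point AA guarantee via the triangle inequality, the vector $\ket{\tilde\phi}$ actually produced by running the quantum circuit satisfies $\nrm{\ket{\tilde\phi} - \ket{0}\otimes\ket{\psi_g}} \le \omega + 4d\sqrt{\delta}$ on the extended flag$\,\otimes\,$system Hilbert space. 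The number of calls to $\widetilde{W}$ is the QSVT degree $d$, giving the stated $\bigO(\log(1/\omega)/\Lambda)$ query cost.

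The third step converts this vector-norm estimate into the trace-distance bound of the lemma statement. For pure states, the trace-distance and the Euclidean norm agree up to constants, contributing the $\omega + \tfrac{4\sqrt{\delta}}{\Lambda}\log(1/\omega)$ pieces. The remaining $\sqrt{2\omega}$ accounts for the fact that $\ket{\tilde\phi}$ is not exactly of the form $\ket{0}_{\rm flag}\otimes(\cdot)$: the amplitude on the success branch is at most $\omega$ below one, so post-selection onto the flag being $\ket{0}$ and tracing out residual registers introduces at most an extra $\sqrt{2\omega}$ trace-distance penalty via the gentle-measurement inequality. The main technical delicacy is simply ensuring that the polynomial $P$ implementing fixed-point AA does satisfy the hypotheses of \cref{lemma:robustness_qsvt} so that the square-root dependence on $\delta$ is legitimate; everything else is bookkeeping through the two sources of error and the final subnormalisation adjustment.
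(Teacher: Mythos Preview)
Your plan matches the paper's proof: fixed-point amplitude amplification is realized as a QSVT polynomial $Q$ of degree $d = \bigO(\log(1/\omega)/\Lambda)$, the $4d\sqrt{\delta}$ term comes from \cref{lemma:robustness_qsvt}, and the $\omega$ from \cref{theorem:fixed_point_aa} applied to the exact $W$.

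Two points deserve tightening. First, your step-5 inequality $\nrm{\ket{\tilde\phi} - \ket{0}\otimes\ket{\psi_g}} \le \omega + 4d\sqrt{\delta}$ is premature: the triangle inequality you describe bounds only the flag-$\ket{0}$ component $Q(\tilde A)\ket{\psi_0}$, not the full unitary output, which also carries an orthogonal piece $\ket{\perp}$. Second, the paper neither post-selects nor invokes gentle measurement for the $\sqrt{2\omega}$ term. It writes $U_{AA}\ket{0}\ket{\psi_0} = \ket{0}Q(\tilde A)\ket{\psi_0} + \ket{\perp}$ and bounds $\nrm{\ket{\perp}} \le \sqrt{2\omega}$ by applying \cref{theorem:fixed_point_aa} \emph{a second time, directly to $\widetilde{W}$}, which gives $\nrm{Q(\tilde A)\ket{\psi_0}} \ge 1 - \omega$. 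Your assertion that ``the amplitude on the success branch is at most $\omega$ below one'' is precisely this step, but it does not follow from the guarantee on $W$ combined with robustness alone---that route yields only $\nrm{Q(\tilde A)\ket{\psi_0}} \ge 1 - \omega - 4d\sqrt{\delta}$, hence $\sqrt{2(\omega + 4d\sqrt{\delta})}$ rather than $\sqrt{2\omega}$. The final trace distance is then just the Euclidean bound $\omega + \nrm{Q(A)-Q(\tilde A)} + \sqrt{2\omega}$ on pure states, with no gentle-measurement detour needed.
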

\begin{proof}
    Denote the circuit by $U_{AA}$. The circuit applies $\widetilde{W}$ to $\ket{0} \ket{\psi_0}$ and amplifies the success probability with fixed-point amplitude amplification, which requires $\mathcal{O}\left(\frac{1}{\Lambda} \log\left(\frac{1}{\omega}\right) \right)$ calls to $\widetilde{W}$~\cite[Theorem 27]{Gilyn2019}. Both $\widetilde{W}$ and fixed-point amplitude amplification are imperfect, and we bound their errors here. In the absence of errors on $W$, it forms a projected unitary encoding of a state preparation operator
    \begin{equation}
       \left( \ket{0}\bra{0}\otimes I \right) W \left( \ket{0}\bra{0}\otimes \ket{\psi_0}\bra{\psi_0} \right) = \ket{0}\bra{0}\otimes a \ket{\psi_g}\bra{\psi_0} := A
    \end{equation}
where $a = ||X\ket{\psi_0}||$ is the success amplitude. Similarly, $\widetilde{W}$ is a projected unitary encoding of $ \tilde{A} = \ket{0}\bra{0}\otimes \tilde{a} \ket{\psi_g'}\bra{\psi_0}$, where $\ket{\psi_g'}$ is the state resulting from applying $\widetilde{W}$ to $\ket{0}\ket{\psi_0}$, and post-selecting on $\ket{0}$. We can bound
\begin{align}
    &||A - \tilde{A}|| \\
    = &|| \left( \ket{0}\bra{0}\otimes I \right) W \left( \ket{0}\bra{0}\otimes \ket{\psi_0}\bra{\psi_0} \right) - \left( \ket{0}\bra{0}\otimes I \right) \widetilde{W} \left( \ket{0}\bra{0}\otimes \ket{\psi_0}\bra{\psi_0} \right)  || \\
    \leq &|| \left( \ket{0}\bra{0}\otimes I \right) W \left( \ket{0}\bra{0}\otimes I \right) - \left( \ket{0}\bra{0}\otimes I \right) \widetilde{W} \left( \ket{0}\bra{0}\otimes I \right)  || \\
    \leq &\delta
\end{align}
by definition.

Likewise, $U_{AA}$ is a projected unitary encoding of $Q(\tilde{A})$, where $Q(\cdot)$ is the fixed point amplitude amplification polynomial. Thus, $U_{AA}\ket{0}  \ket{\psi_0} = \ket{0}Q(\tilde{A})\ket{\psi_0} + \ket{\perp}$. We have $|| Q(\tilde{A})\ket{\psi_0} - \ket{\psi_g'}|| \leq \omega$ by choice of fixed point amplitude amplification polynomial. Hence $|| Q(\tilde{A})\ket{\psi_0}|| \geq 1-\omega$, and $||\ket{\perp} || \leq \sqrt{2\omega}$.
    
The trace distance between $\ket{0}\ket{\psi_g}$ and $U_{AA} \ket{0}\ket{\psi_0} $ is bounded by
\begin{align}
    &|| \ket{0}\ket{\psi_g} - U_{AA} \ket{0}\ket{\psi_0}|| \\
    = &|| \ket{0}\ket{\psi_g} - \ket{0}Q(\tilde{A})\ket{\psi_0} + \ket{\perp} || \\
    \leq &||\ket{0}\ket{\psi_g} - \ket{0} Q(A)\ket{\psi_0} + \ket{0} Q(A)\ket{\psi_0} - \ket{0}Q(\tilde{A})\ket{\psi_0} || + ||\ket{\perp} || \\
    \leq & \omega + ||Q(A) - Q(\tilde{A})|| + \sqrt{2\omega}
\end{align}
The middle term can be bounded by applying the robustness of QSVT:
\begin{align}
    &||Q(A) - Q(\tilde{A}) || \\
    \leq &4 \cdot \mathrm{deg}(Q)\cdot \sqrt{||A - \tilde{A}||} \\
    \leq &\frac{4\sqrt{\delta}}{\Lambda}\log\left(\frac{1}{\omega}\right)
\end{align}
where in the final line we used the degree of the fixed point amplitude amplification polynomial and the bound on $||A-\tilde{A}||$ derived above.

\end{proof}

\begin{lemma}[Polynomial approximation of $\tanh$]\label{lem:polynomial_for_tanh}
Given a real number $u > 0$ and error parameter $\varepsilon_{\rm tanh} > 0$, there exists a polynomial $p(x)$ such that whenever $|x|\leq 1$, $|p(x) - \tanh(ux)| \leq \varepsilon_{\rm tanh}$. Furthermore, if $u \leq 1$, then the degree of $p$ is
$d = \bigO( \log(1/\varepsilon_{\rm tanh})/\log(1/u))$ (as $u \rightarrow 0$), and if $u \geq 1$, the degree is $d = \bigO(u \log(u/\varepsilon_{tanh}))$ (as $u \rightarrow \infty$).  
\end{lemma}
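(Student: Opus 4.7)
The plan is to handle the two regimes $u \leq 1$ and $u \geq 1$ separately, relying in each case on the fact that the singularities of $\tanh$ nearest the origin are the simple poles at $\pm i\pi/2$, so that $\tanh(uz)$ is analytic in the complex strip $|\Im z| < \pi/(2u)$.

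For $u \leq 1$, I would simply truncate the Taylor series of $\tanh$ around the origin. Since $|ux|\leq u \leq 1 < \pi/2$ whenever $|x|\leq 1$, the series converges geometrically with ratio $\Theta(2u/\pi)$, and using standard bounds on the Taylor coefficients (or the Cauchy integral formula on a circle of radius just below $\pi/2$) gives, for the degree-$(2d+1)$ Taylor truncation, a uniform error of order $(2u/\pi)^{2d+2}$ on $[-1,1]$. Solving for the smallest $d$ making this at most $\varepsilon_{\rm tanh}$ yields the claimed degree $d = \bigO(\log(1/\varepsilon_{\rm tanh})/\log(1/u))$ as $u\to 0$.

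For $u \geq 1$, Taylor truncation no longer suffices because $ux$ can exit the disk of convergence, so I would instead work with the Chebyshev expansion of $f(x) = \tanh(ux)$ on $[-1,1]$. By the standard Bernstein-ellipse theorem (e.g.\ Trefethen, \emph{Approximation Theory and Approximation Practice}, Thm.~8.2), the $n$-th Chebyshev coefficient decays as $\rho^{-n}$ for any Bernstein parameter $\rho > 1$ such that the corresponding ellipse lies inside the analyticity strip, with an implicit constant controlled by $\sup |f|$ along that ellipse. Choosing $\rho$ so that its semiminor axis is a constant factor below $\pi/(2u)$ gives $\log\rho = \Theta(1/u)$ for large $u$ and keeps $f$ uniformly bounded by an $\bigO(1)$ constant on the ellipse. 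Truncating at degree $d$ then yields error $\bigO(d\rho^{-d})$, so the degree needed to achieve error $\varepsilon_{\rm tanh}$ is $d = \bigO(u\log(u/\varepsilon_{\rm tanh}))$, matching the claim.

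The main technical step is the second regime: verifying the uniform boundedness of $\tanh(uz)$ on the chosen ellipse and carefully tracking the implicit constant that feeds into the $\log(u/\varepsilon_{\rm tanh})$ term. An alternative that avoids this bookkeeping altogether is to quote the polynomial approximations of $\tanh$ (or equivalently, the sigmoid $\sigma$, related to $\tanh$ by the affine map $\tanh(z/2) = 2\sigma(z) - 1$) that are already developed in the Hamiltonian-simulation / QSVT literature for implementing smoothed sign functions via $\mathrm{erf}$; those constructions give exactly the stated scaling and subsume both cases. Either approach produces a polynomial satisfying the lemma with the claimed degree bound once the two regimes are combined.
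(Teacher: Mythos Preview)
Your proposal is correct and mirrors the paper's proof almost exactly: the paper also splits into $u\le 1$ (Taylor truncation, using coefficient bounds equivalent to your Cauchy-estimate argument) and $u\ge 1$ (Chebyshev/Bernstein-ellipse with $\rho = 1+\pi/(4u)$, plus an explicit verification that $|\tanh(uz)|\le 2\sqrt{2}$ on the ellipse). The only minor slip is that the truncation error from the Bernstein bound is $\bigO\!\bigl(\rho^{-d}/(\rho-1)\bigr)$ rather than $\bigO(d\rho^{-d})$, but this yields the same degree scaling.
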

\begin{proof}
    The $\tanh$ function satisfies $|\tanh(z)|\leq 1$ for all real $z$. Prior work has derived a polynomial approximation for $\tanh(uz)$ by truncating its Taylor series, which has exponential convergence with the truncation degree as long as $z$ is within the radius of convergence of the Taylor series. In our application, this is an effective approach when $|u|\leq 1$. In that case, we can extend the calculation from \cite[Appendix C]{guo2024nonlinearTransformation} and \cite[Lemma 14]{rattew2023nonlinearTransformationsQuantumAmplitudes}, which was performed at $u=1$. Namely, they show that if we take $p(x)$ to be the degree-$d$ truncation of the Taylor series for $\tanh(ux)$, then we have
    \begin{align}
        |p(x) - \tanh(ux)| \leq \left\lvert \sum_{j = d+1}^{\infty} \alpha_j (ux)^{2j-1} \right\rvert
    \end{align}
    with 
    \begin{align}
        |\alpha_j | \leq 5 \left(\frac{2}{\pi}\right)^j
    \end{align}
    Evaluating the sum and imposing $|x|\leq 1$ gives
    \begin{align}
        |p(x) - \tanh(ux)| \leq  \frac{5}{u}\sum_{j = d+1}^{\infty} \left(\frac{2u^2}{\pi}\right)^j \leq 14\left(\frac{2}{\pi}\right)^{d+1} u^{2d-1}
    \end{align}
    where we have used $5/(1-(2u^2/\pi))\leq 14$ when $|u| \leq 1$. Thus, to achieve error $\varepsilon_{\tanh}$, it suffices to take $d= \Theta(\log(1/\varepsilon_{\rm tanh})/\log(1/u))$. 
    
    Since $\tanh(uz)$ is not analytic on the entire complex plane---it has poles at $z=i\pi(\ell + 1/2)/u$ for integer $\ell$ (the nearest poles to 0 are $\pm i \pi/(2u)$---this method is not suitable for us when $u$ is large enough for the radius of convergence to fall below 1.  Instead, for the regime $u > 1$,  we use results from approximation theory that show the existence of a good approximating polynomial for functions that are analytic on the interior of a Bernstein ellipse, defined as the set $E_{\rho} = \{\frac{1}{2}(v + v^{-1}) \colon v \in \mathbb{C}, |v| = \rho\}$, with $\rho>1$. Here, we choose $\rho = 1 + \pi/(4u)$. We note that if $v$ is purely imaginary, that is $v = \pm i\rho$, then $\frac{1}{2}(v + v^{-1}) = \pm iY$, where $Y=\frac{(\pi/4u) + (\pi^2/32 u^2)}{1+\pi/4u} \leq \pi/4u$. Thus, for this value of $\rho$, the poles of $\tanh(uz)$ lie outside of the Bernstein ellipse, and $\tanh(uz)$ is analytic on the interior of $E_{\rho}$. We now compute an $M$ for which there exists an upper bound $|\tanh(uz)|\leq M$ which holds for all $z$ in the interior of $E_{\rho}$. Above, we have established that for all such $z$, we have $|\Im(uz)| \leq \pi/4$, so in particular $\Re(e^{i \Im(uz)}) \geq \cos(\pm \pi/4) = 1/\sqrt{2}$. Thus, we may bound 
    \begin{align}
        \Re(\cosh(uz)) &= \frac{1}{2}(\Re(e^{uz}) + \Re(e^{-uz})) \\
        &= \frac{1}{2}(e^{\Re(uz)} \Re(e^{i \Im(uz)}) +e^{-\Re(uz)} \Re(e^{-i \Im(uz)})) \geq \frac{e^{u |\Re(z)|}}{2\sqrt{2}}
    \end{align}
    and thus $|\cosh(uz)|\geq \frac{e^{u |\Re(z)|}}{2\sqrt{2}}$.  Furthermore, we have $|\sinh(uz)| \leq e^{u|\Re(z)|}$. Thus, in the interior of the Bernstein ellipse, we have $|\tanh(uz)|\leq M$ with $M=2\sqrt{2}$. From \cite[Theorem 20]{tang2024CSguideQSVT} and references therein, there is a degree-$d$ polynomial $p(x)$ formed as a Chebyshev series for which
    \begin{align}
        |p(x) - f(x)| \leq \frac{2M}{\rho^d (\rho-1)} = \frac{16\sqrt{2} u}{\pi} (1+\frac{\pi}{4u})^{-d} 
    \end{align}
    From this equation, we see that given target error $\varepsilon_{\rm tanh}$, it suffices to take $d = \Theta(u\log(u/\varepsilon_{\tanh})$.
\end{proof}

\end{document}